\documentclass[11pt]{article}
\pagestyle{plain}
\usepackage{hyperref}
\usepackage{amsmath,amsthm,amssymb,amsopn,amsfonts,pdfpages,bm}
\usepackage{algpseudocode,algorithm}
\usepackage{fullpage}
\usepackage{parskip}
\usepackage[numbers,sort]{natbib}
\usepackage{enumerate,url}
\usepackage{graphicx}
\usepackage{subfig}
\usepackage{graphics,multicol}
\usepackage{fancybox}
\usepackage{fancyhdr}
\usepackage{MnSymbol}
\makeatletter
\def\thm@space@setup{%
 \thm@preskip=\parskip \thm@postskip=0pt\label{key}
}
\makeatother
\makeatletter
\let \@fnsymbol\@arabic
\makeatother


\newtheorem{theorem}{Theorem}
\newtheorem{lemma}{Lemma}
\newtheorem{corollary}{Corollary}

\newtheorem{proposition}{Proposition}
\theoremstyle{definition}
\newtheorem{definition}{Definition}
\newtheorem{remark}{Remark}
\newtheorem{assumption}{Assumption}
\newcommand{\R}{\mathbb{R}}

\newcommand{\E}{\mathbb{E}}
\newcommand{\bA}{\mathbf{A}}
\newcommand{\bB}{\mathbf{B}}
\newcommand{\bD}{\mathbf{D}}
\newcommand{\bE}{\mathbf{E}}
\newcommand{\bJ}{\mathbf{J}}
\newcommand{\bH}{\mathbf{H}}
\newcommand{\bI}{\mathbf{I}}
\newcommand{\bL}{\mathbf{L}}
\newcommand{\bM}{\mathbf{M}}

\newcommand{\bP}{\mathbf{P}}
\newcommand{\bR}{\mathbf{R}}
\newcommand{\bS}{\mathbf{S}}
\newcommand{\bT}{\mathbf{T}}
\newcommand{\bQ}{\mathbf{Q}}
\newcommand{\bU}{\mathbf{U}}
\newcommand{\bV}{\mathbf{V}}
\newcommand{\bW}{\mathbf{W}}
\newcommand{\bX}{\mathbf{X}}
\newcommand{\bY}{\mathbf{Y}}
\newcommand{\bZ}{\mathbf{Z}}

\newcommand{\bv}{\mathbf{v}}
\newcommand{\bx}{\mathbf{x}}
\newcommand{\by}{\mathbf{y}}

\newcommand{\calL}{\mathcal{L}}
\newcommand{\calN}{\mathcal{N}}
\newcommand{\calO}{\mathcal{O}}

\newcommand{\mC}{\mathcal{C}}

\newcommand{\Xhat}{\hat{\bX}}
\newcommand{\dhat}{\hat{d}}
\newcommand{\Khat}{\hat{K}}
\newcommand{\Yhat}{\hat{\bY}}
\newcommand{\Zhat}{\hat{\bZ}}

\newcommand{\zeromx}{\mathbf{0}}

\newcommand{\supp}{\operatorname{supp}}
\newcommand{\tr}{\operatorname{tr}}

\newcommand{\Ptilde}{\tilde{\bP}}

\newcommand{\Wtilde}{\tilde{\bW}}
\newcommand{\Wn}{\bW_n}
\newcommand{\Wntilde}{\Wtilde_n}

\newcommand{\Xstar}{\bX^*}

\newcommand{\Zstar}{\bZ^*}

\newcommand{\tti}{2 \rightarrow \infty}
\newcommand{\Abar}{\bar{\bA}}
\newcommand{\Xbar}{\bar{\bX}}

\newcommand{\balpha}{\mathbf{\alpha}}
\newcommand{\bDelta}{\mathbf{\Delta}}
\newcommand{\bSigma}{\mathbf{\Sigma}}
\newcommand{\Sigmatilde}{\tilde{\bSigma}}

\newcommand{\alphavec}{\vec{\balpha}}

\newcommand{\UM}{\bU_{\bM}}
\newcommand{\UA}{\bU_{\bA}}
\newcommand{\ULA}{\bU_{\calL(\bA)}}
\newcommand{\SLA}{\bS_{\calL(\bA)}}
\newcommand{\SLP}{\bS_{\calL(\bP)}}
\newcommand{\ULP}{\bU_{\calL(\bP)}}
\newcommand{\SM}{\bS_{\bM}}
\newcommand{\SA}{\bS_{\bA}}
\newcommand{\UPt}{\bU_{\Ptilde}}
\newcommand{\SPt}{\bS_{\Ptilde}}
\newcommand{\UP}{\bU_{\bP}}
\newcommand{\SP}{\bS_{\bP}}

\newcommand{\ASE}{\operatorname{ASE}}

\newcommand{\OMNI}{\operatorname{OMNI}}

\newcommand{\RDPG}{\operatorname{RDPG}}
\newcommand{\JRDPG}{\operatorname{JRDPG}}

\newcommand{\Bern}{\operatorname{Bernoulli}}

\newcommand{\inlaw}{\xrightarrow{\calL}}
\newcommand{\inprob}{\xrightarrow{P}}
\newcommand{\iid}{\stackrel{\text{i.i.d.}}{\sim}}

\newcommand{\ErdosRenyi}{Erd\H{o}s-R\'{e}nyi }
\newcommand{\Furedi}{F\"{u}redi }
\newcommand{\Erdos}{Erd\H{o}s}
\newcommand{\Renyi}{R\'{e}nyi}
\newcommand{\Komlos}{Koml\'{o}s }
\newcommand{\mclustase}{GMM \circ ASE}
\newcommand{\smclustase}{SemiparGMM \circ ASE}
\newcommand{\thetahat}{\widehat{\theta}}

\newcommand{\rhohat}{\widehat{\rho}}
\newcommand{\Chat}{\widehat{\mathcal{C}}_{KC}}
\newcommand{\CKC}{\mathcal{C}_{KC}}

\renewcommand{\Re}{\mathbb{R}}
\begin{document}
\title{Statistical inference on random dot product graphs: a survey}
\author{Avanti Athreya\thanks{Department of Applied Mathematics and Statistics, Johns Hopkins University, Baltimore, MD.  {\em Email correspondence}: Avanti Athreya at \texttt{dathrey1@jhu.edu.}}, \, Donniell E. Fishkind\footnotemark[1], \,Keith Levin\thanks{Department of Statistics, University of Michigan, Ann Arbor, MI.}, \, Vince Lyzinski\thanks{Department of Mathematics and Statistics, University of Massachusetts, Amherst, MA.}, \, Youngser Park\thanks{Center for Imaging Science, Johns Hopkins University, Baltimore, MD.}, \\
Yichen Qin\thanks{Department of Operations, Business Analytics, and Information Systems, University of Cincinnati, Cincinnati, OH.}, \,Daniel L. Sussman\thanks{Department of Mathematics and Statistics, Boston University, Boston, MA..},\, Minh Tang\footnotemark[1], \, Joshua T. Vogelstein\thanks{Department of Biomedical Engineering, Johns Hopkins University, Baltimore, MD.},\, and Carey E. Priebe\footnotemark[1]}
	\maketitle
\begin{abstract}
	 The random dot product graph (RDPG) is an independent-edge random graph that is analytically tractable and, simultaneously, either encompasses or can successfully approximate a wide range of random graphs, from relatively simple stochastic block models to complex latent position graphs. In this survey paper, we describe a comprehensive paradigm for statistical inference on random dot product graphs, a paradigm centered on spectral embeddings of adjacency and Laplacian matrices. We examine the analogues, in graph inference, of several canonical tenets of classical Euclidean inference: in particular, we summarize a body of existing results on the consistency and asymptotic normality of the adjacency and Laplacian spectral embeddings, and the role these spectral embeddings can play in the construction of single- and multi-sample hypothesis tests for graph data. We investigate several real-world applications, including community detection and classification in large social networks and the determination of functional and biologically relevant network properties from an exploratory data analysis of the {\em Drosophila} connectome. We outline requisite background and current open problems in spectral graph inference.\\
	

{\em Keywords}: Random dot product graph, adjacency spectral embedding, Laplacian spectral embedding, multi-sample graph hypothesis testing, semiparametric modeling\\

\end{abstract}
\newpage
	\tableofcontents 
	\newpage
	\section{Introduction}\label{sec:Intro}
 Random graph inference is an active, interdisciplinary area of current research, bridging combinatorics, probability, statistical theory, and machine learning, as well as a wide spectrum of application domains from neuroscience to sociology. Statistical inference on random graphs and networks, in particular, has witnessed extraordinary growth over the last decade: for example, \cite{goldenberg2010survey,kolaczyk:_statistical} discuss the considerable applications in recent network science of several canonical random graph models. 
 
 Of course, combinatorial graph theory itself is centuries old---indeed, in his resolution to the problem of the bridges of K\"onigsberg, Leonard Euler first formalized graphs as mathematical objects consisting of vertices and edges. The notion of a random graph, however, and the modern theory of inference on such graphs, is comparatively new, and owes much to the pioneering work of \Erdos, \Renyi, and others in the late 1950s. E.N. Gilbert's short 1959 paper \cite{gilbert_1959} considered a random graph for which the existence of edges between vertices are independent Bernoulli random variables with common probability $p$; roughly concurrently, \Erdos  ~and \Renyi  ~provided the first detailed analysis of the probabilities of the emergence of certain types of subgraphs within such graphs \cite{Erdos_Renyi_1960_original}, and today, graphs in which the edges arise independently and with common probability $p$ are known as {\em \ErdosRenyi} (or ER) graphs.
 
 The \ErdosRenyi (ER) model is one of the simplest generative models for random graphs, but this simplicity belies astonishingly rich behavior (\cite{alon_spencer_prob_method}, \cite{bollobas07}). Nevertheless, in many applications, the requirement of a common connection probability is too stringent: graph vertices often represent heterogeneous entities, such as different people in a social network or cities in a transportation graph, and the connection probability $p_{ij}$ between vertex $i$ and $j$ may well change with $i$ and $j$ or depend on underlying attributes of the vertices.  Moreover, these heterogeneous vertex attributes may not be observable; for example, given the adjacency matrix of a Facebook community, the specific interests of the individuals may remain hidden. To more effectively model such real-world networks, we consider {\em latent position} random graphs \cite{hoff_raftery_handcock}. In a latent position graph, to each vertex $i$ in the graph there is associated an element $x_i$ of the so-called {\em latent space} $\mathcal{X}$, and the probability of connection $p_{ij}$ between any two edges $i$ and $j$ is given by a  {\em link} or {\em kernel} function $\kappa: \mathcal{X} \times \mathcal{X} \rightarrow [0,1]$. That is, the edges are generated independently (so the graph is an {\em independent-edge} graph) and $p_{ij}=\kappa(x_i, x_j)$.
 
 The {\em random dot product graph} (RDPG) of Young and Scheinerman \cite{young2007random} is an especially tractable latent position graph; here, the latent space is an appropriately constrained subspace of Euclidean space $\mathbb{R}^d$, and the link function is simply the dot or inner product of the pair of $d$-dimensional latent positions.  Thus, in a $d$-dimensional random dot product graph with $n$ vertices, the latent positions associated to the vertices can be represented by an $n \times d$ matrix $\bX$ whose rows are the latent positions, and the matrix of connection probabilities $\bP=(\bP_{ij})$ is given by $\bP=\bX\bX^{\top}$.  Conditional on this matrix $\bP$, the RDPG has an adjacency matrix $\bA=(\bA_{ij})$ whose entries are Bernoulli random variables with probability $\bP_{ij}$. For simplicity, we will typically consider symmetric, {\em hollow} RDPG graphs; that is, undirected, unweighted graphs in which $\bA_{ii}=0$, so there are no self-edges. In our real data analysis of a neural connectome in Section \ref{subsec:MBStructure}, however, we describe how to adapt our results to weighted and directed graphs.
 
 In any latent position graph, the latent positions associated to graph vertices can themselves be random; for instance, the latent positions may be independent, identically distributed random variables with some distribution $F$ on $\mathbb{R}^d$.  The well-known {\em stochastic blockmodel} (SBM), in which each vertex belongs to one of $K$ subsets known as {\em blocks}, with connection probabilities determined solely by block membership \cite{Holland1983}, can be represented as a random dot product graph in which all the vertices in a given block have the same latent positions (or, in the case of random latent positions, an RDPG for which the distribution $F$ is supported on a finite set). Despite their structural simplicity, stochastic block models are the building blocks for all independent-edge random graphs; \cite{wolfe13:_nonpar} demonstrates that any independent-edge random graph can be well-approximated by a stochastic block model with a sufficiently large number of blocks. Since stochastic block models can themselves be viewed as random dot product graphs, we see that suitably high-dimensional random dot product graphs can provide accurate approximations of latent position graphs \cite{tang2012universally}, and, in turn, independent-edge graphs.  Thus, the architectural simplicity of the random dot product graph makes it particularly amenable to analysis, and its near-universality in graph approximation renders it expansively applicable.  In addition, the cornerstone of our analysis of randot dot product graphs is a set of classical probabilistic and linear algebraic techniques that are useful in much broader settings, such as random matrix theory.  As such, the random dot product graph is both a rich and interesting object of study in its own right and a natural point of departure for wider graph inference. 

A classical inference task for Euclidean data is to estimate, from sample data, certain underlying distributional parameters. Similarly, for a latent position graph, a classical graph inference task is to infer the graph parameters from an observation of the adjacency matrix $\bA$. Indeed, our overall paradigm for random graph inference is inspired by the fundamental tenets of classical statistical inference for Euclidean data. Namely, our goal is to construct methods and estimators of graph parameters or graph distributions; and, for these estimators, to analyze their (1) consistency; (2) asymptotic distributions; (3) asymptotic relative efficiency; (4) robustness to model misspecification; and (5) implications for subsequent inference including one- and multi-sample hypothesis testing.  In this paper, we summarize and synthesize a considerable body of work on spectral methods for inference in random dot product graphs, all of which not only advance fundamental tenets of this paradigm, but do so within a unified and parsimonious framework.
The random graph estimators and test statistics we discuss all exploit the {\em adjacency spectral embedding} (ASE) or the {\em Laplacian spectral embedding} (LSE), which are eigendecompositions of the adjacency matrix $\bA$ and {\em normalized} Laplacian matrix $\bL=\bD^{-1/2} \bA \bD^{-1/2}$, where $\bD$ is the diagonal degree matrix $\bD_{ii}=\sum_{j\neq i} \bA_{ij}$.


The ambition and scope of our approach to graph inference means that mere upper bounds on discrepancies between parameters and their estimates will not suffice. Such bounds are legion. In our proofs of consistency, we improve several bounds of this type, and in some cases improve them so drastically that concentration inequalities and asymptotic limit distributions emerge in their wake. We stress that aside from specific cases (see \cite{furedi1981eigenvalues}, \cite{tao2012random}, \cite{lei2014}), limiting distributions for eigenvalues and eigenvectors of random graphs are notably elusive. For the adjacency and Laplacian spectral embedding, we discuss not only consistency, but also asymptotic normality, robustness, and the use of the adjacency spectral embedding in the nascent field of multi-graph hypothesis testing. We illustrate how our techniques can be meaningfully applied to thorny and very sizable real data, improving on previously state-of-the-art methods for inference tasks such as community detection and classification in networks. What is more, as we now show, spectral graph embeddings are relevant to many complex and seemingly disparate aspects of graph inference.

A bird's-eye view of our methodology might well start with the stochastic blockmodel, where, for an SBM with a finite number of blocks of stochastically equivalent vertices,
\cite{STFP-2011} and \cite{fishkind2013consistent} show that $k$-means clustering of the rows of the adjacency spectral embedding accurately partitions the vertices into the correct blocks, even when the embedding dimension is misspecified or the number of blocks is unknown. Furthermore, \cite{lyzinski13:_perfec} and \cite{lyzinski15_HSBM}
give a significant improvement in the misclassification rate, by exhibiting an almost-surely perfect clustering in which, in the limit, no vertices whatsoever are misclassified. For random
dot product graphs more generally, \cite{sussman12:_univer} shows that the latent positions are consistently
estimated by the embedding, which then allows for accurate learning in a supervised vertex classification framework. In \cite{tang2012universally} these results are extended to more general latent position models, establishing a powerful universal consistency result for vertex classification in general latent position graphs, and also exhibiting an efficient embedding of vertices which were not observed in the original graph. In \cite{athreya2013limit} and \cite{tang_lse}, the authors supply distributional results, akin to a central limit theorem, for both the adjacency and Laplacian spectral embedding, respectively; the former leads to a nontrivially superior algorithm for the estimation of block memberships in a stochastic block model (\cite{suwan14:_empbayes}), and the latter resolves, through an elegant comparison of Chernoff information, a long-standing open question of the relative merits of the adjacency and Laplacian graph representations.

Morever, graph embedding plays a central role in foundational work
of Tang et al. \cite{tang14:_semipar} and \cite{tang14:_nonpar} on two-sample graph comparison: these papers provide theoretically justified, valid and consistent hypothesis tests for the semiparamatric problem of determining whether two random dot product graphs have the same latent positions and the nonparametric problem of determining whether two random dot product graphs have the
same underlying distributions. This, then, yields a systematic framework for determining statistical similarity across graphs, which in turn underpins yet another provably consistent algorithm for the decomposition of random graphs with a hierarchical structure \cite{lyzinski15_HSBM}.  In \cite{levin_omni_2017}, distributional results are given for an omnibus embedding of multiple random dot product graphs on the same vertex set, and this embedding performs well both for latent position estimation and for multi-sample graph testing. For the critical inference task of vertex nomination, in which the inference goal is to produce an ordering of vertices of interest (see, for instance, \cite{Coppersmith2014}) Fishkind and coauthors introduce in \cite{FisLyzPaoChePri2015} an array of principled vertex nomination algorithms –--the canonical, maximum likelihood and spectral vertex
nomination schemes---and demonstrate the algorithms' effectiveness on both synthetic and real data. In \cite{LyzLevFisPri2016} the consistency of the maximum likelihood vertex nomination scheme is established, a scalable restricted version of the algorithm is introduced, and the algorithms are adapted to incorporate general vertex features. 

Overall, we stress that these principled techniques for random dot product graphs exploit the Euclidean nature of graph embeddings but are general enough to yield meaningful results for a wide variety of random graphs.  Because our focus is, in part, on spectral methods, and because the adjacency matrix $\bA$ of an independent-edge graph can be regarded as a noisy version of the matrix of probabilities $\bP$ \cite{oliveira2009concentration}, we rely on several classical results on matrix perturbations, most prominently the Davis-Kahan Theorem (see \cite{Bhatia1997} for the theorem itself, \cite{rohe2011spectral} for an illustration of its role in graph inference, and \cite{DK_usefulvariant} for a very useful variant). We also depend on the aforementioned spectral bounds of Oliveira in \cite{oliveira2009concentration} and a more recent sharpening due to Lu and Peng \cite{lu13:_spect}. We leverage probabilistic concentration inequalities, such as those of Hoeffding and Bernstein \cite{Tropp2015}. Finally, several of our results do require suitable eigengaps for $\bP$ and lower bounds on graph density, as measured by the maximum degree and the size of the smallest eigenvalue of $\bP$.  It is important to point out that in our analysis, we assume that the embedding dimension $d$ of our graphs is known and fixed.  In real data applications, such an embedding dimension is not known, and in Section \ref{subsec:MBStructure}, we discuss approaches (see \cite{chatterjee2015} and \cite{zhu06:_autom}) to estimating the embedding dimension.  Robustness of our procedures to errors in embedding dimension is a problem of current investigation.

In the study of stochastic blockmodels, there has been a recent push to understand the fundamental information-theoretic limits for community detection and graph partitioning \cite{Abbe2015,Mossel2014,Abbe2016,Mossel2013}. 
These bounds are typically algorithm-free and focus on stochastic blockmodels with constant or logarithmic average degree, in which differences between vertices in different blocks are assumed to be at the boundary of detectability.
Our efforts have a somewhat different flavor, in that we seek to understand the precise behavior of a widely applicable procedure in a more general model.
Additionally, we treat sparsity as a secondary concern, and typically do not broach the question of the exact limits of our procedures.
Our spectral methods may not be optimal for stochastic models \cite{Krzakala2013,Kawamoto2015} but they are very useful, in that they rely on well-optimized computational methods, can be implemented quickly in many standard languages, extend readily to other models, and serve as a foundation for more complex analyses.

Finally, we would be remiss not to point out that while spectral decompositions and clusterings of the adjacency matrix are appropriate for graph inference, they are also of considerable import in combinatorial graph theory: readers may recall, for instance, the combinatorial {\em ratio-cut} problem, whose objective is to partition the vertex set of a graph into two disjoint sets in a way that minimizes the number of edges between vertices in the two sets. The minimizer of a relaxation to the ratio-cut problem \cite{fiedler1973} is the eigenvector associated to the second smallest eigenvalue of the graph Laplacian $\bL$.
While we do not pursue more specific combinatorial applications of spectral methods here, we note that \cite{chung1997spectral} provides a comprehensive overview, and \cite{von2007tutorial} gives an accessible tutorial on spectral methods.

We organize the paper as follows. In Section \ref{sec:Def_Note_Background}, we define random dot product graphs and the adjacency spectral embedding, and we recall important linear algebraic background. In Section \ref{sec:ASE_Inference_RDPG}, we discuss consistency, asymptotic normality, and hypothesis testing, as well as inference for hierarchical models. In Section \ref{sec:Applications}, we discuss applications of these results to real data. Finally, in Section \ref{sec:Complexities} we discuss current theoretical and computational difficulties and open questions, including issues of optimal embedding dimension, model limitations, robustness to errorful observations, and joint graph inference.

\section{Definitions, notation, and background}\label{sec:Def_Note_Background}
\subsection{Preliminaries and notation}
We begin by establishing notation.
For a positive integer $n$, we let $[n]=\{1, 2, \cdots, n\}$. For a vector $\bv \in \mathbb{R}^n$, we let $\| \bv \|$ denote the Euclidean norm of $\bv$. We denote the identity matrix, zero matrix, and the square matrix of all ones by, 
$\bI$, $\zeromx$, and $\bJ$, respectively. 
We use $\otimes$ to denote the Kronecker product.
For an $n_1 \times n_2$ matrix $\bH$, we let $\bH_{ij}$ denote its $i,j$th entry;
we denote by $\bH_{\cdot j}$
the column vector formed by the $j$-th column of $\bH$;
and we denote by $\bH_{i \cdot}$ the row vector
formed by the $i$-th row of $\bH$.
For a slight abuse of notation, we also let $\bH_i \in \R^{n_2}$
denote the \emph{column} vector formed by transposing the $i$-th row
of $\bH$. That is, $\bH_i = (\bH_{i \cdot})^{\top}$.
Given any suitably specified ordering on eigenvalues of a square matrix $\bH$, we let $\lambda_i(\bH)$ denote
the $i$-th eigenvalue (under such an ordering) of $\bH$ and $\sigma_i(\bH) = \sqrt{\lambda_i(\bH^{\top}\bH)}$ the $i$-th singular value of $\bH$. 
We let $\| \bH \|$ denote the spectral norm of $\bH$
and $\| \bH \|_F$ denote the Frobenius norm of $\bH$.
We let $\|\bH\|_{\tti}$ denote the maximum of the Euclidean norms of the rows of $\bH$, i.e.
$\|\bH\|_{\tti}=\max_{i} \| \bH_i \|$.
We denote the trace of a matrix $\bH$ by $\tr(\bH)$.
For an $n \times n$ symmetric matrix $\bH$ whose entries are all non-negative, we will frequently have to account for terms related to matrix sparsity, and we define $\delta(\bH)$ and $\gamma(\bH)$ as follows:
\begin{equation}\label{eq:max_degree}
\delta(\bH) = \max\limits_{1 \leq i \leq n} \sum_{j=1}^n
\bH_{ij} \qquad
\gamma(\bH) = 
\frac{\sigma_{d}(\mathbf{\bH}) - \sigma_{d+1}(\bH)}{\delta(\bH)} \leq 1
\end{equation}
In a number of cases, we need to consider a sequence of matrices. We will denote such a sequence by $\bH_n$, where $n$ is typically used to denote the index of the sequence.  The distinction between a particular element $\bH_n$ in a sequence of matrices and a particular row $\bH_i$ of a matrix will be clear from context, and our convention is typically to use $n$ to denote the index of a sequence and $i$ or $h$ to denote a particular row of a matrix. In the case where we need to consider the $i$th row of a matrix that is itself the $n$th element of a sequence, we will use the notation
$(\bH_n)_i$.

We define a {\em graph} $G$ to be an ordered pair of $(V,E)$ where $V$ is the so-called {\em vertex} or {\em node} set, and $E$, the set of {\em edges}, is a subset of the Cartesian product of $V \times V$. In a graph whose vertex set has cardinality $n$, we will usually represent $V$ as $V=\{1, 2, \cdots, n\}$, and we say there {\em is an edge between} $i$ and $j$ if $(i,j)\in E$.  The {\em adjacency} matrix $\bA$ provides a compact representation of such a graph:
$$\bA_{ij}=1 \textrm{   if   }(i,j) \in E, \textrm{  and  }\bA_{ij}=0 \textrm{ otherwise. }$$ 
Where there is no danger of confusion, we will often refer to a graph $G$ and its adjacency matrix $\bA$
interchangeably.

Our focus is random graphs, and thus we will let $\Omega$ denote our sample space, $\mathcal{F}$ the  $\sigma$-algebra of subsets of $\Omega$  and $\mathbb{P}$ our probability measure $\mathbb{P}: \mathcal{F} \rightarrow [0,1]$. We will denote the expectation of a (potentially multi-dimensional) random variable $X$ with respect to this measure by $\mathbb{E}$. Given an event $F \in \mathcal{F}$, we denote its complement by $F^c$, and we let $\Pr(F)$ denote the probability of $F$.  As we will see, in many cases we can choose $\Omega$ to be subset of Euclidean space. Because we are interested in large-graph inference, we will frequently need to demonstrate that probabilities of certain events decay at specified rates. This motivates the following definition.

\begin{definition} [Convergence asymptotically almost surely and convergence with high probability]
	\label{def:whp}
	Given a sequence of events $\{ F_n \} \in \mathcal{F}$, where $n=1, 2, \cdots$, we say that $F_n$ occurs {\em asymptotically almost surely} if $\Pr(F_n) \rightarrow 1$ as $n \rightarrow \infty$. 
	We say that $F_n$ {\em occurs with high probability},
	and write $F_n \text{ w.h.p. }$,
	if for any $c_0 > 1$, there exists finite positive constant $C_0$ depending on $c_0$ such that $\Pr[ F_n^c ] \le C_0n^{-c_0}$ for all $n$.
	We note that $F_n$ occurring w.h.p. is stronger than $F_n$ occurring asymptotically almost surely. Morever, $F_n$ occurring with high probability implies, by the Borel-Cantelli Lemma \cite{chung1974course},
	that with probability $1$ there exists an $n_0$ such that
	$F_n$ holds for all $n \ge n_0$.
\end{definition}

Moreover, since our goal is often to understand large-graph inference, we need to consider asymptotics as a function of graph size $n$. As such, we recall familiar asymptotic notation:
\begin{definition} [Asymptotic notation] If $w(n)$ is a quantity depending on $n$, we will say that {\em $w$ is of order $\alpha(n)$} and use the notation $w(n) \sim \Theta(\alpha(n))$ to denote that there exist positive constants $c, C$ such that for $n$ sufficiently large,
	$$c\alpha(n) \leq w(n) \leq C \alpha(n).$$
	When the quantity $w(n)$ is clear and $w(n)\sim \Theta(\alpha(n))$, we sometimes simply write ``$w$ is of order $\alpha(n)$".
	We write $w(n) \sim O(n)$ if there exists a constant $C$ such that for $n$ sufficiently large, $w(n) \leq Cn$.  We write $w(n) \sim o(n)$ if $w(n)/n \rightarrow 0$ as $n \rightarrow \infty$, and $w(n)\sim o(1)$ if $w(n) \rightarrow 0$ as $n \rightarrow \infty$. We write $w(n) \sim \Omega(n)$ if there exists a constant $C$ such that for all $n$ sufficiently large, $w(n) \geq Cn$.
\end{definition}

Throughout, we will use $C > 0$ to denote a constant, not depending on $n$,
which may vary from one line to another.

\subsection{Models}
Since our focus is on $d$-dimensional random dot product graphs, we first define an {\em an inner product distribution} as a probability distribution over a suitable subset of $\R^d$, as follows:
\begin{definition}
	[ $d$-dimensional inner product distribution]\label{def:innerprod}
	Let $F$ be a probability distribution whose support is given by $\supp F={\bf \mathcal{X}}_d \subset \R^d$.
	We say that $F$ is a
	\emph{$d$-dimensional inner product distribution}
	on $\R^d$ if for all $\bx,\by \in \mathcal{X}_d=\supp F$, we have $\bx^{\top} \by \in [0,1]$.
\end{definition}

Next, we define a random dot product graph as an independent-edge random graph
for which the edge probabilities are given by the dot products of the latent
positions associated to the vertices.
We restrict our attention here to graphs that are undirected and
in which no vertex has an edge to itself.
\begin{definition} [Random dot product graph with distribution $F$] \label{def:RDPG}
	Let $F$ be a $d$-dimensional inner product distribution
	with $\bX_1,\bX_2,\dots,\bX_n \iid F$, collected in the rows of the matrix
	$\bX=[\bX_1, \bX_2, \dots, \bX_n]^{\top} \in \R^{n \times d}$.
	Suppose $\bA$ is a random adjacency matrix given by
	\begin{equation} \label{eq:rdpg}
	\Pr[\bA|\bX]=
	\prod_{i<j}(\bX_i^{\top}\bX_j)^{\bA_{ij}}(1-\bX_i^{\top}\bX_j)^{1-\bA_{ij}}
	\end{equation}
	We then write $(\bA,\bX) \sim \RDPG(F,n)$ and say that $\bA$ is the adjacency
	matrix of a {\em random dot product graph of dimension or rank at most} $d$ and with {\em latent positions} given by the rows of $\bX$. If $\bX \bX^{\top}$ is, in fact, a rank $d$ matrix, we say $\bA$ is the adjacency matrix of a rank $d$ random dot product graph.
\end{definition}
While our notation for a random dot product graph with distribution $F$ is $(\mathbf{A}, \mathbf{X}) \sim \mathrm{RDPG}(F)$, we emphasize that in this paper the latent positions $\mathbf{X}$ are always assumed to be unobserved. An almost identical definition holds for random dot product graphs with fixed but unobserved latent positions:
\begin{definition}[RDPG with fixed latent positions]\label{rem:RDPG_fixed_latentpos} In the definition ~\ref{def:RDPG} given 
above, the latent positions are themselves random.  If, instead, the latent positions are given by a fixed matrix $\bX$ and, given this matrix, the graph is generated according to Eq.\eqref{eq:rdpg}, we say that $\bA$ is a realization of a random dot product graph with latent positions $\bX$, and we write $\bA \sim \mathrm{RDPG}(\bX)$.
\end{definition}

\begin{remark} [Nonidentifiability]\label{rem:nonid}
	Given a graph distributed as an RDPG,
	the natural task is to recover the latent positions $\bX$ that gave
	rise to the observed graph.
	However, the RDPG model has an inherent nonidentifiability:
	let $\bX \in \R^{n \times d}$ be a matrix of latent positions
	and let $\bW \in \R^{d \times d}$ be a unitary matrix.
	Since $\bX \bX^{\top} = (\bX \bW) (\bX \bW)^{\top}$, it is clear that the latent positions
	$\bX$ and $\bX\bW$ give rise to the same distribution over graphs in
	Equation~\eqref{eq:rdpg}.
	Note that most latent position models, as defined below, also suffer from similar types of non-identifiability as edge-probabilities may be invariant to various transformations.
\end{remark}

As we mentioned, the random dot product graph is a specific instance of the more general {\em latent position random graph} with {\em link} or {\em kernel} function $\kappa$.
Indeed, the latent positions themselves need not belong to Euclidean space per se, and the link function need not be an inner product.

\begin{definition}[Latent position random graph with kernel $\kappa$]\label{def:latentpos_graph} 
	Let $\mathcal{X}$ be a set and $\kappa: \mathcal{X} \times \mathcal{X} \rightarrow [0,1]$ a symmetric function.  Suppose to each $i \in [n]$ there is associated a point $\bX_i \in \mathcal{X}$.  Given $\bX=\{\bX_1, \cdots, \bX_n\}$ consider the graph with adjacency matrix $\bA$ defined by
	\begin{equation} \label{eq:lpg}
	\Pr[\bA|\bX]=
	\prod_{i<j}\kappa(\bX_i,\bX_j)^{\bA_{ij}}(1-\kappa(\bX_i,\bX_j))^{1-\bA_{ij}}
	\end{equation}
	Then $\bA$ is the adjacency matrix of a latent position random graph with latent position $\bX$ and link function $\kappa$.
\end{definition}

Similarly, we can define independent edge graphs for which latent positions need not play a role.

\begin{definition}[Independent-edge graphs]
For a matrix symmetric matrix $\bP$ of probabilities, we say that $\bA$ is distributed as an independent edge graph with probabilities $\bP$ if 
\begin{equation} \label{eq:lpg}
	\Pr[\bA|\bX]=
	\prod_{i<j}\bP_{ij}^{\bA_{ij}}(1-\bP_{ij})^{1-\bA_{ij}}
\end{equation} 
\end{definition}

By their very structure, latent position random graphs, for fixed latent positions, are independent-edge random graphs.
In general, for any latent position graph the matrix of edge probabilities $\bP$ is given by $\bP_{ij}=\kappa(\bX_i,\bX_j)$
Of course, in the case of an random dot product graph with latent position matrix $\bX$, the probability $\bP_{ij}$ of observing an edge between vertex $i$ and vertex $j$ is simply $\bX_i^{\top}\bX_j$.
Thus, for an RDPG with latent positions $\bX$, the matrix $\bP=[p_{ij}]$ is given by $\bP=\bX\bX^{\top}$.

In order to more carefully relate latent position models and RDPGs, we can consider the set of positive semidefinite latent position graphs. 
Namely, we will say that a latent position random graph is positive semidefinite if the matrix $\bP$ is positive semidefinite.
In this case, we note that an RDPG can be used to approximate the latent position random graph distribution.
The best rank-$d$ approximation of $\bP$, in terms of the Frobenius norm \citep{Eckart1936}, will correspond to a RDPG with $d$-dimensional latent positions.
In this sense, by allowing $d$ to be as large as necessary, any positive semi-definite latent position random graph distribution can be approximated by a RDPG distribution to arbitrary precision \citep{tangs.:_univer}.

While latent position models generalize the random dot product graph, RDPGs  can be easily related to the more limited {\em stochastic blockmodel} graph \cite{Holland1983}. The stochastic block model is also an independent-edge random graph whose vertex set is partitioned into $K$ groups, called {\em blocks}, and the stochastic blockmodel is typically parameterized by (1) a
$K\times K$ matrix of probabilities $\bB$ of adjacencies between vertices in
each of the blocks, and (2) a {\em block-assignment vector} $\tau:[n] \rightarrow [K]$ which assigns each vertex to its block. That is, for any two vertices $i,j$, the probability of their connection is 
$$\bP_{ij}=\bB_{\tau(i), \tau(j)},$$
and we typically write $\bA \sim \mathrm{SBM}(\bB, \tau)$.
Here we present an alternative definition in terms of the RDPG model.

\begin{definition}[Positive semidefinite $k$-block stochastic block model]\label{def:PS_SBM} We say an RDPG with latent positions $\bX$ is an SBM with $K$ blocks if the
	number of distinct rows in $\bX$ is $K$, denoted $\bX_{(1)}, \cdots, \bX_{(K)}$  In this case, we define the
	block membership function $\tau:[n]\mapsto [K]$ to be a function
	such that $\tau(i)=\tau(j)$ if and only if $\bX_i=\bX_j$.  
	We then write $$\bA \sim \mathrm{SBM}(\tau, \{\bX_{(i)}\}_{i=1}^{K})$$
	In addition, we also consider the case of a stochastic block model in which the block memberships of each vertex is randomly assigned. More precisely, let $\pi \in (0,1)^{K}$ with $\sum_{k=1}^{n} \pi_k=1$ and suppose that $\tau(1), \tau(2), \dots, \tau(n)$ are now i.i.d. random variables with distribution $\mathrm{Categorical}(\pi)$, i.e., $\mathrm{Pr}(\tau(i) = k) = \pi_k$ for all $k$. Then we say $\bA$ is an {\em SBM with i.i.d block memberships}, and we write $$\bA \sim \mathrm{SBM}(\pi, \{X_{(i)}\}).$$
\end{definition}
We also consider the {\em degree-corrected} stochastic block model:
\begin{definition}[Degree Corrected Stochastic Blockmodel (DCSBM) \citep{karrer2011stochastic}]\label{def:DCSBM} We
	say an RDPG is a DCSBM with $K$ blocks if there exist $K$ unit vectors
	$y_1,\dotsc,y_K\in\Re^{d}$ such that for each $i\in[n]$, there exists
	$k\in[K]$ and $c_i\in(0,1)$ such that $X_i=c_i y_k$.
\end{definition}

\begin{remark} The degree-corrected stochastic blockmodel model is inherently more flexible than the standard SBM because it allows for vertices within each block/community to have different expected degrees.  This flexibility has made it a popular choice for modeling network data \cite{karrer2011stochastic}.
\end{remark}

\begin{definition}[Mixed Membership Stochastic Blockmodel (MMSBM) \citep{Airoldi2008}]
We say an RDPG is a MMSBM with K blocks if there exists $K$ unit vectors $y_1,\dotsc, y_K\in \Re^d$ such that for each $i\in [n]$, there exists $\alpha_1,\dotsc,\alpha_K>0$ such that $\sum_{k=1}^K \alpha_k=1$ and $X_i=\sum_{k=1}^{K} \alpha_k y_k$.
\end{definition}

\begin{remark}
The mixed membership SBM is again more general than the SBM by allowing for each vertex to in a mixture of different blocks. Additionally, note that every RDPG is a MMSBM for some choice of $K$.
\end{remark}

Our next theorem summarizes the relationship between these models.

\begin{theorem}
Considered as statistical models for graphs, i.e. sets of probability distributions on graphs, the positive-semidefinite $K$-block SBM is a subset of the $K$-block DCSBM and the $K$-block MMSBM.
Both the positive semidefinite $K$-block DCSBM and $K$-block MMSBM are subsets of the RDPG model with $K$-dimensional latent positions. 
Finally, the union of all possible RDPG models, without restriction of latent position dimension, is dense in the set of positive semidefinite latent position models. 
\end{theorem}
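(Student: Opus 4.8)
The plan is to verify each inclusion by exhibiting, for a distribution realizable under the smaller model, an explicit parameterization witnessing membership in the larger one, and then to obtain the final density statement by a spectral truncation argument. The single observation that drives everything is that, because our graphs are hollow, the law of $\bA$ depends on the latent positions only through the \emph{off-diagonal} entries of $\bP=\bX\bX^{\top}$, while $\bP$ itself is invariant under $\bX \mapsto \bX\bW$ for orthogonal $\bW$ and under isometric re-embedding of the latent positions into a higher-dimensional Euclidean space (the nonidentifiability remark). Thus throughout I am free to work with the Gram matrix of the latent positions up to orthogonal transformation and choice of ambient dimension.

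For SBM $\subseteq$ DCSBM, I would start from the $K$ distinct latent positions $\bX_{(1)},\dots,\bX_{(K)}$ of a positive-semidefinite $K$-block SBM and set $y_k = \bX_{(k)}/\|\bX_{(k)}\|$ together with $c_i = \|\bX_{(\tau(i))}\|$. Since every inner-product-distribution latent position has norm at most $1$, this yields $c_i\in(0,1]$ and $X_i = c_i\,y_{\tau(i)}$, exhibiting the SBM as the DCSBM with block-constant degree parameters. For the two inclusions into the $K$-dimensional RDPG model, I would simply observe that both the DCSBM positions $c_i y_k$ and the MMSBM positions $\sum_k \alpha_k y_k$ lie in $\operatorname{span}\{y_1,\dots,y_K\}$, a subspace of dimension at most $K$; composing with an isometry of this subspace into $\R^{K}$ preserves all inner products, hence the graph law, and realizes the distribution as an RDPG of rank at most $K$. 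These steps are essentially bookkeeping once the off-diagonal/isometry observation is in hand.

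The substantive inclusion is SBM $\subseteq$ MMSBM, and this is where I expect the main difficulty. I must write each block position $\bX_{(k)}$, whose norm is typically strictly less than $1$, as a strictly positive convex combination of a single common family of $K$ unit vectors $y_1,\dots,y_K$, in such a way that the Gram matrix of the resulting positions reproduces $\bP$ off the diagonal. Writing $M=(\alpha_{kj})$ for the positive, row-stochastic mixing matrix and $\bB=YY^{\top}$ for the correlation matrix of the corners, the requirement is exactly $M\bB M^{\top}=G$, where $G$ is the Gram matrix of the block positions; the useful freedom is that the corners $y_j$ need not themselves be valid latent positions, so $\bB$ may be any correlation matrix and the ambient dimension may be enlarged. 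The crux is therefore a purely geometric lemma: that any configuration of $K$ points of norm at most $1$ is isometric to one lying in the relative interior of the convex hull of $K$ suitably chosen unit vectors, which I would attack by a lifting argument that uses the extra dimensions to push the corners out to the unit sphere without disturbing the prescribed inner products. The genuine obstacle, which I would flag explicitly, is boundary behaviour: when $K=1$ the mixing matrix is forced to be the identity and the construction degenerates, so an \ErdosRenyi graph with connection probability $p<1$ is \emph{not} a one-block MMSBM. Hence the inclusion should be asserted for $K\ge 2$, or after passing to the closure of the model, and the clean statement requires this caveat (as does the strict constraint $c_i\in(0,1)$ in the DCSBM case, which forces similar attention to norm-$0$ and norm-$1$ blocks).

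Finally, for the density claim I would invoke the Eckart--Young theorem, as already noted above: for a positive-semidefinite latent position model with probability matrix $\bP$, the best rank-$d$ approximation is $\bP_d=\bX_d\bX_d^{\top}$ for some $\bX_d\in\R^{n\times d}$, which are the latent positions of a $d$-dimensional RDPG, and $\|\bP-\bP_d\|_F\to 0$ as $d$ increases to $\rank\bP$; at the kernel level the analogous statement follows from Mercer's theorem by truncating the spectral expansion of $\kappa$ to its leading $d$ terms. The only delicate point is that truncation can push entries of $\bP_d$ slightly outside $[0,1]$; this is controlled in the limit and is precisely the content of the universal approximation result of Tang et al.\ cited above, which I would appeal to in order to fix the topology on the space of models in which density is claimed.
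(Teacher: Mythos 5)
The paper states this theorem without proof, so there is no argument of record to compare against; I am therefore assessing your proposal on its own terms. The routine parts are fine: the normalization $y_k=\bX_{(k)}/\|\bX_{(k)}\|$, $c_i=\|\bX_{(\tau(i))}\|$ for SBM $\subseteq$ DCSBM; the observation that DCSBM and MMSBM latent positions lie in $\operatorname{span}\{y_1,\dots,y_K\}$ and hence embed isometrically into $\R^K$; and the density claim via Eckart--Young, which is exactly the route the paper itself indicates in the paragraph preceding the theorem. Your boundary counterexamples are also correct and worth recording: a one-block MMSBM forces $X_i=y_1$ of unit norm, so an \ErdosRenyi graph with $p<1$ is not a one-block MMSBM, and any block with within-block probability exactly $1$ or $0$ similarly breaks the literal inclusions.

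The gap is in SBM $\subseteq$ MMSBM, which you correctly single out as the crux but do not prove, and the lifting strategy you propose for it provably cannot work in the main case. Whenever the $K$ distinct block positions $v_1,\dots,v_K$ are affinely independent, any representation $v_k=\sum_j\alpha_{kj}y_j$ with $\sum_j\alpha_{kj}=1$ forces $\operatorname{aff}(v_1,\dots,v_K)\subseteq\operatorname{aff}(y_1,\dots,y_K)$; the left side has dimension $K-1$ and the right side has dimension at most $K-1$, so the two affine hulls coincide and every corner $y_j$ lies in $\operatorname{aff}(v_1,\dots,v_K)$. There are then no extra dimensions to lift into: the corners are confined to the $(K-2)$-sphere in which this affine hull meets the unit sphere, and that sphere is determined intrinsically by the Gram matrix of the $v_k$, independent of the ambient dimension. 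What actually has to be shown is that $K$ affinely independent points of norm strictly less than $1$ always lie in the relative interior of some simplex inscribed in that $(K-2)$-sphere. This is immediate for $K=2$ (the line through the two points meets the unit sphere in two points, and both given points sit strictly between them --- no lifting needed, which already shows the lifting framing is the wrong one), and an explicit computation with $M=\bigl(\begin{smallmatrix}1-s&s\\ t&1-t\end{smallmatrix}\bigr)$ and $B=\bigl(\begin{smallmatrix}1&\beta\\ \beta&1\end{smallmatrix}\bigr)$ confirms $MBM^{\top}=G$ is always solvable when $G_{11},G_{22}<1$, using only $G_{11}+G_{22}-2G_{12}\ge 0$ from positive semidefiniteness. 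For $K\ge3$ the statement appears true but is delicate (three points of norm $0.99$ at angles $0^\circ,10^\circ,20^\circ$ in the disk are contained only in carefully chosen sliver-shaped inscribed triangles, and natural choices of vertices fail), and your proposal supplies neither this argument nor a substitute for it.
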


\subsection{Embeddings}

Since we rely on spectral decompositions, we begin with describing the notations for the
spectral decomposition of the rank $d$ positive semidefinite matrix $\bP=\bX\bX^{\top}$.
\begin{definition} [Spectral Decomposition of $\bP$]\label{def:spec_decomp_P}
	Since $\bP$ is symmetric and positive semidefinite, let
	$\bP= \UP \SP \UP^{\top}$ denote its spectral decomposition,
	with $\UP \in \R^{n \times d}$ having orthonormal columns
	and $\SP \in \R^{d \times d}$ a diagonal matrix
	with nonincreasing entries
	$(\SP)_{1,1}\ge (\SP)_{2,2} \ge \cdots \ge (\SP)_{d,d} > 0$.
\end{definition}
As with the spectral decomposition of the matrix $\bP$, given an adjancency matrix $\bA$, we define its adjacency spectral embedding as follows; 
\begin{definition} [Adjacency spectral embedding (ASE)]\label{def:ASE}
	Given a positive integer $d \geq 1$, the {\em adjacency spectral embedding} (ASE) of $\mathbf{A}$ into
	$\mathbb{R}^{d}$ is given by $\hat{{\bf X}}={\bf U}_{\mathbf{A}}
	{\bf S}_{\mathbf{A}}^{1/2}$ where
	$$|{\bf A}|=[{\bf U}_{\mathbf{A}}|{\bf U}^{\perp}_{\mathbf{A}}][{\bf
		S}_{\mathbf{A}} \bigoplus {\bf S}^{\perp}_{\mathbf{A}}][{\bf
		U}_{\mathbf{A}}|{\bf U}^{\perp}_{\mathbf{A}}]$$ is the spectral
	decomposition of $|\bf{A}| = (\bf{A}^{T} \bf{A})^{1/2}$ and
	$\mathbf{S}_{\mathbf{A}}$ is the diagonal matrix of the $d$ largest eigenvalues
	of $|\mathbf{A}|$ and $\mathbf{U}_{\mathbf{A}}$ is the $n \times d$ matrix whose
	columns are the corresponding eigenvectors.
\end{definition}
\begin{remark}
  The intuition behind the notion of adjacency spectral embedding is
  as follows.
  Given the goal of estimating $\bX$, had we observed $\bP$ then the spectral embedding of $\bP$, given by $\UP \SP^{1/2}$, will be a orthogonal transformation of $\bX$.
  Of course, $\bP$ is not observed but instead we observe $\bA$, a noisy version of $\bP$. 
  The ASE will be a good estimate of $\bX$ provided that the noise does not greatly impact the embedding.
  As we will see shortly,
  one can show that
  $\|\mathbf{A} - \mathbf{X} \mathbf{X}^{\top} \| = O(\|\mathbf{X}\|)
  = o(\|\mathbf{X} \mathbf{X}^{\top}\|)$
  with high probability \cite{oliveira2009concentration,lu13:_spect,Tropp2015,rinaldo_2013}. 
  That is to say, $\mathbf{A}$ can be viewed as a ``small''
  perturbation of $\mathbf{X} \mathbf{X}^{\top}$. 
  Weyl's inequality or the Kato-Temple inequality \cite{cape_16_conc,kato-temple} then yield that the eigenvalues of $\mathbf{A}$ are ``close'' to the eigenvalues of $\mathbf{X} \mathbf{X}^{\top}$.
  In addition, by the
  Davis-Kahan theorem \cite{davis70}, the subspace
  spanned by the top $d$ eigenvectors of $\mathbf{X}
  \mathbf{X}^{\top}$ is well-approximated by the subspace spanned by
  the top $d$ eigenvectors of $\mathbf{A}$.
\end{remark}

We also define the analogous Laplacian spectral embedding which uses the spectral decomposition of the normalized Laplacian matrix.

\begin{definition}
\label{def:LSE}
Let $\mathcal{L}(\mathbf{A}) = \mathbf{D}^{-1/2} \mathbf{A} \mathbf{D}^{-1/2}$ denote the normalized Laplacian of $\mathbf{A}$ where $\mathbf{D}$ is the diagonal matrix whose diagonal entries $\mathbf{D}_{ii} = \sum_{j \not = i} \mathbf{A}_{ij}$. Given a positive integer $d \geq 1$, the {\em Laplacian spectral embedding} (LSE) of $\mathbf{A}$ into
	$\mathbb{R}^{d}$ is given by $\breve{{\bf X}}={\bf U}_{\mathcal{L}(\mathbf{A})}
	\tilde{{\bf S}}_{\mathbf{A}}^{1/2}$
	where $$|\mathcal{L}({\bf A})|=\Bigl[{\bf U}_{\mathcal{L}(\mathbf{A})}|{\bf U}^{\perp}_{\mathcal{L}(\mathbf{A})}\Bigr]\Bigl[{\bf
		S}_{\mathcal{L}(\mathbf{A})} \bigoplus {\bf S}^{\perp}_{\mathcal{L}(\mathbf{A})}\Bigr]\Bigl[{\bf
		U}_{\mathcal{L}(\mathbf{A})}|{\bf U}^{\perp}_{\mathcal{L}(\mathbf{A})}\Bigr]$$ is the spectral
	decomposition of $|\mathcal{L}(\mathbf{A})| = (\mathcal{L}(\mathbf{A})^{\top} \mathcal{L}(\mathbf{A}))^{1/2}$ and
	$\mathbf{S}_{\mathcal{L}(\mathbf{A})}$ is the diagonal matrix containg the $d$ largest eigenvalues
	of $|\mathcal{L}(\mathbf{A})|$ on the diagonal and $\mathbf{U}_{\mathcal{L}(\mathbf{A})}$ is the $n \times d$ matrix whose
	columns are the corresponding eigenvectors.
\end{definition}

Finally, there are a variety other matrices for which spectral decompositions may be applied to yield an embedding of the graph \cite{Le2017}.
These are often dubbed as regularized embeddings and seek to improve the stability of these methods in order to accommodate sparser graphs.
While we do not analyze these embeddings directly, many of our approaches can be adapted to these other embeddings.


\section{Core proof techniques: probabilistic and linear algebraic bounds}\label{sec:core_techniques}

In this section, we give a overview of the core background results used in our proofs.
The key tools to several of our results on consistency and normality of the adjacency spectral embedding depend on a triumvirate of matrix concentration inequalities, the Davis-Kahan Theorem, and detailed bounds via the power method. 

\subsection{Concentration inequalities}

Concentration inequalities for real- and matrix-valued data are a critical component to our proofs of consistency for spectral estimates.  We make use of classical inequalities, such as Hoeffding's inequality, for real-valued random variables, and we also exploit more recent work on the concentration of sums of random matrices and matrix martingales around their expectation. For a careful study of several important matrix concentration inequalities, see \cite{Tropp2015}. 

We begin by recalling Hoeffding's inequality, which bounds the deviations between a sample mean of independent random variables and the expected value of that sample mean.
\begin{theorem}
	\label{thm:Hoeffding}
	Let $X_i$, $1 \leq i \leq n$, be independent, bounded random variables defined on some probability space $(\Omega, \mathcal{F}, \mathbb{P})$.  Suppose $a_i, b_i$ are real numbers such that $a_i \leq X_i \leq b_i$.  Let $\bar{X}$ be their sample  mean:
	$$\bar{X}=\frac{1}{n} \sum_{i=1}^n X_i$$
	Then
	\begin{equation}\label{eq:Hoeffding_bound1}
	\Pr\left(\bar{X}-\mathbb{E}(\bar{X}) \geq t\right) \leq \exp\left(\frac{[-2n^2t^2]}{\sum_{i=1}^n (b_i -a_i)^2}\right)
	\end{equation}
	and
	\begin{equation}\label{eq:Hoeffding_bound2}
	\Pr\left(|\bar{X}-\mathbb{E}(\bar{X})| \geq t\right) \leq 2\exp\left(\frac{[-2n^2t^2]}{\sum_{i=1}^n (b_i -a_i)^2}\right)
	\end{equation}
\end{theorem}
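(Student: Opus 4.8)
The plan is to use the exponential-moment (Chernoff) method together with Hoeffding's lemma, and then optimize over the free exponential parameter. First I would reduce the sample-mean statement to one about the centered sum $S_n = \sum_{i=1}^n (X_i - \mathbb{E} X_i)$: since the event $\bar{X} - \mathbb{E}(\bar{X}) \geq t$ is exactly the event $S_n \geq nt$, it suffices to bound $\Pr(S_n \geq nt)$. For any $s > 0$, Markov's inequality applied to the nonnegative random variable $e^{s S_n}$ gives $\Pr(S_n \geq nt) \leq e^{-snt}\, \mathbb{E}[e^{s S_n}]$, and because the $X_i$ are independent, the moment generating function factorizes as $\mathbb{E}[e^{s S_n}] = \prod_{i=1}^n \mathbb{E}\bigl[e^{s(X_i - \mathbb{E} X_i)}\bigr]$.

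The crux is to control each factor, which is exactly the content of Hoeffding's lemma: if $Y$ is a mean-zero random variable with $Y \in [\alpha, \beta]$ almost surely, then $\mathbb{E}[e^{sY}] \leq \exp\bigl(s^2 (\beta - \alpha)^2 / 8\bigr)$. I would prove this by convexity: on $[\alpha,\beta]$ the map $y \mapsto e^{sy}$ lies below its chord, so $e^{sY} \leq \frac{\beta - Y}{\beta - \alpha}\, e^{s\alpha} + \frac{Y - \alpha}{\beta - \alpha}\, e^{s\beta}$; taking expectations and using $\mathbb{E} Y = 0$ reduces the bound to $e^{\psi(s)}$ for an explicit function $\psi$ whose second derivative is at most $(\beta - \alpha)^2/4$. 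A second-order Taylor expansion of $\psi$ about $0$, using $\psi(0) = \psi'(0) = 0$, then yields $\psi(s) \leq s^2(\beta - \alpha)^2/8$. Applying this with $Y = X_i - \mathbb{E} X_i$, whose range has length $b_i - a_i$, gives $\mathbb{E}[e^{s(X_i - \mathbb{E} X_i)}] \leq \exp\bigl(s^2 (b_i - a_i)^2 / 8\bigr)$.

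Combining these bounds, I obtain $\Pr(S_n \geq nt) \leq \exp\bigl(-snt + \tfrac{s^2}{8}\sum_{i=1}^n (b_i - a_i)^2\bigr)$ for every $s > 0$. The final step is to optimize the right-hand side over $s$: writing $D = \sum_{i=1}^n (b_i - a_i)^2$, the exponent is a quadratic in $s$ minimized at $s^\star = 4nt/D$, and substituting $s^\star$ collapses the bound to $\exp\bigl(-2 n^2 t^2 / D\bigr)$, which is precisely \eqref{eq:Hoeffding_bound1}. For the two-sided inequality \eqref{eq:Hoeffding_bound2}, I would apply the one-sided bound a second time to the variables $-X_i$, which satisfy $-b_i \leq -X_i \leq -a_i$ over an interval of the same length, to control $\Pr(\bar{X} - \mathbb{E}(\bar{X}) \leq -t)$, and then combine the two tail estimates by a union bound, producing the factor of $2$.

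I would flag Hoeffding's lemma as the main obstacle: the Chernoff reduction and the concluding optimization are routine, but the per-variable moment generating function bound requires the convexity/chord argument together with a careful estimate of $\psi''$, and it is precisely there that the sharp constant $(b_i - a_i)^2/8$ — and hence the factor $2$ in the final exponent — originates.
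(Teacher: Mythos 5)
Your proof is correct and complete in outline: the reduction of the event $\bar{X}-\mathbb{E}(\bar{X})\geq t$ to $S_n\geq nt$, the Chernoff--Markov bound with factorization of the moment generating function by independence, Hoeffding's lemma with the sharp constant $(b_i-a_i)^2/8$ via the convexity/chord argument and the Taylor bound $\psi(s)\leq s^2(\beta-\alpha)^2/8$, the optimization at $s^\star = 4nt/\sum_i (b_i-a_i)^2$ yielding the exponent $-2n^2t^2/\sum_i(b_i-a_i)^2$, and the union bound over the two tails for the factor of $2$ are all exactly as they should be. The paper itself states this classical inequality as recalled background and supplies no proof, so there is nothing to compare against; your argument is the canonical one, and you correctly identified Hoeffding's lemma as the only nontrivial ingredient.
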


For an undirected, hollow RDPG with probability matrix $\bP$, $\E(\bA_{ij})=\bP_{ij}$ for all $i \neq j$. As such, one can regard $\bA$ as a ``noisy'' version of $\bP$. 
It is tempting to believe that $\bA$ and $\bP$ are close in terms of the Frobenius norm, but this is sadly not true; indeed, it is easy to see that $$\|\bA-\bP\|_F^2=\Theta(\|\bP\|_F^2)$$
To overcome this using only Hoeffding's inequality, we can instead consider the difference $(\bA^2-\bP^2)_{ij}$, which is a sum of independent random variables. 
Hence, Hoeffding's inequality implies that
$$|(\bA^2-\bP^2)_{ij}|^2=o(|{\bP^2}_{ij}|^2)$$
Since the eigenvectors of $\bA$ and $\bA^2$ coincide, this is itself sufficient to show concentration of the adjacency spectral embedding \citep{sussman12,rohe2011spectral}.
However, somewhat stronger and more elegant results can be shown by considering the spectral norm instead. In particular, a nontrivial body of recent work on matrix concentration implies that, under certain assumptions on the sparsity of $\bP$, the spectral norm of $\bA-\bP$ can be well-controlled. We focus on the following important result of Oliveira \cite{oliveira2009concentration} and Tropp \cite{Tropp2015} and  further improvements of Lu and Peng \cite{lu13:_spect} and Lei and Rinaldo \cite{rinaldo_2013}, all of which establish that the $\bA$ and $\bP$ are close in spectral norm.
\begin{theorem} [Spectral norm control of $\bA-\bP$ from \cite{oliveira2009concentration, Tropp2015}]\label{thm:oliveira}
		Suppose
Let $\bA$ be the adjacency matrix of an independent-edge random graph on $[n]$ with matrix of edge probabilities $\bP$. For any constant $c$, there exists another constant $C$, independent of $n$ and $\bP$, such that if $\delta(\bP)>C \ln n$, then for any $n^{-c}<\eta<1/2$,
\begin{equation}\label{eq:Oliveira_orig}
\Pr \left(\|\bA-\bP\| \leq 4\sqrt{\delta(P) \ln (n/\eta)}\right) \geq 1-\eta.
\end{equation} 
\end{theorem}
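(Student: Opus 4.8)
The plan is to realize $\bA-\bP$ as a sum of independent, mean-zero, symmetric random matrices and apply the matrix Bernstein inequality (see \cite{Tropp2015}), then calibrate the deviation against the density hypothesis. Since the graph is hollow with independent edges above the diagonal, I would write
$$\bA-\bP = \sum_{i<j}(\bA_{ij}-\bP_{ij})(\be_i\be_j^{\top}+\be_j\be_i^{\top}) \;-\; \operatorname{diag}(\bP) \;=:\; \sum_{i<j}\bM_{ij} \;-\; \operatorname{diag}(\bP),$$
where $\be_i$ is the $i$-th standard basis vector. The deterministic diagonal correction has spectral norm $\max_i\bP_{ii}\le 1$, negligible next to the target $4\sqrt{\delta(\bP)\ln(n/\eta)}$, so I would dispatch it by the triangle inequality and focus on the random sum $\sum_{i<j}\bM_{ij}$. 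Each $\bM_{ij}$ is symmetric, mean-zero (as $\E\bA_{ij}=\bP_{ij}$), and the family is independent because distinct edges are independent.

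Next I would assemble the two quantities matrix Bernstein needs. For the uniform bound, $\bM_{ij}$ has only the entries $\bA_{ij}-\bP_{ij}\in[-1,1]$, so $\|\bM_{ij}\|=|\bA_{ij}-\bP_{ij}|\le 1$, giving $R=1$. For the variance statistic, $\bM_{ij}^2=(\bA_{ij}-\bP_{ij})^2(\be_i\be_i^{\top}+\be_j\be_j^{\top})$, hence $\E\bM_{ij}^2=\Var(\bA_{ij})(\be_i\be_i^{\top}+\be_j\be_j^{\top})$ with $\Var(\bA_{ij})=\bP_{ij}(1-\bP_{ij})\le\bP_{ij}$. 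Summing, $\sum_{i<j}\E\bM_{ij}^2$ is diagonal with $k$-th entry $\sum_{j\ne k}\bP_{kj}(1-\bP_{kj})\le\delta(\bP)$, so the variance proxy is $\sigma^2=\bigl\|\sum_{i<j}\E\bM_{ij}^2\bigr\|\le\delta(\bP)$. Matrix Bernstein then yields, for every $t>0$,
$$\Pr\left(\Bigl\|\sum_{i<j}\bM_{ij}\Bigr\|\ge t\right) \le 2n\exp\left(\frac{-t^2/2}{\delta(\bP)+t/3}\right).$$

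Finally I would substitute $t=4\sqrt{\delta(\bP)\ln(n/\eta)}$ and check the right side is at most $\eta$. Writing $\delta=\delta(\bP)$ and $L=\ln(n/\eta)$, the denominator becomes $\delta+\tfrac{4}{3}\sqrt{\delta L}$, and the whole point is to force Bernstein into its sub-Gaussian regime, where the variance term $\delta$ dominates the linear term $\tfrac{4}{3}\sqrt{\delta L}$. This is exactly where the hypotheses act: $\eta>n^{-c}$ gives $L=\ln n+\ln(1/\eta)<(1+c)\ln n$, while $\delta>C\ln n$ gives $\delta>\tfrac{C}{1+c}L$; choosing $C\ge 16(1+c)$ forces $\delta\ge 16L$, whence $\tfrac{4}{3}\sqrt{\delta L}\le\tfrac{1}{3}\delta$ and the denominator is at most $\tfrac{4}{3}\delta$. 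The exponent is then at least $\tfrac{8\delta L}{(4/3)\delta}=6L$, so the bound is at most $2n e^{-6L}\le n e^{-L}=\eta$, using $L=\ln(n/\eta)>\ln 2$ (as $\eta<1/2$). The constant $4$ leaves ample room to reabsorb the $O(1)$ diagonal term.

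I expect the crux to be precisely this last calibration rather than the algebra: the $2n$ dimension factor in matrix Bernstein is harmless only once $\delta\gtrsim\ln n$, and recovering the clean constant $4$ requires $\sqrt{\delta L}$ to be genuinely lower-order than $\delta$, which is what the density assumption $\delta(\bP)>C\ln n$ (with $C$ taken large as a function of $c$) is engineered to guarantee. The decomposition and the two moment computations are routine; the density–deviation trade-off is where the real content lies.
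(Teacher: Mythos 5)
Your proof is correct, and it is essentially the canonical argument: the paper itself does not prove this theorem but imports it from \cite{oliveira2009concentration} and \cite{Tropp2015}, and the route you take---decompose $\bA-\bP$ into the independent rank-two summands $(\bA_{ij}-\bP_{ij})(\be_i\be_j^{\top}+\be_j\be_i^{\top})$, compute $R=1$ and $\sigma^2\le\delta(\bP)$, and apply matrix Bernstein---is exactly the proof one finds in Tropp's monograph (Oliveira's original derivation used a slightly different matrix concentration machinery, but the structure is the same). Your calibration of the density hypothesis against the deviation level, with $C\ge 16(1+c)$ forcing the sub-Gaussian regime, is sound. The only loose end is the diagonal correction: as written you bound $\|\sum_{i<j}\bM_{ij}\|$ by $4\sqrt{\delta(\bP)\ln(n/\eta)}$ and then add $1$, so to land exactly on the stated constant you should run Bernstein at $t=4\sqrt{\delta(\bP)\ln(n/\eta)}-1$ and note that, since $\sqrt{\delta(\bP)\ln(n/\eta)}\ge 4\ln 2$ under your choice of $C$, this shifted $t$ still exceeds $3.6\sqrt{\delta(\bP)\ln(n/\eta)}$ and the exponent remains larger than $\ln(2n/\eta)$; this is the ``ample room'' you allude to, but it deserves one explicit line.
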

In \cite{lu13:_spect}, the authors give an improvement under slightly stronger density assumptions\footnote{A similar bound is provided in \cite{rinaldo_2013}, but with $\delta(\mathbf{P})$ defined as $\delta(\mathbf{P}) = n \max_{ij} \mathbf{P}_{ij}$ and a density assumption of the form $(n \max_{ij} \mathbf{P}_{ij}) > (\log n)^{1 + a}$.}:
\begin{theorem}[Spectral norm control of $\bA-\bP$  \cite{lu13:_spect}]\label{thm:lu_peng} 
	With notation as above, suppose there exist positive constants such that for $n$ sufficiently large, $\delta(\bP)>(\log n)^{4 + a}$.
	Then for any $c > 0$ there exists a constant $C$ depending on $c$ such that
	\begin{equation}\label{eq:lu_peng_spec_norm}
	\mathbb{P}\left(\|\bA-\bP\|\leq 2 \sqrt{\delta(\bP)}+C\delta^{1/4}(\bP) \ln n\right) \geq 1 - n^{-c}.
	\end{equation}
	\end{theorem}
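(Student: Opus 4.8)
The plan is to establish the bound via the high-moment (trace) method applied to the symmetric, hollow error matrix $\bE = \bA - \bP$, whose above-diagonal entries $\bE_{ij} = \bA_{ij} - \bP_{ij}$ are independent, mean-zero, bounded by $1$ in absolute value, with variances $\sigma_{ij}^2 := \E[\bE_{ij}^2] = \bP_{ij}(1-\bP_{ij}) \le \bP_{ij}$. The starting point is Markov's inequality applied to an even power: for any positive integer $k$,
\[
\Pr\bigl(\|\bE\| \ge t\bigr) \le t^{-2k}\,\E\|\bE\|^{2k} \le t^{-2k}\,\E\,\tr(\bE^{2k}).
\]
A matrix-Bernstein argument of the type underlying Theorem \ref{thm:oliveira} would yield only a bound of the rough form $\|\bE\| \lesssim \sqrt{\delta(\bP)\ln n} + \ln n$, carrying a spurious $\sqrt{\ln n}$ factor in the leading term; to obtain the sharp leading constant $2$ at the scale $\sqrt{\delta(\bP)}$, the delicate combinatorics of $\E\,\tr(\bE^{2k})$ is unavoidable.

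First I would expand $\E\,\tr(\bE^{2k}) = \sum_{i_0,\dots,i_{2k-1}} \E[\bE_{i_0 i_1}\bE_{i_1 i_2}\cdots \bE_{i_{2k-1} i_0}]$ as a sum over closed walks of length $2k$ in the complete graph on $[n]$. Independence and the mean-zero property force each term to vanish unless every edge appearing in the walk is traversed at least twice, so only walks tracing a connected subgraph with at most $k$ distinct edges survive. The dominant contribution comes from walks that visit $k+1$ distinct vertices and traverse each edge of a spanning tree exactly twice: these walk-shapes are counted by the Catalan number $C_k = \frac{1}{k+1}\binom{2k}{k} \le 4^k$, and for each such shape, summing the product of the $k$ edge-variances over the free (non-root) vertices contributes at most $\delta(\bP)^k$, via the row bound $\sum_{j} \sigma_{ij}^2 \le \sum_j \bP_{ij} \le \delta(\bP)$. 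Summing over the root $i_0 \in [n]$ then gives the leading estimate $\E\,\tr(\bE^{2k}) \lesssim n\,(4\delta(\bP))^k = n\,\bigl(2\sqrt{\delta(\bP)}\bigr)^{2k}$, already reproducing the semicircle edge $2\sqrt{\delta(\bP)}$.

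The remaining work, and the crux of the Lu--Peng refinement, is to control the lower-order walk classes — those that revisit vertices, close extra cycles, or traverse some edge with multiplicity greater than two — and to show that they contribute only at the scale of the correction $C\delta^{1/4}(\bP)\ln n$. I would organize these walks by their number of distinct vertices, bounding each class by the number of embeddings of its shape (a power of $n$ determined by the vertex count) times the corresponding product of variance and higher-moment factors, the latter absorbed using $\E[|\bE_{ij}|^m]\le \sigma_{ij}^2$ for $m \ge 2$, and then summing the resulting geometric series. Choosing $k \asymp \ln n$ and setting $t = 2\sqrt{\delta(\bP)} + C\delta^{1/4}(\bP)\ln n$, the prefactor $n$ is defeated by $t^{-2k}$; the density hypothesis $\delta(\bP) > (\log n)^{4+a}$ is precisely what forces $\delta^{1/4}(\bP)\ln n = o\bigl(\sqrt{\delta(\bP)}\bigr)$, so that the correction is genuinely lower order and the tail is $O(n^{-c})$.

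The main obstacle will be the third step: the meticulous enumeration and magnitude-estimation of the non-tree walk classes, carried out uniformly over the inhomogeneous variances $\sigma_{ij}^2$. Securing the leading constant exactly $2$ — rather than the $4$ of Theorem \ref{thm:oliveira} — hinges on verifying that every deviation from the double-tree structure is penalized by at least one additional factor of order $\delta^{-1/2}(\bP)$ or $\delta^{-1/4}(\bP)\ln n$. It is this combinatorial bookkeeping, rather than any single inequality, where the real effort lies.
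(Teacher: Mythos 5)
The paper does not prove this theorem: it is imported verbatim from Lu and Peng \cite{lu13:_spect}, so there is no internal proof to compare against. Measured instead against the argument in the cited source, your outline is the right one: Lu and Peng's proof is exactly a trace-method argument in the style of F\"uredi--Koml\'os and Vu's sharpening for bounded random matrices, in which the inhomogeneous row-variance bound $\sum_j \sigma_{ij}^2 \le \delta(\bP)$ replaces the homogeneous quantity $n\sigma^2$, the Catalan-counted double-tree walks produce the leading edge $2\sqrt{\delta(\bP)}$, the non-tree walk classes are charged to the $\delta^{1/4}(\bP)\ln n$ correction, and $k \asymp \ln n$ with the density hypothesis $\delta(\bP) > (\log n)^{4+a}$ closes the Markov bound. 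Your reductions $\E[|\bE_{ij}|^m] \le \sigma_{ij}^2$ for $m \ge 2$ and the leaf-peeling bound $\delta(\bP)^k$ for tree shapes are correct, and you correctly identify why a matrix-Bernstein route cannot recover the constant $2$. The one honest caveat is the one you name yourself: the enumeration of the non-double-tree walk classes is the entire technical content of the result, and your proposal defers it rather than executing it; as a proof plan it is faithful to the cited argument, but it is a plan, not a proof.
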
 
	
\subsection{Matrix perturbations and spectral decompositions}

The above results formalize our intuition that $\bA$ provides a ``reasonable" estimate for $\bP$. Moreover, in the RDPG case, where $\bP$ is of low rank and is necessarily positive semidefinite, these results have implications about the nonnegativity of the eigenvalues of $\bA$. Specifically, we use Weyl's Theorem to infer bounds on the differences between the eigenvalues of $\bA$ and $\bP$ from the spectral norm of their difference, and the Gerschgorin Disks Theorem to infer lower bounds on the maximum row sums of $\bP$ from assumptions on the eigengap of $\bP$ (since both $\bP$ and $\bA$ are nonnegative matrices, one could also obtain the same lower bounds by invoking the Perron-Frobenius Theorem). For completeness, we recall the Gerschgorin Disks Theorem and Weyl's Theorem. The former relates the eigenvalues of a matrix to the sums of the absolute values of the entries in each row, and the latter establishes bounds on the differences in eigenvalues between a matrix and a perturbation.
	\begin{theorem}[Gerschgorin Disks \cite{horn85:_matrix_analy}]
		\label{thm:Gerschgorin}
		Let $\bH$ be a complex $n\times n$ matrix, with entries $\bH_{ij}$\,. For $i\in \{1,\cdots ,n\}$let $R_{i}=\sum _{j\neq {i}}\left|\bH_{ij}\right|$. Let the $i$th \textrm{Gerschgorin disk} $D(\bH_{ii},R_{i})$ be the closed disk centered at $\bH_{ii}$ with radius $R_{i}$.
		Then every eigenvalue of $\bH$ lies within at least one of the Gershgorin discs $D(\bH_{ii},R_{i})$.
	\end{theorem}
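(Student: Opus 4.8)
The plan is to localize an arbitrary eigenvalue $\lambda$ of $\bH$ by examining the single coordinate of an associated eigenvector where the eigenvector attains its largest modulus. First I would fix any eigenvalue $\lambda \in \mathbb{C}$ of $\bH$ together with a corresponding nonzero eigenvector $\bv \in \mathbb{C}^n$, so that $\bH \bv = \lambda \bv$. Since $\bv \neq \zeromx$, I can select an index $i \in \{1,\dots,n\}$ at which $|\bv_i| = \max_{1 \le j \le n} |\bv_j| > 0$. This choice of the ``dominant'' coordinate is the heart of the argument: it is the device that converts the vector identity $\bH\bv = \lambda\bv$ into a single usable scalar inequality.

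Next I would write out the $i$-th component of the eigenvalue equation, namely $\sum_{j=1}^n \bH_{ij}\bv_j = \lambda \bv_i$, and isolate the diagonal term to obtain $(\lambda - \bH_{ii})\bv_i = \sum_{j \neq i} \bH_{ij}\bv_j$. Taking moduli of both sides and applying the triangle inequality to the complex-valued sum on the right then yields
\begin{equation}
|\lambda - \bH_{ii}|\,|\bv_i| \;\le\; \sum_{j \neq i} |\bH_{ij}|\,|\bv_j|.
\end{equation}

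Finally I would invoke the maximality of $|\bv_i|$ to bound each $|\bv_j| \le |\bv_i|$, so that the right-hand side is at most $|\bv_i| \sum_{j \neq i} |\bH_{ij}| = |\bv_i|\, R_i$. Since $|\bv_i| > 0$, I may divide through by it to conclude $|\lambda - \bH_{ii}| \le R_i$, which is exactly the statement that $\lambda$ lies in the closed disk $D(\bH_{ii}, R_i)$. As $\lambda$ was an arbitrary eigenvalue of $\bH$, every eigenvalue lies within at least one Gerschgorin disk, as claimed.

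There is no substantial obstacle in this proof; it is a short and classical argument. The only point requiring any care is the selection of the dominant coordinate, which must be accompanied by the observation that $|\bv_i| > 0$ (guaranteed because an eigenvector is by definition nonzero) so that the final division is legitimate, and the correct application of the triangle inequality to a sum of complex numbers. Everything else is a direct manipulation of the eigenvalue equation.
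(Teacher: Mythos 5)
Your proof is correct and complete: this is the classical dominant-coordinate argument, and the paper itself states the theorem without proof, citing \cite{horn85:_matrix_analy}, where essentially this same argument appears. The one point requiring care---that $|\bv_i|>0$ so the final division is legitimate---is exactly the point you flagged, so nothing is missing.
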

	\begin{theorem} [Weyl \cite{horn85:_matrix_analy}] Let $\bM, \bH,$ and $\bR$ be $n\times n$ Hermitian matrices, and suppose $\bM=\bH+\bR$. Suppose $\bH$ and $\bR$ have eigenvalues $\nu_1 \geq \cdots \geq \nu_n$ and $r_1 \geq \cdots \geq r_n$, respectively.  Suppose the eigenvalues of $\bM$ are given by $\mu_1 \geq \cdots \geq \mu_n$. Then
	$$\nu_i + r_n \leq \mu_i \leq \nu_i + r_1$$ 
	\end{theorem}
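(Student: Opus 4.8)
The plan is to derive both inequalities simultaneously from the Courant--Fischer min-max variational characterization of the eigenvalues of a Hermitian matrix. First I would recall that for any $n \times n$ Hermitian matrix $\bN$ with eigenvalues $\kappa_1 \geq \cdots \geq \kappa_n$, the $i$-th eigenvalue admits the representation
$$\kappa_i = \max_{\substack{S \subseteq \mathbb{C}^n \\ \dim S = i}} \; \min_{\substack{\bv \in S \\ \|\bv\| = 1}} \bv^* \bN \bv,$$
where $\bv^*$ denotes the conjugate transpose and the outer maximum ranges over all $i$-dimensional subspaces $S$. This is the standard consequence of the spectral theorem for Hermitian matrices, and I would take it as the single nontrivial ingredient.

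Next I would record the elementary Rayleigh-quotient bound for $\bR$: since $\bR$ is Hermitian with eigenvalues $r_1 \geq \cdots \geq r_n$, every unit vector $\bv$ satisfies $r_n \leq \bv^* \bR \bv \leq r_1$. Combined with the decomposition $\bv^* \bM \bv = \bv^* \bH \bv + \bv^* \bR \bv$ furnished by $\bM = \bH + \bR$, this yields the pointwise two-sided estimate $\bv^* \bH \bv + r_n \leq \bv^* \bM \bv \leq \bv^* \bH \bv + r_1$, valid for every unit vector $\bv$.

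To obtain the upper bound $\mu_i \leq \nu_i + r_1$, I would substitute the right-hand pointwise inequality into the min-max formula for $\mu_i$. Fixing an arbitrary $i$-dimensional subspace $S$ and minimizing over unit vectors $\bv \in S$, the additive constant $r_1$ passes through the minimum, giving $\min_{\bv \in S} \bv^* \bM \bv \leq r_1 + \min_{\bv \in S} \bv^* \bH \bv$; taking the maximum over all such $S$ then produces $\mu_i \leq r_1 + \nu_i$ by the min-max characterization applied to $\bH$. The lower bound $\mu_i \geq \nu_i + r_n$ follows identically, using the left-hand pointwise inequality and the fact that the scalar $r_n$ likewise commutes past both the inner minimum and the outer maximum.

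I do not anticipate a genuine obstacle here, since the result is elementary once Courant--Fischer is in hand; the only point requiring care is the bookkeeping of the order of the $\max$ and $\min$ operators, ensuring that the additive scalars $r_1$ and $r_n$ pass through both optimizations in the correct direction. An alternative and equally clean route would invoke the dual characterization $\kappa_i = \min_{\dim S = n-i+1} \max_{\bv \in S,\, \|\bv\|=1} \bv^* \bN \bv$; either form makes the additive perturbation $\bR$ transparent, so the variational approach is the natural one.
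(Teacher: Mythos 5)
Your proof is correct. The paper states Weyl's theorem as a cited background result from Horn and Johnson without supplying any proof, and your Courant--Fischer argument --- bounding $\bv^* \bR \bv$ between $r_n$ and $r_1$ pointwise and passing these additive constants through the inner minimum and outer maximum --- is precisely the standard proof given in that reference, so your approach coincides with the canonical one the paper implicitly relies on.
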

	From our random graph model assumptions and our graph density assumptions, we can conclude that with for sufficiently large $n$, the top $d$ eigenvalues of $\bA$ will be nonnegative:
	\begin{remark}[Nonnegativity of the top $d$ eigenvalues of $\bA$]\label{rem:nonneg_evals_A} Suppose $\mathbf{A} \sim \mathrm{RDPG}(\mathbf{X})$. Since $\bP=\bX\bX^{\top}$, it is necessarily positive semidefinite, and thus has nonnegative eigenvalues. If we now assume that $\gamma(\mathbf{P}) > c_0$ for some constant $c_0$, then along with the Gershgorin Disks Theorem, guarantee that the top $d$ eigenvalues of $\bP$ are all of order $\delta(\bP)$, and our rank assumption on $\bP$ mandates that the remaining eigenvalues be zero.  If $\delta(\bP)> \log^{4+a'}n$, the spectral norm bound in \eqref{eq:lu_peng_spec_norm} applies, ensuring that for $n$ sufficiently large, $\|A-P\|_2 \sim O(\sqrt{\delta(\bP)})$ with high probability. Thus, by Weyl's inequality, we see that the top $d$ eigenvalues of $\bA$ are, with high probability, of order $\delta$, and the remaining are, with high probability, within $\sqrt{\delta}$ of zero.    
	\end{remark}

Since $\bP=\bX \bX^{\top}=\UP \SP^{1/2} (\UP \SP^{1/2})^{\top}$ and $\bA$ is close to $\bP$, it is intuitively appealing to conjecture that, in fact, $\hat{\bX}=\UA \SA^{1/2}$ should be close to some rotation of $\UP \SP^{1/2}$. That is, if $\bX$ is the matrix of true latent positions----so $\bX\bW=\UP \SP^{1/2}$ for some orthogonal matrix $\bW$---then it is plausible that $\|\Xhat-\bX \bW\|_F$ ought to be comparatively small. To make this precise, however, we need to understand how both eigenvalues and eigenvectors of a matrix behave when the matrix is perturbed.  Weyl's inequality \cite{horn85:_matrix_analy} addresses the former. The impact of matrix perturbations on associated eigenspaces is significantly more complicated, and the Davis-Kahan Theorem \cite{davis70, Bhatia1997} provides one approach to the latter.  The Davis-Kahan has a significant role in several approaches to spectral estimation for graphs: for example, Rohe, Chatterjee, and Yu leverage it in \cite{rohe2011spectral} to prove the accuracy of spectral estimates in high-dimensional stochastic blockmodels. The version we give below is from \cite{DK_usefulvariant}, which is a user-friendly guide to the the Davis-Kahan Theorem and its statistical implications. 

The Davis-Kahan Theorem is often stated as a result on canonical angles between subspaces. To that end, we recall that if $\bU$ and $\bV$ are two $n \times d$ matrices with orthonormal columns, then we define the vector of $d$ {\em canonical} or {\em principal} angles between their column spaces to be the vector $\Theta$ such that
$$\Theta=(\theta_1=\cos^{-1}\sigma_1, \cdots, \theta_d=\cos^{-1} \sigma_d )^{\top}$$
where $\sigma_1, \cdots, \sigma_d$ are the singular values of $\bU^{\top} \bV$. We define the matrix $\sin(\Theta)$ to be the $d \times d$ diagonal matrix for which $\sin(\theta)_{ii}=\sin \theta_i$. 
\begin{theorem}[A variant of Davis-Kahan \cite{DK_usefulvariant}]
	\label{thm:davis-kahan}
	Suppose $\bH$ and $\bH'$ are two symmetric $n \times n$ matrices with real entries with spectrum given by
	$\lambda_1 \geq \lambda_2 \cdots \geq \lambda_n$
	and $\lambda_1' \geq \lambda_2' \cdots \geq \lambda_n'$, respectively; and let $\bv_1, \cdots, \bv_n$ and $\bv_1', \cdots, \bv_n'$ denote their corresponding orthonormal eigenvectors. Let $0 < d \leq n$ be fixed, and let $\bV$ be the matrix of whose columns are the eigenvectors $v_1, \cdots, v_d$, and similarly $\bV'$ the matrix whose columns are the eigenvectors $\bv_1', \cdots \bv_n'$.   Then
$$\|\sin(\Theta)\|_F \leq \frac{2 \sqrt{d}\|\bV-\bV'\|}{\lambda_d(\bH)-\lambda_{d+1}(\bH)}.$$
\end{theorem}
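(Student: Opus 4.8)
The plan is to take the statement at face value and bound $\|\sin(\Theta)\|_F$ \emph{directly} by the eigenvector–matrix difference $\|\bV-\bV'\|$, working entirely with the orthogonal projections onto the two $d$-dimensional column spaces, with no appeal to eigenvalue information. First I would read $\bV'$ as the $n\times d$ matrix whose columns are $\bv_1',\dots,\bv_d'$ (the only reading under which the canonical angles $\Theta$ are defined), and write $\mathbf{\Pi}=\bV\bV^{\top}$ for the orthogonal projection onto $\mathrm{col}(\bV)$. The starting point is the elementary identity $\|\sin(\Theta)\|_F=\|(\bI-\mathbf{\Pi})\bV'\|_F$, which follows from
\[
\|(\bI-\mathbf{\Pi})\bV'\|_F^2=\tr(\bV'^{\top}\bV')-\|\bV^{\top}\bV'\|_F^2=d-\sum_{i=1}^d\cos^2\theta_i=\sum_{i=1}^d\sin^2\theta_i,
\]
since the singular values of $\bV^{\top}\bV'$ are exactly the $\cos\theta_i$ and $\tr(\bV'^{\top}\bV')=d$.

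The key step is to observe that the columns of $\bV$ already lie in $\mathrm{col}(\mathbf{\Pi})$, so $(\bI-\mathbf{\Pi})\bV=\bV-\bV(\bV^{\top}\bV)=\zeromx$ because $\bV^{\top}\bV=\bI$. Subtracting this from $(\bI-\mathbf{\Pi})\bV'$ lets me replace the eigenvectors of $\bH'$ by their \emph{difference} from those of $\bH$, and then discard the projection and pass from Frobenius to spectral norm:
\[
\|\sin(\Theta)\|_F=\|(\bI-\mathbf{\Pi})(\bV'-\bV)\|_F\le\|\bV'-\bV\|_F\le\sqrt{d}\,\|\bV-\bV'\|,
\]
where the first inequality uses that $\bI-\mathbf{\Pi}$ is a contraction and the second uses $\mathrm{rank}(\bV-\bV')\le d$. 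This already delivers the numerator $\sqrt{d}\,\|\bV-\bV'\|$ of the claimed bound, with a constant better than the stated $2$.

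The main obstacle is the eigengap denominator $\lambda_d(\bH)-\lambda_{d+1}(\bH)$. The argument above is \emph{gap-free}, so it immediately yields the stated inequality in the regime $\lambda_d(\bH)-\lambda_{d+1}(\bH)\le 2$, but for a large eigengap the stated bound is strictly stronger than what the subspace geometry alone supplies. Introducing the eigengap is precisely the content of the Davis–Kahan mechanism: one writes $\bH\bV'-\bV'\mathbf{\Lambda}'=(\bH-\bH')\bV'$ with $\mathbf{\Lambda}'$ carrying the top $d$ eigenvalues of $\bH'$, projects onto the bottom eigenspace $\bU$ of $\bH$ to get $\mathbf{\Lambda}_\perp\,\bU^{\top}\bV'-\bU^{\top}\bV'\mathbf{\Lambda}'$, and inverts the resulting Sylvester operator, whose smallest singular value is bounded below by the gap (via Weyl's theorem to control $\mathbf{\Lambda}'$). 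I expect this to be the crux: the eigengap couples naturally to a perturbation quantified at the level of the matrices, so the delicate point is whether the residual $(\bI-\mathbf{\Pi})(\bV'-\bV)$ can be reorganized so that the gap enters the denominator without demanding more of the eigenvector-difference numerator $\|\bV-\bV'\|$ than the geometry can provide.
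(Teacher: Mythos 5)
Your gap-free computation is correct and clean: the identity $\|\sin(\Theta)\|_F=\|(\bI-\bV\bV^{\top})\bV'\|_F$, the annihilation $(\bI-\bV\bV^{\top})\bV=\zeromx$, and the chain $\|\sin(\Theta)\|_F\le\|\bV'-\bV\|_F\le\sqrt{d}\,\|\bV-\bV'\|$ are all valid (reading $\bV'$ as the $n\times d$ matrix of the top $d$ eigenvectors of $\bH'$). But the step you defer---getting the eigengap into the denominator---is not a delicate point to be finessed; it is impossible, because the inequality as printed is false whenever the gap is large. Take $n=2$, $d=1$, $\bH=\mathrm{diag}(M,0)$, and $\bH'=\bR_{\theta}\bH\bR_{\theta}^{\top}$ with $\bR_{\theta}$ a rotation by a small angle $\theta$. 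Then $\|\sin(\Theta)\|_F=\sin\theta\approx\theta$ and, with the best choice of signs, $\|\bV-\bV'\|=2\sin(\theta/2)\approx\theta$, while $\lambda_1(\bH)-\lambda_2(\bH)=M$; the claimed bound would force $\theta\lesssim 2\theta/M$, absurd for $M>2$. The Sylvester-equation mechanism you sketch cannot rescue this, because what it naturally bounds is $\|\sin(\Theta)\|_F$ in terms of the \emph{matrix} perturbation $\|\bH-\bH'\|$---in the counterexample $\|\bH-\bH'\|$ is of order $M\theta$, which is precisely why the gap $M$ can be absorbed---whereas the eigenvector difference $\|\bV-\bV'\|$ stays of order $\theta$ no matter how large the gap. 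There is no reorganization of the residual that trades the eigenvector-difference numerator against the gap. (A further red flag: $\|\bV-\bV'\|$ is not even well-defined, since flipping the sign of one column of $\bV'$ changes it by as much as $2$ while leaving $\Theta$ untouched.)

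The resolution is that the statement as printed contains a typo: the numerator should be $\|\bH-\bH'\|$, i.e., the Yu--Wang--Samworth variant $\|\sin(\Theta)\|_F\le 2\sqrt{d}\,\|\bH-\bH'\|/(\lambda_d(\bH)-\lambda_{d+1}(\bH))$. The paper offers no proof of the theorem---it is quoted from \cite{DK_usefulvariant}---and its immediate applications confirm the intended form: the displayed consequence $\|\UA\UA^{\top}-\UP\UP^{\top}\|\le C\|\bA-\bP\|/\lambda_d(\bP)$ and the bound \eqref{eq:Davis_Kahan_variant1} both carry the matrix difference $\|\bA-\bP\|$, not an eigenvector difference, in the numerator. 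Your Sylvester sketch is the right skeleton for that corrected statement: write $\bH\bV'-\bV'\mathbf{\Lambda}'=(\bH-\bH')\bV'$, project onto the bottom eigenspace of $\bH$, lower-bound the Sylvester operator's smallest singular value by the gap using Weyl's inequality ($\lambda_d(\bH')\ge\lambda_d(\bH)-\|\bH-\bH'\|$), and dispose of the regime $\|\bH-\bH'\|\ge(\lambda_d(\bH)-\lambda_{d+1}(\bH))/2$ trivially via $\|\sin(\Theta)\|_F\le\sqrt{d}$. Redirected at the corrected statement, your outline becomes a genuine proof; aimed at the literal one, no argument can succeed.
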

Observe that if we assume that $\bP$ is of rank $d$ and has a sufficient eigengap, the Davis-Kahan Theorem gives us an immediate bound on the spectral norm of the difference between $\UA \UA^{\top}$ and $\UP \UP^{\top}$ in terms of this eigengap and the spectral norm difference of $\bA-\bP$, namely:
\begin{equation*}
\| \UA \UA^{\top} - \UP\UP^{\top} \|=\max_{i} \|\sin(\theta_i)\|
\le \frac{ C \| \bA - \bP \| }{ \lambda_d(\bP) }.
\end{equation*}
Recall that the Frobenius norm of a matrix $\bH$ satisfies
$$(\|\bH\|_F)^2=\sum_{i,j} \bH^2_{ij}=\tr (\bH^{\top} \bH) \geq \|\bH\|^2$$
and further that if $\bH$ is of rank $d$, then
$$(\|\bH\|_F)^2 \leq d\|\bH\|^2$$
and hence for rank $d$ matrices, spectral norm bounds are easily translated into bounds on the Frobenius norm.  It is worth noting that \cite{rohe2011spectral} guarantees that a difference in projections can be transformed into a difference between eigenvectors themselves: namely,
given the above bound for $\|\UA \UA^{\top} - \UP \UP^{\top}\|_F$, there is a constant $C$ and an orthonormal matrix $W \in \R^{d \times d}$ such that
\begin{equation}
\label{eq:Davis_Kahan_variant1}
\|\UP \bW-\UA\|_F \leq C\sqrt{d} \frac{\|\bA-\bP\|}{\lambda_d(\bP)}.
\end{equation}


While these important results provide the backbone for much of our theory, the detailed bounds and distributional results described in the next section rely on a decomposition of $\hat{\bX}$ in terms of $(\bA - \bP) \UP	\SP^{-1/2}$ and a remainder.
This first term can be viewed as an application of the power method for finding eigenvectors.
Additionally, standard univariate and multivariate concentration inequalities and distributional results can be readily applied to this term.
On the other hand, the remainder term can be shown to be of smaller order than the first, and much of the technical challenges of this work rely on carefully bounding the remainder term. 

\section{Spectral embeddings and estimation for RDPGs}\label{sec:ASE_Inference_RDPG}

There is a wealth of literature on spectral methods for estimating model parameters in random graphs, dating back more than half a century to estimation in simple \ErdosRenyi models. More specifically, for \ErdosRenyi   graphs, we would be remiss not to point to Furedi and Komlos's classic work \cite{furedi1981eigenvalues} on the eigenvalues and eigenvectors of the adjacency matrix of a E-R graph. Again, despite their model simplicity, \ErdosRenyi  graphs veritably teem with open questions; to cite but one example, in a very recent manuscript, Arias-Castro and Verzhelen \cite{Arias-Castro_ER} address, in the framework of hypothesis testing, the question of subgraph detection within an ER graph.

Moving up to the slightly more heterogeneous stochastic block model, we again find a rich literature on consistent block estimation in stochastic block models. Fortunato  \cite{fortunato} provides an overview of partitioning techniques for graphs in general, and
consistent partitioning of stochastic block models for two blocks was accomplished by Snijders
and Nowicki \cite{snijders_nowicki} and for equal-sized blocks by Condon and Karp  in 2001.
For the more general case, Bickel and Chen \cite{bickel_chen_2009} demonstrate a stronger version of
consistency via maximizing Newman-Girvan modularity \cite{newman2006modularity} and other modularities. For
a growing number of blocks, Choi et al. \cite{bickel2013asymptotic} prove consistency of likelihood based
methods, and Bickel et al. \cite{bickel2011method} provide a method to consistently estimate the stochastic
block model parameters using subgraph counts and degree distributions. This work and the
work of Bickel and Chen \cite{Bickel2009} both consider the case of very sparse graphs. In \cite{Airoldi2008}, Airoldi et al define the important generalization of a {\em mixed-membership} stochastic blockmodel, in which block membership may change depending on vertex-to-vertex interactions, and the authors demonstrate methods of inference for the mixed membership and block probabilities.

Rohe, Chatterjee and Yu show in \cite{rohe2011spectral} that spectral embeddings of the Laplacian give consistent estimates of block memberships in a stochastic block model, and one of the earliest corresponding results on the consistency of the adjacency spectral embedding is given by Sussman, Tang, Fishkind, and Priebe in \cite{STFP-2011}. In \cite{STFP-2011}, it is proved that for a stochastic block model with $K$ blocks and a rank $d$ block probability matrix $B$, clustering the rows of the adjacency spectral embedding via $k$-means clustering (see \cite{pollard81:_stron_k}) results in at most $\log n$ vertices being misclassified.  An improvement to this can be found in \cite{fishkind2012consistent}, where consistency recovery is possible even if the rank of the embedding dimension is unknown. 

In \cite{lyzinski13:_perfec}, under the assumption of distinct eigenvalues for the second moment matrix $\Delta$ of a random dot product graph, it is shown that clustering the rows of the adjacency spectral embedding results in asymptotically almost surely perfect recovery of the block memberships in a stochastic blockmodel---i.e. for sufficiently large $n$, the probability of all vertices being correctly assigned is close to 1. An especially strong recovery is exhibited here: it is shown that in the $2 \rightarrow \infty$ norm, $\Xhat$ is sufficiently close to a rotation of the the true latent positions. In fact, each {\em row} in $\Xhat$ is within $C\log n/\sqrt{n}$ of the  corresponding row in $\bX$. Unlike 
a Frobenius norm bound, in which it is possible that some rows of $\Xhat$ may be close to the true positions but others may be significantly farther away, this $2 \rightarrow \infty$ bound implies that the adjacency spectral embedding has a {\em uniform} consistency. 

Furthermore, \cite{tang14:_semipar} gives a nontrivial tightening of the Frobenius norm bound for the difference between the (rotated) true and estimated latent positions: in fact the Frobenius norm is not merely bounded from above by a term of order $\log n$, but rather concentrates around a {\em constant}.  This constant-order Frobenius bound forms the basis of a principled two-sample hypothesis test for determining whether two random dot product graphs have the same generating latent positions (see Section \ref{subsec:Testing}). 

In \cite{lyzinski15_HSBM}, the $2 \to \infty$-norm bound is extended even to the case when the second moment matrix $\Delta$ does not have distinct eigenvalues. This turns out to be critical in guaranteeing that the adjacency spectral embedding can be effectively deployed for community detection in hierarchical block models. We present this bound for the $2 \to \infty$ norm in some detail here; it illustrates the confluence of our key techniques and provides an effective roadmap for several subsequent results on asymptotic normality and two-sample testing.

\subsection{Consistency of latent position estimates} \label{subsec:Estimation}

We state here one of our lynchpin results on consistency, in the $2 \rightarrow \infty$ norm, of the adjacency spectral embedding for latent position estimation of a random dot product graph. We given an outline of the proof here, and refer the reader to the Appendix \ref{sec:Appendix} for the details, which essentially follow the proof given in \cite{lyzinski15_HSBM}. We emphasize our setting is a sequence of random dot product graphs $\mathbf{A} \sim \mathrm{RDPG}(\mathbf{X}_n)$ for increasing $n$ and thus we make the following density assumption on $\mathbf{P}_n$ as $n$ increases:
\begin{assumption}\label{ass:max_degree_assump}
Let $\mathbf{A}_n \sim \mathrm{RDPG}(\mathbf{X}_n)$ for $n \geq 1$ be a sequence of random dot product graphs with $\mathbf{A}_n$ being a $n \times n$ adjacency matrix. Suppose that $\mathbf{X}_n$ is of rank $d$ for all $n$ sufficiently large. Suppose also that there exists constants $a > 0$ and $c_0 > 0$ such that for all $n$ sufficiently large,
	$$\delta(\bP_n)= \max_{i} \sum_{j=1}^n (\mathbf{P}_n)_{ij} \geq \log^{4+a}(n); \quad \gamma(\mathbf{P}_n) = \frac{\lambda_d(\mathbf{P}_n)}{\delta(\bP_n)} \geq c_0.$$
	\end{assumption}

Our consistency result for the $2 \rightarrow \infty$ norm is Theorem~\ref{thm:minh_sparsity} below. In the proof of this particular result, we consider a particular random dot product graph with non-random (i.e. fixed) latent positions, but our results apply also to the case of random latent positions. In Section \ref{subsec:Distributional}, where we provide a central limit theorem, we focus on the case in which the latent positions are themselves random. Similarly, in Section \ref{subsec:Testing}, in our analysis of the semiparametric two-sample hypothesis test for the equality of latent positions in a pair of random dot product grahs, we return to the setting in which the latent positions are fixed, and in the nonparametric hypothesis test of equality of distributions, we analyze again the case when the latent positions are random.  It is convenient to be able to move fluidly between the two versions of a random dot product graph, adapting our results as appropriate in each case.

 In the Appendix (\ref{sec:Appendix}), we give a detailed proof of Theorem~\ref{thm:minh_sparsity} and we point out the argument used therein also sets the stage for the central limit theorem for the rows of the adjacency spectral embedding given in Subsection~\ref{subsec:Distributional}.

\begin{theorem}\label{thm:minh_sparsity}
	Let $\bA_n \sim \mathrm{RDPG}(\bX_n)$ for $n \geq 1$ be a sequence of random dot product graphs where the $\bX_n$ is assumed to be of rank $d$ for all $n$ sufficiently large. Denote by $\hat{\bX}_n$ the adjacency spectral embedding of $\bA_n$ and let $(\hat{\bX}_{n})_{i}$ and $(\bX_n)_{i}$ be the $i$-th row of $\hat{\bX}_n$ and $\bX_n$, respectively. Let $E_n$ be the event that there
	exists an orthogonal transformation $\bW_n \in \mathbb{R}^{d \times d}$ such that
	\begin{equation*}
	\max_{i} \| (\hat{\bX}_n)_{i} - \bW_n (\bX_n)_{i} \| \leq 
	\frac{C d^{1/2} \log^2{n}}{\delta^{1/2}(\mathbf{P}_n)}
	\end{equation*}
	where $C > 0$ is some fixed constant and $\mathbf{P}_n = \mathbf{X}_n \mathbf{X}_n^{\top}$. Then $E_n$ occurs asymptotically almost surely; that is, $\Pr(E_n) \rightarrow 1$ as $n \rightarrow \infty$.
\end{theorem}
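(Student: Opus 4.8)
The plan is to run the eigenvector-perturbation / power-method argument sketched at the close of Section~\ref{sec:core_techniques}: isolate a first-order term $\bE\UP\SP^{-1/2}$ (writing $\bE := \bA-\bP$), control it row-by-row via scalar concentration, and show that the residual is of the same or smaller order in the $\tti$-norm. Throughout, recall that $\bP = \UP\SP\UP^{\top}$ and that $\bP$ has rank $d$, so $\UP\SP^{1/2} = \bX\bW$ for some orthogonal $\bW$, and moreover $\bX\bW = \bP\UP\SP^{-1/2}$.

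I would first collect the three high-probability facts that drive everything. (i) Since $\delta(\bP_n)\ge \log^{4+a} n$, Theorem~\ref{thm:lu_peng} gives $\|\bE\| = O(\sqrt{\delta(\bP_n)})$ w.h.p. (ii) The eigengap hypothesis $\gamma(\bP_n)\ge c_0$, combined with Weyl's inequality and Remark~\ref{rem:nonneg_evals_A}, localizes the top $d$ eigenvalues of $\bP$ and of $\bA$ at order $\delta(\bP_n)$ (so that $\lambda_d(\bP_n)=\Theta(\delta(\bP_n))$), forces the top $d$ eigenvalues of $\bA$ to be positive w.h.p. (hence $\SA,\UA$ are genuine eigenpairs of $\bA$, $\bA\UA=\UA\SA$, and $\Xhat=\UA\SA^{1/2}=\bA\UA\SA^{-1/2}$), and yields an eigengap $\lambda_d(\bA)-\lambda_{d+1}(\bA)=\Theta(\delta(\bP_n))$. (iii) Feeding (i) and (ii) into the Davis--Kahan variant \eqref{eq:Davis_Kahan_variant1} produces an orthogonal $\bW_n^{\ast}$ with $\|\UA\bW_n^{\ast}-\UP\| = O(d^{1/2}\delta^{-1/2}(\bP_n))$ w.h.p.

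The heart of the argument is a decomposition of $\Xhat\bW_n^{\ast}-\bX\bW$. Substituting $\Xhat=\bA\UA\SA^{-1/2}$, $\bX\bW=\bP\UP\SP^{-1/2}$, and the approximation $\UA\approx\UP(\bW_n^{\ast})^{\top}$, a careful rearrangement isolates the leading term $\bE\UP\SP^{-1/2}$ together with residual terms built from the products $\bE(\UA\bW_n^{\ast}-\UP)$, $\bP(\UA\bW_n^{\ast}-\UP)$, and the eigenvalue-mismatch factor $\SA^{-1/2}-(\bW_n^{\ast})^{\top}\SP^{-1/2}\bW_n^{\ast}$. For the leading term I would bound the $\tti$-norm one row at a time: the $i$-th row is $\bE_{i\cdot}\UP\SP^{-1/2}$, each of whose $d$ coordinates is a sum of the independent, mean-zero, $[-1,1]$-bounded variables $\{\bE_{ij}\}_j$ weighted by a column of $\UP$. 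Applying Hoeffding's inequality (Theorem~\ref{thm:Hoeffding}) coordinatewise, using $\sum_j(\UP)_{j\ell}^2=1$ and $\|\SP^{-1/2}\|=O(\delta^{-1/2}(\bP_n))$, and taking a union bound over the $n$ rows and $d$ coordinates yields $\|\bE\UP\SP^{-1/2}\|_{\tti}=O\bigl(\sqrt{d\log n}\,\delta^{-1/2}(\bP_n)\bigr)$ w.h.p.

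The hard part will be the residual, where a pure spectral-norm estimate forfeits the uniform-in-$i$ control that the $\tti$-norm demands. The lever is submultiplicativity $\|\bM\bN\|_{\tti}\le\|\bM\|_{\tti}\|\bN\|$, which lets me pair one row-wise factor with spectral factors: for instance $\|\bE(\UA\bW_n^{\ast}-\UP)\SA^{-1/2}\|_{\tti}\le\|\bE\|_{\tti}\,\|\UA\bW_n^{\ast}-\UP\|\,\|\SA^{-1/2}\|$, where $\|\bE\|_{\tti}\le\|\bE\|=O(\sqrt{\delta(\bP_n)})$ w.h.p. by Theorem~\ref{thm:lu_peng}, and the remaining factors are $O(d^{1/2}\delta^{-1/2})$ and $O(\delta^{-1/2})$ by (iii) and (ii). The delicate point is that $\UA$ depends on $\bA$, so Hoeffding cannot be applied directly to terms like $\bE\UA$; I expect to handle these by repeatedly substituting the Davis--Kahan approximation $\UA\approx\UP(\bW_n^{\ast})^{\top}$ and re-expanding, while the eigenvalue-mismatch factor is controlled once more by Weyl's inequality. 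Tracking the logarithmic factors contributed by the spectral-norm concentration and the row-wise Hoeffding bounds through these residual terms gives a residual of order $d^{1/2}\delta^{-1/2}(\bP_n)$ up to polylogarithmic factors, namely the stated $d^{1/2}\log^2 n\,\delta^{-1/2}(\bP_n)$, which dominates the leading term. Summing the leading and residual $\tti$-bounds, setting $\bW_n$ to be the orthogonal matrix obtained by composing $\bW_n^{\ast}$ with $\bW$, and noting that a finite intersection of w.h.p. events is again w.h.p.---hence asymptotically almost sure by Definition~\ref{def:whp}---completes the proof.
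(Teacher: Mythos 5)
Your proposal is correct and follows essentially the same route as the paper's proof in Appendix \ref{sec:Appendix}: the power-method decomposition isolating $(\bA-\bP)\UP\SP^{-1/2}$ (via $\Xhat=\bA\UA\SA^{-1/2}$), Davis--Kahan plus the Lu--Peng spectral bound and Weyl to control the residual factors, and row-wise Hoeffding with a union bound for the leading term. The only cosmetic difference is that you bound the residual directly in the $\tti$-norm via $\|\bM\bN\|_{\tti}\le\|\bM\|_{\tti}\|\bN\|$, whereas the paper bounds it in Frobenius norm (which dominates $\tti$) after introducing the explicit remainders $\bR_1,\bR_2,\bR_3$; both closings give the same order.
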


Under the stochastic blockmodel, previous bounds on $\|\bX-\hat{\bX}\|_F$ implied that $k$-means applied to the rows of $\hat{\bX}$ would approximately correctly partition the vertices into their the true blocks with up to $O(\log n)$ errors.
However, this Frobenius norm bound does not imply that there are no large outliers in the rows of $\bX-\hat{\bX}$, thereby precluding any guarantee of zero errors.
The improvements provided by Theorem~\ref{thm:minh_sparsity} overcome this hurdle and, as shown in \cite{lyzinski13:_perfec}, under suitable sparsity and eigengap assumptions, $k$-means applied to $\hat{\bX}$ will {\em exactly correctly} partition the vertices into their true blocks.
This implication demonstrates the importance of improving the overall bounds and in focusing on the correct metrics for a given task---in this case, for instance, block identification.

For a brief outline of the proof of this result, we note several key ingredients.
First is a lemma guaranteeing the existence of an orthogonal matrix $\bW^*$ such that
$$\|\bW^* \SA - \SP \bW^{*}\|_F=O((\log n)\delta^{1/2}(P)$$
That is, there is an approximate commutativity between right and left multiplication of the corresponding matrices of eigenvalues by this orthogonal transformation.  The second essential component is, at heart, a bound inspired by the power method.  Specifically, we show that there exists an orthogonal matrix
$$\|\Xhat -\bX\bW\|=\|(\bA -\bP) \UP \SP^{-1/2}\|_F + O((\log n) \delta^{-1/2})$$
Finally, from this point, establishing the bound on the $2 \to \infty$ norm is a consequence of Hoeffding's inequality applied to sums of the form
$$ \sum_{k} (\bA_{ik}-\bP_{ik} \bU_{kj})$$.

The $2 \rightarrow \infty$ bound in Theorem \ref{thm:minh_sparsity} has several important implications. As we mentioned, \cite{lyzinski13:_perfec} establishes an earlier form of this result, with more restrictive assumptions on the the second moment matrix, and shows how this can be used to cluster vertices in an SBM perfectly, i.e. with no vertices misclassified. In addition, \cite{lyzinski13:_perfec} shows how clustering the rows of the ASE can be useful for inference in a degree-corrected stochastic block model as well.  In Section \ref{sec:Applications}, we see that because of Theorem \ref{thm:minh_sparsity}, the adjacency spectral embedding and a novel angle-based clustering procedure can be used for accurately identifying subcommunities in an affinity-structured, hierarchical stochastic blockmodel \cite{lyzinski15_HSBM}. In the next section, we see how our proof technique here can be used to obtain distributional results for the rows of the adjacency spectral embedding.

\subsection{Distributional results for the ASE}\label{subsec:Distributional}
In the classical statistical task of parametric estimation, one observes a collection of i.i.d observations $X_1, \cdots, X_n$ from some family of distributions $F_{\theta}: \theta \in \Theta$, where $\Theta$ is some subset of Euclidean space, and one seeks to find a consistent estimator $T(X_1, \cdots, X_n)$ for $\theta$. As we mentioned in Section~\ref{sec:Intro}, often a next task is to determine the asymptotic distribution, as $n \rightarrow \infty$, of a suitable normalization of this estimator $T$. Such distributional results, in turn, can be useful for generating confidence intervals and testing hypotheses.

We adopt a similar framework for random graph inference.  In the previous section, we demonstrate the consistency of the adjacency spectral embedding for the true latent position of a random dot product graph.  In this section, we establish the asymptotic normality of the rows of this embedding and the Laplacian spectral embedding.
In the subsequent section, we examine how our methodology can be deployed for multisample graph hypothesis testing.

We emphasize that distributional results for spectral decompositions of random graphs are comparatively few. The classic results of
\Furedi and \Komlos \cite{furedi1981eigenvalues} describe the eigenvalues of the \ErdosRenyi random graph and the work of Tao and Vu \cite{tao2012random} is focused on distributions of eigenvectors of more general random matrices under moment restrictions,
but \cite{athreya2013limit} and \cite{tang_lse} are among the only references for central limit theorems for spectral decompositions of adjacency and Laplacian matrices for a wider class of independent-edge random graphs than merely the \ErdosRenyi model. Apart from their inherent interest, these limit theorems point us to current open questions on efficient estimation and the relative merits of different estimators and embeddings, in part by rendering possible a comparison of asymptotic variances and allowing us to quantify relative efficiency (see \cite{runze_law_large_graphs} and to precisely conjecture a decomposition of the sources of variance in different spectral embeddings for multiple graphs (see \cite{levin_omni_2017}).  

Specifically, we show that for a $d$-dimensional random dot product graph with i.i.d latent positions, there exists a sequence of $d \times d$ orthogonal matrices $\bW_n$ such that for any row index $i$, $\sqrt{n}(\bW_n (\Xhat_n)_{i} - (\bX_n)_i)$ converges as $n \rightarrow \infty$ to a mixture of multivariate normals.
\begin{theorem}[Central Limit Theorem for the rows of the ASE]\label{thm:clt_orig_but_better}
	Let $(\bA_n, \bX_n) \sim \mathrm{RDPG}(F)$ be a sequence of adjacency matrices and associated latent positions of a $d$-dimensional random dot product graph according to an inner product distribution $F$. Let $\Phi(\bx,\bSigma)$ denote the cdf of a (multivariate)
	Gaussian with mean zero and covariance matrix $\bSigma$,
	evaluated at $\bx \in \R^d$. Then
	there exists a sequence of orthogonal $d$-by-$d$ matrices
	$( \Wn )_{n=1}^\infty$ such that for all $\bm{z} \in \R^d$ and for any fixed index $i$,
	$$ \lim_{n \rightarrow \infty}
	\Pr\left[ n^{1/2} \left( \Xhat_n \Wn - \bX_n \right)_i
	\le \bm{z} \right]
	= \int_{\supp F} \Phi\left(\bm{z}, \bSigma(\bx) \right) dF(\bx), $$
	where
	\begin{equation}
	\label{def:sigma}\bSigma(\bx) 
	= \Delta^{-1} \E\left[ (\bx^{\top} \bX_1 - ( \bx^{\top} \bX_1)^2 ) \bX_1 \bX_1^{\top} \right] \Delta^{-1}; \quad \text{and} \,\, \Delta = \mathbb{E}[\bX_1 \bX_1^{\top}].
	\end{equation}
\end{theorem}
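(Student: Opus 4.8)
The plan is to reduce the claim to a classical multivariate central limit theorem by first linearizing the embedding error and then conditioning on the latent position of the $i$-th vertex. The starting point is the expansion underlying the proof of Theorem~\ref{thm:minh_sparsity}: there is a sequence of orthogonal matrices $\Wn$---obtained by composing the Davis--Kahan alignment of $\UA$ with $\UP$ (cf.\ \eqref{eq:Davis_Kahan_variant1}) with the rotation relating $\UP\SP^{1/2}$ to $\bX_n$---for which
\begin{equation*}
\Xhat_n \Wn - \bX_n = (\bA_n - \bP_n)\,\bX_n (\bX_n^\top \bX_n)^{-1} + \bR_n ,
\end{equation*}
where I have used that $\bX_n (\bX_n^\top \bX_n)^{-1} = \UP\SP^{-1/2}$ up to the same rotation absorbed into $\Wn$. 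First I would check that, in the random-latent-position regime, each row sum of $\bP_n$ satisfies $\delta(\bP_n) = \Theta(n)$ and $\lambda_d(\bP_n) = \Theta(n)$ with high probability, so that Assumption~\ref{ass:max_degree_assump} holds and the residual controls from Theorem~\ref{thm:minh_sparsity} (via the approximate eigenvalue commutativity and the power-method bound) are available to bound $\bR_n$.

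Next I would identify the leading term. Writing $(\bX_n^\top \bX_n)^{-1} = n^{-1}\bigl(n^{-1}\bX_n^\top \bX_n\bigr)^{-1}$ and invoking the strong law of large numbers, $n^{-1}\bX_n^\top \bX_n \to \Delta$ almost surely, so $\bigl(n^{-1}\bX_n^\top \bX_n\bigr)^{-1}\to \Delta^{-1}$; by Slutsky's theorem it then suffices to analyze
\begin{equation*}
n^{1/2}\,\Delta^{-1}\, n^{-1}\!\sum_{k \neq i}(\bA_{ik} - \bP_{ik})\,\bX_k
= \Delta^{-1}\, n^{-1/2}\!\sum_{k \neq i}(\bA_{ik} - \bP_{ik})\,\bX_k .
\end{equation*}
I would then condition on $\bX_i = \bx$. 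Under this conditioning the summands $(\bA_{ik}-\bP_{ik})\bX_k$, for $k \neq i$, are independent and identically distributed across $k$, with conditional mean zero (since $\E[\bA_{ik}-\bP_{ik}\mid \bX_i,\bX_k]=0$) and conditional covariance
\begin{equation*}
\E\!\left[(\bx^\top\bX_1)(1-\bx^\top\bX_1)\,\bX_1\bX_1^\top\right] =: \tilde{\bSigma}(\bx),
\end{equation*}
which is exactly the inner factor appearing in \eqref{def:sigma}. The classical multivariate central limit theorem gives conditional convergence of the normalized sum to $\calN(\mathbf{0},\tilde{\bSigma}(\bx))$, and left-multiplication by $\Delta^{-1}$ yields conditional convergence to $\calN(\mathbf{0},\Delta^{-1}\tilde{\bSigma}(\bx)\Delta^{-1}) = \calN(\mathbf{0},\bSigma(\bx))$.

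Finally I would remove the conditioning: the conditional cumulative distribution function of $n^{1/2}(\Xhat_n\Wn - \bX_n)_i$ converges, for $F$-almost every $\bx$, to $\Phi(\bm{z},\bSigma(\bx))$ and is bounded by $1$, so the bounded convergence theorem passes the limit through the expectation over $\bX_i \sim F$, producing the mixture $\int_{\supp F}\Phi(\bm{z},\bSigma(\bx))\,dF(\bx)$.

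The main obstacle will be controlling the residual $\bR_n$ at the rate needed for the CLT. Theorem~\ref{thm:minh_sparsity} only yields a row-wise bound of order $\log^2 n/\delta^{1/2}(\bP_n) = O(\log^2 n/\sqrt{n})$, which after multiplication by $n^{1/2}$ does \emph{not} vanish; hence I must refine the decomposition to show that the $i$-th row of $\bR_n$ is genuinely of smaller order, $o_P(n^{-1/2})$, so that $n^{1/2}(\bR_n)_i \inprob \mathbf{0}$ while the linear term retains its $\Theta(n^{-1/2})$ fluctuations. This sharper residual analysis---isolating the power-method term $(\bA_n-\bP_n)\UP\SP^{-1/2}$ and bounding everything else---is the technical heart of the argument. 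A secondary subtlety is that $\Wn$ is itself random and depends on $\bA_n$; one must verify that the same $\Wn$ simultaneously furnishes the linearization and does not distort the Gaussian limit, which holds because $\Wn$ converges (along subsequences) to an orthogonal matrix that is absorbed into the fixed reference frame of $\bX_n$.
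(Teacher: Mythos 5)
Your proposal is correct and follows essentially the same route as the paper: linearize via the power-method term $(\bA_n-\bP_n)\UP\SP^{-1/2}$, condition on $\bX_i=\bx$ and apply the classical multivariate CLT together with the strong law of large numbers and Slutsky's theorem, then integrate over $F$ to obtain the mixture. You also correctly identify the technical heart of the argument---showing the remaining residual rows are $o_P(n^{-1/2})$, which the paper carries out in Lemma~\ref{lem:stringent_control_residuals}---and the need to absorb the data-dependent alignment $\bW^*$ into the final orthogonal matrix, exactly as the paper does.
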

We also note the following important corollary of Theorem~\ref{thm:clt_orig_but_better} for when $F$ is a mixture of $K$ point masses, i.e., $(\mathbf{X}, \mathbf{A}) \sim \mathrm{RDPG}(F)$ is a $K$-block stochastic blockmodel graph. Then for any fixed index $i$, the event that $\bX_i$ is assigned to block $k \in \{1,2,\dots,K\}$ has non-zero probability and hence one can condition on the block assignment of $\bX_i$ to show that the conditional distribution of $\sqrt{n}(\mathbf{W}_n (\hat{\bX}_n)_{i} - (\bX_n)_i)$ converges to a multivariate normal. This is in contrast to the unconditional distribution being a mixture of multivariate normals as given in Theorem~\ref{thm:clt_orig_but_better}.

\begin{corollary}[SBM]
\label{cor:ase_normality_sbm}
Assume the setting and notations of Theorem~\ref{thm:clt_orig_but_better} and let 
$$F = \sum_{k=1}^{K} \pi_{k} \delta_{\nu_k}, \quad \pi_1, \cdots, \pi_K > 0, \sum_{k} \pi_k = 1$$
be a mixture of $K$ point masses in $\mathbb{R}^{d}$ where $\delta_{\nu_k}$ is the Dirac delta measure at $\nu_k$. 
Then there exists a sequence of orthogonal matrices $\mathbf{W}_n$ such that for all $\bm{z} \in \mathbb{R}^{d}$ and for any fixed index $i$,
\begin{equation}
\mathbb{P}\Bigl\{\sqrt{n}(\mathbf{W}_n \hat{\bX}_n - \mathbf{X}_n)_{i} \leq \bm{z}  \mid \bX_i = \nu_k \Bigr\} \longrightarrow \Phi(\bm{z}, \Sigma_k)
\end{equation}
where $\Sigma_k = \Sigma(\nu_k)$ is as defined in Eq.~\eqref{def:sigma}. 
\end{corollary}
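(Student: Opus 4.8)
The plan is to obtain the corollary not as a formal consequence of the mixture limit in Theorem~\ref{thm:clt_orig_but_better}, but by unpacking the argument that produces that mixture; the conditional statement is really an intermediate step in the proof of the theorem. Indeed, the mixture $\int_{\supp F}\Phi(\bm{z},\bSigma(\bx))\,dF(\bx)$ arises precisely because the proof first establishes, for a latent position pinned at $\bX_i=\bx$, a conditional central limit theorem with limiting covariance $\bSigma(\bx)$, and then integrates over the law of $\bX_i$. When $F=\sum_{k}\pi_k\delta_{\nu_k}$ the support is the finite set $\{\nu_1,\dots,\nu_K\}$, the integral collapses to the finite sum $\sum_k \pi_k \Phi(\bm{z},\bSigma(\nu_k))$, and the event $\{\bX_i=\nu_k\}$ has positive probability $\pi_k$. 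Thus conditioning on $\bX_i=\nu_k$ is legitimate and simply selects the single Gaussian component $\bSigma(\nu_k)=\Sigma_k$; the unconditional mixture is recovered by the law of total probability, $\Pr[n^{1/2}(\Xhat_n\Wn-\bX_n)_i\le \bm{z}] = \sum_k \pi_k\,\Pr[\,n^{1/2}(\Xhat_n\Wn-\bX_n)_i\le \bm{z}\mid \bX_i=\nu_k\,]$.

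To carry out the conditional CLT itself, I would begin from the decomposition flagged in Section~\ref{subsec:Estimation}, namely $\Xhat_n\Wn-\bX_n=(\bA-\bP)\UP\SP^{-1/2}+R_n$, where the power-method leading term is a sum of conditionally independent contributions and $R_n$ is of smaller order. Fixing $\bX_i=\nu_k$ and conditioning additionally on the remaining latent positions $\{\bX_j\}_{j\ne i}$, the $i$-th row of the leading term is $\sum_{j\ne i}(\bA_{ij}-\bP_{ij})[\UP\SP^{-1/2}]_{j\cdot}$, a sum of independent mean-zero bounded random vectors whose conditional covariance is $\sum_{j}\bP_{ij}(1-\bP_{ij})[\UP\SP^{-1/2}]_{j\cdot}[\UP\SP^{-1/2}]_{j\cdot}^{\top}$ with $\bP_{ij}=\nu_k^{\top}\bX_j$. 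A multivariate Lindeberg--Feller CLT then applies to this row. The remaining work is to identify the limit of the $n^{1/2}$-scaled covariance: using that the nonzero entries of $\SP$ grow linearly in $n$ (so that $n^{-1}\SP$ stabilizes) together with the strong law applied to the i.i.d.\ $\bX_j$, the scaled covariance converges to $\Delta^{-1}\,\E[(\nu_k^{\top}\bX_1-(\nu_k^{\top}\bX_1)^2)\bX_1\bX_1^{\top}]\,\Delta^{-1}=\bSigma(\nu_k)$, and a Slutsky argument replaces the random covariance by this deterministic limit. Finally I would invoke the remainder bound from the proof of Theorem~\ref{thm:minh_sparsity} to show $n^{1/2}(R_n)_i\to 0$ in probability conditionally on $\{\bX_i=\nu_k\}$, so that the leading term governs the limiting law.

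The main obstacle, and the only place where genuine care beyond the unconditional proof is required, is verifying that conditioning on the positive-probability event $\{\bX_i=\nu_k\}$ does not disturb the probabilistic machinery. One must check that the density and eigengap hypotheses of Assumption~\ref{ass:max_degree_assump} and the remainder estimates still hold after conditioning; this is benign because, conditioned on $\bX_i=\nu_k$, the positions $\{\bX_j\}_{j\ne i}$ remain i.i.d.\ $F$, so the empirical second-moment matrix and the spectral-norm bounds of Theorems~\ref{thm:oliveira}--\ref{thm:lu_peng} are affected only through a single row and thus asymptotically unchanged. The subtler point is the simultaneous use of the CLT for the leading term and the in-probability convergence of the empirical covariance to $\bSigma(\nu_k)$; I would make this rigorous by conditioning on the full collection $\{\bX_j\}_{j\ne i}$, applying the CLT on the almost-sure event where this empirical covariance converges, and then integrating out via a dominated-convergence argument on characteristic functions to pass from the conditional-on-all-positions statement to the conditional-on-$\{\bX_i=\nu_k\}$ statement.
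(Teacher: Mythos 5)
Your proposal is correct and follows essentially the same route as the paper: the corollary is obtained by stopping one step short of the final integration over $F$ in the proof of Theorem~\ref{thm:clt_orig_but_better}, since Lemma~\ref{lem:inlaw} already establishes the conditional CLT given $\bX_i=\bx$ and the event $\{\bX_i=\nu_k\}$ has positive probability $\pi_k$ when $F$ is a finite mixture of point masses. The only cosmetic difference is that the paper's Lemma~\ref{lem:inlaw} treats the summands $(\bA_{ij}-\bX_j^{\top}\bx_i)\bX_j$ directly as i.i.d.\ vectors and applies the ordinary multivariate CLT rather than conditioning on all of $\{\bX_j\}_{j\ne i}$ and invoking Lindeberg--Feller, but this does not change the substance of the argument.
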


We relegate the full details of the proof of this central limit theorem to the Appendix, in Section \ref{subsec:CLT_proofdetails}, but a few points bear noting here. First, both Theorem~\ref{thm:clt_orig_but_better} and Corollary~\ref{cor:ase_normality_sbm} are very similar to results proved in \cite{athreya2013limit}, but with the crucial difference being that we no longer require that the second moment matrix $\Delta = \mathbb{E}[\bX_1 \bX_1^{\top}]$ of $\bm{X}_1 \sim F$ have distinct eigenvalues (for more details, see \cite{tang_lse}). As in \cite{athreya2013limit}, our proof here depends on writing the difference between a row of the adjacency spectral embedding and its  corresponding latent position as a pair of summands: the first, to which a classical Central Limit Theorem can be applied, and the second, essentially a combination of residual terms, which we show, using techniques similar to those in the proof of Theorem \ref{thm:minh_sparsity}, converges to zero. The weakening of the assumption of distinct eigenvalues necessitates significant changes from \cite{athreya2013limit} in how to bound the residual terms, because \cite{athreya2013limit} adapts a result of \cite{bickel_sarkar_2013}---the latter of which depends on the assumption of distinct eigenvalues---to control these terms. Here, we resort to somewhat different methodology: we prove instead that analogous bounds to those in \cite{lyzinski15_HSBM,tang_lse} hold for the estimated latent positions and this enables us to establish that here, too, the rows of the adjacency spectral embedding are also approximately normally distributed. 

We stress that this central limit theorem depends on a delicate bounding of a sequence of so-called residual terms, but its essence is straightforward.  Specifically, there exists an orthogonal transformation $\bW^*$ such that we can write the $i$th row of the matrix 
$$\sqrt{n}(\UA \SA^{1/2}-\UP \SP^{1/2}\bW^{*})$$ as 
\begin{equation}\label{eq:CLT_basic_explanation}
\sqrt{n}(\UA \SA^{1/2}-\UP \SP^{1/2}\bW^{*})_i=\sqrt{n}(\bA -\bP) \UP \SP^{-1/2}\bW^{*})_i + \textrm{Residual terms}
\end{equation}
where the residual terms are all of order $O(n^{-1/2} \log n)$ in probability.  Now, to handle the first term in Eq.\eqref{eq:CLT_basic_explanation}, we can condition on a fixed latent position $\bX_i={\bf x}_i$, and when this is fixed, the classical Lindeberg-Feller Central Limit Theorem establishes the asymptotic normality of this term. The order of the residual terms then guarantees, by Slutsky's Theorem, the desired asymptotic normality of the gap between estimated and true latent positions, and finally we need only integrate over the possible latent positions to obtain a mixture of normals. 

\subsection{An example under the stochastic block model}

To illustrate Theorem~\ref{thm:clt_orig_but_better}, we
consider random graphs generated according to a
stochastic block model with parameters
\begin{equation}
\label{eq:1}
B = \begin{bmatrix} 0.42 & 0.42 \\ 0.42 & 0.5 \end{bmatrix}
\quad \text{and} \quad \pi = (0.6,0.4).
\end{equation}
In this model, each node is either in block 1 (with probability 0.6)
or block 2 (with probability 0.4). Adjacency probabilities are
determined by the entries in $B$ based on the block memberships of the
incident vertices.  The above stochastic blockmodel corresponds to a random dot product
graph model in $\mathbb{R}^{2}$ where the distribution $F$ of the latent
positions is a mixture of point masses located at $x_1\approx (0.63,
-0.14)$ (with prior probability $0.6$) and $x_2\approx (0.69, 0.13)$
(with prior probability $0.4$).

We sample an adjacency matrix $\bA$ for graphs on $n$ vertices from the
above model for various choices of $n$.  For each graph $G$, let
$\Xhat\in\Re^{n\times 2}$ denote the embedding of $A$ and let
$\Xhat_{i\cdot}$ denote the $i$th row of $\Xhat$. In
Figure~\ref{fig:clusplot}, we plot the $n$ rows of $\Xhat$ for the
various choices of $n$.  The points are denoted by symbols according to the block
membership of the corresponding vertex in the stochastic blockmodel. The ellipses show
the 95\% level curves for the distribution of $\hat{X}_i$ for each
block as specified by the limiting distribution.

\begin{figure*}[!htbp]
	\centering
	\subfloat[$n = 1000$]{
		\includegraphics[width=5.5cm]{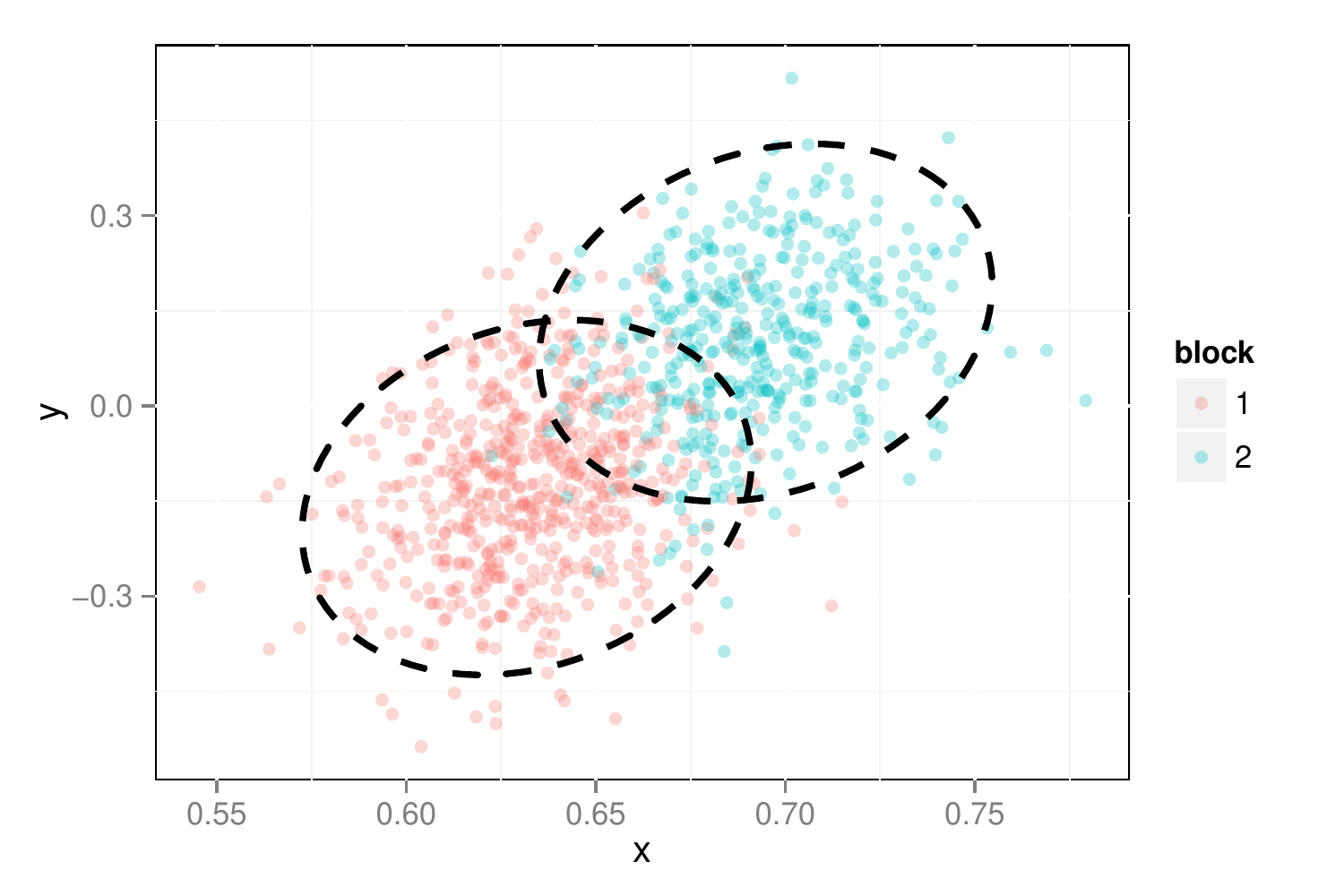}
	}
	\hfil
	\subfloat[$n = 2000$]{
		\includegraphics[width=5.5cm]{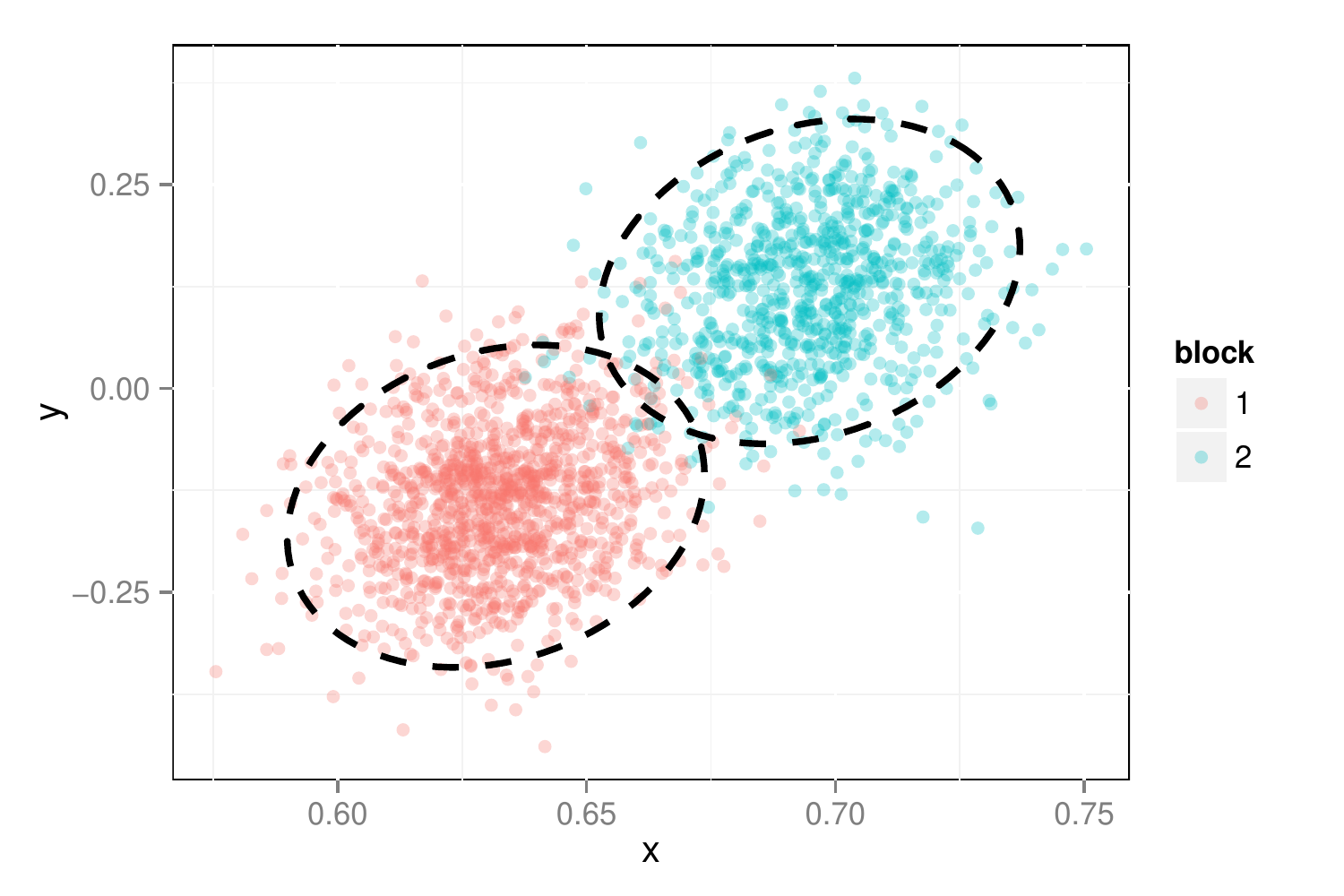}
	}
	\hfil
	\subfloat[$n = 4000$]{
		\includegraphics[width=5.5cm]{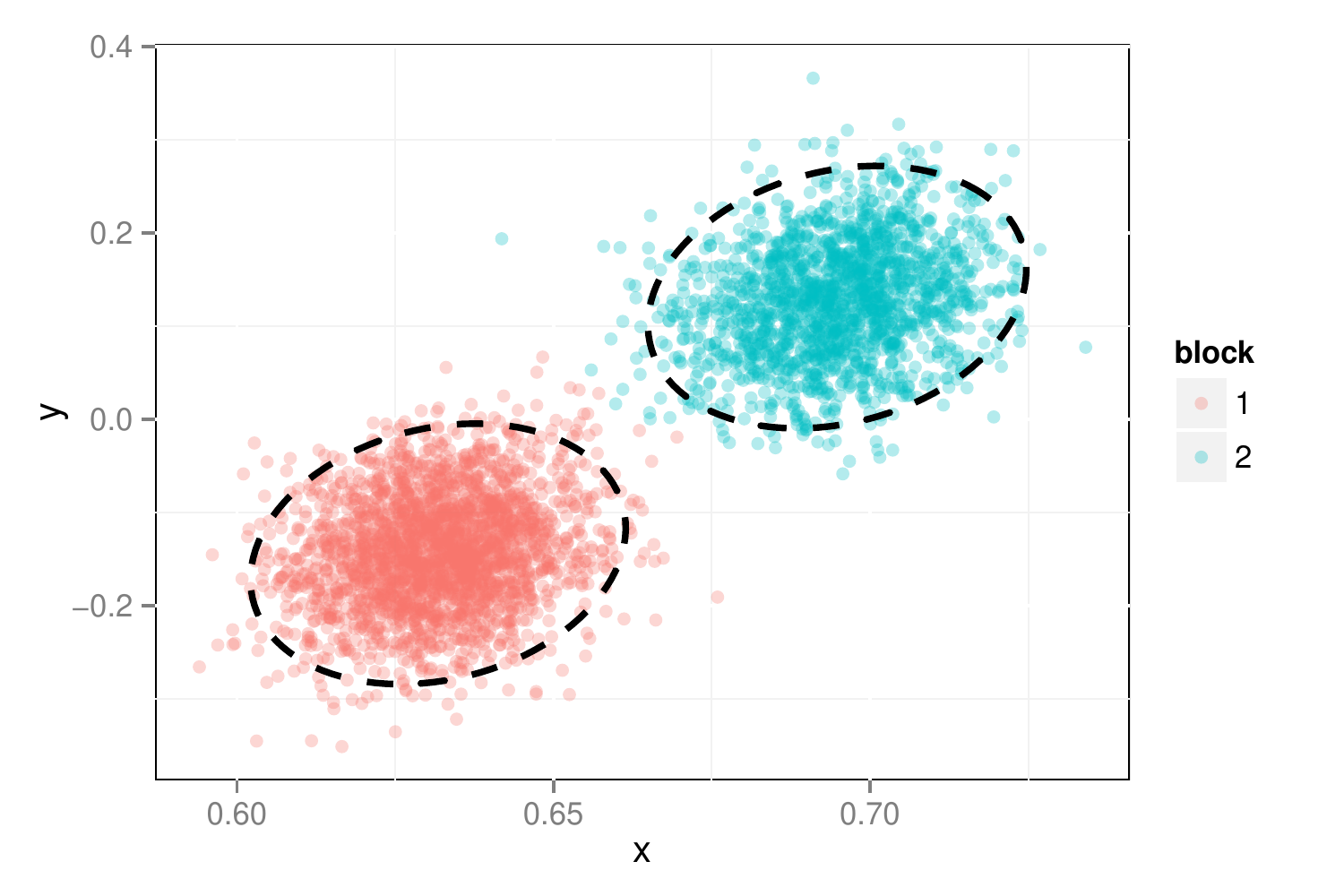}
		\label{fig5:subfig_scan}
	}
	\hfil
	\subfloat[$n = 8000$]{
		\includegraphics[width=5.5cm]{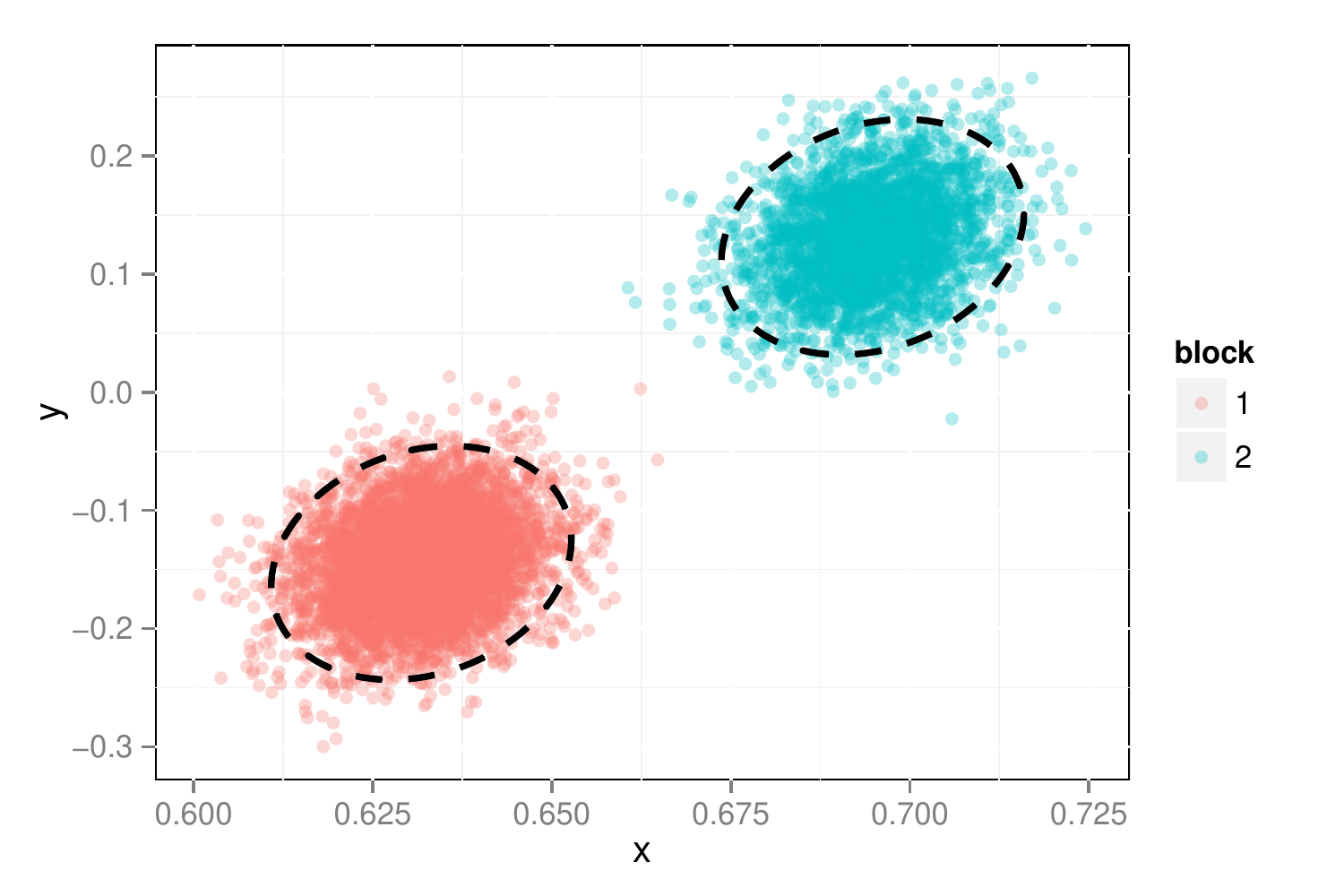}
	}
	\caption{Plot of the estimated latent positions in a two-block
		stochastic blockmodel graph on $n$ vertices. The
		points are denoted by symbols according to the blockmembership of the
		corresponding vertices. Dashed ellipses give the 95\% level curves
		for the distributions as specified in
		Theorem~\ref{thm:clt_orig_but_better}.}
	\label{fig:clusplot}
\end{figure*}
We then estimate the covariance matrices for the residuals. The
theoretical covariance matrices are given in the last column of
Table~\ref{tab:cov}, where $\Sigma_{1}$ and $\Sigma_{2}$ are the
covariance matrices for the residual $\sqrt{n}(\hat{\bX}_i - \bX_i)$ when
$X_i$ is from the first block and second block, respectively. The
empirical covariance matrices, denoted $\hat{\Sigma}_1$ and
$\hat{\Sigma}_2$, are computed by evaluating the sample covariance of
the rows of $\sqrt{n}\Xhat_i$ corresponding to vertices in block 1
and 2 respectively.  The estimates of the covariance matrices are
given in Table~\ref{tab:cov}.  We see that as $n$ increases, the
sample covariances tend toward the specified limiting covariance
matrix given in the last column.

\begin{table}[htbp]
	\footnotesize
	\begin{center}
		\begin{tabular}{cccccc}
			$n$ &  2000 & 4000 & 8000 & 16000 & $\infty$  \\ 
			$\hat{\Sigma}_1$ &%
			$\begin{bmatrix} .58 & .54 \\ .54 & 16.56 \end{bmatrix}$ &%
			$\begin{bmatrix} .58 & .63 \\ .63 & 14.87 \end{bmatrix}$ &%
			$\begin{bmatrix} .60 & .61 \\ .61 & 14.20 \end{bmatrix}$ &%
			$\begin{bmatrix} .59 & .58 \\ .58 & 13.96 \end{bmatrix}$ &%
			$\begin{bmatrix} .59 & .55 \\ .55 & 13.07 \end{bmatrix}$ \\ \\
			$\hat{\Sigma}_2$ &%
			$\begin{bmatrix} .58 & .75 \\ .75 & 16.28 \end{bmatrix}$ &%
			$\begin{bmatrix} .59 & .71 \\ .71 & 15.79 \end{bmatrix}$ &%
			$\begin{bmatrix} .58 & .54 \\ .54 & 14.23 \end{bmatrix}$ &%
			$\begin{bmatrix} .61 & .69 \\ .69 & 13.92 \end{bmatrix}$ &%
			$\begin{bmatrix} .60 & .59 \\ .59 & 13.26 \end{bmatrix}$ \\
		\end{tabular}
	\end{center}
	\caption{The sample covariance matrices for $\sqrt{n}(\hat{X}_i-X_i)$
		for each block in a stochastic blockmodel with two blocks. Here
		$n \in \{2000,4000,8000,16000\}$. In the
		last column are the theoretical covariance matrices for the limiting
		distribution.}
	\label{tab:cov}
\end{table}

We also investigate the effects of the multivariate normal distribution
as specified in Theorem~\ref{thm:clt_orig_but_better} on inference procedures. It is shown in
\cite{STFP-2011,sussman2012universally} that the approach of
embedding a graph into some Euclidean space, followed by inference
(for example, clustering or classification) in that space can be
consistent. However, these consistency results are, in a sense, only
first-order results. In particular, they demonstrate only that the
error of the inference procedure converges to $0$ as the number of
vertices in the graph increases. We now illustrate how
Theorem~\ref{thm:clt_orig_but_better} may lead to a more refined error analysis.

We construct a sequence of random graphs on $n$ vertices, where $n$
ranges from $1000$ through $4000$ in increments of $250$, following
the stochastic blockmodel with parameters as given above in
Eq.~\eqref{eq:1}. For each graph $G_n$ on $n$ vertices, we embed $G_n$
and cluster the embedded vertices of $G_n$ via Gaussian mixture
model and K-means clustering. Gaussian mixture model-based clustering
was done using the MCLUST
implementation of \cite{fraley99:_mclus}.
We then measure the classification error of the
clustering solution. We repeat this procedure 100 times to obtain an
estimate of the misclassification rate. The results are plotted in
Figure~\ref{fig:gmm_kmeans_bayes}. For comparison, we plot the
Bayes optimal classification error rate under the assumption that the
embedded points do indeed follow a multivariate normal mixture with
covariance matrices $\Sigma_1$ and $\Sigma_2$ as given in the
last column of Table~\ref{tab:cov}. We also plot the misclassification
rate of $(C \log{n})/n$ as given in \cite{STFP-2011}
where the constant $C$ was chosen to match the misclassification rate
of $K$-means clustering for $n = 1000$. For the number of
vertices considered here, the upper bound for
the constant $C$ from \cite{STFP-2011} will give a vacuous upper
bound of the order of $10^6$ for the misclassification rate in this
example.
Finally, we recall that the $2\to \infty$ norm bound of Theorem~\ref{thm:minh_sparsity} implies that, for large enough $n$, even the $k$-means algorithm will exactly recover the true block memberships with high probability \cite{lyzinski15:_relax}.

\begin{figure}[htbp]
	\centering
	\includegraphics[width=0.7\textwidth]{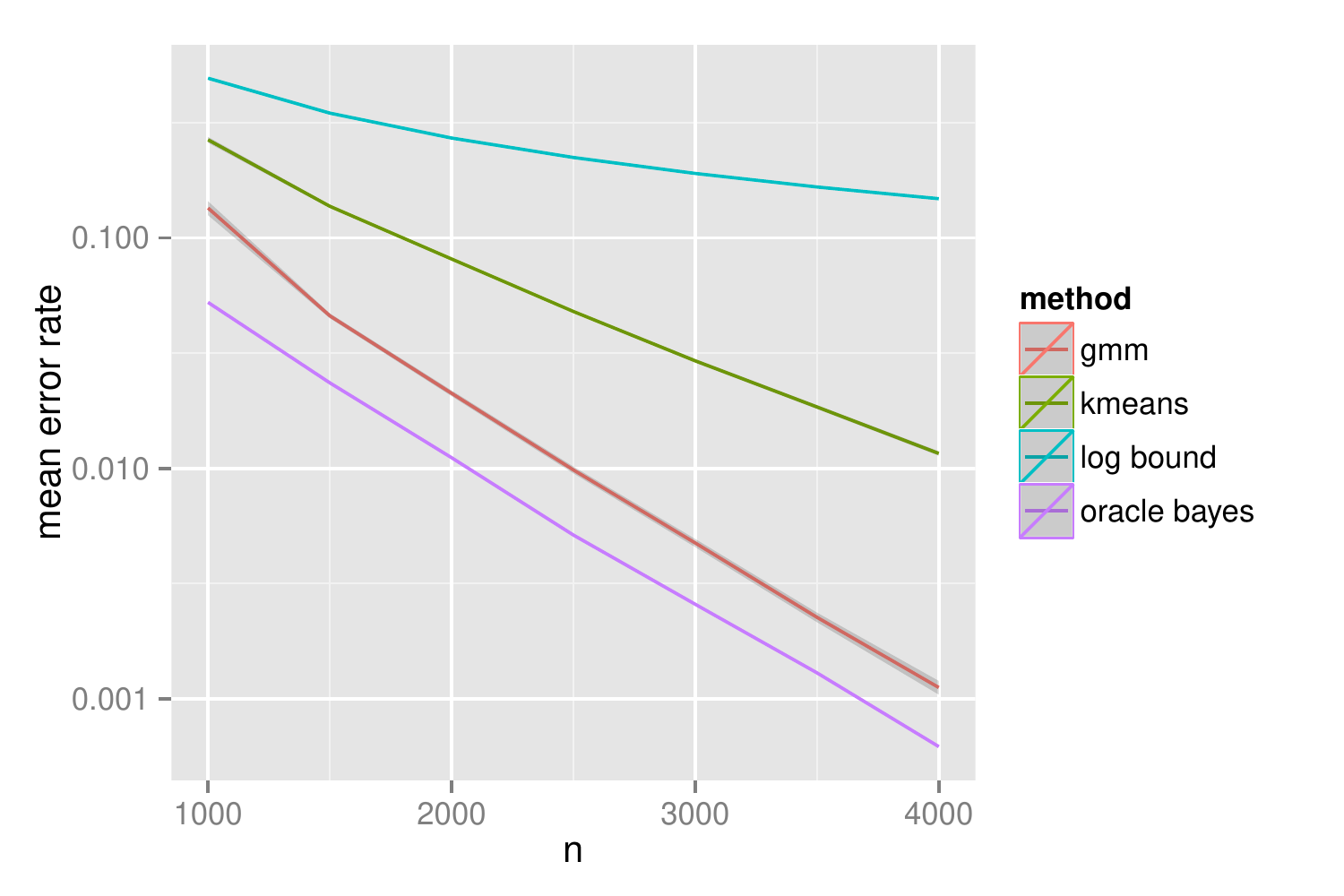}
	\caption{Comparison of classification error for Gaussian mixture
		model (red curve), K-Means (green curve), and Bayes classifier
		(cyan curve). The classification errors for each $n \in
		\{1000,1250,1500, \dots, 4000\}$ were obtained by averaging 100
		Monte Carlo iterations and are plotted on a $\log_{10}$ scale. The
		plot indicates that the assumption of a mixture of multivariate
		normals can yield non-negligible improvement in the accuracy of the inference
		procedure. The log-bound curve (purple) shows an upper bound on
		the error rate as derived in \cite{STFP-2011}. Figure duplicated from \cite{athreya2013limit}.}
	\label{fig:gmm_kmeans_bayes}
\end{figure}

For yet another application of the central limit theorem, we refer the reader to \cite{suwan14:_empbayes}, where the authors discuss the assumption of multivariate normality for estimated latent positions and how this can lead to a significantly improved empirical-Bayes framework for the estimation of block memberships in a stochastic blockmodel. 

\subsection{Distributional results for Laplacian spectral embedding}\label{subsec:lse}
We now present the analogous central limit theorem results of Section \ref{subsec:Distributional} for the normalized {\em Laplacian spectral embedding} (see Definition~\ref{def:LSE}). We first recall the definition of the Laplacian spectral embedding. 

\begin{theorem}[Central Limit Theorem for the rows of the LSE]
  \label{THM:LSE}
Let $(\mathbf{A}_n, \mathbf{X}_n) \sim \mathrm{RDPG}(F)$ for $n \geq 1$ be a sequence of $d$-dimensional random dot product graphs distributed according to some inner product distribution $F$. Let $\bm{\mu}$ and $\tilde{\Delta}$ denote the quantities
\begin{gather}
  \label{eq:mu_defn1}
  \bm{\mu} = \mathbb{E}[\bX_1] \in \mathbb{R}^{d}; \quad \tilde{\Delta} =
  \mathbb{E}\Bigl[ \frac{ \bX_1 \bX_1^{\top}}{\bX_1^{\top} \bm{\mu}} \Bigr] \in \mathbb{R}^{d \times d}.
\end{gather}
Also denote by $\tilde{\Sigma}(\bx)$ the $d \times d$ matrix
  \begin{equation}
  \label{eq:lse-sigma}
    \mathbb{E}\Bigl[ \Bigl(\frac{\tilde{\Delta}^{-1} \bX_1}{\bX_1^{\top} \bm{\mu}} - \frac{\bx}{2 \bx^{\top} \bm{\mu}}\Bigr) \Bigl(\frac{\bX_1^{\top} \tilde{\Delta}^{-1}}{\bX_1^{\top} \bm{\mu}} - \frac{\bx^{\top}}{2 \bx^{\top} \bm{\mu}}\Bigr) \frac{(\bx^{\top} \bX_1 - \bx^{\top} \bX_1 \bX_1^{\top} \bx)}{\bx^{\top} \bm{\mu}} \Bigr].
  \end{equation}
Then there exists a sequence of 
  $d \times d$ orthogonal matrices
 $(\mathbf{W}_n)$ such that for each fixed index $i$ and any $\bm{x} \in \mathbb{R}^{d}$,
  \begin{equation}
    \label{eq:Xtilde_clt}
    \Pr\Bigl\{n\bigl(  \mathbf{W}_n (\breve{\bX}_n)_{i} - \tfrac{(\bX_n)_{i}}{\sqrt{\sum_{j} (\bX_n)_i^{\top} (\bX_n)_j}} \bigr) \leq \bm{x} \Bigr\}
\longrightarrow  \int \Phi(\bm{x}, \tilde{\Sigma}(\bm{y})) dF(\bm{y})
  \end{equation}
\end{theorem}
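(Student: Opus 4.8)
The plan is to follow the same architecture as the proof of Theorem~\ref{thm:clt_orig_but_better} for the adjacency spectral embedding, replacing the pair $(\bA,\bP)$ with the normalized Laplacian and a suitable population analogue. First I would introduce the \emph{population Laplacian} $\calL(\bP) = \bT^{-1/2}\bP\bT^{-1/2}$, where $\bT$ is the diagonal matrix of expected degrees $\bT_{ii} = \sum_j \bP_{ij} = \bX_i^{\top}\sum_j \bX_j$. Since $\bP=\bX\bX^{\top}$ has rank $d$, so does $\calL(\bP)$, and its spectral decomposition $\calL(\bP)=\ULP\SLP\ULP^{\top}$ yields canonical transformed latent positions as the rows of $\ULP\SLP^{1/2}$. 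A short computation shows that, up to an orthogonal transformation, the $i$th such row is asymptotically $(\bX_n)_i/\sqrt{\sum_j (\bX_n)_i^{\top}(\bX_n)_j}$, which is precisely the centering quantity appearing in Eq.~\eqref{eq:Xtilde_clt}; this identification fixes the target of the central limit theorem.

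Second, I would establish that $\calL(\bA)$ is a small perturbation of $\calL(\bP)$ in spectral norm. This is more delicate than in the adjacency case because the normalization couples the edge noise $\bA-\bP$ with the fluctuations of the empirical degrees $\bD$ about their means $\bT$. The natural route is to write
\begin{equation*}
\calL(\bA)-\calL(\bP) = \bD^{-1/2}(\bA-\bP)\bD^{-1/2} + \bigl(\bD^{-1/2}-\bT^{-1/2}\bigr)\bP\bD^{-1/2} + \bT^{-1/2}\bP\bigl(\bD^{-1/2}-\bT^{-1/2}\bigr),
\end{equation*}
control the first summand with the spectral-norm bound of Theorem~\ref{thm:lu_peng}, and control the remaining summands by a first-order Taylor expansion of $t\mapsto t^{-1/2}$ together with Hoeffding concentration (Theorem~\ref{thm:Hoeffding}) of each degree $\bD_{ii}$ about $\bT_{ii}$. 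Feeding the resulting spectral-norm bound into the Davis--Kahan variant (Theorem~\ref{thm:davis-kahan}) and the approximate-commutativity and power-method arguments sketched after Theorem~\ref{thm:minh_sparsity}, I would obtain the Laplacian analogue of Eq.~\eqref{eq:CLT_basic_explanation}: for a suitable orthogonal $\bW^{*}$, the rescaled $i$th row of $\ULA\SLA^{1/2}-\ULP\SLP^{1/2}\bW^{*}$ equals the $i$th row of a leading linear term in $(\calL(\bA)-\calL(\bP))\ULP\SLP^{-1/2}$ plus residual terms of smaller order.

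Third, I would expand the leading term into a sum of row-wise contributions and identify its limiting covariance. Here both sources of randomness contribute: linearizing $\bD_{ii}^{-1/2}$ about $\bT_{ii}^{-1/2}$ produces, in addition to the pure edge-noise term $\tilde{\Delta}^{-1}\bX_1/(\bX_1^{\top}\bm{\mu})$, a degree-correction term proportional to $-\bx/(2\bx^{\top}\bm{\mu})$---precisely the two pieces that are differenced inside $\tilde{\Sigma}(\bx)$ in Eq.~\eqref{eq:lse-sigma}. Conditioning on $\bX_i=\bx$, the summands over $j$ become independent with the appropriate Bernoulli variances $\bx^{\top}\bX_j-(\bx^{\top}\bX_j)^2$, so the Lindeberg--Feller central limit theorem gives conditional asymptotic normality with covariance $\tilde{\Sigma}(\bx)$, after scaling the row by $n$ rather than $n^{1/2}$ (the degree normalization shrinks the latent positions by an extra factor of order $n^{1/2}$, so the rows are of order $n^{-1/2}$ and their fluctuations of order $n^{-1}$). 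Slutsky's theorem absorbs the residuals, and integrating over $\bx\sim F$ produces the stated mixture of Gaussians.

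The hard part will be the residual control in the second and third steps. Unlike the adjacency case, every quantity is wrapped in a degree normalization, so bounding the remainder requires uniform ($2\to\infty$-norm) control not only of $(\bA-\bP)\ULP\SLP^{-1/2}$ but also of the degree-fluctuation matrix $\bD^{-1/2}-\bT^{-1/2}$ and of the cross terms that mix the two. Establishing that the genuinely second-order cross terms are $o(n^{-1})$ in probability, uniformly over the rows, while the single first-order degree correction survives into the covariance, is the crux; this is where the argument must depart from \cite{athreya2013limit} and instead adapt the $2\to\infty$ bounds underlying Theorem~\ref{thm:minh_sparsity} together with the companion analysis of \cite{tang_lse}.
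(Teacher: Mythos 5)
Your proposal follows essentially the same route as the paper's argument: introduce the population Laplacian $\calL(\bP)=\bT^{-1/2}\bP\bT^{-1/2}$ with transformed latent positions $\tilde{\bX}=\bT^{-1/2}\bX$, establish spectral-norm concentration of $\calL(\bA)$ about $\calL(\bP)$, run the Davis--Kahan/approximate-commutativity/power-method machinery to reduce to a leading linear term plus $O(n^{-1})$ residuals, Taylor-expand $\bD^{-1/2}$ about $\bT^{-1/2}$ to split that term into the edge-noise piece and the degree-correction piece $\tfrac{1}{2}(\bI-\bD\bT^{-1})\tilde{\bX}$ (which together produce the two summands inside $\tilde{\Sigma}(\bx)$), and finish with a conditional Lindeberg--Feller argument, Slutsky, and integration over $F$. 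You also correctly identify the $n$ (rather than $n^{1/2}$) scaling and the crux of the residual control, so this matches the paper's proof sketch in all essentials.
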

When $F$ is a mixture of point masses---specifically, when $\mathbf{A} \sim \mathrm{RDPG}(F)$ is a stochastic blockmodel graph---we also have the following limiting conditional distribution for $n (\mathbf{W}_n (\breve{\bX}_n)_i - \tfrac{(\bX_n)_i}{\sqrt{\sum_{j} (\bX_n)_i^{\top} (\bX_n)_j}})$.
\begin{theorem}
\label{thm:clt-lse-sbm}
Assume the setting and notations of Theorem~\ref{THM:LSE} and let $$F = \sum_{k=1}^{K} \pi_{k} \delta_{\nu_k}, \quad \pi_1, \cdots, \pi_K > 0, \sum_{k} \pi_k = 1$$
be a mixture of $K$ point masses in $\mathbb{R}^{d}$. 
Then there exists a sequence of $d \times d$ orthogonal matrices $\mathbf{W}_n$ such that for any fixed index $i$,
\begin{equation}
\label{eq:lse-clt-sbm}
\mathbb{P}\Bigl\{n \bigl(\mathbf{W}_n (\breve{\bX}_n)_i - \tfrac{\nu_k}{\sqrt{\sum_{l} n_l \nu_k^{\top} \nu_l}} \bigr) \leq \bm{z}  \mid (\bX_n)_i = \nu_k \Bigr\} 
\longrightarrow \bm{\Phi}(\bm{z}, \tilde{\Sigma}_k)
\end{equation}
where $\tilde{\Sigma}_k = \tilde{\Sigma}(\nu_k)$ is as defined in Eq.~\eqref{eq:lse-sigma} and $n_k$ for $k \in \{1,2,\dots,K\}$ denote the number of vertices in $\mathbf{A}$ that are assigned to block $k$.
\end{theorem}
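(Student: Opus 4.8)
The plan is to derive this conditional statement from the unconditional Laplacian central limit theorem (Theorem~\ref{THM:LSE}) in exactly the way Corollary~\ref{cor:ase_normality_sbm} is obtained from Theorem~\ref{thm:clt_orig_but_better}. When $F = \sum_{k=1}^K \pi_k \delta_{\nu_k}$, the mixing integral $\int \Phi(\bm{x}, \tilde{\Sigma}(\bm{y}))\, dF(\bm{y})$ on the right-hand side of \eqref{eq:Xtilde_clt} collapses to the finite mixture $\sum_{k=1}^K \pi_k \Phi(\bm{x}, \tilde{\Sigma}(\nu_k))$. The role of the mixing is precisely to average over the random value of the latent position $(\bX_n)_i$; since each event $\{(\bX_n)_i = \nu_k\}$ has positive probability $\pi_k$, conditioning on it is well defined (no measure-zero difficulty) and isolates the single mixture component $\Phi(\bm{z}, \tilde{\Sigma}(\nu_k)) = \Phi(\bm{z}, \tilde{\Sigma}_k)$.

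Concretely, I would return to the decomposition of the scaled Laplacian residual used in the proof of Theorem~\ref{THM:LSE}, in which $n(\bW_n (\breve{\bX}_n)_i - \cdots)$ is written as a dominant term -- a sum of conditionally independent contributions over the indices $j \neq i$ -- plus residual terms of smaller order. First I would verify that the centering matches: when $(\bX_n)_i = \nu_k$ one has $\sum_j (\bX_n)_i^\top (\bX_n)_j = \nu_k^\top \sum_l n_l \nu_l = \sum_l n_l \nu_k^\top \nu_l$, so the general normalization $(\bX_n)_i / \sqrt{\sum_j (\bX_n)_i^\top (\bX_n)_j}$ becomes exactly $\nu_k / \sqrt{\sum_l n_l \nu_k^\top \nu_l}$ as in \eqref{eq:lse-clt-sbm}. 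Next, conditioning on $(\bX_n)_i = \nu_k$, the summands of the dominant term are conditionally independent, and the Lindeberg-Feller central limit theorem yields convergence of the dominant term to a \emph{single} multivariate Gaussian with covariance $\tilde{\Sigma}(\nu_k)$ rather than to a mixture. Finally, Slutsky's theorem absorbs the residual terms to give the stated conditional limit.

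The main obstacle is checking that the negligibility of the residual terms, established unconditionally in the proof of Theorem~\ref{THM:LSE}, survives conditioning on $\{(\bX_n)_i = \nu_k\}$. This is where the degree normalization $\bD^{-1/2}$ of the Laplacian makes the argument more delicate than for the adjacency embedding, since the degrees couple all the rows of $\breve{\bX}_n$. The key observation, however, is that conditioning on a single latent position with $\pi_k > 0$ inflates conditional probabilities by at most the bounded factor $\pi_k^{-1}$: if a residual $Y_n \to 0$ in probability, then $\Pr(|Y_n| > \epsilon \mid (\bX_n)_i = \nu_k) \leq \pi_k^{-1}\Pr(|Y_n| > \epsilon) \to 0$, so conditional negligibility follows immediately. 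The remaining care lies in confirming that the $\bx$-dependence of $\tilde{\Sigma}(\bx)$ in \eqref{eq:lse-sigma} is exactly the conditional covariance obtained by fixing $\bX_i = \bx$, so that substituting $\bx = \nu_k$ reproduces $\tilde{\Sigma}_k$; this is transparent from the form of \eqref{eq:lse-sigma}, in which the outer expectation is taken over an independent copy $\bX_1 \sim F$ while $\bx$ enters only as the fixed value of the conditioned coordinate.
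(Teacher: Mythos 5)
Your proposal is correct and follows essentially the same route as the paper: the paper obtains the conditional SBM statement from the general LSE limit theorem exactly by conditioning on the positive-probability block-assignment event (mirroring how Corollary~\ref{cor:ase_normality_sbm} follows from Theorem~\ref{thm:clt_orig_but_better}), and indeed the paper's own sketch of Theorem~\ref{THM:LSE} already proves the conditional-on-$(\bX_n)_i = \bm{x}$ normality as an intermediate step before integrating over $F$. Your verification that the centering term specializes correctly and your observation that conditional negligibility of the residuals follows from the bound $\Pr(\cdot \mid (\bX_n)_i = \nu_k) \leq \pi_k^{-1}\Pr(\cdot)$ are exactly the details one needs to make this rigorous.
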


\begin{remark}
As a special case of Theorem~\ref{thm:clt-lse-sbm}, we note that if $\mathbf{A}$ is an Erd\H{o}s-R\'{e}nyi graph on $n$ vertices with edge probability $p^2$ -- which corresponds to a random dot product graph where the latent positions are identically $p$ -- then for each fixed index $i$, the normalized Laplacian embedding satisfies
$$ n\bigl(\breve{\bX}_i - \tfrac{1}{\sqrt{n}}\bigr) \overset{\mathrm{d}}{\longrightarrow} \mathcal{N}\bigl(0, \tfrac{1 - p^2}{4p^2}\bigr).$$
Recall that $\breve{\bX}_i$ is proportional to $1/\sqrt{d_i}$ where $d_i$ is the degree of the $i$-th vertex.
On the other hand, the adjacency spectral embedding satisfies $$\sqrt{n}(\hat{\bX}_i - p) \overset{\mathrm{d}}{\longrightarrow} \mathcal{N}(0, 1 - p^2).$$ 
As another example, let $\mathbf{A}$ be sampled from a stochastic blockmodel with block probability matrix $\mathbf{B} = \bigl[\begin{smallmatrix} p^2 & pq \\ pq & q^2 \end{smallmatrix}\bigr]$ and block assignment probabilities $(\pi, 1- \pi)$. Since $\mathbf{B}$ has rank $1$, this model corresponds to a random dot product graph where the latent positions are either $p$ with probability $\pi$ or $q$ with probability $1 - \pi$. Then for each fixed index $i$, the normalized Laplacian embedding satisfies
\begin{gather}
\label{eq:er-p-q-lse1}
 n \bigl(\breve{\bX}_i - \tfrac{p}{\sqrt{n_1 p^2 + n_2 pq}}\bigr) \overset{\mathrm{d}}{\longrightarrow} \mathcal{N}\Bigl(0, \tfrac{\pi p (1 - p^2) + (1 - \pi) q(1 - pq)}{4 (\pi p + (1 - \pi)q)^3}\Bigr) \,\, \text{if $\bX_i = p$}, \\
 \label{eq:er-p-q-lse2}
n\bigl(\breve{\bX}_i - \tfrac{q}{\sqrt{n_1 pq + n_2 q^2}}\bigr) \overset{\mathrm{d}}{\longrightarrow} \mathcal{N}\Bigl(0, \tfrac{\pi p (1 - pq) + (1 - \pi) q(1 - q^2)}{4 (\pi p + (1 - \pi)q)^3}\Bigr) \,\, \text{if $\bX_i = q$}. 
\end{gather}
where $n_1$ and $n_2 = n - n_1$ are the number of vertices of $\mathbf{A}$ with latent positions $p$ and $q$. The adjacency spectral embedding, meanwhile, satisfies
\begin{gather}
\label{eq:er-p-q-ase1}
 \sqrt{n}(\hat{\bX}_i  - p) \overset{\mathrm{d}}{\longrightarrow} \mathcal{N}\Bigl(0, \tfrac{\pi p^4(1 - p^2) + (1 - \pi) pq^3(1 - pq)}{(\pi p^2 + (1 - \pi)q^2)^2}\Bigr) \,\, \text{if $\bX_i = p$},\\
 \label{eq:er-p-q-ase2}
\sqrt{n}(\hat{\bX}_i  - q) \overset{\mathrm{d}}{\longrightarrow} \mathcal{N}\Bigl(0, \tfrac{\pi p^3q(1 - pq) + (1 - \pi) q^4(1 - q^2)}{(\pi p^2 + (1 - \pi)q^2)^2}\Bigr) \,\, \text{if $\bX_i = q$}.
\end{gather}
\end{remark}

We present a sketch of the proof of Theorem~\ref{THM:LSE} in the Appendix, Section \ref{sec:Appendix}. Due to the intricacy of the proof, however, even in the Appendix we do not provide full details; we instead refer the reader to \cite{tang_lse} for the complete proof.  Moving forward, we focus on the important implications of these distributional results for subsequent inference, including a mechanism by which to assess the relative desirability of ASE and LSE, which {\em vary depending on inference task.}

\section{Implications for subsequent inference}
The previous sections are devoted to establishing the consistency and asymptotic normality of the adjacency and Laplacian spectral embeddings for the estimation of latent positions in an RDPG.  
In this section, we describe several subsequent graph inference tasks, all of which depend on this consistency: specifically, nonparametric clustering, semiparametric and nonparametric two-sample graph hypothesis testing, and multi-sample graph inference.  

\subsection{Nonparametric clustering: a comparison of ASE and LSE via Chernoff information}
\label{subsec:chernoff}
We now discuss how the limit results of Section~\ref{subsec:Distributional} and Section~\ref{subsec:lse} provides insight into subsequent inference.
In a recent pioneering work the authors of \cite{bickel_sarkar_2013} analyze, in the context of stochastic blockmodel graphs, how the choice of spectral embedding by either the adjacency matrix or the normalized Laplacian matrix impacts subsequent recovery of the block assignments. In particular, they show that a metric constructed from the average distance between the vertices of a block and its cluster centroid for the spectral embedding can be used as a surrogate measure for the performance of the subsequent inference task, i.e., the metric is a surrogate measure for the error rate in recovering the vertices to block assignments using the spectral embedding. It is shown in \cite{bickel_sarkar_2013} that for two-block stochastic blockmodels, for a large regime of parameters the normalized Laplacian spectral embedding reduces the within-block variance (occasionally by a factor of four) while preserving the between-block variance, as compared to that of the adjacency spectral embedding.
This suggests that for a large region of the parameter space for two-block stochastic blockmodels, the spectral embedding of the Laplacian is preferable to the spectral embedding of the adjacency matrix for subsequent inference. However, we observe that the metric in \cite{bickel_sarkar_2013} is intrinsically tied to the use of $K$-means as the clustering procedure: specifically, a smaller value of the metric for the Laplacian spectral embedding as compared to that for the adjacency spectral embedding only implies that clustering the Laplacian spectral embedding using $K$-means is possibly better than clustering the adjacency spectral embedding using $K$-means. 

Motivated by the above observation, in \cite{tang_lse}, we propose a metric that is {\em independent} of any specific clustering procedure, a metric that characterizes the minimum error achievable by {\em any} clustering procedure that uses only the spectral embedding for the recovery of block assignments in stochastic blockmodel graphs. For a given embedding method, the metric used in \cite{tang_lse} is based on the notion of statistical information between the limiting distributions of the blocks. Roughly speaking, smaller statistical information implies less information to discriminate between the blocks of the stochastic blockmodel. More specifically, the limit result in Section~\ref{subsec:Distributional} and Section~\ref{subsec:lse} state that, for stochastic blockmodel graphs, conditional on the block assignments the entries of the scaled eigenvectors corresponding to the few largest eigenvalues of the adjacency matrix and the normalized Laplacian matrix converge to a multivariate normal as the number of vertices increases. Furthermore, the associated covariance matrces are not spherical, so $K$-means clustering for the adjacency spectral embedding or Laplacian spectral embedding does not yield minimum error for recovering the block assignment. Nevertheless, these limiting results also facilitate comparison between the two embedding methods via the classical notion of Chernoff information \cite{chernoff_1952}. 

We now recall the notion of the Chernoff $\alpha$-divergences (for $\alpha \in (0,1))$ and Chernoff information. Let $F_0$ and $F_1$ be two absolutely continuous multivariate distributions in $\Omega = \mathbb{R}^{d}$ with density functions $f_0$ and $f_1$, respectively. Suppose that $Y_1, Y_2, \dots, Y_m$ are independent and identically distributed random variables, with $Y_i$ distributed either $F_0$ or $F_1$. We are interested in testing the simple null hypothesis $\mathbb{H}_0 \colon F = F_0$ against the simple alternative hypothesis $\mathbb{H}_1 \colon F = F_1$. A test $T$ can be viewed as a sequence of mappings $T_m \colon \Omega^{m} \mapsto \{0,1\}$ such that given $Y_1 = y_1, Y_2 = y_2, \dots, Y_m = y_m$, the test rejects $\mathbb{H}_0$ in favor of $\mathbb{H}_1$ if $T_m(y_1, y_2, \dots, y_m) = 1$; similarly, the test favors $\mathbb{H}_0$ if $T_m(y_1, y_2, \dots, y_m) = 0$. 

The Neyman-Pearson lemma states that, given $Y_1 = y_1, Y_2 = y_2, \dots, Y_m = y_m$ and a threshold $\eta_m \in \mathbb{R}$, the likelihood ratio test which rejects $\mathbb{H}_0$ in favor of $\mathbb{H}_1$ whenever
$$ \Bigl(\sum_{i=1}^{m} \log{f_0(y_i)} - \sum_{i=1}^{m} \log{f_1(y_i)} \Bigr) \leq \eta_m $$
is the most powerful test at significance level $\alpha_m = \alpha(\eta_m)$, so that the likelihood ratio test minimizes the Type II error $\beta_m$ subject to the constraint that the Type I error is at most $\alpha_m$. Assuming that $\pi \in (0,1)$ is a prior probability that $\mathbb{H}_0$ is true. Then, for a given $\alpha_m^{*} \in (0,1)$, let $\beta_m^{*} = \beta_m^{*}(\alpha_m^{*})$ be the Type II error associated with the likelihood ratio test when the Type I error is at most $\alpha_m^{*}$. The quantity $\inf_{\alpha_m^{*} \in (0,1)} \pi \alpha_m^{*} + (1 - \pi) \beta_m^{*}$ is then the Bayes risk in deciding between $\mathbb{H}_0$ and $\mathbb{H}_1$ given the $m$ independent random variables $Y_1, Y_2, \dots, Y_m$. A classical result of Chernoff \cite{chernoff_1952,chernoff_1956} states that the Bayes risk is intrinsically linked to a quantity known as the {\em Chernoff information}. More specifically, let $C(F_0, F_1)$ be the quantity
\begin{equation}
\label{eq:chernoff-defn}
\begin{split} C(F_0, F_1) & = - \log \, \Bigl[\, \inf_{t \in (0,1)} \int_{\mathbb{R}^{d}} f_0^{t}(\bm{x}) f_1^{1-t}(\bm{x}) \mathrm{d}\bm{x} \Bigr] \\
&= \sup_{t \in (0,1)} \Bigl[ - \log \int_{\mathbb{R}^{d}} f_0^{t}(\bm{x}) f_1^{1-t}(\bm{x}) \mathrm{d}\bm{x} \Bigr].
\end{split}
\end{equation}
Then we have
\begin{equation}
\label{eq:chernoff-binary}
\begin{split}
\lim_{m \rightarrow \infty} \frac{1}{m} \inf_{\alpha_m^{*} \in (0,1)} \log( \pi \alpha_m^{*} + (1 - \pi) \beta_m^{*}) & = - \, C(F_0, F_1).
\end{split}
\end{equation}
Thus $C(F_0, F_1)$, the Chernoff information between $F_0$ and $F_1$, is the {\em exponential} rate at which the Bayes error $\inf_{\alpha_m^{*} \in (0,1)} \pi \alpha_m^{*} + (1 - \pi) \beta_m^{*}$ decreases as $m \rightarrow \infty$; we note that the Chernoff information is independent of $\pi$. We also define, for a given $t \in (0,1)$ the Chernoff divergence $C_t(F_0, F_1) $ between $F_0$ and $F_1$ by
$$ C_{t}(F_0,F_1) = - \log \int_{\mathbb{R}^{d}} f_0^{t}(\bm{x}) f_1^{1-t}(\bm{x}) \mathrm{d}\bm{x}. $$
The Chernoff divergence is an example of a $f$-divergence as defined in \cite{Csizar,Ali-Shelvey}. When $t = 1/2$, $C_t(F_0,F_1)$ is the Bhattacharyya distance between $F_0$ and $F_1$. Recall that any $f$-divergence satisfies the Information Processing Lemma and is invariant with respect to invertible transformations \cite{Liese_Vadja}. Therefore, any $f$-divergence such as the Kullback-Liebler divergence can also be used to compare the two embedding methods; the Chernoff information is particularly attractive because of its explicit relationship with the Bayes risk.

The characterization of Chernoff information in Eq.~\eqref{eq:chernoff-binary} can be extended to $K + 1 \geq 2$ hypotheses. Let $F_0, F_1, \dots, F_{K}$ be distributions on $\mathbb{R}^{d}$ and suppose that $Y_1, Y_2, \dots, Y_m$ are independent and identically distributed random variables with $Y_i$ distributed $F \in \{F_0, F_1, \dots, F_K\}$. We are thus interested in determining the distribution of the $Y_i$ among the $K+1$ hypothesis $\mathbb{H}_0 \colon F = F_0, \dots, \mathbb{H}_{K} \colon F = F_K$. Suppose also that hypothesis $\mathbb{H}_k$ has {\em a priori} probability $\pi_k$. Then for any decision rule $\delta$, the risk of $\delta$ is $r(\delta) = \sum_{k} \pi_k \sum_{l \not = k} \alpha_{lk}(\delta) $ where $\alpha_{lk}(\delta)$ is the probability of accepting hypothesis $\mathbb{H}_l$ when hypothesis $\mathbb{H}_k$ is true. Then we have \cite{leang-johnson}
\begin{equation}
\label{eq:chernoff-multiple}
\inf_{\delta} \lim_{m \rightarrow \infty}  \frac{r(\delta)}{m} = - \min_{k \not = l} C(F_k, F_l).
\end{equation}
where the infimum is over all decision rules $\delta$. That is, for any $\delta$, $r(\delta)$ decreases to $0$ as $m \rightarrow \infty$ at a rate no faster than $\exp(- m \min_{k \not = l} C(F_k, F_l))$. It was also shown in \cite{leang-johnson} that the {\em Maximum A Posterior} decision rule achieves this rate. 

Finally, if $F_0 =  \mathcal{N}(\mu_0, \Sigma_0)$ and $F_1 = \mathcal{N}(\mu_1, \Sigma_1)$, then denoting by $\Sigma_t = t \Sigma_0 + (1 - t) \Sigma_1$, the Chernoff information $C(F_0, F_1)$ between $F_0$ and $F_1$ is given by
\begin{equation*}
C(F_0, F_1) = \sup_{t \in (0,1)} \Bigl(\frac{t(1 - t)}{2} (\mu_1 - \mu_2)^{\top}\Sigma_t^{-1}(\mu_1 - \mu_2) + \frac{1}{2} \log \frac{|\Sigma_t|}{|\Sigma_0|^{t} |\Sigma_1|^{1 - t}}  \Bigr).
\end{equation*}

Comparison of the performance of the Laplacian spectral embedding and the adjacency spectral embedding for recovering the block assignments now proceeds as follows. Let $\mathbf{B} \in [0,1]^{K \times K}$ and $\bm{\pi} \in \mathbb{R}^{K}$ be the matrix of block probabilities and the vector of block assignment probabilities for a $K$-block stochastic blockmodel. We shall assume that $\mathbf{B}$ is positive semidefinite. Then given an $n$ vertex instantiation of the SBM graph with parameters $(\bm{\pi}, \mathbf{B})$, for sufficiently large $n$, the large-sample optimal error rate for recovering the block assignments when adjacency spectral embedding is used as the initial embedding step can be characterized by the quantity $\rho_{\mathrm{A}} = \rho_{\mathrm{A}}(n)$  defined by
\begin{equation}
\label{eq:rho_ASE}
\rho_{\mathrm{A}} = \min_{k \not = l} \! \sup_{t \in (0,1)} \frac{1}{2} \log \frac{|\Sigma_{kl}(t)|}{|\Sigma_{k}|^{t} |\Sigma_{l}|^{1-t}} + \frac{nt(1 - t)}{2}(\nu_k - \nu_l)^{\top} \Sigma_{kl}^{-1}(t) (\nu_k - \nu_l)
\end{equation}
where $\Sigma_{kl}(t) = t \Sigma_{k} + (1 - t) \Sigma_l$; $\Sigma_k = \Sigma(\nu_k)$ and $\Sigma_l = \Sigma(\nu_l)$ are as defined in Theorem~\ref{thm:clt_orig_but_better}. We recall Eq.~\eqref{eq:chernoff-multiple}, in particular the fact that as $\rho_{\mathrm{A}}$ increases, the large-sample optimal error rate decreases. 
Similarly, the large-sample optimal error rate when Laplacian spectral embedding is used as the pre-processing step can be characterized by the quantity
$\rho_{\mathrm{L}} = \rho_{\mathrm{L}}(n)$ defined by
\begin{equation} 
\label{eq:rho_LSE}
\rho_{\mathrm{L}} = \min_{k \not = l} \sup_{t \in (0,1)} \frac{1}{2} \log \frac{|\tilde{\Sigma}_{kl}(t)|}{|\tilde{\Sigma}_{k}|^{t} |\tilde{\Sigma}_{l}|^{1-t}} + \frac{n t(1 - t)}{2} (\tilde{\nu}_k - \tilde{\nu}_l)^{\top} \tilde{\Sigma}_{kl}^{-1}(t) (\tilde{\nu}_k - \tilde{\nu}_l)
\end{equation}
where $\tilde{\Sigma}_{kl}(t) = t \tilde{\Sigma}_{k} + (1 - t) \tilde{\Sigma}_l$ with $\tilde{\Sigma}_k = \tilde{\Sigma}(\nu_k)$ and $\tilde{\Sigma}_l = \tilde{\Sigma}(\nu_l)$ as defined in Theorem~\ref{thm:clt-lse-sbm}, and
$\tilde{\nu}_k = \nu_k/(\sum_{k'} \pi_{k'} \nu_k^{\top} \nu_{k'})^{1/2}$. We emphasize that we have made the simplifying assumption that $n_k = n \pi_k$ in our expression for $\tilde{\nu}_k$ in Eq.~\eqref{eq:rho_LSE}. This is for ease of comparison between $\rho_{\mathrm{A}}$ and $\rho_{\mathrm{L}}$ in our subsequent discussion.

The ratio $\rho_{\mathrm{A}}/\rho_{\mathrm{L}}$ is a surrogate measure of the relative large-sample performance of the adjacency spectral embedding as compared to the Laplacian spectral embedding for subsequent inference, at least in the context of stochastic blockmodel graphs. That is to say, for given parameters $\bm{\pi}$ and $\mathbf{B}$, if $\rho_{\mathrm{A}}/\rho_{\mathrm{L}} > 1$, then adjacency spectral embedding is to be preferred over Laplacian spectral embedding when $n$, the number of vertices in the graph, is sufficiently large; similarly, if $\rho_{\mathrm{A}}/\rho_{\mathrm{L}} < 1$, then Laplacian spectral embedding is to be preferred over adjacency spectral embedding.

\begin{figure}[tp!]
\center
\includegraphics[width=0.7\textwidth]{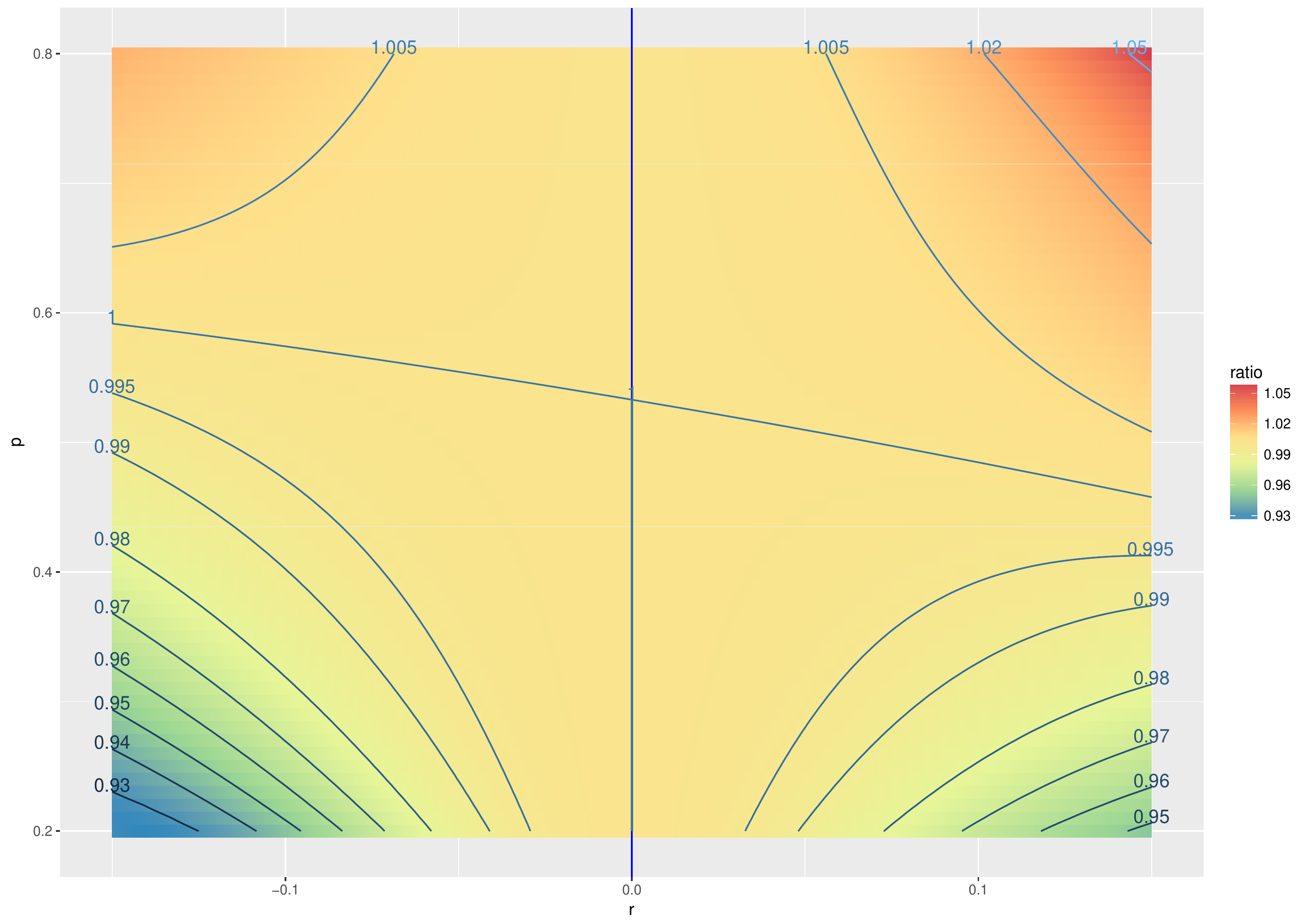}
\caption{The ratio $\rho_{\mathrm{A}}/\rho_{\mathrm{L}}$ displayed for various values of $p \in [0.2, 0.8]$ and $r = q - p \in [-0.15, 0.15]$. The labeled lines are the contour lines for $\rho_{\mathrm{A}}/\rho_{\mathrm{L}}$. Figure duplicated from \cite{tang_lse}.
}
\label{fig:ratio-plot}
\end{figure}

As an illustration of the ratio $\rho_{\mathrm{A}}/\rho_{\mathrm{L}}$, we first consider the collection of 2-block stochastic blockmodels where $\mathbf{B} = \Bigl[ \begin{smallmatrix} p^2 & pq \\ pq & q^2 \end{smallmatrix} \Bigr]$ for $p, q \in (0,1)$ and $\bm{\pi} = (\pi_1, \pi_2 )$ with $\pi_1 + \pi_2 = 1$. We note that these $\mathbf{B}$ also have rank $1$ and thus the Chernoff information can be computed explicitly. 
Then for sufficiently large $n$, $\rho_{\mathrm{A}}$ is approximately
$$ \rho_{\mathrm{A}} \approx \sup_{t \in (0,1)} \frac{nt(1 - t)}{2} (p - q)^{2} (t \sigma_1^{2} + (1 - t) \sigma_2^{2})^{-1}$$
where $\sigma_1$ and $\sigma_2$ are as specified in Eq.~\eqref{eq:er-p-q-ase1} and Eq.~\eqref{eq:er-p-q-ase2}, respectively. Simple calculations yield
$$ \rho_{\mathrm{A}} \approx \frac{n(p - q)^2 (\pi_1 p^2 + \pi_2 q^2)^2}{2\bigl(\sqrt{\pi_1 p^4 (1 - p^2) + \pi_2 p q^3(1 - pq) } + \sqrt{\pi_1 p^3 q(1 - pq) + \pi_2 q^4 (1 - q^2)}\bigr)^2}$$ for sufficiently large $n$. Similarly, denoting by $\tilde{\sigma}_1^{2}$ and $\tilde{\sigma}_2^2$ the variances specified in Eq.~\eqref{eq:er-p-q-lse1} and Eq.~\eqref{eq:er-p-q-lse2}, we have
\begin{equation*}
\begin{split}
 \rho_{\mathrm{L}} & \approx \sup_{t \in (0,1)} \frac{nt(1 - t)}{2} \Bigl(\frac{p}{\sqrt{\pi_1 p^2  + \pi_2 pq}} - \frac{q}{\sqrt{\pi_1 p q + \pi_2 q^2}}\Bigr)^{2} (t \tilde{\sigma}_1^{2} + (1 - t) \tilde{\sigma}_2^2)^{-1} \\
 & \approx \frac{2n(\sqrt{p} - \sqrt{q})^2 (\pi_1 p + \pi_2 q)^2}{\bigl(\sqrt{\pi_1 p (1 - p^2) + \pi_2 q (1 - pq)} + \sqrt{\pi_1 p (1 - pq) + \pi_2 q (1 - q^2)}\bigr)^2}  \\ &
 \approx \frac{2n(p - q)^2 (\pi_1 p + \pi_2 q)^2}{(\sqrt{p} + \sqrt{q})^2 \bigl(\sqrt{\pi_1 p (1 - p^2) + \pi_2 q (1 - pq)} + \sqrt{\pi_1 p (1 - pq) + \pi_2 q (1 - q^2)}\bigr)^2} 
 \end{split}
\end{equation*}
for sufficiently large $n$. 
Fixing $\bm{\pi} = (0.6, 0.4)$,
we compute the ratio $\rho_{\mathrm{A}}/\rho_{\mathrm{L}}$ for a range of $p$ and $q$ values, with $p \in [0.2, 0.8]$ and $q = p + r$ where $r \in [-0.15, 0.15]$. The results are plotted in Figure~\ref{fig:ratio-plot}. The $y$-axis of Figure~\ref{fig:ratio-plot} denotes the values of $p$ and the $x$ axis are the values of $r$. We see from the above figure that, in general, neither of the methods, namely adjacency spectral clustering or normalized Laplacian spectral embedding followed by clustering via Gaussian mixture models, dominates over the whole $(p,r)$ parameter space. However, in general, one can easily show that LSE is preferable over ASE whenever the block probability matrix $\mathbf{B}$ becomes sufficiently sparse. 
Determination of similarly intuitive conditions for which ASE dominates over LSE is considerably more subtle and is the topic of current research. But in general, we observe that ASE dominates over LSE whenever the entries of $\mathbf{B}$ are relatively large. 


Finally we consider the collection of stochastic blockmodels with parameters $\bm{\pi}$ and $\mathbf{B}$ where 
\begin{equation}
\label{eq:3block-example}
 \mathbf{B} = \begin{bmatrix} p & q & q \\ q & p & q \\ q & q & p \\ \end{bmatrix}, \quad p, q \in (0,1), \,\, \text{and} \,\, \bm{\pi} = (0.8, 0.1, 0.1).
\end{equation}
First we compute the ratio $\rho_{\mathrm{A}}/\rho_{\mathrm{L}}$ for $p \in [0.3, 0.9]$ and $r = q - p$ with $r \in [- 0.2, -0.01]$. The results are plotted in Figure~\ref{fig:ratio_3blocks}, with the $y$-axis of Figure~\ref{fig:ratio_3blocks} being the values of $p$ and the $x$-axis being the values of $r$. Once again we see that for the purpose of subsequent inference, neither embedding methods dominates over the whole parameter space and that LSE is still preferable to ASE for smaller values of $p$ and $q$ and that ASE is preferable to LSE for larger values of $p$ and $q$. 

\begin{figure}[htbp]
\center 
\includegraphics[width=0.8\textwidth]{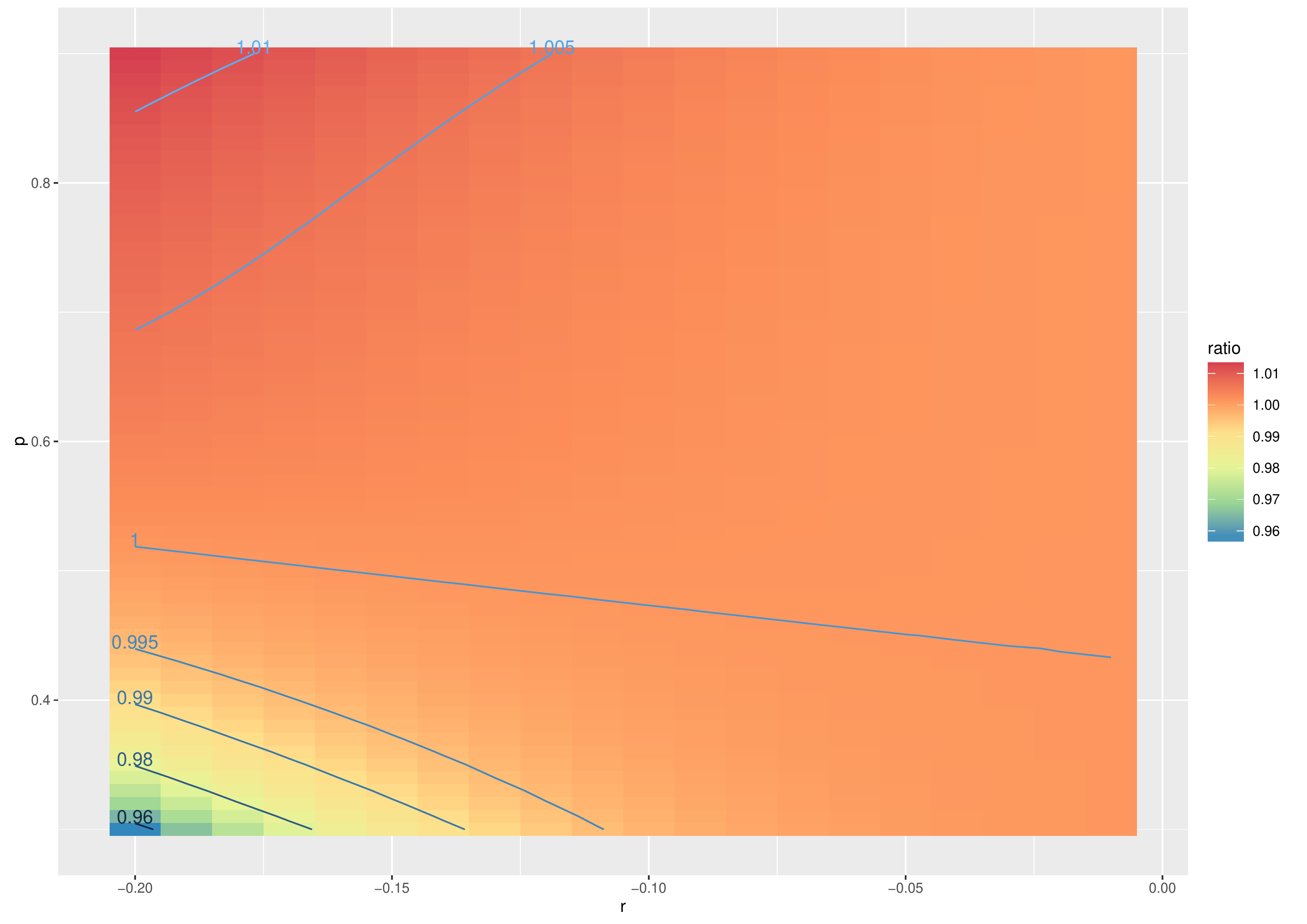}
\caption{The ratio $\rho_{A}/\rho_{L}$ displayed for various values of $p \in [0.2, 0.8]$ and $r = q - p \in [-0.2, -0.01]$ for the 3-block stochastic blockmodel of Eq.~\eqref{eq:3block-example}. The labeled lines are the contour lines for $\rho_{\mathrm{A}}/\rho_{\mathrm{L}}$. Figure duplicated from \cite{tang_lse}.
 }
\label{fig:ratio_3blocks}
\end{figure}



\subsection{Hypothesis testing}\label{subsec:Testing}
The field  of multi-sample graph inference is comparatively new, and the development of a comprehensive machinery for two-sample hypothesis
testing for random graphs is of both theoretical and practical
importance.  The test procedures in \cite{tang14:_semipar} and \cite{tang14:_nonpar}, both of which leverage the adjacency spectral embedding to test hypotheses of equality or equality in distribution for random dot product graphs, are among the only principled methodologies currently available. In both cases, the accuracy of the adjacency spectral embedding as an estimate of the latent positions is the key to constructing a test statistic.  Specifically, \cite{tang14:_semipar} gives a new and improved bond, Theorem~\ref{thm:conc_Xhat_X} below, for the Frobenius room of the difference between the
original latent positions and the estimated latent positions obtained from the embedding. This bound is then used to establish a valid and consistent test for the semiparameteric hypothesis test of equality for latent positions in a pair of vertex-aligned random dot product graphs. In the nonparametric case, \cite{tang14:_nonpar} demonstrates how the adjacency spectral embedding can be integrated with a kernel density estimator to accurately estimate the underlying distribution $F$ for in a random dot product graph with i.i.d latent positions.


To begin, we consider the problem of developing a test for the hypothesis that two random dot product
graphs on the same vertex set, with known vertex correspondence, have
the same generating latent position or have generating latent
positions that are scaled or diagonal transformations of one another.
This framework includes, as a special case, a test for whether two
stochastic blockmodels have the same or related block probability
matrices. In this two-sample testing problem, though,
the parameter dimension grows as the sample size grows. Therefore, the
problem is not precisely analogous to classical two-sample tests for,
say, the difference of two parameters belonging to some fixed
Euclidean space, in which an increase in data has no effect on the
dimension of the parameter. The problem is also not 
nonparametric, since we view our latent positions as fixed
and impose specific distributional requirements on the
data---that is, on the adjacency matrices. Indeed, we regard the
problem as semiparametric, and \cite{tang14:_semipar} adapts the traditional definition of
consistency to this setting. In particular, for the test procedure we describe, power will increase 
to one for alternatives in which the difference between the two latent
positions grows with the sample size.

Our test procedure is, at first glance, deceptively simple: given a pair of adjacency matrices $\bA$ and $\bB$ for two $d$-dimensional random dot product graphs, we generate their adjacency spectral embeddings, denoted $\Xhat$ and $\Yhat$, respectively, and compute an appropriately normalized version of the so-called {\em Procrustes fit} or {\em Procrustes distance} between the two embeddings:
$$\min_{\bW \in \mathcal{O}^{d \times d}}\|\Xhat-\Yhat \bW\|_F$$
(Recall that such a fit is necessary because of the inherent nonidentifiability of the random dot product model.)	

Understanding the limiting distribution of this test statistic is more complicated, however, and appropriately framing the set of null and alternative hypothesis for which the test is valid and consistent (i.e. a level $\alpha$-test with power converging to 1 as $n \rightarrow \infty$) is delicate.  To that end, we first state the key concentration inequality for $\min_{\bW \in \mathcal{O}^{d \times d}}\|\Xhat-\Yhat \bW\|_F$. 
	
\begin{theorem}\label{thm:conc_Xhat_X}
	Suppose $\bP=\bX \bX^{\top}$ is an $n \times n$ probability
	matrix of rank $d$.  Suppose also
	that there exists $\epsilon>0$ such that
	$\delta(\bP)>(\log{n})^{2 + \epsilon}$. Let $c>0$ be
	arbitrary but fixed. Then there exists a $n_0(c)$ and a universal
	constant $C \geq 0$ such that if $n \geq n_0$ and $n^{-c} <
	\eta<1/2$, then there exists a deterministic $\bW \in
	\mathcal{O}(d)$ such that, with probability at least $1 - 3\eta$,
	\begin{equation}\label{conc_x-xhat}
	\Bigl| \|\Xhat-\bX\bW\|_F-C(\bX) \Bigr| \leq \frac{C d
		\log{(n/\eta)}}{C(\bX) \sqrt{ \gamma^{5}(\bP) \delta(\bP)}}
	\end{equation}
	where $C(\bX)$ is a function of $\bX$ given by
	\begin{equation}\label{eq:def_of_C_semipar}
	C(\bX) = \sqrt{\mathrm{tr} \,\, 
		\SP^{-1/2} \UP^{\top} \E[(\bA -
		\bP)^{2}] \UP
		\SP^{-1/2}}
	\end{equation}
	where $\mathbb{E}[(\bA - \bP)^2]$ is taken with respect to $\bA$ and conditional on $\bX$.
\end{theorem}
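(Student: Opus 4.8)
The plan is to linearize $\Xhat - \bX\bW$, to observe that the squared Frobenius norm of the linear part has conditional expectation \emph{exactly} $C(\bX)^2$, and then to show that this squared norm concentrates sharply about that mean. Using the nonidentifiability of the model, write $\bX = \UP\SP^{1/2}\bW_0^{\top}$ for an orthogonal factor $\bW_0$. The perturbation expansion underlying Theorem~\ref{thm:minh_sparsity} then produces a deterministic $\bW \in \mathcal{O}(d)$ and a residual $\bR$ with
\begin{equation*}
\Xhat - \bX\bW = (\bA - \bP)\UP\SP^{-1/2}\bW_0^{\top} + \bR .
\end{equation*}
Since $\bW_0$ is orthogonal, $\|(\bA-\bP)\UP\SP^{-1/2}\bW_0^{\top}\|_F = \|(\bA-\bP)\UP\SP^{-1/2}\|_F =: v$, so by the triangle inequality it suffices to control $\|\bR\|_F$ and $|v - C(\bX)|$ separately.

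The algebraic heart of the argument is the identity
\begin{equation*}
v^2 = \|(\bA-\bP)\UP\SP^{-1/2}\|_F^2 = \tr\!\bigl[\SP^{-1/2}\UP^{\top}(\bA-\bP)^2\UP\SP^{-1/2}\bigr],
\end{equation*}
whose conditional expectation over $\bA$ is \emph{precisely} $C(\bX)^2$ by the definition~\eqref{eq:def_of_C_semipar}. Thus $v^2$ is an unbiased estimator of $C(\bX)^2$, and the factor $1/C(\bX)$ appearing on the right-hand side of~\eqref{conc_x-xhat} is nothing but the linearization
\begin{equation*}
\bigl| v - C(\bX) \bigr| = \frac{\bigl| v^2 - C(\bX)^2 \bigr|}{v + C(\bX)} \le \frac{\bigl| v^2 - C(\bX)^2 \bigr|}{C(\bX)} .
\end{equation*}
It therefore remains to bound the deviation of the quadratic form $v^2$, a function of the independent variables $\{\bA_{ij}\}_{i<j}$, from its mean.

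The essential quantitative input here is the delocalization of the rows of $\UP\SP^{-1/2}$: because the latent positions are bounded and $\lambda_d(\bP) = \Theta(\gamma(\bP)\delta(\bP))$, one obtains $\|\UP\SP^{-1/2}\|_{\tti} = O\!\bigl((\gamma(\bP)\sqrt{n\,\delta(\bP)})^{-1}\bigr)$, so that each row of $(\bA-\bP)\UP\SP^{-1/2}$ has squared norm of order $n^{-1}$. A direct second-moment computation then yields $\Var(v^2) = O\!\bigl((\gamma^{3}(\bP)\delta(\bP))^{-1}\bigr)$, which vanishes as $\delta(\bP) \to \infty$; feeding the row-wise bounds into a Bernstein/Hanson--Wright inequality for quadratic forms in bounded variables converts this into a deviation bound for $v^2$, holding with probability at least $1-\eta$, whose contribution to $|v - C(\bX)|$ is dominated by the residual rate below.

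For the residual I would reuse the machinery of Theorem~\ref{thm:minh_sparsity}: the spectral-norm control $\|\bA-\bP\| = O(\sqrt{\delta(\bP)\log(n/\eta)})$ of Theorem~\ref{thm:oliveira} (valid because $\delta(\bP) > (\log n)^{2+\epsilon}$, on an event of probability at least $1-\eta$), the Davis--Kahan bound~\eqref{eq:Davis_Kahan_variant1}, and the approximate-commutation lemma $\|\bW^{*}\SA - \SP\bW^{*}\|_F = O\bigl((\log n)\sqrt{\delta(\bP)}\bigr)$ together give $\|\bR\|_F = O\!\bigl( d\,\log(n/\eta)\,\bigl(C(\bX)\sqrt{\gamma^{5}(\bP)\,\delta(\bP)}\bigr)^{-1} \bigr)$ on an event of probability at least $1-\eta$; a union bound over the three events --- spectral norm, quadratic-form concentration, and residual control --- then delivers the probability $1-3\eta$ and the stated bound~\eqref{conc_x-xhat}. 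The main obstacle is exactly this residual analysis: each of the nested applications of Davis--Kahan and of the eigenvalue-commutation lemma costs a power of the normalized eigengap $\gamma(\bP) = \lambda_d(\bP)/\delta(\bP)$, and the delicate part is to account for these powers carefully enough to certify that $\|\bR\|_F$ is genuinely of smaller order than the constant-order leading term $C(\bX)$ and carries exactly the factor $\gamma^{-5/2}(\bP)$. By contrast, once the delocalization estimate is established, the concentration of the leading quadratic form, the linearization step, and the final union bound are comparatively routine.
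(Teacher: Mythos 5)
Your proposal follows the paper's own route essentially step for step: Theorem~\ref{thm:minh_frob} supplies the decomposition $\|\Xhat-\bX\bW\|_F = \|(\bA-\bP)\UP\SP^{-1/2}\|_F + O\bigl(d\log n\,\gamma^{-5/2}(\bP)\delta^{-1/2}(\bP)\bigr)$, and the observation that $\E\|(\bA-\bP)\UP\SP^{-1/2}\|_F^2 = C^2(\bX)$ combined with the linearization $|v-C(\bX)|\le |v^2-C^2(\bX)|/C(\bX)$ reduces the theorem to a concentration inequality for this quadratic form, exactly as in Eq.~\eqref{eq:log_sobolev_conc_inequality}. The only difference is cosmetic --- you invoke a Hanson--Wright/Bernstein bound for the quadratic-form concentration where the cited source uses a logarithmic-Sobolev-type inequality --- so this is the same proof.
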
	
We note that the proof of this theorem consists of two pieces: it is straightforward to show that the Frobenius norm bound of Lemma \ref{thm:minh_frob} implies that
$$\|\Xhat - \bX \bW \|_{F} = \|(\bA - \bP) \UP  \SP^{-1/2} \|_{F}
+ O(d \log (n) \delta^{-1/2}(\bP) \gamma^{-5/2}(\bP))$$
To complete the theorem, then, \cite{tang14:_semipar} demonstrates a concentration inequality for
$\|(\mathbf{A} - \mathbf{P}) \mathbf{U}_{\mathbf{P}}
\mathbf{S}_{\mathbf{P}}^{-1/2} \|^{2}_{F}$, showing that
\begin{equation}
\label{eq:log_sobolev_conc_inequality}
\bigl|\|(\mathbf{A} - \mathbf{P}) \mathbf{U}_{\mathbf{P}}
\mathbf{S}_{\mathbf{P}}^{-1/2} \|^{2}_{F} - C^{2}(\mathbf{X}) \bigr| \leq
\frac{14 \sqrt{2d} \log{(n/\eta)}}{\gamma(\mathbf{P}) \sqrt{\delta(\mathbf{P})}}.
\end{equation} 
where $C(\bX)$ is as defined in \eqref{eq:def_of_C_semipar}. We do not go into the details of this concentration inequality here, but rather point the reader to \cite{tang14:_semipar}.  We observe, however, that this inequality has immediate consequences for two-sample testing for random dot product graphs. For two random dot product graphs with probability matrices $\bP=\bX \bX^{\top}$ and $\bQ=\bY \bY^{\top}$, consider the null hypothesis $\bX=\bY\bW$ for some orthogonal $\bW$. It can be shown that $\min_{\bW \in \mathcal{O}^{d \times d}}\|\Xhat-\Yhat\bW\|_F$ is the basis for a valid and consistent test.  We emphasize, though, that as the graph size $n$ increases, the $n \times d$ matrix of latent positions also increases in size. As a consequence of this, we consider the following notion of consistency in this semiparametric setting. As an aside on the notation, in this section, we consider a sequence of graphs with latent positions, all indexed by $n$; thus, as we noted in our preliminary remarks on notation, $\bX_n$ and $\Xhat_n$ refer to the {\em matrices} of true and estimated latent positions in this sequence. 
\begin{definition}\label{consistency}
	Let $(\mathbf{X}_n, \mathbf{Y}_n)_{n \in \mathbb{N}}$, be a given sequence of latent positions, where $\mathbf{X}_n$ and $\mathbf{Y}_n$ are both in $\mathbb{R}^{n \times d}$. A test statistic $T_n$ and associated
	rejection region $R_n$ to test the null hypothesis
	\begin{align*} 
	H^n_0: \,  {\bf X}_n =_{W} {\bf Y}_n \quad
	\textrm{ against } \quad H^n_a: \, {\bf X}_n \not =_{W} {\bf Y}_n 
	\end{align*}
	is a {\em consistent, asymptotically level $\alpha$ test} 
	if for any $\eta>0$, there exists $n_0 = n_0(\eta)$ such that 
	\begin{enumerate}[(i)]
		\item If $n>n_0$ and $H_a^n$ is true, then $P(T_n \in R_n)>1-\eta$
		\item If $n > n_0$ and $H_0^n$ is true, then $P(T_n \in R_n) \leq \alpha + \eta$
	\end{enumerate}
\end{definition}
With this definition of consistency, we obtain the following theorem on two-sample testing for random dot products on the same vertex set and with known vertex correspondence.
\begin{theorem}
	\label{thm:identity}
	For each fixed $n$, consider the
	hypothesis test
	\begin{equation*}
	H^{n}_0: {\bf X}_n =_{W} {\bf Y}_n \quad \textrm{ versus } \quad
	H^{n}_a: {\bf X}_n \not =_{W} {\bf Y}_n
	\end{equation*}
	where ${\bf X}_n$ and ${\bf Y}_n$ $\in \mathbb{R}^{n \times d}$ are
	matrices of latent positions for two random dot product graphs. Let
	$\hat{{\bf X}}_n$ and $\hat{{\bf Y}}_n$ be the adjacency spectral
	embeddings of ${\bf A}_n\sim \mathrm{Bernoulli}({\bf X}_n{\bf
		X}_n^{\top})$ and ${\bf B}_n \sim \mathrm{Bernoulli}({\bf Y}_n{\bf
		Y}_n^{\top})$, respectively. Define the test statistic $T_n$ as follows:
	\begin{equation}
	\label{eq:semipar_TS_def}
	T_n=\frac{\min\limits_{{\bf W} \in \mathcal{O}(d)} 
		\|\hat{{\bf X}}_n{\bf W}-\hat{{\bf Y}}_n\|_F}
	{\sqrt{d\gamma^{-1}(\mathbf{A}_n)}+ \sqrt{d \gamma^{-1}(\mathbf{B}_n)}}.
	\end{equation}
	Let $\alpha \in (0,1)$ be given. Then for all $C > 1$, if the
	rejection region is 
	$R:=\left\{t \in \mathbb{R}: t\geq C \right\}$, 
	then there exists an
	$n_1 = n_1(\alpha, C) \in \mathbb{N}$ such that for all $n \geq n_1$, the
	test procedure with $T_n$ and rejection region $R$ is an at most level
	$\alpha$ test, i.e., for all $n \geq n_1$, if $\mathbf{X}_n
	=_{W} \mathbf{Y}_n$, then 
	$ \mathbb{P}(T_n \in R) \leq \alpha.$
	Furthermore,
	consider the sequence of latent positions $\{{\bf X}_n\}$ and 
	$\{{\bf Y}_n\}$, $n \in \mathbb{N}$,
	satisfying the eigengap assumptions in Assumption~\ref{ass:max_degree_assump} and denote by $d_n$ the
	quantity $
	d_n := \min\limits_{{\bf W} \in \mathcal{O}(d)} \| {\bf X}_n{\bf W}-{\bf
		Y}_n \|$. 
	Suppose $d_n \neq 0$ for infinitely many $n$.  Let $t_1=\min\{k>0:
	d_k>0\}$ and sequentially define $t_n=\min\{k>t_{n-1}: d_k>0\}$.
	Let $b_n=d_{t_n}$.  If $\liminf b_n = \infty$, then this test
	procedure is consistent in the sense of Definition~\ref{consistency}
	over this sequence of latent positions.
\end{theorem}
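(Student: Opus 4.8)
The plan is to reduce both the level and the consistency claims to the Frobenius concentration inequality of Theorem~\ref{thm:conc_Xhat_X}, applied separately to each of the two graphs, together with a deterministic bound showing that the normalizer $C(\bX)$ of that theorem never exceeds (asymptotically) the data-dependent denominator of $T_n$. Write $\bP_n = \bX_n\bX_n^{\top}$ and $\bQ_n = \bY_n\bY_n^{\top}$. Theorem~\ref{thm:conc_Xhat_X} supplies orthogonal $\bW_1,\bW_2$ so that, with probability at least $1-6\eta$, both $\|\Xhat_n - \bX_n\bW_1\|_F$ and $\|\Yhat_n - \bY_n\bW_2\|_F$ lie within $o(1)$ of $C(\bX_n)$ and $C(\bY_n)$; the error terms vanish since $\delta(\bP_n)\gg\log^2 n$, $\gamma(\bP_n)\ge c_0$, and $C(\bX_n)=\Theta(1)$ under Assumption~\ref{ass:max_degree_assump}. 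The first ingredient is purely deterministic: because $\mathbb{E}[(\bA_n-\bP_n)^2]$ is diagonal with entries $\sum_k (\bP_n)_{ik}(1-(\bP_n)_{ik})\le\delta(\bP_n)$, we have $\mathbb{E}[(\bA_n-\bP_n)^2]\preceq\delta(\bP_n)\bI$, so that
\begin{equation*}
C(\bX_n)^2 = \tr\bigl(\SP^{-1/2}\UP^{\top}\mathbb{E}[(\bA_n-\bP_n)^2]\UP\SP^{-1/2}\bigr) \le \delta(\bP_n)\tr(\SP^{-1}) \le \frac{d\,\delta(\bP_n)}{\lambda_d(\bP_n)} = \frac{d}{\gamma(\bP_n)},
\end{equation*}
and hence $C(\bX_n)\le\sqrt{d\gamma^{-1}(\bP_n)}$, with the analogous bound for $C(\bY_n)$.

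Next I would show that the observed normalizer tracks the population one, i.e.\ $\gamma(\bA_n)=(1+o(1))\gamma(\bP_n)$. The spectral concentration of Theorems~\ref{thm:oliveira}--\ref{thm:lu_peng} gives $\|\bA_n-\bP_n\|=O(\sqrt{\delta(\bP_n)})$ with high probability; Weyl's inequality then yields $\sigma_d(\bA_n)=(1+o(1))\lambda_d(\bP_n)$ and $\sigma_{d+1}(\bA_n)=O(\sqrt{\delta(\bP_n)})=o(\lambda_d(\bP_n))$, while Hoeffding's inequality (Theorem~\ref{thm:Hoeffding}) applied to row sums gives $\delta(\bA_n)=(1+o(1))\delta(\bP_n)$; combining these yields the claim. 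Now suppose $H_0^n$ holds, so $\bX_n=\bY_n\bW_0$ and $\bP_n=\bQ_n$. Taking $\bW=\bW_1^{\top}\bW_0^{\top}\bW_2$ and using orthogonal invariance of $\|\cdot\|_F$,
\begin{equation*}
\min_{\bW\in\mathcal{O}(d)}\|\Xhat_n\bW-\Yhat_n\|_F \le \|\Xhat_n-\bX_n\bW_1\|_F + \|\Yhat_n-\bY_n\bW_2\|_F = C(\bX_n)+C(\bY_n)+o(1)
\end{equation*}
with high probability. Dividing by the denominator and invoking the two ingredients above gives $T_n\le 1+o(1)$, so $\Pr(T_n\ge C)\to 0$ for any fixed $C>1$; in particular $\Pr(T_n\in R)\le\alpha$ for all $n\ge n_1(\alpha,C)$, which is item~(ii) of Definition~\ref{consistency}.

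For consistency under the alternative, let $\bW^{\ast}$ attain the numerator's minimum and apply the reverse triangle inequality together with $\|\cdot\|_F\ge\|\cdot\|$ to pass from the Frobenius Procrustes minimum to $d_n$:
\begin{equation*}
\min_{\bW}\|\Xhat_n\bW-\Yhat_n\|_F \ge \min_{\bW}\|\bX_n\bW-\bY_n\|_F - \|\Xhat_n-\bX_n\bW_1\|_F - \|\Yhat_n-\bY_n\bW_2\|_F \ge d_n - C(\bX_n) - C(\bY_n) - o(1).
\end{equation*}
Under Assumption~\ref{ass:max_degree_assump} both $C(\bX_n),C(\bY_n)\le\sqrt{d/c_0}$ and, by the previous paragraph, the denominator $\sqrt{d\gamma^{-1}(\bA_n)}+\sqrt{d\gamma^{-1}(\bB_n)}$ is bounded above by a constant with high probability. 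Along the subsequence $t_n$ on which $d_{t_n}=b_n\to\infty$, the numerator diverges while the denominator stays bounded, so $T_{t_n}\to\infty$ in probability and exceeds any fixed $C$ with probability tending to one, giving item~(i) of Definition~\ref{consistency}.

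I expect the main obstacle to be the control of the data-dependent denominator rather than the triangle-inequality bookkeeping. Establishing $\gamma(\bA_n)=(1+o(1))\gamma(\bP_n)$ requires showing simultaneously that the signal singular value $\sigma_d(\bA_n)$ stays of order $\delta(\bP_n)$, that the noise singular value $\sigma_{d+1}(\bA_n)=O(\sqrt{\delta(\bP_n)})$ is of strictly smaller order, and that $\delta(\bA_n)$ concentrates around $\delta(\bP_n)$; it is precisely this chain, fed into the deterministic bound $C(\bX_n)\le\sqrt{d\gamma^{-1}(\bP_n)}$, that makes the particular normalization in $T_n$—engineered to dominate $C(\bX_n)+C(\bY_n)$ asymptotically while being computable from $\bA_n$ and $\bB_n$ alone—force the null ratio down to at most one.
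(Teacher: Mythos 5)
Your proposal is correct and follows essentially the same route the paper takes (via \cite{tang14:_semipar}): the Frobenius concentration of $\|\Xhat_n-\bX_n\bW\|_F$ around $C(\bX_n)$ from Theorem~\ref{thm:conc_Xhat_X}, the deterministic bound $C(\bX_n)\le\sqrt{d\,\gamma^{-1}(\bP_n)}$ justifying the normalizer, concentration of $\gamma(\bA_n)$ about $\gamma(\bP_n)$, and then a triangle inequality under the null and a reverse triangle inequality under the alternative. The only step you assert without support is the lower bound implicit in ``$C(\bX_n)=\Theta(1)$'' (needed because $C(\bX_n)$ appears in the denominator of the error term of Theorem~\ref{thm:conc_Xhat_X}), but this is inherited from the statement of that theorem rather than a defect of your argument.
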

\begin{remark} This result does not require that ${\bf A}_n$ and ${\bf B}_n$
	be independent for any fixed $n$, nor that the sequence of pairs
	$({\bf A}_n, {\bf B}_n)$, $n \in \mathbb{N}$, be independent.
	We note that Theorem \ref{thm:identity} is written to emphasize consistency in the sense of Definition \ref{consistency}, even in a case when, for example, the latent position sequence is such that $\mathbf{X}_n=_W\mathbf{Y}_n$ for all even $n$, but $\mathbf{X}_n$ and $\mathbf{Y}_n$ are sufficiently far apart for odd $n$. 
	In
	addition, the requirement that $\liminf b_k=\infty$ can be weakened
	somewhat.  Specifically, consistency is achieved as long
	as $$\liminf_{n \rightarrow \infty} \Bigl(\| \mathbf{X}_n\mathbf{W}
	-\mathbf{Y}_n \|_{F} - C(\mathbf{X}_n) - C(\mathbf{Y}_n)\Bigr) > 0.$$
\end{remark}
It also possible to construct analogous tests for latent positions related by scaling factors, or, in the case of the degree-corrected stochastic block model, by projection.  We summarize these below, beginning with the case of scaling.

For the scaling case, let $\mathcal{C}=\mathcal{C}(\mathbf{Y}_n)$
denote the class of all positive constants $c$ for which all the
entries of $c^2 \mathbf{Y}_n \mathbf{Y}_n^{\top}$ belong to the unit
interval. We wish to test the null hypothesis
$H_0 \colon \mathbf{X}_n =_{W} c_n \mathbf{Y}_n$ for some
$c_n\in \mathcal{C}$ against the alternative
$H_a \colon \mathbf{X}_n \not =_{W} c_n\mathbf{Y}_n$ for any
$c_n \in \mathcal{C}$. In what follows below, we will only write
$c_n>0$, but will always assume that $c_n \in \mathcal{C}$, since the
problem is ill-posed otherwise. The test statistic $T_n$ is now a
simple modification of the one used in Theorem~\ref{thm:identity}: for
this test, we compute a Procrustes distance between scaled adjacency
spectral embeddings for the two graphs. 
\begin{theorem}
	\label{thm:2}
	For each fixed $n$, consider the
	hypothesis test
	\begin{align*}
	H^{n}_0  \colon {\bf X}_n =_{W} c_n{\bf Y}_n \quad \text{for some $c_n > 0$}
	\textrm{ versus } \,\,  
	H^{n}_a  \colon {\bf X}_n \not =_{W} c_n{\bf Y}_n \quad \text{for all $c_n > 0$}
	\end{align*}
	where ${\bf X}_n$ and ${\bf Y}_n$ $\in \mathbb{R}^{n \times d}$ are
	latent positions for two random dot product graphs with adjacency
	matrices $\mathbf{A}_n$ and $\mathbf{B}_n$, respectively. 
	Define the test statistic $T_n$ as follows:
	\begin{equation}
	\label{eq:8}
	T_n=\frac{\min\limits_{{\bf W} \in \mathcal{O}(d)} 
		\|\hat{{\bf X}}_n{\bf W}/\|\Xhat_n\|_{F} - \hat{{\bf
				Y}}_n/\|\hat{\mathbf{Y}}_n\|_{F} \|_{F}}
	{2 \sqrt{d \gamma^{-1}(\mathbf{A}_n)}/\|\Xhat_n\|_{F}+ 2\sqrt{d
			\gamma^{-1}(\mathbf{B}_n)}/\|\hat{\mathbf{Y}}_n \|_{F}}.
	\end{equation}
	Let $\alpha \in (0,1)$ be given. Then for all $C
	> 1$, if the rejection region is $R:=\left\{t \in \mathbb{R}: t\geq C \right\}$, 
	then there exists an $n_1 = n_1(\alpha, C) \in \mathbb{N}$ such that
	for all $n \geq n_1$, the test procedure with $T_n$ and rejection
	region $R$ is an at most level $\alpha$ test.
	Furthermore, consider the sequence of latent position $\{{\bf X}_n\}$ and 
	$\{{\bf Y}_n\}$, $n \in \mathbb{N}$,
	satisfying Assumption~\ref{ass:max_degree_assump} and denote by $d_n$ the quantity
	\begin{equation}\label{eq:scaling_alternative}
	d_n := \frac{ \min\limits_{{\bf W} \in \mathcal{O}(d)}\| {\bf X}_n{\bf W}/\|{\bf
			X}_n\|_{F} - {\bf Y}_n/\|{\bf Y}_n\|_{F}
		\|_{F}}{1/\|\mathbf{X}_n\|_{F} +
		1/\|\mathbf{Y}_n\|_{F}} =   \frac{ \min\limits_{{\bf W} \in \mathcal{O}(d)}
		\| {\bf X}_n \|\mathbf{Y}_n \|_{F} {\bf W}
		- {\bf Y}_n \|\mathbf{X}_n \|_{F}
		\|_{F}}{\|\mathbf{X}_n\|_{F} +
		\|\mathbf{Y}_n\|_{F}}
	\end{equation}
	Suppose $d_n \neq 0$ for infinitely many $n$.  Let $t_1=\min\{k>0:
	d_k>0\}$ and sequentially define $t_n=\min\{k>t_{n-1}: d_k>0\}$.  Let
	$b_n=d_{t_n}$.  If $\liminf b_n = \infty$, then this test procedure is
	consistent in the sense of Definition~\ref{consistency} over this
	sequence of latent positions.
\end{theorem}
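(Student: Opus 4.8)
The plan is to reduce this scaling test to the equality test of Theorem~\ref{thm:identity} by exploiting the scale-invariance built into the normalized statistic of Eq.~\eqref{eq:8}. The crucial observation is that dividing each embedding by its Frobenius norm annihilates the unknown scale factor: under $H_0^n$ we have $\bX_n = c_n \bY_n \bW_0$ for some $c_n > 0$ and orthogonal $\bW_0$, whence $\|\bX_n\|_F = c_n \|\bY_n\|_F$ and therefore $\bX_n/\|\bX_n\|_F = \bY_n \bW_0/\|\bY_n\|_F$. Thus after normalization the \emph{true} latent positions coincide up to an orthogonal transformation, and the scaling problem becomes an equality problem for the normalized quantities---exactly the situation treated in Theorem~\ref{thm:identity}, but applied to $\bX_n/\|\bX_n\|_F$ and $\bY_n/\|\bY_n\|_F$.

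The analysis of the numerator of Eq.~\eqref{eq:8} rests on two concentration facts. First, Theorem~\ref{thm:conc_Xhat_X} supplies an orthogonal $\bW_X$ with $\|\Xhat_n - \bX_n \bW_X\|_F = C(\bX_n) + O\!\left(d\log(n/\eta)\,\gamma^{-5/2}(\bP_n)\delta^{-1/2}(\bP_n)\right)$ on an event of probability at least $1 - 3\eta$, and symmetrically for $\Yhat_n$. Second, since $\|\Xhat_n\|_F^2 = \tr \SA$ is the sum of the top $d$ eigenvalues of $\bA_n$ while $\|\bX_n\|_F^2 = \tr(\bP_n)$ is the sum of the $d$ nonzero eigenvalues of $\bP_n$, Weyl's inequality together with the spectral bound $\|\bA_n - \bP_n\| = O(\sqrt{\delta(\bP_n)})$ from Theorem~\ref{thm:lu_peng} gives $\bigl|\|\Xhat_n\|_F^2 - \|\bX_n\|_F^2\bigr| \le d\|\bA_n - \bP_n\|$, so that $\|\Xhat_n\|_F/\|\bX_n\|_F = 1 + O(\delta^{-1/2}(\bP_n))$ under Assumption~\ref{ass:max_degree_assump}. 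Combining these through the elementary split
\[
\Bigl\|\tfrac{\Xhat_n}{\|\Xhat_n\|_F} - \tfrac{\bX_n \bW_X}{\|\bX_n\|_F}\Bigr\|_F
\le \frac{\|\Xhat_n - \bX_n \bW_X\|_F}{\|\Xhat_n\|_F}
+ \Bigl|\tfrac{1}{\|\Xhat_n\|_F} - \tfrac{1}{\|\bX_n\|_F}\Bigr|\,\|\bX_n\|_F
\]
shows that each normalized embedding lies within $C(\bX_n)/\|\Xhat_n\|_F$ of its normalized, suitably rotated latent position, up to a factor of $1+o(1)$ and a relative error that vanishes because $\|\Xhat_n\|_F \asymp \sqrt{d\,\delta(\bP_n)} \to \infty$ while $C(\bX_n)$ remains of order $\sqrt{d}$.

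For level control, I would bound the numerator of Eq.~\eqref{eq:8} under $H_0^n$: using the reduction above to align the normalized true positions and the triangle inequality, the Procrustes distance is at most $C(\bX_n)/\|\Xhat_n\|_F + C(\bY_n)/\|\Yhat_n\|_F + o(\cdot)$ with probability at least $1 - 6\eta$. The key arithmetic is the estimate $C^2(\bX_n) \le \lambda_d^{-1}(\bP_n)\,d\,\|\E[(\bA_n-\bP_n)^2]\| \le d\,\gamma^{-1}(\bP_n)$, which follows from Eq.~\eqref{eq:def_of_C_semipar} because $\E[(\bA_n - \bP_n)^2]$ is diagonal with entries $\sum_{j}(\bP_n)_{ij}(1-(\bP_n)_{ij}) \le \delta(\bP_n)$. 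Since $\gamma(\bA_n)\to\gamma(\bP_n)$ again by Weyl, each term $C(\bX_n)/\|\Xhat_n\|_F$ is asymptotically at most half the corresponding denominator term $2\sqrt{d\gamma^{-1}(\bA_n)}/\|\Xhat_n\|_F$, so $T_n \le \tfrac12 + o(1)$ with high probability. Choosing $\eta$ so that $6\eta \le \alpha$ then yields $\Pr(T_n \ge C) \le \alpha$ for $n$ large, since $C > 1$.

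For consistency along the subsequence $\{t_n\}$ on which $H_a^n$ holds, I would instead lower-bound the numerator by the reverse triangle inequality, subtracting the same noise terms from the population Procrustes distance $\min_{\bW} \|\bX_n\bW/\|\bX_n\|_F - \bY_n/\|\bY_n\|_F\|_F$, which by the defining identity in Eq.~\eqref{eq:scaling_alternative} equals $d_n\,(1/\|\bX_n\|_F + 1/\|\bY_n\|_F)$. Dividing by the denominator, which is $\asymp \sqrt{d\gamma^{-1}}\,(1/\|\bX_n\|_F + 1/\|\bY_n\|_F)$, the leading behavior of $T_n$ is of order $d_n/\sqrt{d\gamma^{-1}(\bP_n)} \asymp b_n/\sqrt{d}$; because $d$ is fixed and $b_n = d_{t_n}\to\infty$, this diverges and $T_n$ exceeds any fixed $C$ with probability tending to one. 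I expect the main obstacle to be the perturbation analysis of the random normalizers: one must verify that replacing $\|\bX_n\|_F$ by $\|\Xhat_n\|_F$---and dividing the embedding error by this growing, random quantity---produces errors of strictly smaller order than both the noise budget $\sqrt{d\gamma^{-1}}/\|\Xhat_n\|_F$ needed for level and the signal $d_n/\|\bX_n\|_F$ needed for consistency, uniformly on the high-probability events, and that the minimizing orthogonal $\bW$ may be taken to align the rotations $\bW_X,\bW_Y$ and the normalizations simultaneously.
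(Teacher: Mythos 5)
Your proposal is correct and follows essentially the route the paper takes (the survey defers the full argument to \cite{tang14:_semipar} but indicates exactly this strategy): normalize by the Frobenius norms to reduce the scaling null to the orthogonal-equality setting of Theorem~\ref{thm:identity}, then apply the concentration inequality of Theorem~\ref{thm:conc_Xhat_X} together with the bound $C^2(\bX_n)\le d\,\gamma^{-1}(\bP_n)$ and Weyl-type control of $\|\Xhat_n\|_F/\|\bX_n\|_F$ to get level and consistency. The only step you gloss over is that $\gamma(\bA_n)\to\gamma(\bP_n)$ requires concentration of the maximum degree $\delta(\bA_n)$ about $\delta(\bP_n)$ (a Bernstein bound plus a union bound over rows) in addition to Weyl's inequality for the eigenvalue gap, but this is routine under Assumption~\ref{ass:max_degree_assump}.
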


We next consider the case of testing whether the latent positions are
related by a diagonal transformation.  
i.e., whether $H_0
\colon \mathbf{X}_n =_{W} \mathbf{D}_n \mathbf{Y}_n$ for some
diagonal matrix $\mathbf{D}_n$. We proceed analogously to the
scaling case, above, by defining the class
$\mathcal{E}=\mathcal{E}(\mathbf{Y}_n)$ to be all positive diagonal
matrices $\mathbf{D}_n \in \mathbb{R}^{n \times n}$ such that
$\mathbf{D}_n \mathbf{Y}_n \mathbf{Y}_n^{\top} \mathbf{D}_n$ has all
entries in the unit interval.

As before, we will
always assume that $\mathbf{D}_n$ belongs to $\mathcal{E}$, even if
this assumption is not explicitly stated.  The test statistic $T_n$ in
this case is again a simple modification of the one used in
Theorem~\ref{thm:identity}. However, for technical reasons, our proof
of consistency requires an additional condition on the minimum
Euclidean norm of each row of the matrices $\mathbf{X}_n$ and
$\mathbf{Y}_n$. To avoid certain technical issues, we impose a
slightly stronger density assumption on our graphs for this test.
These assumptions can be weakened, but at the cost of interpretability.
The assumptions we make on the latent positions, which we summarize
here, are moderate restrictions on the sparsity of the graphs.
\begin{assumption}
	\label{eigengap_assump_diagonal}
	We assume that there exists $d \in \mathbb{N}$ such that for all
	$n$, $\mathbf{P}_n$ is of rank $d$.  Further, we assume that there
	exist constants $\epsilon_1>0$, $\epsilon_2 \in (0,1)$, $c_0>0$ and
	$n_0(\epsilon_1, \epsilon_2, c) \in \mathbb{N}$ such that for all $n \geq n_0$:
	\begin{align}
	\gamma(\mathbf{P}_n) \geq c_0; \qquad 
	\delta(\mathbf{P}_n) \geq (\log{n})^{2 + \epsilon_1}; \qquad
	\min_{i} \|X_i\| >
	\left(\frac{\log{n}}{\sqrt{\delta(\mathbf{P}_n)}}\right)^{1 -
		\epsilon_2}
	\end{align}
\end{assumption}
We then have the following result.
\begin{theorem}
	\label{thm:1}
	For each fixed $n$, consider the
	hypothesis test
	\begin{align*}
	H^{n}_0  \colon {\bf X}_n =_{W} \mathbf{D}_n {\bf Y}_n \quad
	\text{for some $\mathbf{D}_n \in \mathcal{E}$}\,\, 
	\textrm{ versus }  
	\,\, H^{n}_a  \colon {\bf X}_n \not =_{W} \mathbf{D}_n{\bf Y}_n \quad \text{for
		any $\mathbf{D}_n \in \mathcal{E}$}
	\end{align*}
	where ${\bf X}_n$ and ${\bf Y}_n$ $\in \mathbb{R}^{n \times d}$ are
	matrices of latent positions for two random dot product graphs. 
	For any matrix $\mathbf{Z} \in \mathbb{R}^{n \times d}$, let
	$\mathcal{D}(\mathbf{Z})$ be the diagonal matrix whose diagonal
	entries are the Euclidean norm of the rows of $\mathbf{Z}$ and let
	$\mathcal{P}(\mathbf{Z})$ be the matrix whose rows are the projection of the rows of
	$\mathbf{Z}$ onto the unit sphere. 
	We define the test statistic as follows:
	\begin{equation}
	\label{eq:15}
	T_n=\frac{\min\limits_{{\bf W} \in \mathcal{O}(d)} 
		\|\mathcal{P}(\hat{\mathbf{X}}_n) {\bf W} -
		\mathcal{P}(\hat{\mathbf{Y}}_n) \|_{F}}
	{2 \sqrt{d \gamma^{-1}(\mathbf{A})} \|\mathcal{D}^{-1}(\Xhat_n)\|+
		2 \sqrt{d \gamma^{-1}(\mathbf{B}_n)}\|\mathcal{D}^{-1}(\hat{\mathbf{Y}}_n)\|}.
	\end{equation}
	where we write
	$\mathcal{D}^{-1}(\mathbf{Z})$ for
	$(\mathcal{D}(\mathbf{Z}))^{-1}$. Note that
	$\|\mathcal{D}^{-1}(\mathbf{Z})\| = 1/(\min_{i} \|Z_i\|)$. 
	
	Let $\alpha \in (0,1)$ be given. Then for all $C
	> 1$, if the rejection region is 
	$R:=\left\{t \in \mathbb{R}: t\geq C \right\},$
	then there exists an $n_1 = n_1(\alpha, C) \in \mathbb{N}$ such that
	for all $n \geq n_1$, the test procedure with $T_n$ and rejection
	region $R$ is an at most level-$\alpha$ test.
	Furthermore, consider the sequence of latent position $\{{\bf X}_n\}$ and $\{{\bf Y}_n\}$, $n \in \mathbb{N}$,
	satisfying 
	Assumption~\ref{eigengap_assump_diagonal} and denote by $d_n$ the quantity
	\begin{equation}
	\label{eq:12}
	d_n := 
	\frac{ \min\limits_{{\bf W} \in \mathcal{O}(d)}\| \mathcal{P}({\bf
			X}_n) {\bf W} -
		\mathcal{P}({\bf Y}_n)
		\|_{F}}{\|\mathcal{D}^{-1}(\mathbf{X})\| +
		\|\mathcal{D}^{-1}(\mathbf{Y})\|} = D_{\mathcal{P}}(\mathbf{X}_n, \mathbf{Y}_n) 
	\end{equation}
	Suppose $d_n \neq 0$ for infinitely many $n$.  Let $t_1=\min\{k>0:
	d_k>0\}$ and sequentially define $t_n=\min\{k>t_{n-1}: d_k>0\}$.  Let
	$b_n=d_{t_n}$. If $\liminf b_n = \infty$, then this test procedure is
	consistent in the sense of Definition~\ref{consistency} over this
	sequence of latent positions.
\end{theorem}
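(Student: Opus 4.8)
The plan is to follow the template of Theorems~\ref{thm:identity} and~\ref{thm:2}, the one genuinely new ingredient being an analysis of how the row-normalization map $\mathcal{P}$ interacts with the estimation error of the adjacency spectral embedding. The guiding observation is that a positive diagonal transformation rescales each row of $\mathbf{Y}_n$ without altering its direction, so $\mathbf{X}_n =_W \mathbf{D}_n \mathbf{Y}_n$ with $\mathbf{D}_n \in \mathcal{E}$ implies $\mathcal{P}(\mathbf{X}_n) =_W \mathcal{P}(\mathbf{Y}_n)$, using that $\mathcal{P}(\mathbf{Z}\mathbf{W}) = \mathcal{P}(\mathbf{Z})\mathbf{W}$ for orthogonal $\mathbf{W}$. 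Hence the projection step converts the diagonal-transformation null into an identity-type null for the projected positions, and, with $\mathcal{P}(\Xhat_n)$ and $\mathcal{P}(\Yhat_n)$ in the roles of the two embeddings, the structure of the argument reduces to that of Theorem~\ref{thm:identity}.

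The crux --- and the step I expect to be the main obstacle --- is to control $\|\mathcal{P}(\Xhat_n) - \mathcal{P}(\mathbf{X}_n\mathbf{W})\|_F$ in terms of $\|\Xhat_n - \mathbf{X}_n\mathbf{W}\|_F$. Since $u \mapsto u/\|u\|$ is Lipschitz away from the origin, with $\|u/\|u\| - v/\|v\|\| \le 2\|u - v\|/\min(\|u\|,\|v\|)$, a row-wise estimate gives $\|\mathcal{P}(\Xhat_n) - \mathcal{P}(\mathbf{X}_n\mathbf{W})\|_F \le 2\|\mathcal{D}^{-1}(\Xhat_n)\|\,\|\Xhat_n - \mathbf{X}_n\mathbf{W}\|_F$, which is exactly what produces the factors $\|\mathcal{D}^{-1}(\cdot)\|$ in the denominator of $T_n$. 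The delicate point is that this bound requires a uniform positive lower bound on the estimated row norms $\min_i\|(\Xhat_n)_i\|$. I would secure this by combining the $2\to\infty$ bound of Theorem~\ref{thm:minh_sparsity}, which controls $\max_i\|(\Xhat_n)_i - \mathbf{W}(\mathbf{X}_n)_i\|$ uniformly, with the row-norm lower bound $\min_i\|(\mathbf{X}_n)_i\| > (\log n/\sqrt{\delta(\mathbf{P}_n)})^{1-\epsilon_2}$ of Assumption~\ref{eigengap_assump_diagonal}; the latter is calibrated precisely so that it dominates the per-row embedding error for $n$ large, whence $\min_i\|(\Xhat_n)_i\|$ concentrates around $\min_i\|(\mathbf{X}_n)_i\|$ and $\|\mathcal{D}^{-1}(\Xhat_n)\| \le (1 + o(1))\|\mathcal{D}^{-1}(\mathbf{X}_n)\|$ with high probability. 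This is exactly why the extra density/row-norm hypothesis is imposed here but not in Theorem~\ref{thm:identity}.

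For the level statement I would argue under $H_0^n$ as follows. Picking the rotation witnessing $\mathcal{P}(\mathbf{X}_n) =_W \mathcal{P}(\mathbf{Y}_n)$ and applying the triangle inequality, the numerator of $T_n$ is at most $\|\mathcal{P}(\Xhat_n) - \mathcal{P}(\mathbf{X}_n\mathbf{W}_X)\|_F + \|\mathcal{P}(\Yhat_n) - \mathcal{P}(\mathbf{Y}_n\mathbf{W}_Y)\|_F$, where $\mathbf{W}_X,\mathbf{W}_Y$ are the rotations from Theorem~\ref{thm:conc_Xhat_X}. Bounding each summand by the Lipschitz estimate and invoking Theorem~\ref{thm:conc_Xhat_X} together with the elementary bound $C(\mathbf{X}_n)^2 \le d\,\gamma^{-1}(\mathbf{P}_n)$ --- which follows from $\E[(\mathbf{A}-\mathbf{P})^2] \preceq \delta(\mathbf{P}_n)\mathbf{I}$ and $\tr(\SP^{-1}) \le d/\lambda_d(\mathbf{P}_n)$ --- the quotient defining $T_n$ has the form $(a p + b q)/(c p + e q)$ with $p = \|\mathcal{D}^{-1}(\Xhat_n)\|$ and $q = \|\mathcal{D}^{-1}(\Yhat_n)\|$. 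Such a quotient is at most $\max(a/c, b/e)$, and after using $\gamma(\mathbf{A}_n),\gamma(\mathbf{B}_n)\to\gamma(\mathbf{P}_n)$ (standard spectral concentration) each ratio is $\le 1 + o(1)$; hence $T_n \le 1 + o(1)$ in probability. Consequently, for any fixed $C > 1$ we get $\mathbb{P}(T_n \ge C) \to 0 \le \alpha$ once $n \ge n_1$, giving the asymptotic level guarantee; note that this max-of-ratios device makes the $\|\mathcal{D}^{-1}(\cdot)\|$ factors cancel, so for the level part no comparison of estimated and true row norms is needed.

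For consistency, under $H_a^n$ I would apply the reverse triangle inequality together with the same Lipschitz bounds to obtain, for every $\mathbf{W}$,
\begin{equation*}
\|\mathcal{P}(\Xhat_n)\mathbf{W} - \mathcal{P}(\Yhat_n)\|_F \ge \min_{\mathbf{W}'}\|\mathcal{P}(\mathbf{X}_n)\mathbf{W}' - \mathcal{P}(\mathbf{Y}_n)\|_F - 2\|\mathcal{D}^{-1}(\Xhat_n)\|C(\mathbf{X}_n) - 2\|\mathcal{D}^{-1}(\Yhat_n)\|C(\mathbf{Y}_n),
\end{equation*}
so the numerator is at least $d_n(\|\mathcal{D}^{-1}(\mathbf{X}_n)\| + \|\mathcal{D}^{-1}(\mathbf{Y}_n)\|)$ minus an error term. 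Dividing by the denominator, the error contribution is $\le 1 + o(1)$ by the same max-of-ratios bound, while the signal contribution is $\ge d_n/(2\sqrt{d\,\gamma^{-1}(\mathbf{P}_n)})\cdot(1 - o(1))$ once $\|\mathcal{D}^{-1}(\Xhat_n)\| \le (1+o(1))\|\mathcal{D}^{-1}(\mathbf{X}_n)\|$ is used. Thus $T_n \gtrsim d_n/(2\sqrt{d\,\gamma^{-1}(\mathbf{P}_n)}) - 1$. Along the subsequence $\{t_n\}$ on which the alternative genuinely holds, $\liminf b_n = \infty$ forces $d_{t_n}\to\infty$, while $\sqrt{d\,\gamma^{-1}(\mathbf{P}_n)}$ remains bounded (as $\gamma(\mathbf{P}_n)\ge c_0$ and $d$ is fixed); hence $T_n\to\infty$ and $\mathbb{P}(T_n\ge C)\to 1$, establishing consistency in the sense of Definition~\ref{consistency}.
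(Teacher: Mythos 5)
Your overall route is the same one the paper takes (it defers the details to \cite{tang14:_semipar}): observe that a positive diagonal scaling preserves row directions, so the projection $\mathcal{P}$ converts the diagonal-transformation null into an identity-type null; control $\|\mathcal{P}(\Xhat_n)-\mathcal{P}(\bX_n\bW)\|_F$ by the row-wise Lipschitz estimate, which manufactures exactly the $\|\mathcal{D}^{-1}(\cdot)\|$ factors appearing in the denominator of $T_n$; cancel those factors by the mediant inequality for the level statement; and run the reverse triangle inequality against $d_n$ for consistency. The reduction of $C(\bX_n)$ to $\sqrt{d\,\gamma^{-1}(\bP_n)}$ and the observation that the level part needs no comparison between estimated and true row norms are both correct and are the right way to organize the argument.

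The one step that does not close as written is the row-norm concentration $\min_i\|(\Xhat_n)_i\| \geq (1-o(1))\min_i\|(\bX_n)_i\|$, which you do need for the consistency half (to replace $\|\mathcal{D}^{-1}(\Xhat_n)\|$ by $\|\mathcal{D}^{-1}(\bX_n)\|$ in the denominator). You justify it by combining Theorem~\ref{thm:minh_sparsity} with Assumption~\ref{eigengap_assump_diagonal}, but Theorem~\ref{thm:minh_sparsity} is proved under the density condition $\delta(\bP_n)\geq \log^{4+a}n$ of Assumption~\ref{ass:max_degree_assump}, whereas Assumption~\ref{eigengap_assump_diagonal} only grants $\delta(\bP_n)\geq(\log n)^{2+\epsilon_1}$; moreover its bound $C d^{1/2}\log^2 n/\delta^{1/2}(\bP_n)$ is not $o\bigl((\log n/\sqrt{\delta(\bP_n)})^{1-\epsilon_2}\bigr)$ for all admissible $\epsilon_1,\epsilon_2$ --- writing $t=\log n/\sqrt{\delta(\bP_n)}$, you need $t\log n = o(t^{1-\epsilon_2})$, i.e.\ $(\log n)^{1+\epsilon_2}=o(\delta^{\epsilon_2/2}(\bP_n))$, which under $\delta(\bP_n)\geq(\log n)^{2+\epsilon_1}$ forces $\epsilon_1\epsilon_2>2$ and so fails for, say, small $\epsilon_1$. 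The assumption's exponent $1-\epsilon_2$ is calibrated to the sharper row-wise bound of order $\sqrt{d}\log n/(\gamma(\bP_n)\sqrt{\delta(\bP_n)})$ established in \cite{tang14:_semipar} under the weaker density condition $\delta(\bP_n)>(\log n)^{2+\epsilon}$ (the same regime as Theorem~\ref{thm:conc_Xhat_X}); with that bound one gets $t=o(t^{1-\epsilon_2})$ immediately since $t\to 0$. So the gap is repairable by citing the correct $\tti$-type estimate rather than Theorem~\ref{thm:minh_sparsity}, but as stated the invocation is both outside its hypotheses and quantitatively insufficient.
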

This collection of semiparametric tests has numerous applications in graph comparison; in Section \ref{sec:Applications} , we describe its use in connectomics and brain scan data. We stress, though, that the Procrustes transformations are rather cumbersome, and they limit our ability to generalize these procedures to graph comparisons involving more than two graphs.  As a consequence, it can be useful to consider {\em joint} or {\em omnibus} embeddings, in which adjacency matrices for multiple graphs on the same vertex set are jontly embedded into a single (larger-dimensional) space, but {\em with a distinct representation for each graph}.  For an illuminating joint graph inference study on the {\em C. elegans} connectome that addresses somewhat different questions from semiparametric testing, see \cite{chen_worm}. 
Simultaneously embedding multiple graphs into a shared space
allows comparison of graphs without the need to
perform pairwise alignments of graph embeddings.
Further, a distinct representation of each graph
renders the omnibus embedding especially useful for
subsequent comparative graph inference.  

\subsection{Omnibus embedding}\label{subsec:omnibus}
In \cite{levin_omni_2017}, we show that an omnibus embedding---that is, an embedding of multiple graphs into a single shared space---can yield consistent estimates of underlying latent positions. Moreover, like the adjacency spectral embedding for a single graph, the rows of this omnibus embedding, suitably-scaled, are asymptotically normally distributed.  As might be anticipated, the use of multiple independent graphs generated from the same latent positions, as opposed just a single graph, yields a reduction in variance for the estimated latent positions, and since the omnibus embedding provides a distinct representation for each graph, subsequently averaging these estimates reduces the variance further still. Finally, the omnibus embedding allows us to compare graphs without cumbersome Procrustes alignments.

To construct the omnibus embedding, we consider a collection of $m$ random dot product graphs,
all with the same the same generating latent positions.
This motivates the following definition:
\begin{definition}
	[Joint Random Dot Product Graph]
	\label{def:JRDPG}
	Let $F$ be a $d$-dimensional inner product distribution on $\R^d$.
	We say that random graphs $\bA^{(1)},\bA^{(2)},\dots,\bA^{(m)}$
	are distributed as a \emph{joint random dot product graph (JRDPG)}
	and write $(\bA^{(1)},\bA^{(2)},\dots,\bA^{(m)},\bX) \sim \JRDPG(F,n,m)$
	if $\bX = [\bX_1, \bX_2,\dots,\bX_n]^{\top} \in \R^{n \times d}$ has its (transposed)
	rows distributed i.i.d. as $\bX_i \sim F$, and we have 
	marginal distributions $(\bA^{(k)},\bX) \sim \RDPG(F,n)$
	for each $k=1,2,\dots,m$.
	That is, the $\bA^{(k)}$ are conditionally independent
	given $\bX$, with edges independently distributed as
	$\bA^{(k)}_{i,j} \sim \Bern( (\bX\bX^{\top})_{ij} )$ for all $1 \le i < j \le n$
	and all $k \in [m]$.
\end{definition}
Given a set of $m$ adjacency matrices distributed as
$$(\bA^{(1)},\bA^{(2)},\dots,\bA^{(m)},\bX) \sim \JRDPG(F,n,m)$$
for distribution $F$ on $\R^d$,
a natural inference task is to recover the $n$ latent positions
$\bX_1,\bX_2,\dots,\bX_n \in \R^d$ shared by the vertices of the $m$ graphs.
To estimate the underlying latent postions from these $m$ graphs, \cite{runze_law_large_graphs} provides justification for the estimate
$\Xbar = \ASE( \Abar, d )$, where $\Abar$ is the sample mean of the
adjacency matrices $\bA^{(1)},\bA^{(2)},\dots,\bA^{(m)}$.
However, $\Xbar$ is ill-suited to any task that requires
comparing latent positions across the $m$ graphs,
since the $\Xbar$ estimate collapses the $m$ graphs into a single
set of $n$ latent positions.
This motivates the \emph{omnibus embedding},
which still yields a single spectral decomposition, but with a separate $d$-dimensional representation for each of the $m$ graphs.
This makes the omnibus embedding useful for {\em simultaneous}
inference across all $m$ observed graphs.
\begin{definition}[Omnibus embedding] \label{def:omni_def}
	Let $\bA^{(1)},\bA^{(2)},\dots,\bA^{(m)} \in \R^{n \times n}$
	be (possibly weighted) adjacency matrices
	of a collection of $m$ undirected graphs.
	We define the $mn$-by-$mn$ omnibus matrix
	of $\bA^{(1)}, \bA^{(2)}, \dots, \bA^{(m)}$ by
	\begin{equation} \label{eq:omnidef}
	\bM =
	\begin{bmatrix}
	\bA^{(1)} & \frac{1}{2}(\bA^{(1)} + \bA^{(2)}) & \dots & \frac{1}{2}(\bA^{(1)} + \bA^{(m)}) \\
	\frac{1}{2}(\bA^{(2)} + \bA^{(1)}_2) & \bA^{(2)} & \dots & \frac{1}{2}(\bA^{(2)} + \bA^{(m)}) \\
	\vdots & \vdots & \ddots & \vdots \\
	\frac{1}{2}(\bA^{(m)} + \bA^{(1)}) & \frac{1}{2}(\bA^{(m)} + \bA^{(2)})
	& \dots & \bA^{(m)} \end{bmatrix},
	\end{equation}
	and the $d$-dimensional \emph{omnibus embedding} of
	$\bA^{(1)},\bA^{(2)},\dots,\bA^{(m)}$
	is the adjacency spectral embedding of $\bM$:
	$$ \OMNI(\bA^{(1)},\bA^{(2)},\dots,\bA^{(m)},d)= \ASE( \bM, d ). $$
\end{definition}
where $ASE$ is the $d$-dimensional adjacency spectral embedding.
Under the JRDPG, the omnibus matrix has expected value
$$ \E \bM = \Ptilde = \bJ_m \otimes \bP = \UPt \SPt \UPt^{\top} $$
for $\UPt \in \R^{mn \times d}$ having $d$ orthonormal columns
and $\SPt \in \R^{d \times d}$ diagonal.
Since $\bM$ is a reasonable estimate for $\Ptilde = \E \bM$,
the matrix $\Zhat = \OMNI(\bA^{(1)},\bA^{(2)},\dots,\bA^{(m)},d)$
is a natural estimate of the $mn$ latent positions
collected in the matrix
$\bZ = [\bX^{\top} \, \bX^{\top}\, \dots \,\bX^{\top}]^{\top} \in \R^{mn \times d}$.
Here again, as in Remark~\ref{rem:nonid}, $\Zhat$ only recovers
the true latent positions $\bZ$ up to an orthogonal rotation.
The matrix
\begin{equation} \label{eq:Zstruct}
\Zstar = \begin{bmatrix} \Xstar \\ \Xstar \\ \vdots \\ \Xstar \end{bmatrix}
= \UPt \SPt^{1/2} \in \R^{mn \times d},
\end{equation}
provides a reasonable canonical choice of latent positions,
so that $\bZ = \Zstar \bW$ for some suitably-chosen orthogonal matrix
$\bW \in \R^{d \times d}$; again, just as for a single random dot product graph, spectral embedding of the omnibus matrix is a consistent estimator for the latent positions (up to rotation).

Below, we state precise results on consistency and asymptotic normality of the embedding of the omnibus matrix $\bM$.  The proofs are similar to, but somewhat more involved than, the aforementioned analogues for the adjacency spectral embedding for one graph. We also demonstrate from simulations that the omnibus embedding can be successfully leveraged for subsequent inference, specifically two-sample testing.

First, Lemma~\ref{lem:omni2toinf} shows that the omnibus embedding
provides uniformly consistent estimates of the true latent positions,
up to an orthogonal transformation,
roughly analogous to Lemma 5 in \cite{lyzinski13:_perfec}.
Lemma~\ref{lem:omni2toinf}
shows consistency of the omnibus embedding under the $\tti$ norm,
implying that all $mn$ of the estimated latent positions 
are near (a rotation of) their corresponding true positions.
\begin{lemma}\label{lem:omni2toinf}
	With $\Ptilde$, $\bM$, $\UM$, and $\UPt$ defined as above, there exists
	an orthogonal matrix $\Wtilde \in \R^{d \times d}$ such that
	with high probability,
	\begin{equation}\label{eq:omni2toinf_actualbound}
	\|\UM \SM^{1/2}-\UPt \SPt^{1/2} \Wtilde \|_{\tti}
	\le \frac{Cm^{1/2} \log mn }{\sqrt{n}} .
	\end{equation}
\end{lemma}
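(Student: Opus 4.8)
The plan is to treat the omnibus matrix $\bM$ as a structured perturbation of its expectation $\Ptilde = \bJ_m \otimes \bP$ and to mimic, almost verbatim, the proof of the single-graph $2 \to \infty$ bound in Theorem~\ref{thm:minh_sparsity}. First I would record the spectral decomposition of $\Ptilde$ explicitly. Writing $\bP = \UP\SP\UP^{\top}$ and letting $\mathbf{1} \in \R^m$ be the all-ones vector, one checks that $\bJ_m = \mathbf{1}\mathbf{1}^{\top}$ has a single nonzero eigenvalue $m$, so that $\Ptilde$ is rank $d$ with $\UPt = m^{-1/2}(\mathbf{1}\otimes\UP)$ and $\SPt = m\SP$. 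In particular $\lambda_d(\Ptilde) = m\,\lambda_d(\bP) \ge m\,c_0\,\delta(\bP)$ under Assumption~\ref{ass:max_degree_assump}, the canonical latent positions are $\Zstar = \UPt\SPt^{1/2}$, and the claim is exactly a $\tti$ consistency statement for the embedding $\UM\SM^{1/2}$ of $\bM$ relative to $\Zstar$, up to rotation.

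Next I would establish the two spectral inputs that the Davis--Kahan machinery requires. Setting $\bE^{(k)} = \bA^{(k)} - \bP$, a union bound over $k \in [m]$ applied to the Lu--Peng bound (Theorem~\ref{thm:lu_peng}) gives $\max_{k}\|\bE^{(k)}\| = O(\sqrt{\delta(\bP)})$ with high probability. Since every block of $\bM - \Ptilde$ equals $\tfrac12(\bE^{(s)} + \bE^{(t)})$, I can write $\bM - \Ptilde = \tfrac12(\bN + \bN^{\top})$ with $\bN = \operatorname{diag}(\bE^{(1)},\dots,\bE^{(m)})\,(\bJ_m \otimes \bI_n)$, whence $\|\bM - \Ptilde\| = O(m\sqrt{\delta(\bP)})$. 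Combining this with the eigengap $\lambda_d(\Ptilde) = m\lambda_d(\bP)$, the factors of $m$ cancel and Theorem~\ref{thm:davis-kahan} yields $\|\UM\UM^{\top} - \UPt\UPt^{\top}\| = O(\delta^{-1/2}(\bP))$ together with an orthogonal $\Wtilde$ satisfying $\|\UM - \UPt\Wtilde\| = O(\delta^{-1/2}(\bP))$, exactly as in the single-graph setting.

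I would then reproduce the decomposition underlying Theorem~\ref{thm:minh_sparsity}: an approximate-commutativity step furnishing an orthogonal $\bW^{*}$ with $\|\bW^{*}\SM - \SPt\bW^{*}\|$ small, followed by the power-method expansion $\UM\SM^{1/2} - \UPt\SPt^{1/2}\Wtilde = (\bM - \Ptilde)\UPt\SPt^{-1/2}\Wtilde + (\text{residual})$. The leading term is transparent here: using $\UPt\SPt^{-1/2} = m^{-1}(\mathbf{1}\otimes \UP\SP^{-1/2})$, the row of $(\bM - \Ptilde)\UPt\SPt^{-1/2}$ indexed by graph $s$ and vertex $h$ equals $\tfrac12 r^{(s)}_h + \tfrac1{2m}\sum_{t} r^{(t)}_h$, where $r^{(k)}_h = [(\bA^{(k)}-\bP)\UP\SP^{-1/2}]_{h\cdot}$ is precisely the single-graph leading term. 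Hoeffding's inequality (Theorem~\ref{thm:Hoeffding}) applied entrywise, with a union bound over all $mn$ rows and $m$ graphs producing a $\log(mn)$ factor, controls $\max_{k,h}\|r^{(k)}_h\|$, and since the averaged piece $\tfrac1{2m}\sum_t r^{(t)}_h$ is of smaller order, the leading term contributes only $O(\delta^{-1/2}(\bP)\log(mn))$ to the $\tti$ norm.

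The main obstacle is the $\tti$ control of the residual, and this is where the $m^{1/2}$ prefactor in \eqref{eq:omni2toinf_actualbound} originates. The residual is a sum of products of $(\bM - \Ptilde)$, the projection difference $\UM\UM^{\top} - \UPt\UPt^{\top}$, the factor $\UPt$, and $\SPt^{-1/2}$; bounding each in $\tti$ cannot simply reuse the single-graph estimates, because the block-repeated structure of $\Ptilde$ means that each block-row of $\bM - \Ptilde$ accumulates contributions from all $m$ graphs. Carrying these through, the small scalings $\|\UPt\|_{\tti} = m^{-1/2}\|\UP\|_{\tti}$ and $\|\SPt^{-1/2}\| = m^{-1/2}\|\SP^{-1/2}\|$ only partially offset the $O(m\sqrt{\delta(\bP)})$ size of $\bM - \Ptilde$, so that a single net factor of $m^{1/2}$ survives. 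Making this precise requires a row-wise Hoeffding/Bernstein argument for the dominant residual contribution, again with a union bound supplying the $\log(mn)$ factor, rather than a crude spectral-to-$\tti$ passage. Assembling the leading and residual estimates and recalling that $\delta(\bP) = \Theta(n)$ in the JRDPG regime then gives the stated bound $Cm^{1/2}\log(mn)/\sqrt{n}$ and fixes the constant $C$.
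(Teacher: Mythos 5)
Your proposal is correct and follows essentially the route the paper indicates (the survey only states this lemma and defers its proof to \cite{levin_omni_2017}, describing it as a more involved version of the single-graph argument of Theorem~\ref{thm:minh_sparsity}): you correctly identify $\UPt = m^{-1/2}(\mathbf{1}\otimes \UP)$ and $\SPt = m\SP$, the crude bound $\|\bM-\Ptilde\| = O(m\sqrt{\delta(\bP)})$ whose factor of $m$ cancels against the eigengap $m\lambda_d(\bP)$ in Davis--Kahan, the leading term $\tfrac12 r^{(s)}_h + \tfrac{1}{2m}\sum_t r^{(t)}_h$, and the residual as the source of the extra $m^{1/2}$. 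One minor remark: a careful accounting shows that crude Frobenius-norm control of the residual already yields the stated rate via $\|\cdot\|_{\tti}\le\|\cdot\|_F$ (for instance $\|\bR_1\SM^{1/2}\|_F = O(n^{-1}\cdot\sqrt{mn}) = O(m^{1/2}n^{-1/2})$), so the finer row-wise Hoeffding/Bernstein argument you anticipate needing for the residual is not actually required.
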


As with the adjacency spectral embedding, we once again can show the asymptotic normality of the individual rows of the omnibus embedding. Note that the covariance matrix does change with $m$, and for $m$ large, this results in a nontrivial variance reduction.

\begin{theorem} \label{thm:main}
	Let $(\bA^{(1)},\bA^{(2)},\dots,\bA^{(m)},\bX) \sim \JRDPG(F,n,m)$ for some
	$d$-dimensional inner product distribution $F$ and let $\bM$ denote
	the omnibus matrix as in \eqref{eq:omnidef}. Let
	$\bZ = \Zstar \bW$ with $\Zstar$ as defined in Equation~\eqref{eq:Zstruct},
	with estimate
	$\Zhat = \OMNI(\bA^{(1)},\bA^{(2)},\dots,\bA^{(m)},d)$.
	Let $h = m(s-1) + i$ for $i \in [n],s \in [m]$, so that $\Zhat_h$
	denotes the estimated latent position
	of the $i$-th vertex in the $s$-th graph $\bA^{(s)}$.
	That is, $\Zhat_h$ is the
	column vector formed by transposing the $h$-th row of the matrix
	$\Zhat = \UM \SM^{1/2} = \OMNI(\bA^{(1)},\bA^{(2)},\dots,\bA^{(m)},d)$.
	Let $\Phi(\bx,\bSigma)$ denote the cumulative distribution function of a (multivariate)
	Gaussian with mean zero and covariance matrix $\bSigma$,
	evaluated at $\bx \in \R^d$.
	There exists a sequence of orthogonal $d$-by-$d$ matrices
	$( \Wntilde )_{n=1}^\infty$ such that for all $\bx \in \R^d$,
	$$ \lim_{n \rightarrow \infty}
	\Pr\left[ n^{1/2} \left( \Zhat \Wntilde - \bZ \right)_h
	\le \bx \right]
	= \int_{\supp F} \Phi\left(\bx, \bSigma(\by) \right) dF(\by), $$
	where
	$\bSigma(\by) = (m+3)\bDelta^{-1} \Sigmatilde(\by) \bDelta^{-1}/(4m), $
	$\bDelta = \mathbb{E}[\bX_1 \bX_1^{\top}]$ and
	$$\Sigmatilde(\by)
	= \E\left[ (\by^{\top} \bX_1 - ( \by^{\top} \bX_1)^2 ) \bX_1 \bX_1^{\top} \right].$$
\end{theorem}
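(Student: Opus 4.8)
The plan is to mirror the proof of the single-graph central limit theorem (Theorem~\ref{thm:clt_orig_but_better}), exploiting the Kronecker structure of $\Ptilde = \bJ_m \otimes \bP$ to reduce the analysis of the $mn \times mn$ omnibus matrix to a perturbation of $\bP = \bX\bX^{\top}$. First I would record the eigenstructure of $\Ptilde$: since $\bJ_m$ has the single nonzero eigenvalue $m$ with unit eigenvector $m^{-1/2}\mathbf{1}_m$, writing $\bP = \UP\SP\UP^{\top}$ gives $\UPt = m^{-1/2}\mathbf{1}_m \otimes \UP$ and $\SPt = m\SP$, whence $\UPt\SPt^{1/2} = \mathbf{1}_m \otimes (\UP\SP^{1/2}) = \Zstar$, recovering Eq.~\eqref{eq:Zstruct}, and, crucially, $\UPt\SPt^{-1/2} = \tfrac{1}{m}\,\mathbf{1}_m \otimes (\UP\SP^{-1/2})$. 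In particular $\Ptilde$ is exactly rank $d$ with eigengap $\lambda_d(\Ptilde) = m\,\lambda_d(\bP) = \Theta(mn)$.

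Next, as in Eq.~\eqref{eq:CLT_basic_explanation}, I would establish a power-method decomposition: there is an orthogonal $\Wtilde \in \R^{d\times d}$ with
$$\UM\SM^{1/2} - \UPt\SPt^{1/2}\Wtilde = (\bM - \Ptilde)\,\UPt\SPt^{-1/2}\Wtilde + \bR,$$
where the row-wise residual $\bR$ is negligible at the $n^{-1/2}$ scale. The residual bound rests on the $\tti$ consistency already recorded in Lemma~\ref{lem:omni2toinf} together with a spectral-norm bound $\|\bM - \Ptilde\| = O(\sqrt{mn})$; since $\lambda_d(\Ptilde) = \Theta(mn)$, the Davis--Kahan theorem (Theorem~\ref{thm:davis-kahan}) and the approximate commutativity of $\SM$ and $\SPt$ under $\Wtilde$ control $\bR$ exactly as in the proof of Theorem~\ref{thm:minh_sparsity}.

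The leading term is then computed explicitly from the block structure (with the orthogonal rotation $\Wtilde$ carried along and ultimately folded into $\Wntilde$). Writing $\bE^{(t)} = \bA^{(t)} - \bP$, every $n\times n$ block $(s,t)$ of $\bM - \Ptilde$ equals $\tfrac12(\bE^{(s)} + \bE^{(t)})$, including the diagonal $s=t$. Multiplying the $s$-th block-row by $\UPt\SPt^{-1/2} = \tfrac{1}{m}\mathbf{1}_m\otimes(\UP\SP^{-1/2})$ and summing over $t$ collapses to
$$\Bigl(\tfrac12\,\bE^{(s)} + \tfrac{1}{2m}\sum_{t=1}^{m}\bE^{(t)}\Bigr)\UP\SP^{-1/2},$$
so the row indexing the $i$-th vertex of the $s$-th graph is $\sum_{j=1}^{n}\zeta_{ij}\,(\UP\SP^{-1/2})_{j\cdot}$ with $\zeta_{ij} = \tfrac{m+1}{2m}\bE^{(s)}_{ij} + \tfrac{1}{2m}\sum_{t\neq s}\bE^{(t)}_{ij}$. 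Conditioning on $\bX_i = \bx$ and using independence of the $\bE^{(t)}_{ij}$ across $t$ and $j$, a direct variance computation gives $\Var(\zeta_{ij}\mid\bX) = \bigl[\tfrac{(m+1)^2 + (m-1)}{4m^2}\bigr]\bP_{ij}(1-\bP_{ij}) = \tfrac{m+3}{4m}\,\bP_{ij}(1-\bP_{ij})$, which is the source of the factor $(m+3)/(4m)$. I would then apply the Lindeberg--Feller central limit theorem to $n^{1/2}\sum_j \zeta_{ij}(\UP\SP^{-1/2})_{j\cdot}$ conditional on $\bX_i = \bx$; exactly as in Theorem~\ref{thm:clt_orig_but_better}, the law of large numbers gives $n^{-1}\sum_j(\bx^{\top}\bX_j)(1-\bx^{\top}\bX_j)\bX_j\bX_j^{\top}\to\Sigmatilde(\bx)$ and $n\,\SP^{-1}$ converges (modulo the rotation folded into $\Wntilde$) to $\bDelta^{-1}$, so the conditional covariance tends to $\tfrac{m+3}{4m}\bDelta^{-1}\Sigmatilde(\bx)\bDelta^{-1} = \bSigma(\bx)$. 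Slutsky's theorem absorbs the residual $\bR$, and integrating over $\bx\sim F$ yields the stated mixture of Gaussians.

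The hard part will be the residual control, because the off-diagonal blocks of $\bM$ are averages $\tfrac12(\bA^{(s)}+\bA^{(t)})$, so the entries of $\bM$ are \emph{not} independent---$\bA^{(t)}_{ij}$ recurs throughout the $t$-th block-row and block-column---and the independent-edge concentration of Theorems~\ref{thm:oliveira}--\ref{thm:lu_peng} does not apply verbatim. I would handle this through the representation
$$\bM - \Ptilde = \tfrac12\sum_{t=1}^{m}\bigl(\mathbf{e}_t\mathbf{1}_m^{\top} + \mathbf{1}_m\mathbf{e}_t^{\top}\bigr)\otimes\bE^{(t)},$$
which exhibits $\bM-\Ptilde$ as a sum of $m$ independent pieces (one per graph), each of spectral norm $O(\sqrt{n})$, and then re-run the row-wise perturbation bookkeeping of Theorem~\ref{thm:minh_sparsity} on this correlated-but-structured matrix to certify that $\|\bR\|_{\tti} = o(n^{-1/2})$.
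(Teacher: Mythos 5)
Your proposal is correct and follows essentially the route the paper indicates (and that \cite{levin_omni_2017} carries out in full): exploit $\Ptilde = \bJ_m \otimes \bP$ to get $\UPt\SPt^{-1/2} = \tfrac{1}{m}\mathbf{1}_m\otimes(\UP\SP^{-1/2})$, run the same power-method decomposition as in Theorem~\ref{thm:clt_orig_but_better}, and handle the dependence across blocks of $\bM$ by writing $\bM-\Ptilde$ as a sum of $m$ independent per-graph error terms. Your collapse of the $s$-th block-row to $\zeta_{ij} = \tfrac{m+1}{2m}\bE^{(s)}_{ij} + \tfrac{1}{2m}\sum_{t\neq s}\bE^{(t)}_{ij}$ and the resulting variance $\tfrac{(m+1)^2+(m-1)}{4m^2} = \tfrac{m+3}{4m}$ is exactly the source of the covariance scaling in the statement, and since $m$ is fixed the slightly loose per-piece spectral norm bound is immaterial.
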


Next, we summarize from \cite{levin_omni_2017} experiments on synthetic data
exploring the efficacy of the omnibus embedding described above.
If we merely wish to estimate the latent positions $\bX$
of a set of $m$ graphs
$(\bA^{(1)},\bA^{(2)},\dots,\bA^{(m)},\bX) \sim \JRDPG(F,n,m)$,
the estimate $\Xbar = \ASE( \sum_{i=1}^m \bA^{(i)}/m, d )$,
the embedding of the sample mean of the adjacency matrices
performs well asymptotically \cite{runze_law_large_graphs}.
Indeed, all else equal,
the embedding $\Xbar$ is preferable to the omnibus embedding
if only because it requires an eigendecomposition
of an $n$-by-$n$ matrix rather
than the much larger $mn$-by-$mn$ omnibus matrix.

\begin{figure}[t!]
	\centering
	\includegraphics[width=0.6\columnwidth]{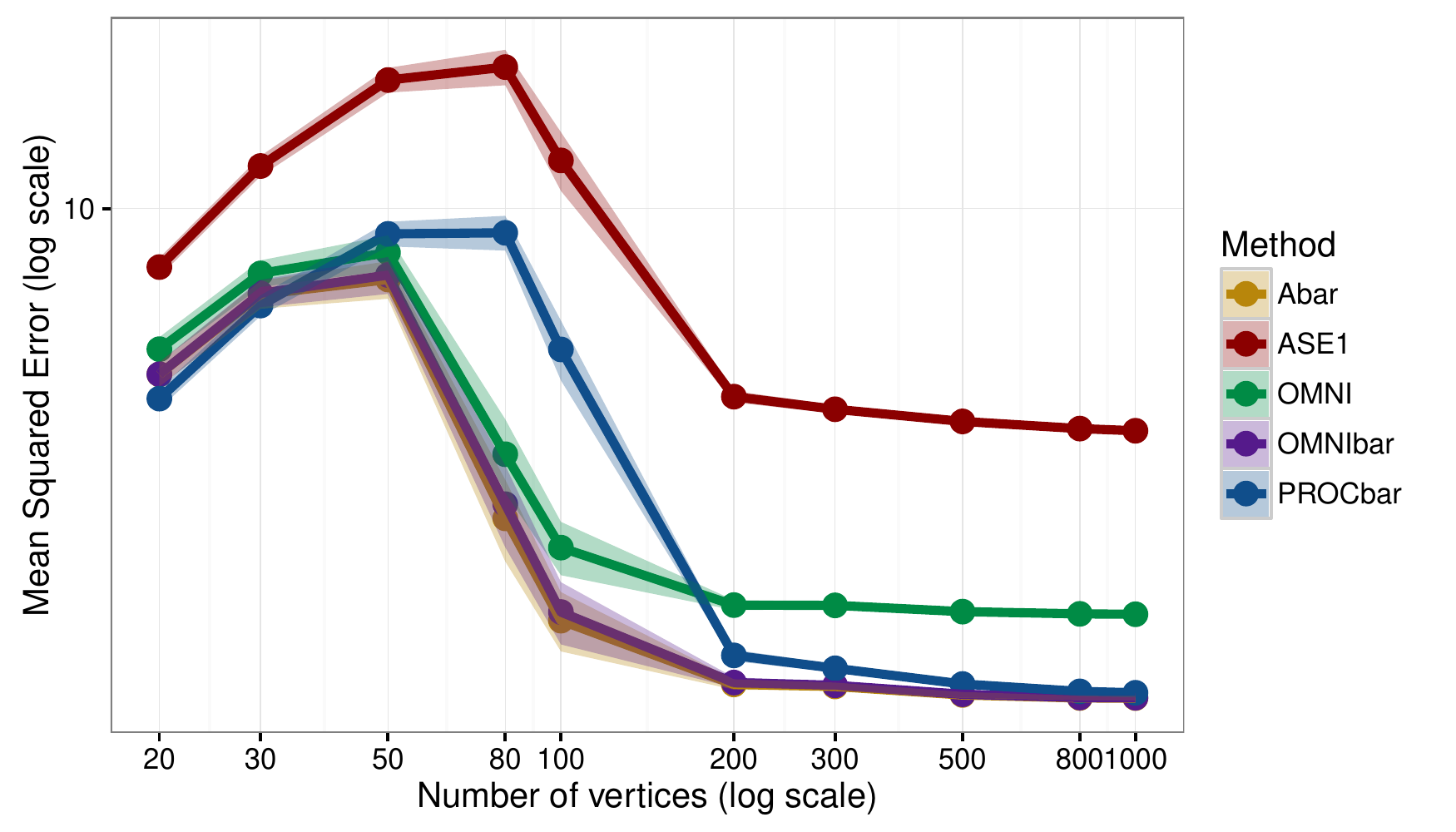}
	\caption{Mean squared error (MSE) in recovery of latent positions (up to rotation) in a 2-graph joint RDPG model as a function of the number of vertices.  Performance of ASE applied to a single graph (red), ASE embedding of the mean graph (gold), the Procrustes-based pairwise embedding (blue), the omnibus embedding (green) and the mean omnibus embedding (purple). Each point is the mean of 50 trials; error bars indicate $\pm 2(SE)$. Mean omnibus embedding (OMNIbar) is competitive with $\ASE(\Abar,d)$; Procrustes alignment estimation is notably inferior to the other two-graph techniques for graphs of size between 80 and 200 vertices (note that the gap appears to persist at larger graph sizes, though it shrinks). Figure duplicated from \cite{levin_omni_2017}.}
	\label{fig:compareMSE}
\end{figure}

Of course, the omnibus embedding can still be used to
to estimate the latent positions, potentially at the cost of
increased variance.
Figure \ref{fig:compareMSE} compares the mean-squared error of various
techniques for estimating the latent positions for a random dot product graph.
The figure plots the (empirical) mean squared error in recovering the
latent positions of a $3$-dimensional
JRDPG as a function of the number of vertices $n$.
Each point in the plot is the empirical mean of 50 independent trials; in each trial, the latent positions are drawn i.i.d. from a
Dirichlet with parameter $[1,\,1,\,1]^{\top} \in \R^{3}$.
Once the latent positions are so obtained, we independently generate two random dot product graphs,
$\bA^{(1)},\bA^{(2)} \in \R^{n \times n}$ with these latent positions.
The figure is interpreted as follows:
\begin{enumerate}
	\item {\bf ASE1 (red)}: we embed only one of the two observed graphs,
	and use only the ASE of that graph to estimate the latent positions
	in $\bX$, ignoring entirely the information present in
	$\bA^{(2)}$. This condition serves as a baseline for how much
	additional information is provided by the second graph $\bA^{(2)}$.
	\item {\bf Abar (gold)}: we embed the average of the two graphs,
	$\Abar = (\bA^{(1)} + \bA^{(2)})/2$ as $\Xhat = \ASE( \Abar, 3 )$.
	\item {\bf OMNI (green)}: We apply the omnibus embedding to obtain
	$\Zhat = \ASE(\bM,3)$,
	where $\bM$ is as in Equation~\eqref{eq:omnidef}.
	We then use only the first $n$ rows of
	$\Zhat \in \R^{2n \times d}$ as our estimate of $\bX$.
	This embedding incorporates information available
	in both graphs $\bA^{(1)}$ and $\bA^{(2)}$, but does not
weight them equally, since the first rows of $\Zhat$ are based
	primarily on the information contained in $\bA^{(1)}$.
	\item {\bf OMNIbar (purple)}: We again apply the omnibus embedding to obtain
	estimated latent positions
	$\Zhat = \ASE(\bM,3)$, but this time we use all available
	information by averaging the first $n$ rows and the second $n$ rows
	of $\Zhat$.
	\item {\bf PROCbar (blue)}: We separately embed the graphs
	$\bA^{(1)}$ and $\bA^{(2)}$, perform a Procrustes alignment between the two embeddings, and average the aligned embeddings to obtain
	our final estimate of the latent positions.
\end{enumerate}
First, let us note that ASE applied to a single graph (red)
lags all other methods, as expected, since this discards all information from the second graph.
For very small graphs, the dearth of signal is such that no method will
recover the latent positions accurately.

Crucially, however, we see that the OMNIbar estimate (purple) performs nearly
identically to the Abar estimate (gold), the natural choice among spectral methods for the estimation latent positions
The Procrustes estimate (in blue)
provides a two-graph analogue of ASE (red):
it combines two ASE estimates via Procrustes alignment,
but does not enforce an {\em a priori} alignment of the estimated latent positions.
As predicted by the results in \cite{lyzinski13:_perfec} and \cite{tang14:_semipar},
the Procrustes estimate is competitive with the Abar (gold)
estimate for suitably large graphs.
The OMNI estimate (in green) serves, in a sense, as an intermediate, since it uses information available from both graphs,
but in contrast to Procrustes (blue), OMNIbar (purple)
and Abar (gold), it does not make complete use of the information
available in the second graph.
For this reason, it is noteworthy that the OMNI estimate
outperforms the Procrustes estimate for graphs of 80-100 vertices.
That is, for certain graph sizes,
the omnibus estimate appears to more optimally leverage the information in both graphs
than the Procrustes estimate does,
despite the fact that the information in the second graph has comparatively little
influence on the OMNI embedding.

The omnibus embedding can also be applied to testing
the semiparametric hypothesis that two observed graphs are drawn from the
same underlying latent positions.
Consider a collection of latent positions
$\bX_1,\bX_2,\dots,\bX_n,\bY_1,\bY_2,\dots,\bY_n \in \R^d$.
Let the graph $G_1$ with adjacency matrix $\bA^{(1)}$ have edges distributed
independently as
$ \bA^{(1)}_{ij} \sim \Bern( \bX_i^{\top} \bX_j )$.
Similarly, let $G_2$ have adjacency matrix $\bA^{(2)}$ with edges
distributed independently as
$ \bA^{(2)}_{ij} \sim \Bern( \bY_i^{\top} \bY_j )$.
The omnibus embedding provides a natural test of
the null hypothesis \eqref{eq:H0}
by comparing the first $n$ and last $n$ embeddings of the omnibus matrix
$$ \bM = \begin{bmatrix} \bA^{(1)} & (\bA^{(1)} + \bA^{(2)})/2 \\
(\bA^{(1)} + \bA^{(2)})/2 & \bA^{(2)}
\end{bmatrix}. $$
Intuitively, when $H_0$ holds,
the distributional result in Theorem~\ref{thm:main} holds,
and the $i$-th and $(n+i)$-th rows of $\OMNI(\bA^{(1)},\bA^{(2)},d)$
are equidistributed (though they are not independent).
On the other hand, when $H_0$ fails to hold, there exists at least one
$i \in [n]$ for which the $i$-th and $(n+i)$-th rows of $\bM$ are \emph{not}
identically distributed, and thus the corresponding embeddings are
also distributionally distinct.
This suggests a test that compares the first $n$ rows of
$\OMNI(\bA^{(1)},\bA^{(2)},d)$
against the last $n$ rows (see below for details).
Here, we empirically explore the power this test against its
Procrustes-based alternative from \cite{tang14:_semipar}.

We draw $\bX_1,\bX_2,\dots,\bX_n \in \R^3$ i.i.d. according to a
Dirichlet distribution $F$ with parameter
$\alphavec = [1, 1, 1]^{\top}$.
With $\bX$ defined as the matrix
$ \bX = [\bX_1 \bX_2 \dots \bX_n]^{\top} \in \R^{n \times 3}, $
let graph $G_1$ have adjacency matrix $\bA^{(1)}$, where
$ \bA^{(1)}_{ij} \sim \Bern( (\bX \bX^{\top})_{ij} )$.
We generate a second graph $G_2$ by first
drawing random points $\bZ_1,\bZ_2,\dots,\bZ_n \iid F$.
Selecting a set of indices $I \subset [n]$ of size $k < n$ uniformly at
random from among all such $\binom{n}{k}$ sets,
we let $G_2$ have latent positions
$$ \bY_i = \begin{cases} \bZ_i & \mbox{ if } i \in I \\
\bX_i & \mbox{ otherwise. } \end{cases} $$
With $\bY$ the matrix
$\bY = [\bY_1, \bY_2, \dots, \bY_n]^{\top} \in \R^{n \times 3}, $
we generate graph $G_2$ with adjacency matrix $\bA^{(2)}$, where
$ \bA^{(2)}_{ij} \sim \Bern( (\bY \bY^{\top})_{ij} ).$
We wish to test
\begin{equation} \label{eq:ptsnull}
H_0 : \bX = \bY.
\end{equation}
Consider two different tests, one based on
a Procrustes alignment of the adjacency spectral embeddings of $G_1$ and $G_2$
\cite{tang14:_semipar}
and the other based on the omnibus embedding.
Both approaches are based on estimates of the latent positions
of the two graphs.
In both cases we use a test statistic of the form
$ T = \sum_{i=1}^n \| \Xhat_i - \Yhat_i \|_F^2, $
and accept or reject based on a Monte Carlo estimate of the
critical value of $T$ under the null hypothesis,
in which $\bX_i = \bY_i$ for all $i \in [n]$.
In each trial, we use $500$ Monte Carlo iterates to estimate the
distribution of $T$.

We note that in the experiments presented here,
we assume that the latent positions
$\bX_1,\bX_2,\dots,\bX_n$ of graph $G_1$ are known for sampling purposes,
so that the matrix $\bP = \E \bA^{(1)}$ is known exactly, rather than
estimated from the observed adjacency matrix $\bA^{(1)}$.
This allows us to sample from the true null distribution.
As proved in \cite{lyzinski13:_perfec},
the estimated latent positions $\Xhat_1 = \ASE(\bA^{(1)})$
and $\Xhat_2 = \ASE( \bA^{(2)} )$ recover the true latent positions
$\bX_1$ and $\bX_2$ (up to rotation) to arbitrary accuracy
in $(2,\infty)$-norm for suitably large $n$~\cite{lyzinski13:_perfec}.
Without using this known matrix $\bP$, we would require that our matrices
have tens of thousands of vertices before the variance associated with
estimating the latent positions would no longer overwhelm the signal present
in the few altered latent positions.

Three major factors influence the complexity of testing
the null hypothesis in Equation \eqref{eq:ptsnull}:
the number of vertices $n$,
the number of changed latent positions $k = |I|$,
and the distances $\|\bX_i - \bY_i\|_F$ between the latent positions.
The three plots in Figure \ref{fig:trueP:power} illustrate
the first two of these three factors.
These three plots show the power of two different approaches to testing
the null hypothesis \eqref{eq:ptsnull} for different sized graphs
and for different values of $k$, the number of altered latent positions.
In all three conditions, both methods improve as the number of vertices
increases, as expected, especially since we do not require
estimation of the underlying expected matrix $\bP$ for Monte Carlo
estimation of the null distribution of the test statistic.
We see that when only one vertex is changed, neither method has power much above $0.25$.
However, in the case of $k = 5$ and $k = 10$, is it clear that the
omnibus-based test achieves higher power than the Procrustes-based
test, especially in the range of 30 to 250 vertices. A more detailed examination of the relative impact of these factors in testing is given in \cite{levin_omni_2017}.

\begin{figure}[t!]
	\centering
	\subfloat[]{ \includegraphics[width=0.325\columnwidth]{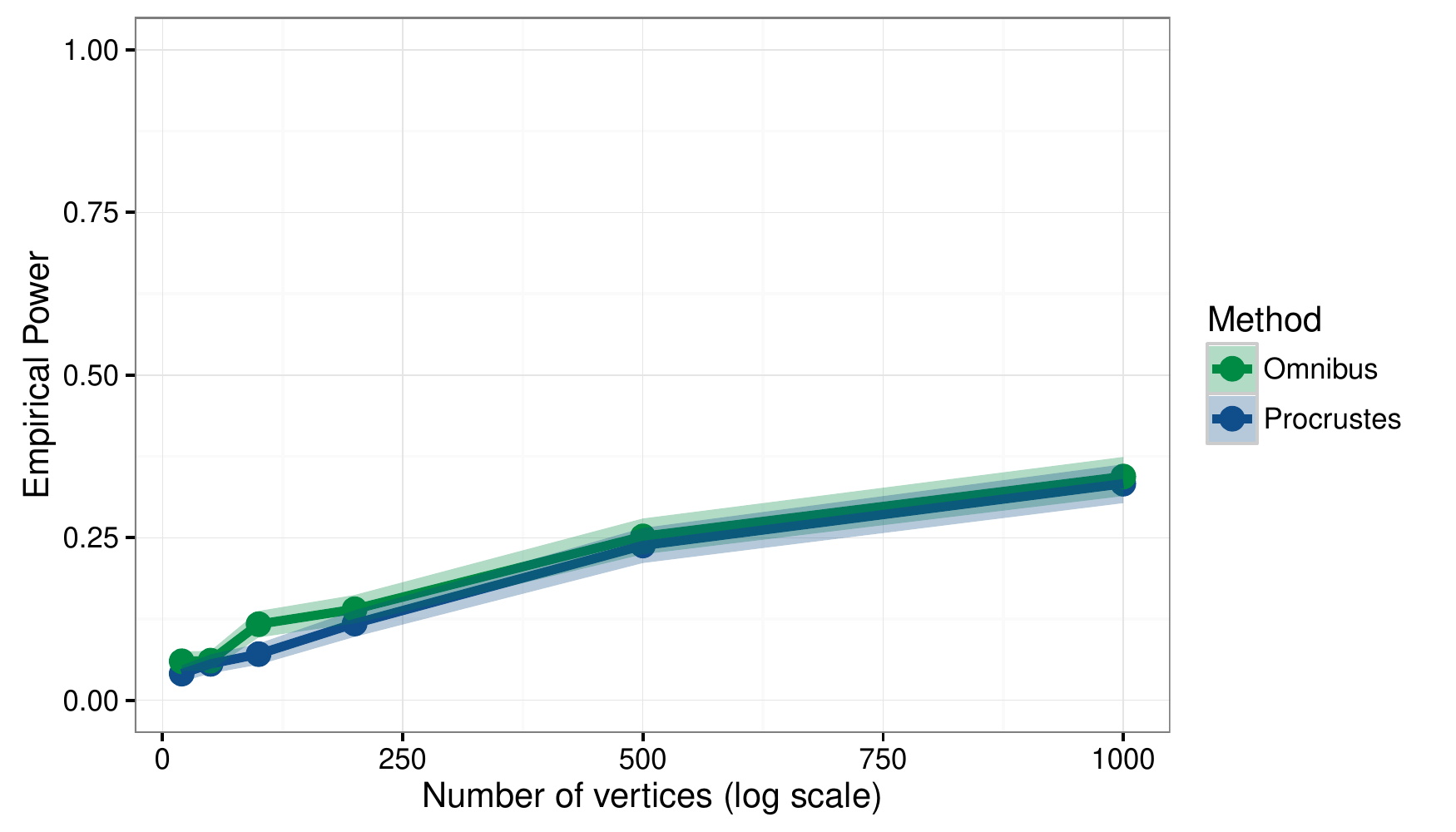} }
	\subfloat[]{ \includegraphics[width=0.325\columnwidth]{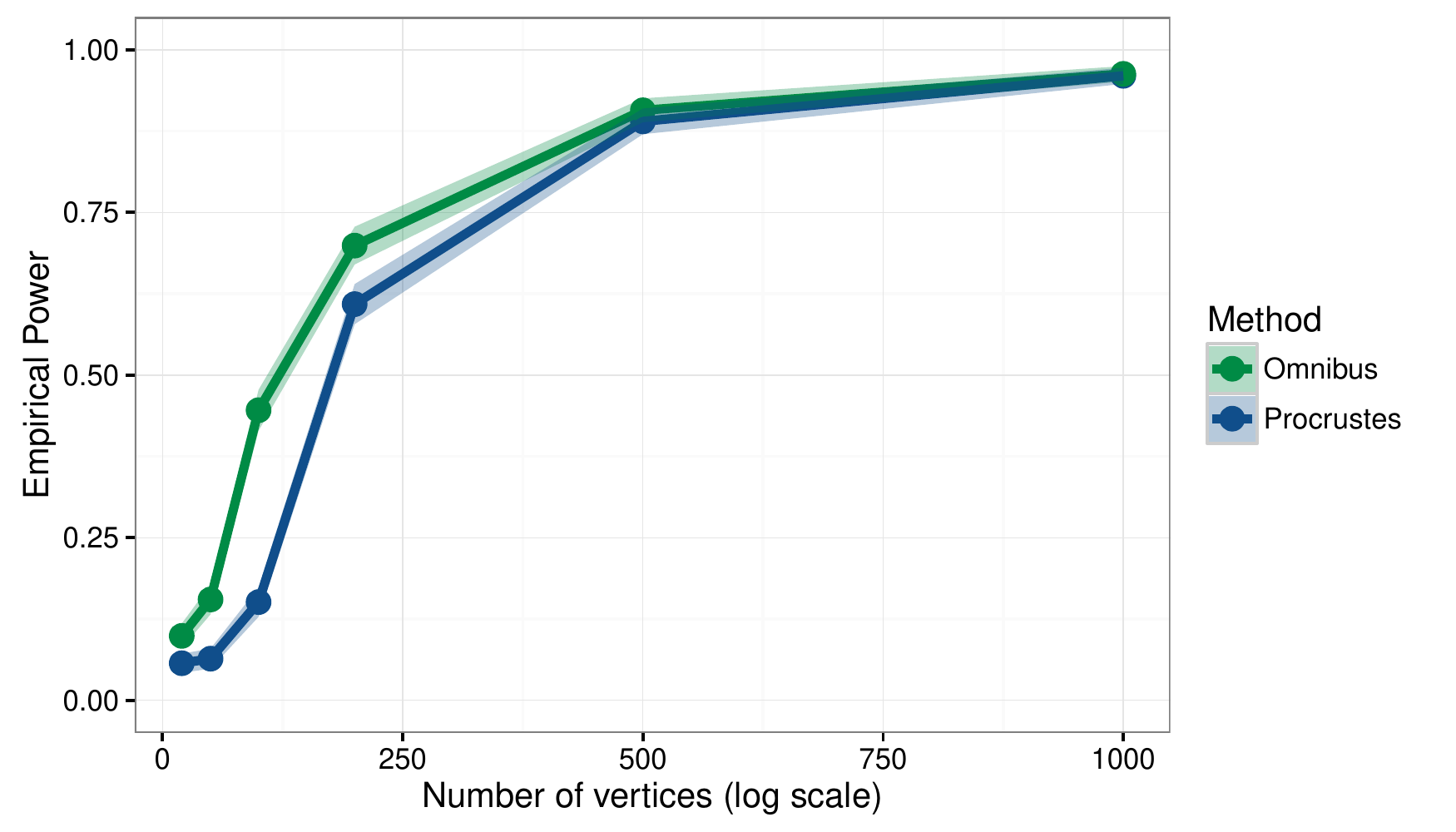} }
	\subfloat[]{ \includegraphics[width=0.325\columnwidth]{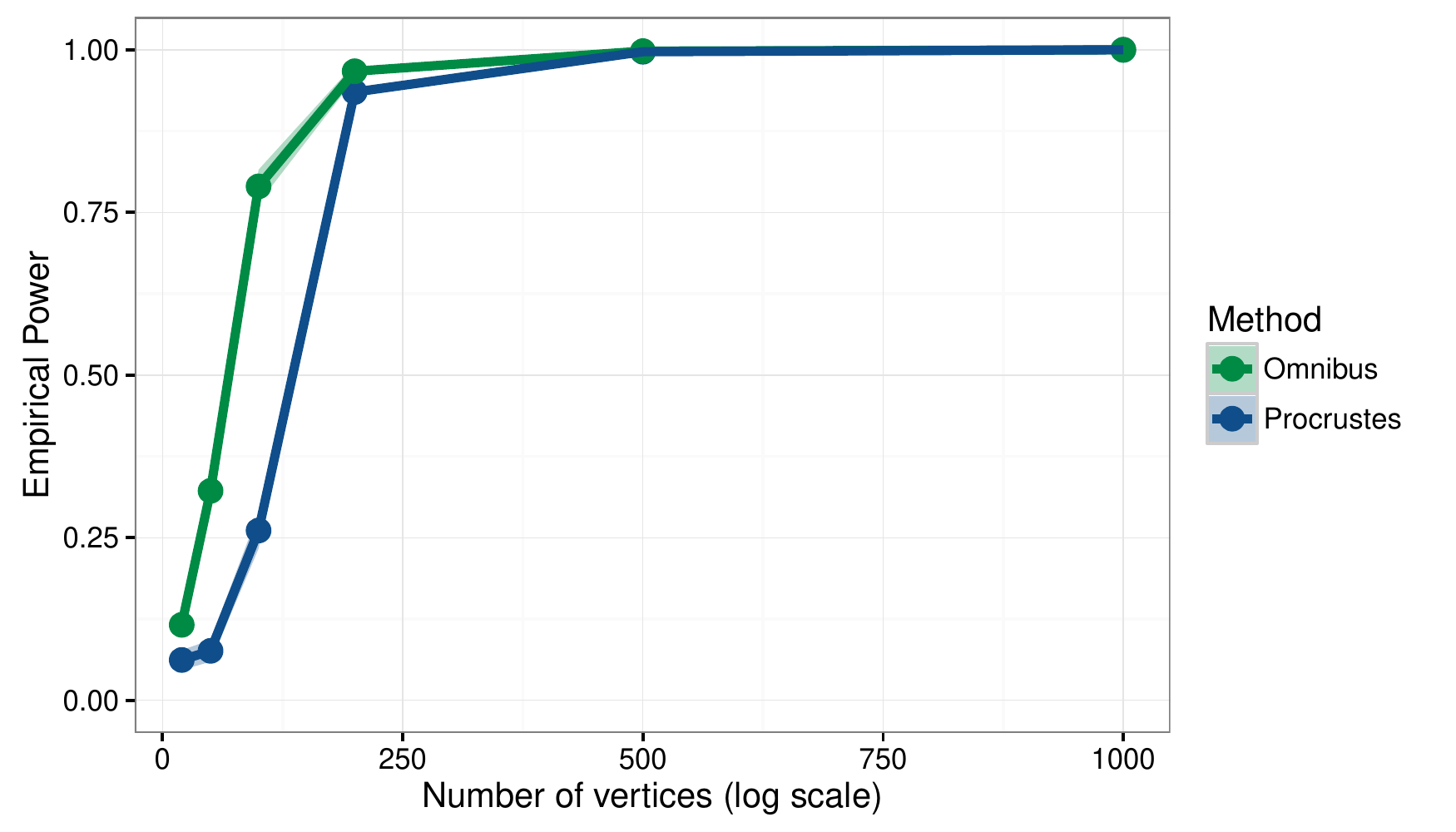} }
	\caption{Power of the ASE-based (blue) and omnibus-based (green)
		tests to detect when the two graphs being testing differ in
		(a) one, (b) five, and (c) ten of their latent positions.
		Each point is the proportion of 1000 trials for which the given
		technique correctly rejected the null hypothesis,
		and error bars denote two standard errors of this empirical mean
		in either direction. Figure duplicated from \cite{levin_omni_2017}. }
	\label{fig:trueP:power}
\end{figure}

In sum, our omnibus embedding provides a natural mechanism for the
simultaneous embedding of multiple graphs into a single vector space.
This eliminates the need for multiple Procrustes alignments,
which were required in previously-explored
approaches to multiple-graph testing \cite{tang14:_semipar}.
Recall the two-graph hypothesis testing framework of \cite{tang14:_semipar},
each graph is embedded separately, yielding estimates $\bX_1$ and $\bX_2$, and under the null hypothesis of equality of latent positions $bX$ up to some rotation, 
%
Procrustes alignment is required: the test statistic is 
\begin{equation} \label{eq:procmin}
\min_{\bW \in \calO_d} \| \Xhat_1 - \Xhat_2 \bW \|_F,
\end{equation}
and under the null hypothesis, a suitable rescaling of this converges as $n \rightarrow \infty$.
The effect of this Procrustes alignment on subsequent inference
is ill-understood; it has the potential to
introduce variance, and our simulations results
suggest that it negatively impacts performance in both estimation
and testing settings.
Furthermore, when the matrix $\bP = \bX \bX^{\top}$
does not have distinct eigenvalues
(i.e., is not uniquely diagonalizable), this Procrustes step is unavoidable,
since the difference $\|\Xhat_1 - \Xhat_2\|_F$ need not converge at all.

In contrast, our omnibus embedding builds an alignment of the graphs
into its very structure. To see this, consider, for simplicity, the $m=2$ case.
Let $\bX \in \R^{n \times d}$ be the matrix whose rows are the latent positions
of both graphs $G_1$ and $G_2$, and let $\bM \in \R^{2n \times 2n}$ be their
omnibus matrix.
Then 
\begin{equation*}
\E \bM = \Ptilde = \begin{bmatrix} \bP & \bP \\ \bP & \bP \end{bmatrix}
= \begin{bmatrix} \bX \\ \bX \end{bmatrix}
\begin{bmatrix} \bX \\ \bX \end{bmatrix}^{\top}.
\end{equation*}
Suppose now that we wish to factorize $\Ptilde$ as
$$ \Ptilde = \begin{bmatrix} \bX  \\ \bX \bW^* \end{bmatrix}
\begin{bmatrix} \bX  \\ \bX \bW^* \end{bmatrix}^{\top}
= \begin{bmatrix} \bP & \bX (\bW^*)^{\top} \bX^{\top} \\
\bX \bW^* \bX^{\top} & \bP \end{bmatrix}. $$
That is, we want to consider graphs $G_1$ and $G_2$ as being generated from
the same latent positions,
but in one case, say, under a  \emph{different} rotation.
This possibility necessitates the Procrustes alignment in the case of
separately-embedded graphs.
In the case of the omnibus matrix,
the structure of the $\Ptilde$ matrix implies that $\bW^*=\bI_d$.
In contrast to the Procrustes alignment,
the omnnibus matrix incorporates an alignment {\em a priori}.
Simulations show that the omnibus embedding
outperforms the Procrustes-based test for equality of latent positions,
especially in the case of moderately-sized graphs.

To further illustrate the utility of this omnibus embedding, consider the case of testing whether three different random dot product graphs have the same generating latent positions. The omnibus embedding gives us a {\em single} canonical representation of all three graphs: Let $\Xhat^O_1$, $\Xhat^O_2$, and $\Xhat^O_3$ be the estimates for the three latent position matrices generated from the omnibus embedding.
To test whether any two of these random graphs have the same generating latent positions, we merely have to compare the Frobenius norms of their differences, as opposed to computing three separate Procrustes alignments.
In the latter case, in effect, we do not have a canonical choice of coordinates in which to compare our graphs simultaneously.

\subsection{Nonparametric graph estimation and testing}
\label{subsec:nonpar}
The semiparametric and omnibus testing procedures we describe both focus on the estimation of the latent positions themselves. But a very natural concern, for a random dot product graph with distribution $F$, is the estimation of the {\em distribution} $F$.  We next address how the adjacency spectral embedding, judiciously integrated with kernel density estimation, can be used for nonparametric estimation and testing in random graphs.

Throughout our discussion on nonparametric estimation, we shall always assume that the distributions of the latent positions
satisfy the following distinct eigenvalues assumption. The assumption
implies that the estimates of the latent position obtained by the
adjacency spectral embedding will,
in the limit, be uniquely determined. 
\begin{assumption}\label{ass:rank_F}
  The distribution $F$ for the latent positions
  $X_1, X_2, \dots, \sim F$ is such that the second moment matrix
  $\mathbb{E}[X_1 X_1^{\top}]$ has $d$ distinct eigenvalues and $d$ is known.
\end{assumption}
We realize that Assumption \ref{ass:rank_F} is restrictive -- in
particular, it is not satisfied by the stochastic block model with
$K>2$ blocks of equal size and edge probabilities $p$ within
communities and $q$ between communities -- it is a necessary technical condition for us to obtain 
the limiting results of Theorem~\ref{eq:lpg}. 
The motivation behind this assumption is as follows: the matrix
$\mathbb{E}[X_1 X_1^{\top}]$ is of rank $d$ with $d$ known so
that given a graph $\mathbf{A} \sim \mathrm{RDPG}(F)$, one can
construct the adjacency spectral embedding of $\mathbf{A}$ into the
``right'' Euclidean space. The requirement that
$\mathbb{E}[X_1 X_1^{\top}]$ has $d$ distinct eigenvalues is---once again---due to the
intrinsic property of non-identifiability of random dot
product graphs. As always, for any random dot product graph
$\mathbf{A}$, the latent position $\mathbf{X}$ associated with
$\mathbf{A}$ can only be estimated up to some true but unknown
orthogonal transformation. Because we are concerned with two-sample
hypothesis testing, we must guard against the scenario in which we
have two graphs $\mathbf{A}$ and $\mathbf{B}$ with latent positions
$\mathbf{X} = \{X_i\}_{i=1}^{n} \overset{\mathrm{i.i.d}}{\sim} F$ and
$\mathbf{Y} = \{Y_k\}_{k=1}^{m} \overset{\mathrm{i.i.d}}{\sim} F$ but
whose estimates $\hat{\mathbf{X}}$ and $\hat{\mathbf{Y}}$ lie in
different, incommensurate subspaces of $\mathbb{R}^{d}$. That is to
say, the estimates $\hat{\mathbf{X}}$ and $\hat{\mathbf{Y}}$ satisfy
$\hat{\mathbf{X}} \approx \mathbf{X} \mathbf{W}_1$ and
$\hat{\mathbf{Y}} \approx \mathbf{Y} \mathbf{W}_2$, but
$\|\mathbf{W}_1 - \mathbf{W}_2\|_{F}$ does not converge to $0$ as $n,m
\rightarrow \infty$. See also \cite{fishkind15:_incom_phenom} for
exposition of a related so-called ``incommensurability phenomenon."


Our main point of departure for this subsection compared to Section~\ref{subsec:Testing} is the assumption
that, given a sequence of pairs of random dot product graphs with adjacency matrices $\mathbf{A}_n$ and $\mathbf{B}_n$, 
the rows of the latent positions $\mathbf{X}_n$ and
$\mathbf{Y}_n$ are independent samples from some fixed distributions $F$ and $G$,
respectively. The corresponding tests are therefore tests of equality between $F$ and
$G$. More formally, we consider the following two-sample nonparametric testing
problems for random dot product graphs. Let $F$ and $G$ be two inner product distributions.
Given $\mathbf{A} \sim \mathrm{RDPG}(F)$ and $\mathbf{B} \sim \mathrm{RDPG}(G)$, we
consider the tests:
\begin{enumerate}
\item{\em (Equality, up to orthogonal transformation)}
  \begin{align*}
  H_{0} \colon F \upVdash G
  \quad \text{against} \quad H_{A} \colon F  \nupVdash G,
  \end{align*}
  where $F \upVdash G$ denotes that there exists a unitary operator
  $U$ on $\mathbb{R}^{d}$ such that $F = G \circ U$ and $F \nupVdash
  G$ denotes that $F \not = G \circ U$ for any unitary operator $U$ on
  $\mathbb{R}^{d}$.
\item {\em (Equality, up to scaling)}
  \begin{align*}
H_{0} \colon F \upVdash G \circ c \quad \text{for
    some $c > 0$} \quad  
  \text{against} \quad H_{A} \colon F  \nupVdash G \circ c \quad
  \text{for any  $c > 0$},
  \end{align*}
  where $Y \sim F \circ c$ if $cY \sim F$.
\item {\em (Equality, up to projection)}   \begin{align*}
\quad H_{0} \colon F \circ \pi^{-1} \upVdash G
\circ \pi^{-1} \quad 
  \text{against} \quad H_{A} \colon F \circ \pi^{-1} \nupVdash G \circ \pi^{-1} ,
  \end{align*}
  where $\pi$ is the
  projection $x \mapsto x/\|x\|$; hence $Y \sim F \circ
  \pi^{-1}$ if $\pi^{-1}(Y) \sim F$. 
\end{enumerate}
We note that the above null hypotheses are nested; $F \upVdash G$
implies $F \upVdash G \circ c $ for $c = 1$ while $F \upVdash G
\circ c$ for some $c > 0$ implies $F \circ \pi^{-1} \upVdash G \circ
\pi^{-1}$. 

We shall address the above hypothesis testing problem by combining the framework of adjacency spectral embedding and 
the kernel-based hypothesis testing framework of \cite{gretton12:_kernel_two_sampl_test}. 
The testing procedure in \cite{gretton12:_kernel_two_sampl_test} is based on the following notion of the maximum mean discrepancy between
distributions. 
Let $\Omega$ be a
compact metric space and
$\kappa \, \colon\, \Omega \times \Omega \mapsto \mathbb{R}$ a
continuous, symmetric, and positive definite kernel on
$\Omega$. Denote by $\mathcal{H}$ the reproducing kernel Hilbert space
associated with $\kappa$. Now let $F$ be a probability distribution on
$\Omega$. Under mild conditions on $\kappa$, the map $\mu[F]$ defined
by
\begin{equation*}
  \mu[F] :=  \int_{\Omega} \kappa(\omega, \cdot) \, \mathrm{d} F(\omega)
\end{equation*}
belongs to $\mathcal{H}$.
Now, for given probability distributions $F$ and $G$ on $\Omega$, the
{\em maximum mean discrepancy} between $F$ and $G$ with respect to
$\mathcal{H}$ is the measure
\begin{equation*}
  \mathrm{MMD}(F, G; \mathcal{H}) := \|\mu[F] - \mu[G]
    \|_{\mathcal{H}}. 
\end{equation*}
We now summarize some important properties of the maximum mean
discrepancy from \citep{gretton12:_kernel_two_sampl_test}. 
\begin{theorem}
  \label{thm:mmd_unbiased_limiting}
  Let $\kappa
  \, \colon \,
  \mathcal{X} \times \mathcal{X} \mapsto \mathbb{R}$ be a positive definite
  kernel and denote by $\mathcal{H}$ the reproducing kernel Hilbert space
  associated with $\kappa$. Let $F$ and $G$ be probability distributions on $\Omega$; $X$
  and $X'$ independent random variables with distribution $F$, $Y$ and
  $Y'$ independent random variables with distribution $G$, and $X$ is
  independent of $Y$. Then
  \begin{equation}
    \label{eq:5}
    \begin{split}
    \| \mu[F] - \mu[G] \|^{2}_{\mathcal{H}} &= \sup_{h \in \mathcal{H}
    \colon \|h\|_{\mathcal{H}} \leq 1} |\mathbb{E}_{F}[h] -
    \mathbb{E}_{G}[h]|^{2} \\ &= \mathbb{E}[\kappa(X,X')] - 2 \mathbb{E}[\kappa(X,Y)]
    + \mathbb{E}[\kappa(Y,Y')].
    \end{split}
  \end{equation}
  Given $\mathbf{X} = \{X_i\}_{i=1}^{n}$ and $\mathbf{Y}
  = \{Y_k\}_{k=1}^{m}$ with $\{X_i\}
  \overset{\mathrm{i.i.d}}{\sim} F$ and $\{Y_i\}
  \overset{\mathrm{i.i.d}}{\sim} G$, 
  the quantity $U_{n,m}({\bf X}, {\bf Y})$ defined by
  \begin{equation}
    \label{eq:10}
    \begin{split}
     U_{n,m}({\bf X}, {\bf Y})
 &= \frac{1}{n(n-1)} 
     \sum_{j\not = i} \kappa(X_i,X_j)
    - \frac{2}{mn} \sum_{i=1}^{n} \sum_{k=1}^{m} \kappa(X_i, Y_k) \\ &+
    \frac{1}{m(m-1)} \sum_{l \not = k} \kappa(Y_k, Y_l)
    \end{split}
  \end{equation}
  is an {unbiased consistent estimate} of $\|\mu[F] -
  \mu[G]\|_{\mathcal{H}}^{2}$. Denote by $\tilde{\kappa}$ the kernel
  \begin{equation*}
    \begin{split}
    \tilde{\kappa}(x,y) &= \kappa(x,y) - \mathbb{E}_{z}
    \kappa(x, z) - \mathbb{E}_{z'} \kappa(z', y) +
    \mathbb{E}_{z,z'} \kappa(z,z')
     \end{split}
  \end{equation*}
  where the expectation is taken with respect to $z, z' \sim F$. 
Suppose that $\tfrac{m}{m+n} \rightarrow
  \rho \in (0,1)$ as $m, n \rightarrow \infty$. Then under the null
  hypothesis of $F = G$, 
  \begin{equation}
    \label{eq:mmd-X}
    (m+n) U_{n,m}(\mathbf{X}, \mathbf{Y})
    \overset{d}{\longrightarrow} \frac{1}{\rho(1 - \rho)} \sum_{l=1}^{\infty}
    \lambda_{l} (\chi^{2}_{1l} - 1)
  \end{equation}
  where $\{\chi^{2}_{1l}\}_{l=1}^\infty$ is a sequence of independent $\chi^{2}$
  random variables with one degree of freedom, and $\{\lambda_{l}\}$
  are the eigenvalues of the integral operator $\mathcal{I}_{F,
    \tilde{\kappa}}:\mathcal{H}\mapsto \mathcal{H}$ defined as
  \begin{equation*}
    I_{F, \tilde{\kappa}}(\phi)(x)=\int_{\Omega} \phi(y)\tilde{\kappa}(x,y) dF(y).
    \end{equation*}
    Finally, if $\kappa$ is a universal or
    characteristic kernel \citep{sriperumbudur11:_univer_charac_kernel_rkhs_embed_measur,
    steinwart01:_suppor_vector_machin}, then $\mu$
  is an injective map, i.e., $\mu[F] = \mu[G]$ if and only if $F = G$.
\end{theorem}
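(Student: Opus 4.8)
The plan is to treat the four assertions in turn, since they are largely separable. For the variational/closed-form identity in \eqref{eq:5} the key tool is the reproducing property of $\mathcal{H}$. First I would note that for any $h \in \mathcal{H}$, $\langle h, \mu[F]\rangle_{\mathcal{H}} = \int_\Omega \langle h, \kappa(\omega,\cdot)\rangle_{\mathcal{H}}\, dF(\omega) = \int_\Omega h(\omega)\, dF(\omega) = \mathbb{E}_F[h]$, so that $\mathbb{E}_F[h]-\mathbb{E}_G[h] = \langle h, \mu[F]-\mu[G]\rangle_{\mathcal{H}}$. The supremum of this inner product over the unit ball is, by Cauchy--Schwarz and attained at $h$ parallel to $\mu[F]-\mu[G]$, exactly $\|\mu[F]-\mu[G]\|_{\mathcal{H}}$, which gives the first equality after squaring. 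Expanding $\|\mu[F]-\mu[G]\|_{\mathcal{H}}^2$ bilinearly and invoking the reproducing property once more to write $\langle \mu[F], \mu[G]\rangle_{\mathcal{H}} = \iint \kappa(x,y)\,dF(x)\,dG(y) = \mathbb{E}[\kappa(X,Y)]$ (and analogously for the two like-terms) yields the closed form in terms of expectations.

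For unbiasedness of the estimator in \eqref{eq:10} I would take expectations summand by summand, using that each term is the kernel evaluated at independent draws: $\mathbb{E}[\kappa(X_i,X_j)] = \mathbb{E}[\kappa(X,X')]$ for $i \neq j$, $\mathbb{E}[\kappa(X_i,Y_k)] = \mathbb{E}[\kappa(X,Y)]$, and $\mathbb{E}[\kappa(Y_k,Y_l)] = \mathbb{E}[\kappa(Y,Y')]$ for $k \neq l$. Matching these against the closed form already established gives $\mathbb{E}[U_{n,m}] = \|\mu[F]-\mu[G]\|_{\mathcal{H}}^2$. Consistency then follows from the law of large numbers for (two-sample) $U$-statistics, together with the boundedness and continuity of $\kappa$ on the compact space $\Omega$.

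The substantive part is the limiting null distribution in \eqref{eq:mmd-X}, and this is where I expect the real work. Under $H_0\colon F = G$ the statistic $U_{n,m}$ is a \emph{degenerate} two-sample $U$-statistic: its first-order (H\'ajek) projection vanishes because $\mathbb{E}_z[\tilde{\kappa}(x,z)] = 0$ by construction of the centered kernel. The strategy is spectral. Since $\Omega$ is a compact metric space and $\kappa$ (hence $\tilde{\kappa}$) is continuous, symmetric, and positive definite, Mercer's theorem supplies an expansion $\tilde{\kappa}(x,y) = \sum_{l\geq 1} \lambda_l \phi_l(x)\phi_l(y)$ with $\{\phi_l\}$ orthonormal in $L^2(F)$ and $\{\lambda_l\}$ the eigenvalues of $\mathcal{I}_{F,\tilde{\kappa}}$. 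Substituting this expansion into $U_{n,m}$ and scaling by $(m+n)$ re-expresses the statistic in terms of the normalized empirical sums $\tfrac{1}{\sqrt{n}}\sum_i \phi_l(X_i)$ and $\tfrac{1}{\sqrt{m}}\sum_k \phi_l(Y_k)$; because $\mathbb{E}_F[\phi_l]=0$ and $\mathbb{E}_F[\phi_l^2]=1$, the multivariate CLT sends these to independent standard Gaussians, and the degenerate quadratic structure of $U_{n,m}$ turns the $l$-th spectral contribution into $\tfrac{\lambda_l}{\rho(1-\rho)}(\chi^2_{1l}-1)$ in the limit $m/(m+n)\to\rho$. The main obstacle is not this finite-dimensional convergence but the passage to the infinite sum: one must show that the contribution of the eigenfunctions beyond any truncation level $L$ is asymptotically negligible, uniformly in $n$ and $m$. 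This requires a tail bound on $\sum_{l>L}\lambda_l$ (finite because $\tilde{\kappa}$ is trace-class on the compact domain) together with a second-moment estimate controlling the truncation remainder, after which a standard Slutsky/truncation argument interchanges the limits $L \to \infty$ and $n,m\to\infty$.

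Finally, for injectivity of $\mu$ under a characteristic or universal kernel I would argue by density. If $\mu[F]=\mu[G]$, then by the reproducing identity $\mathbb{E}_F[h]=\mathbb{E}_G[h]$ for every $h\in\mathcal{H}$; universality of $\kappa$ means $\mathcal{H}$ is dense in $C(\Omega)$ in the supremum norm, so this equality extends to all continuous functions on the compact space $\Omega$, whence $F=G$ by the Riesz representation theorem. For a kernel that is merely characteristic this conclusion is essentially the definition, and I would cite \cite{sriperumbudur11:_univer_charac_kernel_rkhs_embed_measur, steinwart01:_suppor_vector_machin} for the implication that universal kernels are characteristic.
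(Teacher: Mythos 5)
The paper does not actually prove this theorem: it is presented as a summary of known properties of the maximum mean discrepancy, imported verbatim from the kernel two-sample testing literature (Gretton et al.), and serves purely as background for Theorem~\ref{thm:mmd_unbiased_ase}. So there is no in-paper proof to compare against; the relevant comparison is with the cited source, and your reconstruction follows essentially the same route that the standard reference does. The reproducing-property/Cauchy--Schwarz argument for Eq.~\eqref{eq:5}, the term-by-term expectation for unbiasedness, the Mercer expansion of the centered kernel $\tilde{\kappa}$ combined with the degeneracy of the first-order projection under $H_0$, and the density argument for injectivity under a universal kernel are all correct and are the standard ingredients. Two points deserve more care if you were to write this out in full. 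First, the $-1$ in each summand $\chi^2_{1l}-1$ arises from the exclusion of the diagonal terms $i=j$ and $k=l$ in the two within-sample sums of $U_{n,m}$, together with $\mathbb{E}_F[\phi_l^2]=1$; your sketch attributes the limit to ``the degenerate quadratic structure'' without exhibiting this, and the algebra that collapses the three quadratic forms $\tfrac{1}{1-\rho}(a_l^2-1)+\tfrac{1}{\rho}(b_l^2-1)-\tfrac{2}{\sqrt{\rho(1-\rho)}}a_l b_l$ into $\tfrac{1}{\rho(1-\rho)}(c_l^2-1)$ for a single standard Gaussian $c_l$ should be displayed, since that is where the stated constant comes from. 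Second, the truncation step you correctly identify as the main obstacle is genuinely the technical heart of the proof: one needs $\mathbb{E}_F[\phi_l]=0$ for every eigenfunction with nonzero eigenvalue (which follows from $\mathbb{E}_z[\tilde{\kappa}(x,z)]=0$), a uniform second-moment bound on the tail contribution controlled by $\sum_{l>L}\lambda_l$ (finite since $\tilde{\kappa}$ is continuous on a compact space, hence trace class), and a limit-interchange theorem to pass from the truncated to the full spectral sum. Your outline names all of these but executes none, which is acceptable for a sketch of a cited result but would need to be filled in for a self-contained proof.
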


\begin{remark}
  A kernel $\kappa \colon \mathcal{X} \times \mathcal{X} \mapsto
  \mathbb{R}$ is universal if $\kappa$ is a continuous function of
  both its arguments and if the reproducing kernel Hilbert space
  $\mathcal{H}$ induced by $\kappa$ is dense in the space of
  continuous functions on $\mathcal{X}$ with respect to the supremum
  norm. Let $\mathcal{M}$ be a family of Borel probability measures on
  $\mathcal{X}$. A kernel $\kappa$ is characteristic for $\mathcal{M}$
  if the map $\mu \in \mathcal{M} \mapsto \int \kappa(\cdot, z) \mu(
  dz)$ is injective. If $\kappa$ is universal, then $\kappa$ is
  characteristic for any $\mathcal{M}$
  \citep{sriperumbudur11:_univer_charac_kernel_rkhs_embed_measur}. As
  an example, let $\mathcal{X}$ be a finite dimensional Euclidean
  space and define, for any $q \in (0,2)$, $k_{q}(x,y) =
  \tfrac{1}{2}(\|x\|^{q} + \|y\|^{q} - \|x - y\|^{q})$. The kernels
  $k_{q}$ are then characteristic for the collection of probability
  distributions with finite second moments
  \citep{lyons11:_distan,sejdinovic13:_equiv_rkhs}. In addition, by
  Eq.~\eqref{eq:5}, the maximum mean discrepancy with reproducing
  kernel $k_{q}$ can be written as
\begin{equation*}
  \mathrm{MMD}^{2}(F,Q;k_{q}) = 2 \mathbb{E} \|X - Y\|^{q} - 
  \mathbb{E}\|X - X'\|^{q} - \mathbb{E} \|Y - Y'\|^{q}.
\end{equation*}
where $X, X'$ are
independent with distribution $F$, $Y, Y'$ are independent with
distribution $G$, and $X, Y$ are independent.  This coincides with the
notion of the energy distances of \citep{szekely13:_energ}, or, when $q
= 1$, a special case of the one-dimensional interpoint comparisons
of \citep{maa96:_reduc}. Finally, we note that $(m+n)U_{n,m}(\mathbf{X}, \mathbf{Y})$ under the null hypothesis of
  $F = G$ in Theorem~\ref{thm:mmd_unbiased_limiting} depends
  on the $\{\lambda_l\}$ which, in turn, depend on the distribution
  $F$; thus the limiting distribution is not distribution-free.
  Moreover the eigenvalues $\{\lambda_l\}$ can, at best, be estimated;
  for finite $n$, they cannot be explicitly determined when $F$ is
  unknown. In practice, generally the critical values are estimated
  through a bootstrap resampling or permutation test.
\end{remark}

We focus on the nonparametric two-sample hypothesis test of $\mathbb{H}_0 \colon F \upVdash G$ against $\mathbb{H}_A \colon F \nupVdash G$.
For our purposes, we shall assume
henceforth that $\kappa$ is a twice continuously-differentiable radial
kernel and that $\kappa$ is also universal. 
To justify this assumption on our kernel, we point out that in Theorem~\ref{thm:mmd_unbiased_ase} below, we show that the test
statistic $U_{n,m}(\hat{\mathbf{X}}, \hat{\mathbf{Y}})$ based on the
estimated latent positions converges to the corresponding statistic
$U_{n,m}(\mathbf{X}, \mathbf{Y})$ for the true but unknown latent
positions. Due to the non-identifiability of the random dot product
graph under unitary transformation, \emph{any} estimate of the latent
positions is close, only up to an appropriate orthogonal transformation, to
$\mathbf{X}$ and $\mathbf{Y}$. For a radial kernel $\kappa$, this implies the approximations
$\kappa(\hat{X}_i, \hat{X}_j) \approx \kappa(X_i, X_j)$,
$\kappa(\hat{Y}_k,\hat{Y}_l) \approx \kappa(Y_k, Y_l)$ and the
convergence of $U_{n,m}(\hat{\mathbf{X}}, \hat{\mathbf{Y}})$ to
$U_{n,m}(\mathbf{X}, \mathbf{Y})$. If $\kappa$ is
not a radial kernel, the above approximations might not hold and
$U_{n,m}(\hat{\mathbf{X}}, \hat{\mathbf{Y}})$ need not converge to
$U_{n,m}(\mathbf{X}, \mathbf{Y})$. The assumption that
$\kappa$ is twice continuously-differentiable is technical. Finally, the
assumption that $\kappa$ is universal allows the test procedure to be
consistent against a large class of alternatives.

 \begin{theorem}
  \label{thm:mmd_unbiased_ase}
  Let $(\mathbf{X}, \mathbf{A}) \sim \mathrm{RDPG}(F)$ and
  $(\mathbf{Y}, \mathbf{B}) \sim \mathrm{RDPG}(G)$ be independent
  random dot product graphs with latent position distributions $F$ and
  $G$. Furthermore, suppose that both $F$ and $G$ satisfies the
  distinct eigenvalues condition in Assumption~\ref{ass:rank_F}.
  Consider the hypothesis test
  \begin{align*}
    H_{0} \colon F \upVdash G \quad 
    \text{against} \quad H_{A} \colon F  \nupVdash G.
  \end{align*}
  Denote by $\hat{\mathbf{X}} = \{\hat{X}_1, \dots, \hat{X}_n\}$
  and $\hat{\mathbf{Y}} = \{\hat{Y}_1, \dots, \hat{Y}_m\}$ the
  adjacency spectral embedding of $\mathbf{A}$ and
  $\mathbf{B}$, respectively. 
  Let $\kappa$ be a twice continuously-differentiable radial kernel and $U_{n,m}(\hat{\mathbf{X}},\hat{\mathbf{Y}})$ be defined as
\begin{equation*}
  \begin{split}
    U_{n,m}(\hat{\mathbf{X}}, \hat{\mathbf{Y}}) &= \frac{1}{n(n-1)}
    \sum_{j \not = i}
    \kappa(\hat{X}_i,
    \hat{X}_j) - \frac{2}{mn} \sum_{i=1}^{n}
    \sum_{k=1}^{m} \kappa(\hat{X}_i,
    \hat{Y}_k) + \frac{1}{m(m-1)} \sum_{l \not = k} \kappa(\hat{Y}_k, \hat{Y}_l).
  \end{split}
\end{equation*}
  Let $\mathbf{W}_1$
  and $\mathbf{W}_{2}$
  be $d \times d$ orthogonal matrices in the eigendecomposition
  $\mathbf{W}_1 \mathbf{S}_1 \mathbf{W}_{1}^{\top} = \mathbf{X}^{\top}
  \mathbf{X}$, $\mathbf{W}_{2} \mathbf{S}_{2} \mathbf{W}_{2} =
  \mathbf{Y}^{\top} \mathbf{Y}$, respectively.  
Suppose that $m, n \rightarrow \infty$
  and $m/(m+n) \rightarrow \rho \in (0,1)$. Then under the null
  hypothesis of $F \upVdash G$, the sequence of matrices $\mathbf{W}_{n,m} = \mathbf{W}_{2} \mathbf{W}_{1}^{\top}$ satisfies
  \begin{equation}
    \label{eq:conv_mmdXhat_null}
    (m+n) (U_{n,m}(\hat{\mathbf{X}}, \hat{\mathbf{Y}}) -
    U_{n,m}(\mathbf{X}, \mathbf{Y} \mathbf{W}_{n,m})) \overset{\mathrm{a.s.}}{\longrightarrow} 0.
  \end{equation}
  Under the alternative hypothesis of $F \nupVdash G$, the sequence of
  matrices ${\bf W}_{n,m} $ satisfies
 \begin{equation}
   \label{eq:conv_mmdXhat_alt}
   \frac{m+n}{\log^2{\!(m+n)}} 
   (U_{n,m}(\hat{\mathbf{X}}, \hat{\mathbf{Y}}) - U_{n,m}(\mathbf{X},
   \mathbf{Y} \mathbf{W}_{n,m})) \overset{\mathrm{a.s.}}{\longrightarrow} 0.
  \end{equation}
\end{theorem}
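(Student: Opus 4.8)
The plan is to expand $U_{n,m}$ term by term, control each summand by a second-order Taylor expansion of $\kappa$ about the true latent positions, and then exploit two structural facts: the radial (hence rotation-invariant) form of $\kappa$, which neutralizes the orthogonal non-identifiability of the two embeddings, and the $2\to\infty$ consistency bound of Theorem~\ref{thm:minh_sparsity}, which in the i.i.d.\ dense regime relevant here (where $\delta(\mathbf{P}_n)\sim n$) gives $\max_i\|\hat{X}_i-\mathbf{W}\mathbf{X}_i\|=O(\log^2 n/\sqrt n)$ almost surely; Assumption~\ref{ass:rank_F} further guarantees that the limiting embedding rotations are well defined. First I would write $\hat{X}_i=\bar{\mathbf{W}}_X X_i+\xi_i$ and $\hat{Y}_k=\bar{\mathbf{W}}_Y Y_k+\zeta_k$, where $\bar{\mathbf{W}}_X,\bar{\mathbf{W}}_Y$ are the ASE rotations and $\xi_i,\zeta_k$ are the perturbations; the finer split behind Theorem~\ref{thm:minh_sparsity} furnishes the leading stochastic form $\xi_i\approx\tfrac1n\sum_a(\mathbf{A}_{ia}-\mathbf{P}_{ia})\Delta^{-1}X_a$ plus a residual of strictly smaller order. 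Because $\kappa$ is radial, $\kappa(\hat{X}_i,\hat{X}_j)=\kappa(\bar{\mathbf{W}}_X^\top\hat{X}_i,\bar{\mathbf{W}}_X^\top\hat{X}_j)$, so the two within-sample sums reduce to perturbations of $\kappa(X_i,X_j)$ and $\kappa(Y_k,Y_l)$, while the cross sum reduces to $\kappa(X_i,\bar{\mathbf{W}}_X^\top\bar{\mathbf{W}}_Y Y_k)$; the matrix $\mathbf{W}_{n,m}=\mathbf{W}_2\mathbf{W}_1^\top$ is exactly the rotation reconciling the two independent embedding frames, so that $U_{n,m}(\mathbf{X},\mathbf{Y}\mathbf{W}_{n,m})$ is precisely the zeroth-order term of the expansion.

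The central calculation is to collect, across the three sums, the coefficient attached to each $\xi_i$ (and symmetrically each $\zeta_k$). The first-order coefficient of $\xi_i$ equals, up to a factor $2/n$, the difference $\tfrac{1}{n-1}\sum_{j\neq i}\nabla_1\kappa(X_i,X_j)-\tfrac1m\sum_k\nabla_1\kappa(X_i,Y_k)$ between an empirical $F$-average and an empirical $G$-average; likewise the quadratic coefficient of $\xi_i^\top(\cdot)\xi_i$ is $\tfrac1n$ times the empirical version of $\mathbb{E}_F[H(X_i,\cdot)]-\mathbb{E}_G[H(X_i,\cdot)]$, with $H$ the relevant Hessian block. \emph{Under the null} $F\upVdash G$ these population differences vanish, so each coefficient is only the $O(n^{-1/2})$ sampling fluctuation of the two empirical means; combined with $\|\xi_i\|=O(\log^2 n/\sqrt n)$ and the mean-zero concentration of $\sum_i(\text{coeff})^\top\xi_i$ (a quadratic form in the independent edge variables $\mathbf{A}_{ia}-\mathbf{P}_{ia}$, controllable by a Hoeffding/Bernstein argument as in the proof of Theorem~\ref{thm:minh_sparsity}), the first- and second-order contributions are $o(n^{-1})$, so multiplying by $(m+n)\sim n$ still yields a null sequence. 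The genuine $\xi_i\xi_j$ cross terms ($i\neq j$) are treated separately: by the near-independence of $\xi_i$ and $\xi_j$ their expectation is $O(n^{-2})$ per pair and they are negligible after scaling.

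Under the \emph{alternative} no such cancellation occurs, so the coefficients are $\Theta(1/n)$ and the difference $U_{n,m}(\hat{\mathbf{X}},\hat{\mathbf{Y}})-U_{n,m}(\mathbf{X},\mathbf{Y}\mathbf{W}_{n,m})$ is only $O(\log^2(m+n)/(m+n))$; this is the source of the extra $\log^2(m+n)$ factor in the denominator of the second display, and it is harmless because $U_{n,m}(\mathbf{X},\mathbf{Y}\mathbf{W}_{n,m})$ itself converges to $\mathrm{MMD}^2(F,G)>0$ by Theorem~\ref{thm:mmd_unbiased_limiting}, so the perturbation is asymptotically negligible relative to the signal. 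The almost-sure convergence in both displays then follows by upgrading each high-probability bound via Borel--Cantelli, exactly as in Definition~\ref{def:whp}.

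The hard part will be the sharp null rate. A naive Lipschitz bound gives only $|U_{n,m}(\hat{\mathbf{X}},\hat{\mathbf{Y}})-U_{n,m}(\mathbf{X},\mathbf{Y}\mathbf{W}_{n,m})|=O(\log^2 n/\sqrt n)$, which after multiplication by $(m+n)\sim n$ \emph{diverges}; the whole argument hinges on recognizing that the leading perturbation terms carry coefficients equal to a difference of $F$- and $G$-expectations that vanishes under $H_0$, and on propagating the leading-term/residual split of Theorem~\ref{thm:minh_sparsity} with enough precision that the discarded residual, the Hessian remainder, and the $\xi_i\xi_j$ interactions are each provably $o((m+n)^{-1})$ rather than merely $O((m+n)^{-1})$.
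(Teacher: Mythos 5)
This survey does not actually prove Theorem~\ref{thm:mmd_unbiased_ase}: it states the result and defers entirely to \cite{tang14:_nonpar}, so there is no in-paper proof to compare against. Judged on its own terms, your proposal is a correct reconstruction of the argument used in that reference: a second-order Taylor expansion of the radial kernel about the (rotated) true latent positions, the observation that under $H_0$ the coefficient attached to each perturbation $\xi_i$ — a difference of an empirical $F$-average and an empirical $G$-average of $\nabla_1\kappa$ (and of the Hessian at second order) — has vanishing population mean and is therefore only an $O(\sqrt{\log n/n})$ fluctuation, and the use of the leading-term/residual split behind Theorem~\ref{thm:minh_sparsity} so that $\sum_i (\text{coeff})^{\top}\xi_i$ becomes a centered polynomial in the independent edge variables amenable to Hoeffding-type concentration. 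You are also right that this cancellation is not cosmetic: without it the diagonal Hessian contribution is $\tfrac{1}{n}\|\hat{\bX}-\bX\bW\|_F^2 = \Theta(1/n)$ (by Theorem~\ref{thm:conc_Xhat_X} the Frobenius norm concentrates at a positive constant), which would survive multiplication by $(m+n)$ and defeat the null rate; and your accounting of why the alternative only needs the weaker $\tfrac{m+n}{\log^2(m+n)}$ normalization is correct.

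Two places deserve tightening. First, for the bilinear terms $\xi_i^{\top}\nabla^2_{12}\kappa(X_i,X_j)\xi_j$ with $i \neq j$, showing the \emph{expectation} is $O(n^{-2})$ per pair does not by itself give almost-sure convergence of the sum after scaling by $(m+n)$; you need a concentration inequality for the resulting degenerate quadratic form in the centered edge variables (a Hanson--Wright-type bound or a direct fourth-moment computation), since a naive triangle-inequality bound using $\max_i\|\xi_i\| = O(\log^2 n/\sqrt{n})$ gives only $O(\log^4 n / n)$, which diverges after rescaling. Second, the almost-sure statements require that the high-probability failure events have summable probabilities so that Borel--Cantelli applies; this is available from the $\text{w.h.p.}$ formulation in Definition~\ref{def:whp} (choose $c_0 > 1$), but it should be invoked explicitly at each stage, including for the empirical-mean fluctuations $\epsilon_i$ which are functions of the latent positions rather than the edges. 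With those two points supplied, your outline is a faithful and complete skeleton of the proof.
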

Eq.\eqref{eq:conv_mmdXhat_null} and Eq.\eqref{eq:conv_mmdXhat_alt}
state that the test statistic $U_{n,m}(\hat{\mathbf{X}},
\hat{\mathbf{Y}})$ using the {\em estimated} latent positions is
almost identical to the statistic $U_{n,m}(\mathbf{X}, \mathbf{Y} \mathbf{W}_{n,m})$
using the true latent positions, under
both the null and alternative hypothesis. If we assume that $\kappa$ is a universal
kernel, then $U_{n,m}(\mathbf{X}, \mathbf{Y} \mathbf{W}_{n,m})$ converges
to $0$ under the null and converges to a positive number under the
alternative. The test statistic $U_{n,m}(\hat{\mathbf{X}},
\hat{\mathbf{Y}})$ therefore yields a test procedure that is
consistent against any alternative, provided that both $F$ and $G$
satisfy Assumption~\ref{ass:rank_F}, namely that the second moment
matrices have $d$ distinct eigenvalues. 

We next consider the case of testing the hypothesis that the
distributions $F$ and $G$ are equal up to scaling or equal up to projection. For the test of equality up to scaling, 
  \begin{align*}
\quad H_{0} \colon F \upVdash G \circ c \quad \text{for
    some $c > 0$}
  \quad \text{against} \quad H_{A} \colon F  \nupVdash G \circ c \quad
  \text{for any  $c > 0$},
  \end{align*}
  where $Y \sim F \circ c$ if $cY \sim F$, 
we modified 
Theorem~\ref{thm:mmd_unbiased_ase} by first scaling the adjacency
spectral embeddings by the norm of the empirical means before
computing the kernel test statistic. 
In particular, let 
\begin{align*}
  \hat{s}_{X}=n^{-1/2}\|\hat{\mathbf{X}}\|_{F}, \quad \hat{s}_{Y}=m^{-1/2}
    \|\hat{\mathbf{Y}}\|_{F}, \quad 
      s_{X}= n^{-1/2} \|\mathbf{X}\|_{F},\quad s_{Y} =
    m^{-1/2} \|\mathbf{Y}\|_{F},
  \end{align*}
  then the conclusions of Theorem~\ref{thm:mmd_unbiased_ase} hold
  when we replace $U_{n,m}(\hat{\mathbf{X}}, \hat{\mathbf{Y}})$ and $U_{n,m}(\mathbf{X}, \mathbf{Y} \mathbf{W}_{n,m})$ with 
  $U_{n,m}(\hat{\mathbf{X}}/\hat{s}_{X},\hat{\mathbf{Y}}/\hat{s}_{Y})$ and 
  $U_{n,m}(\mathbf{X}/s_{X},\mathbf{Y} \mathbf{W}_{n,m}/s_{Y})$, respectively.  
  
  For the test of equality up to projection, 
  \begin{align*}
\quad H_{0} \colon F \circ \pi^{-1} \upVdash G
\circ \pi^{-1} \quad \text{against} \quad H_{A} \colon F \circ \pi^{-1} \nupVdash G \circ \pi^{-1} ,
  \end{align*}
  where $\pi$ is the
  projection $x \mapsto x/\|x\|$ that maps $x$ onto the unit sphere
  in $\mathbb{R}^{d}$, 
  the conclusions of
Theorem~\ref{thm:mmd_unbiased_ase} hold when we replace 
$U_{n,m}(\hat{\mathbf{X}}, \hat{\mathbf{Y}})$ and $U_{n,m}(\mathbf{X}, \mathbf{Y} \mathbf{W}_{n,m})$ with
$U_{n,m}(\pi(\hat{\mathbf{X}}), \pi(\hat{\mathbf{Y}}))$ and $U_{n,m}(\pi(\mathbf{X}),\pi(\mathbf{Y})
\mathbf{W}_{n,m})$, respectively, provided that 
that $0$ is not an atom of either $F$ or $G$ i.e., $F(\{0\}) = G(\{0\}) = 0$. 
The assumption that $0$ is not an atom is necessary,
because otherwise the problem is possibly ill-posed: specifically, $\pi(0)$ is
undefined. To contextualize the test of equality up to projection,
consider the very specific case of the degree-corrected stochastic
blockmodel \cite{karrer2011stochastic}. A degree-corrected stochastic
blockmodel can be regarded as a random dot product graph whose latent
position $X_v$ for an arbitrary vertex $v$ is of the form $X_v =
\theta_v \nu_{v}$ where $\nu_v$ is sampled from a mixture of point
masses and $\theta_v$ (the degree-correction factor) is sampled from a distribution on
$(0,1]$. Thus, given two degree-corrected stochastic blockmodel
graphs, equality up to projection tests whether the
underlying mixtures of point masses (that is, the distribution of the
$\nu_v$) are the same modulo the distribution of the degree-correction
factors $\theta_v$.



\section{Applications}\label{sec:Applications}
We begin this section by first presenting an application of the two-sample semiparametric test procedure in Section~\ref{subsec:Testing}, demonstrating how it
can be applied to compare data from a collection of neural images.
\subsection{Semiparametric testing for brain scan data}\label{subsec:semipar_cci}
We consider neural imaging graphs obtained from the test-retest
diffusion MRI and magnetization-prepared rapid acquisition gradient echo (MPRAGE) data of \cite{landman_KKI}. The raw data consist of 42 images: namely, one pair of neural
images from each of 21 subjects. These images are generated, in part, for the
purpose of evaluating scan-rescan reproducibility of the
MPRAGE image protocol. Table 5 from \cite{landman_KKI} indicates that the
variability of MPRAGE is quite small; specifically, the cortical gray
matter, cortical white matter, ventricular cerebrospinal fluid,
thalamus, putamen, caudate, cerebellar gray matter, cerebellar white
matter, and brainstem were identified with mean volume-wise
reproducibility of $3.5\%$, with the largest variability being that of
the ventricular cerebrospinal fluid at $11\%$.  

We use the MIGRAINE pipeline of \cite{gray_migraine} to convert these scans into spatially-aligned graphs, i.e., graphs in which each vertex corrresponds to a particular voxel in a reference coordinate system to which the image is registered. We first consider a
collection of small graphs on seventy vertices that are generated from an atlas of
seventy brain regions and the fibers connecting them.  Given these
graphs, we proceed to investigate the similarities and dissimilarities
between the scans.  We first embed each graph into
$\mathbb{R}^{4}$. We then test the hypothesis of equality up to
rotation between the graphs. Since Theorem~\ref{thm:identity} is a large-sample result, the rejection region specified therein might be excessively conservative for the graphs on $n = 70$ vertices in our current context. We remedy this issue by using the rejection region and $p$-values reported by the parametric bootstrapping procedure presented in Algorithm~\ref{bootstrap-simple}.

\begin{algorithm}[tb]
	\caption{Bootstrapping procedure for the test $\mathbb{H}_0 \colon \mathbf{X} =_{W} \mathbf{Y}$.} 
	\label{bootstrap-simple} 
	\begin{algorithmic}[1]
		\Procedure{Bootstrap}{$\mathbf{X}, T, bs$} \Comment{Returns the p-value associated with $T$.}
		\State $d \gets \mathrm{ncol}(\mathbf{X})$; \quad $\mathcal{S}_{X} \gets \emptyset$ \Comment{Set $d$ to be the number of columns of $\mathbf{X}$.}
		\For{$b \gets 1 \colon bs$}
		\State $\mathbf{A}_b \gets \mathrm{RDPG}(\hat{\mathbf{X}}); \quad \mathbf{B}_b \gets \mathrm{RDPG}(\hat{\mathbf{X}})$
		\State $\hat{\mathbf{X}}_b \gets \mathrm{ASE}(\mathbf{A}_b, d); \quad \hat{\mathbf{Y}}_b \gets \mathrm{ASE}(\mathbf{B}_b, d)$
		\State $T_b \gets \min_{\mathbf{W}} \| \hat{\mathbf{X}}_{b} - \hat{\mathbf{Y}}_{b} \mathbf{W} \|_{F}; \qquad \mathcal{S}_{X} \gets \mathcal{S}_X \cup T_b$    
		\EndFor
		\State \Return $p \gets (|\{s \in \mathcal{S}_X \colon s \geq T\}| + 0.5)/bs$ \Comment{Continuity correction.} 
		\EndProcedure \\
		\State $\hat{\mathbf{X}} \gets \mathrm{ASE}(\mathbf{A}, d)$; $\hat{\mathbf{Y}} \gets \mathrm{ASE}(\mathbf{B}, d)$ \Comment{The embedding dimension $d$ is assumed given.}
		\State $T \gets \min_{\mathbf{W}} \|\hat{\mathbf{X}} - \hat{\mathbf{Y}} \mathbf{W} \|_{F}$
		\State $p_{X} \gets \mathrm{Bootstrap}(\hat{\mathbf{X}}, T, bs)$ \Comment{The number of bootstrap samples $bs$ is assumed given.}
		\State $p_{Y} \gets \mathrm{Bootstrap}(\hat{\mathbf{Y}}, T, bs)$ 
		\State $p = \max\{p_X, p_Y\}$ \Comment{Returns the maximum of the two p-values.}
	\end{algorithmic}
\end{algorithm}

The pairwise comparisons between the $42$ graphs are
presented in Figure~\ref{fig:kki-small-identity}. Figure~\ref{fig:kki-small-identity} indicates that, in general, the
test procedure fails to reject the null hypothesis when the two graphs
are for the same subject.  This is consistent with the reproducibility
finding of \cite{landman_KKI}. Furthermore, this outcome is also
intuitively plausible; in addition to failing to reject when two scans
are from the same subject, we also frequently {\em do} reject the null
hypothesis when the two graphs are from scans of different
subjects. Note that our analysis is purely exploratory; as such, we do
not grapple with issues of multiple comparisons here.

\begin{figure}[h]
	\centering
	\includegraphics[width=0.65\textwidth]{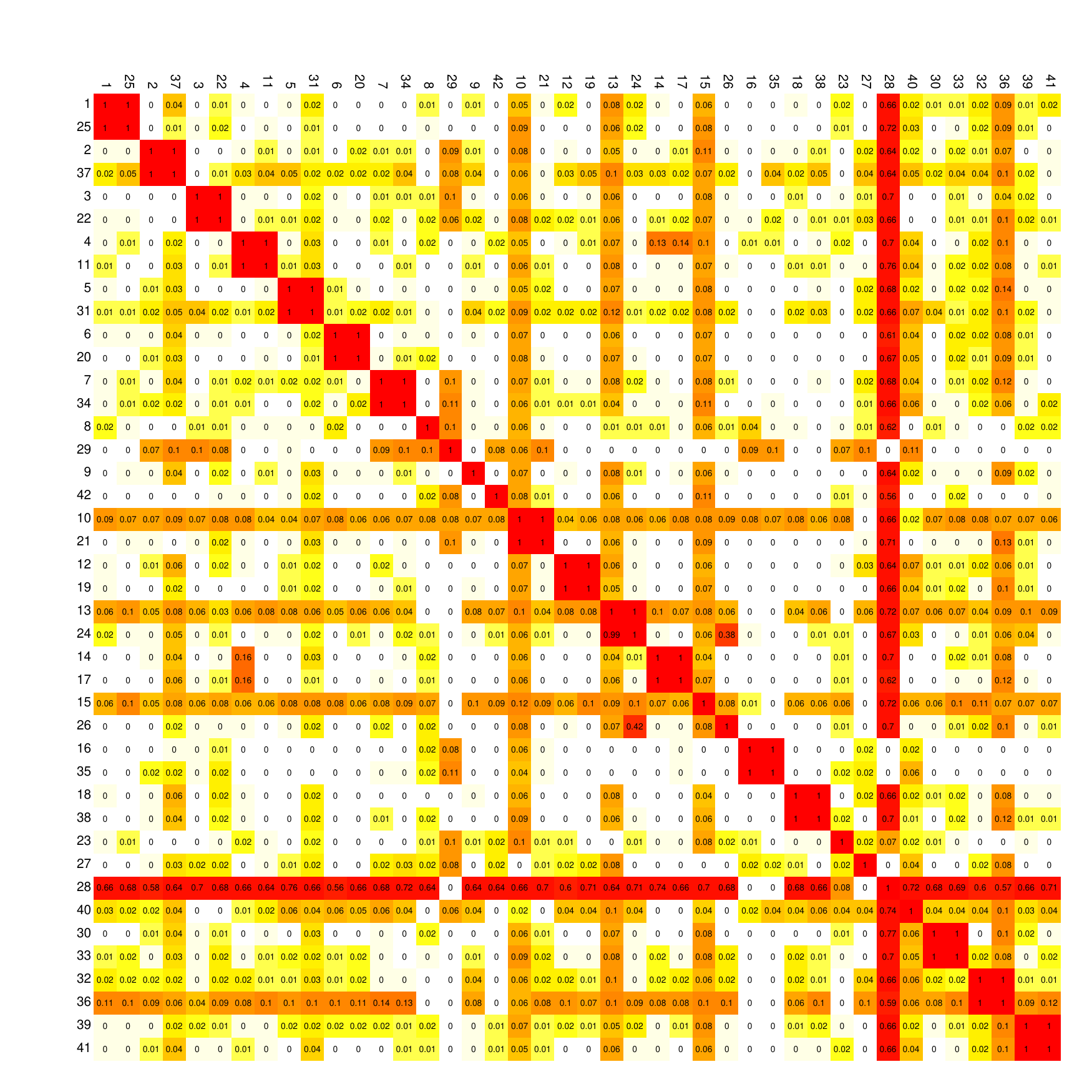}
	\caption{Matrix of p-values (uncorrected) for testing the hypothesis
		$\mathbb{H}_0 \colon \mathbf{X} =_{W} \mathbf{Y}$ for the $42
		\times 41/2$ pairs of graphs generated from the KKI test-retest
		dataset \cite{landman_KKI}. The labels had been arranged so
		that the pair $(2i-1,2i)$ correspond to scans from the same
		subject. The $p$-values are color coded to vary in intensity from
		white ($p$-value of $0$) to dark red ($p$-value of $1$). Figure duplicated from \cite{tang14:_semipar}.}
	\label{fig:kki-small-identity}
\end{figure}

Finally, we note that similar results also hold when we consider the large graphs generated from
these test-retest data through the MIGRAINE pipeline. In particular, for each magnetic resonance scan, the MIGRAINE
pipeline can generates graphs with upto $10^7$ vertices and $10^{10}$
edges with the vertices of all the graphs aligned. Bootstrapping the test statistics for these large
graphs present some practical difficulties; one procedure proposed in \cite{tang14:_semipar} is based on bootstrapping disjoint induced subgraphs of the original graphs using the bootstrapping procedure in Algorithm~\ref{bootstrap-simple} and combining the resulting $p$-values using Fisher's combined probability tests \cite{mosteller48:_quest_answer}. 

\subsection{Community detection and classification in hierarchical models}\label{subsec:HSBM}
In disciplines as diverse as social
network analysis and neuroscience, many large graphs are believed to
be composed of loosely connected communities and/orsmaller graph primitives, whose
structure is more amenable to analysis.
We emphasize that while the problem of community detection is very well-studied and there are an abundance of community detection algorithms, these algorithms have focused 
mostly on uncovering the subgraphs. Recently, however, the characterization and further
{\em classification} of these subgraphs into stochastically similar motifs
has emerged as an important area of ongoing research
The nonparametric two-sample hypothesis testing procedure in Section~\ref{subsec:nonpar} can be used in conjunction with spectral community detection algorithms to yield a robust, scalable, integrated methodology for {\em community detection} and
{\em community comparison} in graphs \cite{lyzinski15_HSBM}.

The notion of {\em hierarchical stochastic block model}---namely, a graph consisting of densely connected subcommunities which are themselves stochastic block models, with this structure iterated repeatedly---is precisely formulated in \cite{lyzinski15_HSBM}.  In that work, a novel angle-based clustering method is introduced, and this clustering method allows us to isolate appropriate subgraphs. We emphasize that the angle-based clustering procedure in \cite{lyzinski15_HSBM} is designed to identify a particular affinity structure within our hierarchical block model graph. Figure \ref{fig:hsbm_cluster} illustrates how  an angle-based clustering may differ from a $k$-means clustering.
\begin{figure}[tph]
	\centering
	\includegraphics[width=0.25\textwidth]{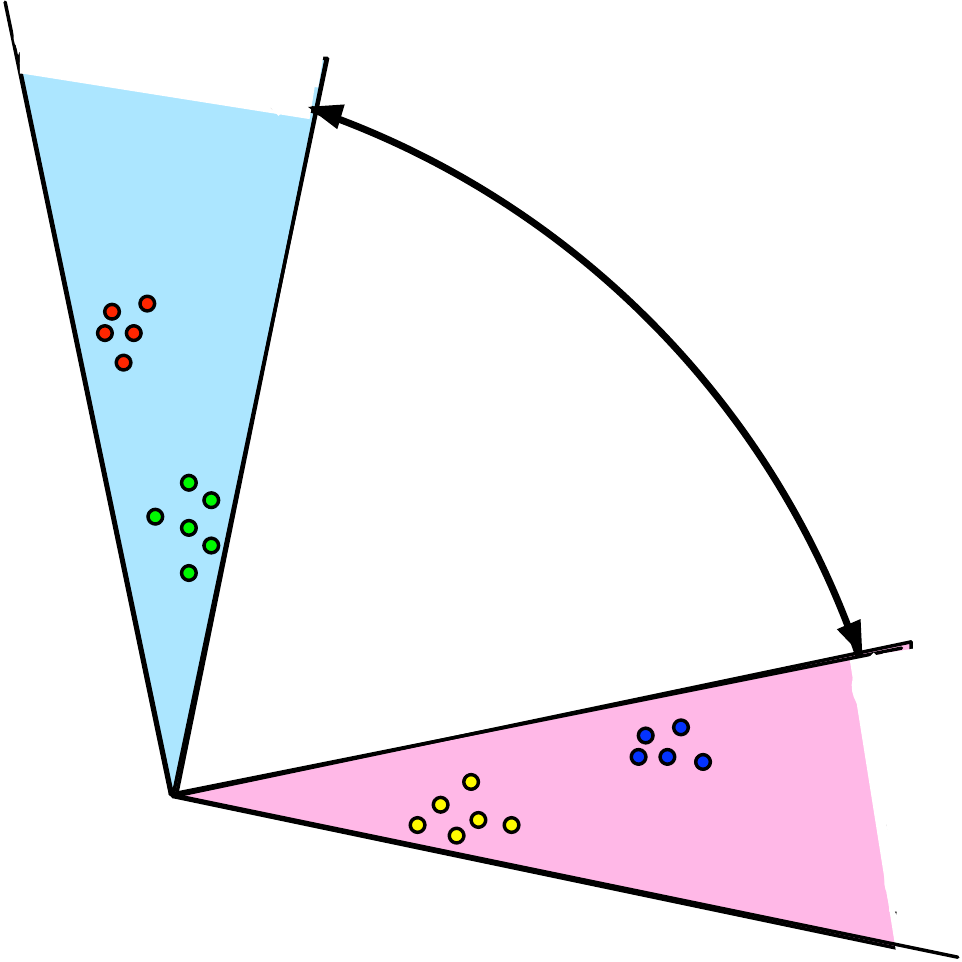}
	\caption{Subgraphs vs. angle-based clustering:  Note that if the fraction of points in the pink cone is sufficiently large, $K$-means clustering (with $K=2)$ will not cluster the vertices in the hierarchical SBM affinity model of \cite{lyzinski15_HSBM} into the appropriate subgraphs. Figure duplicated from \cite{lyzinski15_HSBM}. }
	\label{fig:hsbm_cluster}
\end{figure}

Our overall community detection algorithm is summarized in Algorithm~\ref{alg:main}.
As an illustrative example of this methodology, we present an analysis of the
communities in the Friendster social network. The Friendster social network
contains roughly $60$ million users and $2$ billion
connections/edges.
In addition, there are roughly $1$ million communities at the local scale.
Because we expect the social interactions in these communities to inform the function of the different communities, we expect to observe distributional repetition among the graphs associated with these communities.

\begin{algorithm}[t!]
  \begin{algorithmic}
    \State \textbf{Input}: Adjacency matrix $A\in
    \{0,1\}^{n\times n}$ for a latent position random graph.
    \State \textbf{Output}: Subgraphs and characterization of their dissimilarity
\While {Cluster size exceeds threshold}
\State {\em Step 1}: 
Compute the adjacency spectral embedding $\hat{\mathbf{X}}$ of $A$ into $\mathbb{R}^{D}$;
   \State {\em Step 2}: Cluster $\hat{\mathbf{X}}$ to obtain subgraphs $\hat{H}_1,
    \cdots, \hat{H}_R$ using a novel angle-based clustering procedure given in \cite{lyzinski15_HSBM}.
    \State {\em Step 3}: For each $i\in[R],$ compute the adjacency
    spectral embedding for each subgraph
    $\hat{H}_i$ into $\mathbb{R}^{d}$, obtaining $\hat{\mathbf{X}}_{\hat{H}_i}$; 
    \State \textit{Step 4}: Compute $\widehat S:=[U_{\hat n_r,\hat n_s}(\hat{\mathbf{X}}_{\hat{H}_r}, \hat{\mathbf{X}}_{\hat{H}_s})]$, where $U$ is the test statistic in Theorem~\ref{thm:mmd_unbiased_ase}, producing a pairwise dissimilarity matrix on induced subgraphs;
\State \textit{Step 5}: Cluster induced subgraphs into motifs using the dissimilarities given in $\widehat S$; e.g., use a hierarchical clustering algorithm to cluster the rows of $\widehat{S}$ or the matrix of associated $p$-values.
\State \textit{Step 6}: Recurse on a representative subgraph from each motif (e.g., the largest subgraph), embedding into $\mathbb{R}^d$ in Step 1 (not $\mathbb{R}^D$);
\EndWhile
\end{algorithmic}
\caption{Detecting hierarchical structure for graphs}
\label{alg:main}
\end{algorithm} 
 
Implementing Algorithm \ref{alg:main} on the very large Friendster
graph presents several computational challenges and model selection quagmires. 
To overcome the computational challenge
in scalability, we use the specialized SSD-based graph processing
engine \texttt{FlashGraph} \cite{zheng_flashgraph}, which is
designed to analyze graphs with billions of nodes.  With
\texttt{FlashGraph}, we adjacency spectral embed the Friendster
adjacency matrix into $\mathbb{R}^{14}$---where $\widehat D=14$ is
chosen using universal singular value thresholding on the partial SCREE plot \cite{chatterjee2015}. 
We next cluster the embedded points into $\widehat R=15$ large-scale/coarse-grained clusters ranging
in size from $10^6$ to 15.6 million vertices (note that to alleviate sparsity concerns, we projected the embedding onto the sphere before clustering); After re-embedding the induced subgraphs associated with these $15$ clusters, we use a
linear time estimate of the test statistic $U$ to compute $\widehat
S$, the matrix of estimated pairwise dissimilarities among the
subgraphs. 
See Figure
\ref{fig:friend} for a heat map depicting
$\widehat{S}\in\mathbb{R}^{15\times 15}$.  In the heat map, the
similarity of the communities is represented on the spectrum between
white and red, with white representing highly similar communities and
red representing highly dissimilar communities.
From the figure, we can see clear repetition in the subgraph distributions; for example, we see a  repeated motif including subgraphs $\{\hat{H}_5, \hat{H}_4,\hat{H}_3,\hat{H}_2\}$ and a clear motif including subgraphs $\{\hat{H}_{10},\hat{H}_{12},\hat{H}_9\}$.

Formalizing the motif detection step, we employ hierarchical
clustering to cluster $\widehat S$ into motifs; see Figure
\ref{fig:friend} for the corresponding hierarchical clustering
dendrogram, which suggests that our algorithm does in fact uncover
repeated motif structure at the coarse-grained level in the Friendster
graph.  While it may be difficult to draw meaningful inference from
repeated motifs at the scale of hundreds of thousands to millions of
vertices, if these motifs are capturing a common HSBM structure within
the subgraphs in the motif, then we can employ our algorithm
recursively on each motif to tease out further hierarchical structure.
\begin{figure}[t!]
  \centering
  \includegraphics[width=0.4\textwidth]{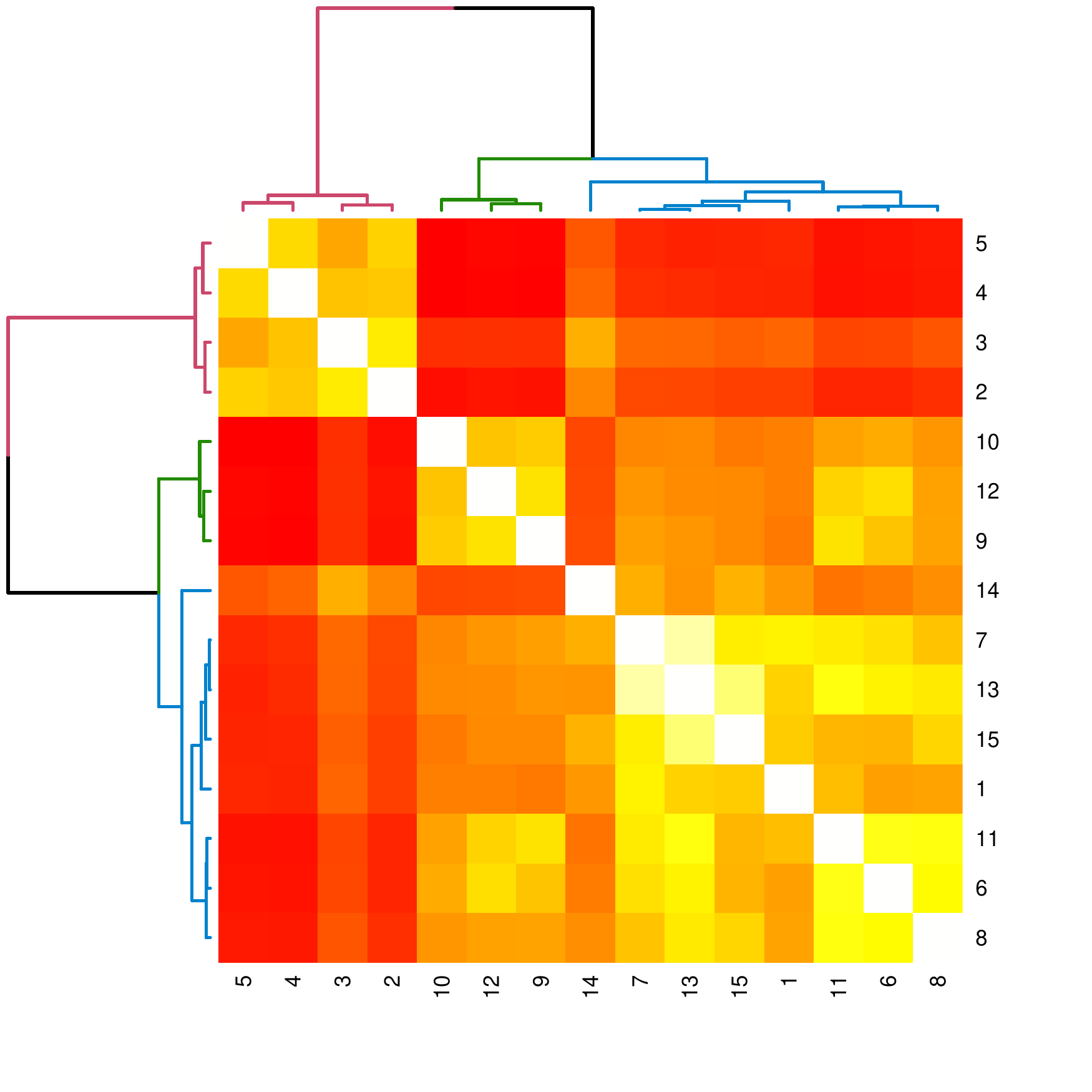}
\caption{Heat map depiction of the level one Friendster estimated dissimilarity matrix $\widehat{S}\in\mathbb{R}^{15\times 15}$.  In the heat map, the similarity of the communities is represented on the spectrum between white and red, with white representing highly similar communities and red representing highly dissimilar communities.  
  In addition, we cluster $\widehat S$ using hierarchical clustering and display the associated hierarchical clustering dendrogram. Figure duplicated from \cite{lyzinski15_HSBM}.}
  \label{fig:friend}
\end{figure}

Exploring this further, we consider three subgraphs $\{\hat{
H}_2,\hat{H}_{8},\hat{H}_{15}\}$, two of which are in the same
motif (8 and 15) and both differing significantly from subgraph 2
according to $\widehat S.$ 
We embed these subgraphs into
$\mathbb{R}^{26}$ ($26$ were chosen once again using the universal singular value thresholding of \cite{chatterjee2015})
perform a Procrustes alignment of the vertex sets of the three
subgraphs, cluster each into $4$ clusters ($4$ chosen
to optimize silhouette width in $k$-means clustering), and estimate both the block connection probability matrices, 
$$\hat P_2=\begin{bmatrix}
0.000045& 0.00080& 0.00056& 0.00047\\
0.00080& 0.025& 0.0096& 0.0072\\
0.00057& 0.0096& 0.012& 0.0067\\
0.00047& 0.0072& 0.0067& 0.023
\end{bmatrix},$$
$$\hat P_8=\begin{bmatrix}
0.0000022& 0.000031& 0.000071& 0.000087\\
0.000031& 0.0097& 0.00046& 0.00020\\
0.000071& 0.00046& 0.0072& 0.0030\\
0.000087& 0.00020& 0.003& 0.016
\end{bmatrix},$$
$$\hat
P_{15}=\begin{bmatrix}
0.0000055& 0.00011& 0.000081& 0.000074\\
0.00011& 0.014& 0.0016& 0.00031\\
0.000081& 0.0016 &0.0065& 0.0022\\
0.000074& 0.00031& 0.0022& 0.019
\end{bmatrix},$$
 and the block membership probabilities $\hat \pi_2,\,\hat \pi_8,\,\hat
\pi_{15},$ for each of the three graphs.  We calculate
\begin{align*}
\|\hat P_2-\hat P_8\|_F=0.033; &\hspace{3mm}  \|\hat \pi_2-\hat \pi_8\|=0.043 ;\\
\|\hat P_8-\hat P_{15}\|_F=0.0058;  &\hspace{3mm}   \|\hat \pi_8-\hat \pi_{15}\|=0.0010; \\
\|\hat P_2-\hat P_{15}\|_F= 0.027; &\hspace{3mm}  \|\hat \pi_2-\hat \pi_{15}\|= 0.043;
\end{align*}
which suggests that the repeated structure our algorithm
uncovers is {\it SBM substructure}, thus ensuring that we can proceed to
apply our algorithm recursively to the subsequent motifs.

As a final point, we recursively apply Algorithm \ref{alg:main} to the
subgraph $\hat{H}_{11}$.  We first embed the graph into
$\mathbb{R}^{26}$ (again, with $26$ chosen via universal singular value thresholding). We then cluster the vertices into
$\widehat R=13$ large-scale clusters of sizes ranging from 500K to
2.7M vertices.  We then use a linear time estimate of the test
statistic $T$ to compute $\widehat S$ (see Figure \ref{fig:friend2}),
and note that there appear to be clear repeated motifs (for example,
subgraphs 8 and 12) among the $\widehat H$'s.  We run hierarchical
clustering to cluster the $13$ subgraphs, and note that the associated
dendrogram---as shown in Figure \ref{fig:friend2}---shows that our
algorithm again uncovered some repeated level-$2$ structure in the
Friendster network.  We can, of course, recursively apply our
algorithm still further to tease out the motif structure at
increasingly fine-grained scale.

Ideally, when recursively running Algorithm \ref{alg:main}, we would
like to simultaneously embed and cluster all subgraphs in the motif.
In addition to potentially reducing embedding variance, 
being able to efficiently simultaneously embed all the subgraphs in a
motif could greatly increase algorithmic scalability in large networks
with a very large number of communities at local-scale.  In order to
do this, we need to understand the nature of the repeated structure
within the motifs.  This repeated structure can inform an estimation
of a motif average (an averaging of the subgraphs within the motif),
which can then be embedded into an appropriate Euclidean space in lieu
of embedding all of the subgraphs in the motif separately.  However,
this averaging presents several novel challenges, as these subgraphs
may be of very different orders and may be errorfully obtained, which
could lead to compounded errors in the averaging step.  
 \begin{figure*}[t!]
  \centering
  \includegraphics[width=0.55\textwidth]{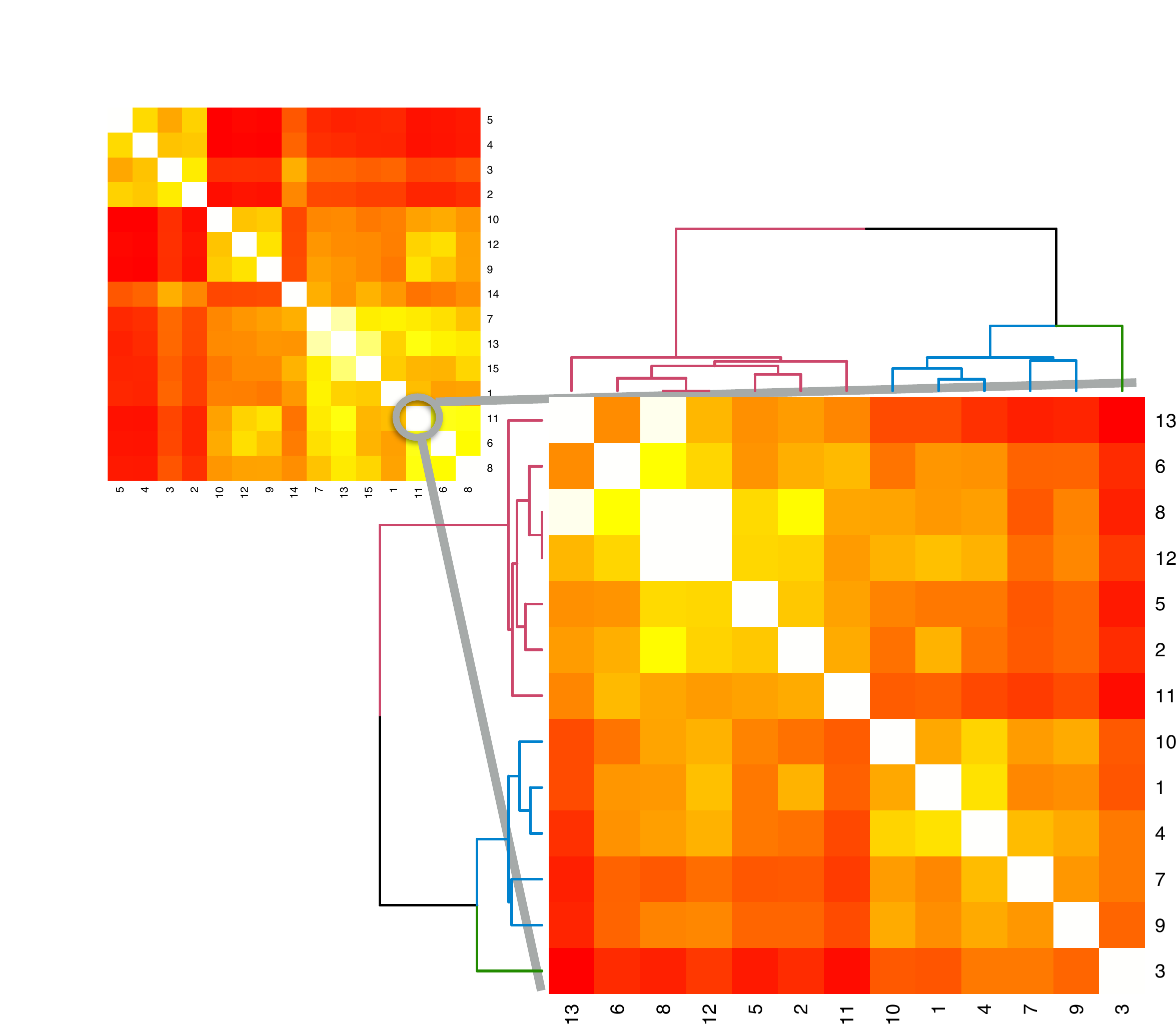}
  \caption{Heat map depiction of the level two Friendster estimated dissimilarity matrix $\widehat{S}\in\mathbb{R}^{13\times 13}$ of $\widehat H_{11}$.  In the heat map, the similarity of the communities is represented on the spectrum between white and red, with white representing highly similar communities and red representing highly dissimilar communities.  
  In addition, we cluster $\widehat S$ using hierarchical clustering and display the associated hierarchical clustering dendrogram.}
  \label{fig:friend2}
\end{figure*}
\subsection{Structure discovery in the {\em Drosophila} connectome} \label{subsec:MBStructure}
In this subsection, we address a cutting-edge application of our techniques to neuroscience:  structure discovery in the larval Drosophila connectome, comprehensively described in \cite{MBStructure}, and from which significant portions are reproduced here, with permission. This is a first-of-its-kind exploratory data analysis of a newly-available wiring diagram, and although the connectome graph we analyze is directed, weighted, and also of unknown embedding dimension, our statistical techniques can nonetheless be adapted to this setting. 

Specifically, we introduce the {\it latent structure model} (LSM) for network modeling and inference. The LSM is a generalization of the stochastic block model (SBM) in that the latent positions are allowed to lie on a lower-dimensional curve, and this model is amenable to semiparametric Gaussian mixture modeling (GMM) applied to the adjacency spectral embedding (ASE).
The resulting {\it connectome code}
derived via semiparametric GMM composed with ASE, which we denote, in shorthand, by $GMM \circ ASE$,
captures latent connectome structure
and elucidates biologically relevant neuronal properties.

HHMI Janelia has
recently reconstructed the complete wiring diagram of the higher order parallel fiber system for associative learning in the larval {\it Drosophila} brain, 
the mushroom body (MB).
Memories are thought to be stored as functional and structural changes in connections between neurons, 
but the complete circuit architecture of a higher-order learning center involved in memory formation or storage
has not been known in any organism---until now, that is.  Our MB connectome
was obtained via serial section transmission electron microscopy of an entire larval {\it Drosophila} nervous system
\cite{ohyama2015multilevel,schneider2016quantitative}.
This connectome contains the entirety of MB intrinsic neurons, called Kenyon cells, and all of their pre- and post-synaptic partners
\cite{EichlerSubmitted}. 

We consider the right hemisphere MB. The connectome consists of four distinct types of neurons
-- Kenyon Cells (KC), Input Neurons (MBIN), Output Neurons (MBON), Projection Neurons (PN) --
with directed connectivity illustrated in Figure \ref{fig:Fig1}.
There are $n=213$ neurons\footnote{
	There are 13 isolates, all are KC; removing these isolates makes the (directed) graph one (weakly, but not strongly) connected component with 213 vertices and 7536 directed edges.},
with
$n_{KC} = 100$,
$n_{MBIN} = 21$,
$n_{MBON} = 29$, and
$n_{PN} = 63$.
Figure \ref{fig:Fig2} displays the observed MB connectome as an adjacency matrix.
Note that, in accordance with Figure \ref{fig:Fig1},
Figure \ref{fig:Fig2} shows data (edges) in only eight of the 16 blocks.

\begin{figure}[htbp]
	\centering
	\includegraphics[width=0.65\textwidth]{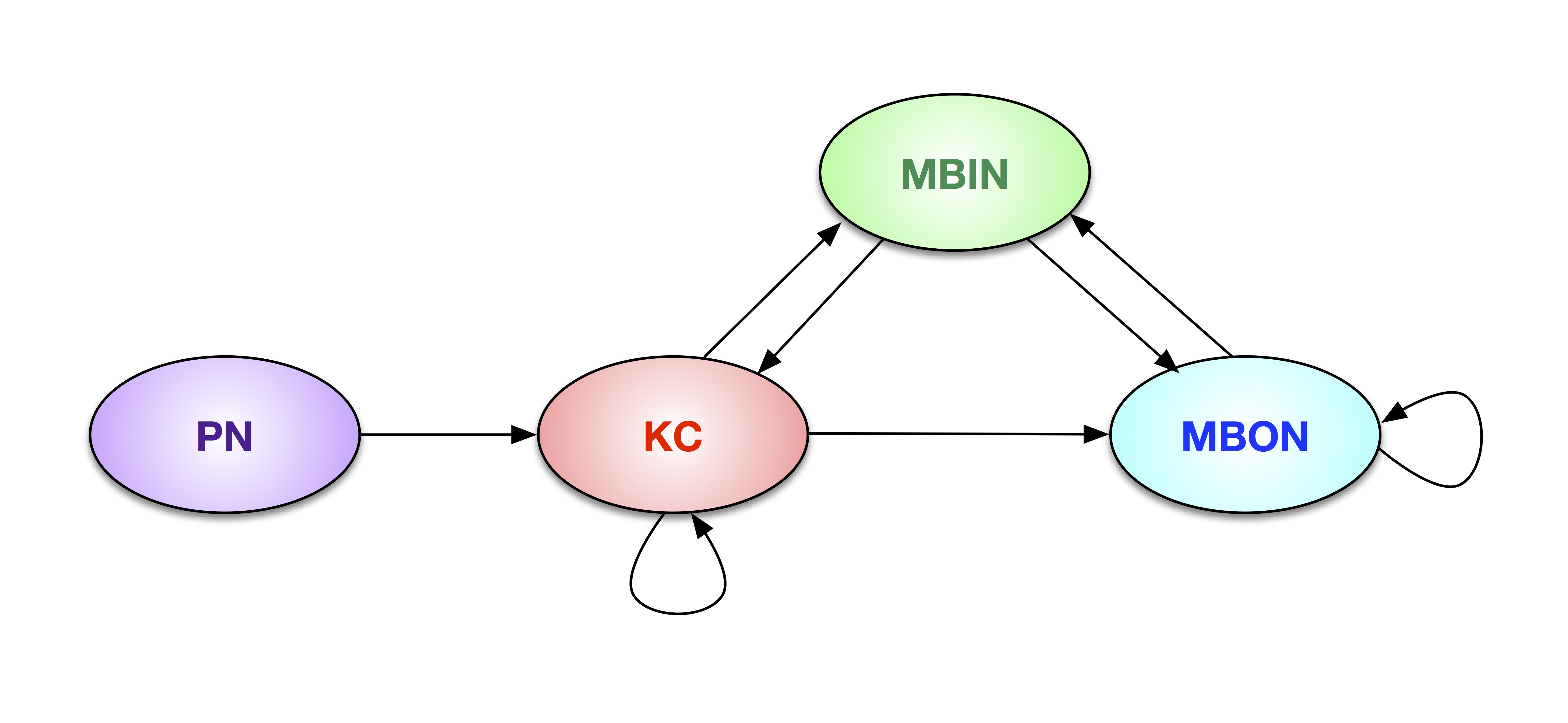}
	\caption{\label{fig:Fig1} Illustration of the larval {\it Drosophila} mushroom body connectome as a directed graph on four neuron types. Figure duplicated from \cite{MBStructure}.}
\end{figure}

\begin{figure}[htbp]
	\centering
	\includegraphics[width=0.65\textwidth]{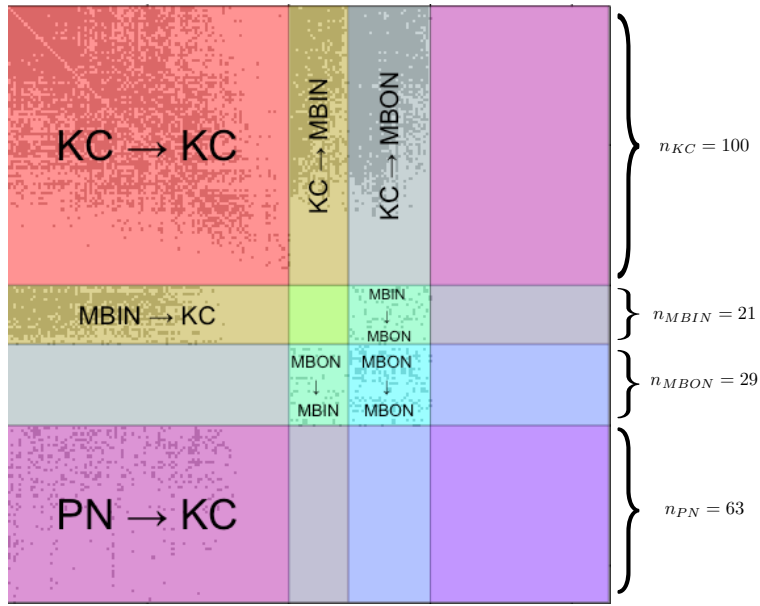}
	\caption{\label{fig:Fig2} Observed data for our MB connectome as a directed adjacency matrix on four neuron types with 213 vertices 
		($n_{KC} = 100$,
		$n_{MBIN} = 21$,
		$n_{MBON} = 29$, and
		$n_{PN} = 63$)
		and 7536 directed edges.
		(This data matrix is available at
		$<$\url{http://www.cis.jhu.edu/~parky/MBstructure.html}$>$.)
	Figure duplicated from \cite{MBStructure}.}
	\end{figure}
	
	Due to its undeniable four-neuron-type connectivity structure,
	we might think of our MB connectome,
	to first order,
	as an observation from a $(K=4)$-block directed stochastic block model (SBM) \cite{WangWong} on $n$ vertices.
	This model is parameterized by ({\it i}) a block membership probability vector $\rho = [\rho_1,\cdots,\rho_K]$
	such that $\rho_k \ge 0$ for all $k$ and $\sum_k \rho_k = 1$
	and ({\it ii}) a $K \times K$ block connectivity probability matrix $B$ with entries $B_{k_1,k_2} \in [0,1]$
	governing the probability of directed edges from vertices in block $k_1$ to vertices in block $k_2$.
	For this model of our MB connectome we have
	\[B = \left[ \begin{array}{cccc}
	B_{11} & B_{12} & B_{13} & 0 \\
	B_{21} & 0      & B_{23} & 0 \\
	0      & B_{32} & B_{33} & 0 \\
	B_{41} & 0 & 0 & 0 \end{array} \right]\] 
	where the $0$ in the $B_{31}$ entry, for example,
	indicates that there are no directed connections from any MBON neuron to any KC neuron (as seen in Figures \ref{fig:Fig1} and \ref{fig:Fig2}).
	
	Theoretical results and methodological advances suggest that Gaussian mixture modeling
	(see, for example, \cite{mclust2012})
	composed with adjacency spectral embedding,
	denoted $\mclustase$, can be instructive in analysis of the (directed) SBM.

	Since this graph is directed, we adapt our embedding technique just slightly. Given $d \geq 1$, the adjacency spectral embedding (ASE) of a {\em directed} graph 
	on $n$ vertices
	employs the singular value decomposition 
	to represent the $n \times n$ adjacency matrix via ${\bf A} = \bigl[{\bf U} \mid {\bf U}^{\perp} \bigr] \bigl[{\bf S} \bigoplus {\bf S}^{\perp}\bigr]
	\bigl[{\bf V} \mid {\bf V}^{\perp}]^{\top}$
	where $\mathbf{S}$ is the $d \times d$ diagonal matrix of the $d$ largest singular values and ${\bf U}$ and ${\bf V}$ are the matrix of corresponding left and right singular vectors, and we embed the graph as $n$ points in $\mathbb{R}^{2d}$ via the concatenation
	$$\Xhat = \left[ {\bU} {\bS}^{1/2} \, \mid \, {\bV} {\bS} ^{1/2} \right] \in \mathbb{R}^{n \times 2d}.$$
	(The scaled left-singular vectors $\bU \bS^{1/2}$ are interpreted as the ``out-vector'' representation of the digraph,
	modeling vertices' propensity to originate directed edges;
	similarly, $\bV \bS^{1/2}$ are interpreted as the ``in-vectors''.)
	Gaussian mixture modeling (GMM)
	then fits a $K$-component $2d$-dimensional Gaussian mixture model
	to the points $\Xhat_1,\cdots,\Xhat_n$ given by the rows of $\Xhat$.
	If the graph is a stochastic block model, then as we have discussed previously in Section \ref{subsec:Distributional}, clustering the rows of the adjacency spectral embedding via Gaussian mixture modeling gives us consistent estimates for the latent positions.
	
	Applying this procedure to our MB connectome
	yields the clustered embedding depicted via the pairs plot presented in Figure~\ref{fig:MBpairs12},
	with the associated cluster confusion matrix with respect to true neuron types presented in Table \ref{tab:mclust6}.
	The clusters are clearly coincident with the four true neuron types.
	(For ease of illustration, Figure \ref{fig:MBpairs12} presents just the Out1 vs.\ Out2 subspace.)
	

	\begin{figure}[htbp]
		\centering
		\includegraphics[width=0.65\textwidth]{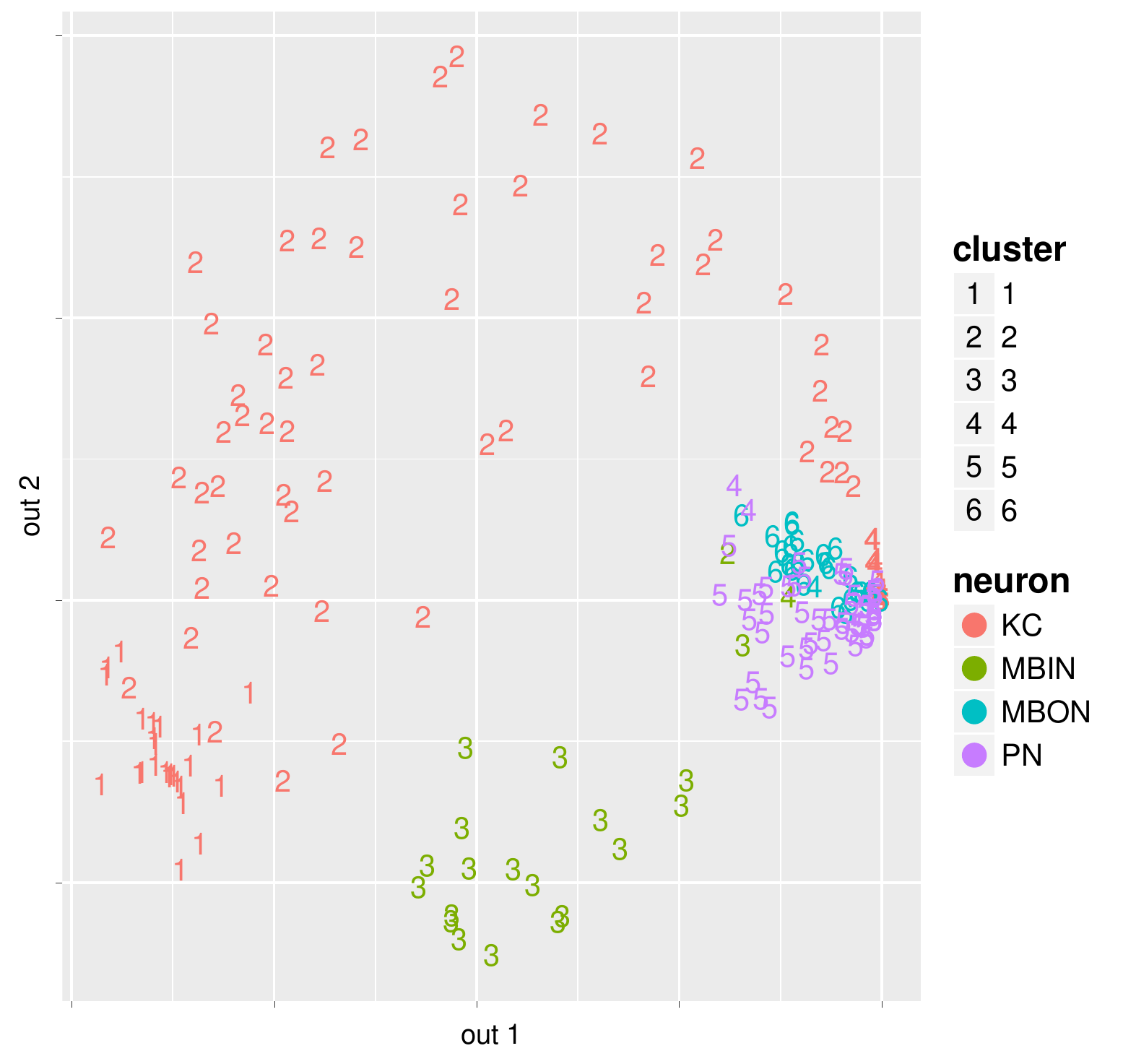}
		\caption{\label{fig:MBpairs12}
			Plot for the clustered embedding of our MB connectome 
			in the Out1 vs.\ Out2 dimensions.
			For ease of illustration,
			we present embedding results in this two-dimensional subspace.
			Recall that this is a two-dimensional visualization of six-dimensional structure. Figure duplicated from \cite{MBStructure}.
		}
	\end{figure}
	
	There are two model selection problems inherent in spectral clustering in general,
	and in obtaining our clustered embedding (Figure \ref{fig:MBpairs12}) in particular:
	choice of embedding dimension ($\dhat$), and choice of mixture complexity ($\Khat$).
	A ubiquitous and principled method for choosing the number of dimensions in eigendecompositions and singular value decompositions
	is to examine the so-called {\em scree plot}
	and look for ``elbows'' or ``knees'' defining the cut-off between the top signal dimensions and the noise dimensions.
	Identifying a ``best'' method is, in general, impossible, as the bias-variance tradeoff demonstrates that,
	for small $n$, subsequent inference may be optimized by choosing a dimension {\it smaller than} the true signal dimension;
	see Section 3 of \cite{JainDuinMao} for a clear and concise illustration of this phenomenon. 
	There are a plethora of variations for automating this singular value thresholding (SVT);
	Section 2.8 of \cite{Jackson} provides a comprehensive discussion in the context of principal components,
	and 
	\cite{chatterjee2015}
	provides a theoretically-justified (but perhaps practically suspect, for small $n$) universal SVT.
	Using the profile likelihood SVT method of \cite{zhu06:_autom} 
	yields a cut-off at three singular values, as depicted in Figure \ref{fig:MBscree}.
	Because this is a directed graph, we have both left- and right-singular vectors for each vertex;
	thus the SVT choice of three singular values results in $\dhat=6$.
	
	Similarly, a ubiquitous and principled method for choosing the number of clusters in,
	for example, Gaussian mixture models,
	is to maximize a fitness criterion penalized by model complexity.
Common approaches include
Akaike Information Criterion (AIC) \cite{akaike1974new},
Bayesian Information Criterion (BIC) \cite{BIC},
and Minimum Description Length (MDL) \cite{MDL},
to name a few.
Again, identifying a ``best'' method is, in general, impossible, as the bias-variance tradeoff demonstrates that,
for small $n$, inference performance may be optimized by choosing a number of clusters {\it smaller than} the true cluster complexity. The MCLUST algorithm \cite{mclust2012}, as implemented in \texttt{R}, and its associated BIC
applied to our MB connectome embedded via ASE into $\mathbb{R}^{\dhat=6}$,
is maximized at six clusters,
as depicted in Figure \ref{fig:MBbic}, and hence $\Khat=6$.
\begin{figure}[htbp]
	\centering
	\includegraphics[width=0.370\textwidth]{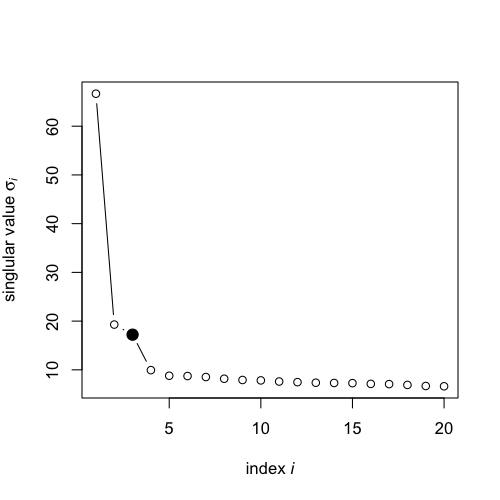}
	\caption{\label{fig:MBscree}
		Model Selection: embedding dimension $\dhat=6$ --
		the top 3 singular values and their associated left- and right-singular vectors --
		is chosen by SVT. Figure duplicated from \cite{MBStructure}.}
\end{figure}

\begin{figure}[htbp]
	\centering
	\includegraphics[width=0.715\textwidth]{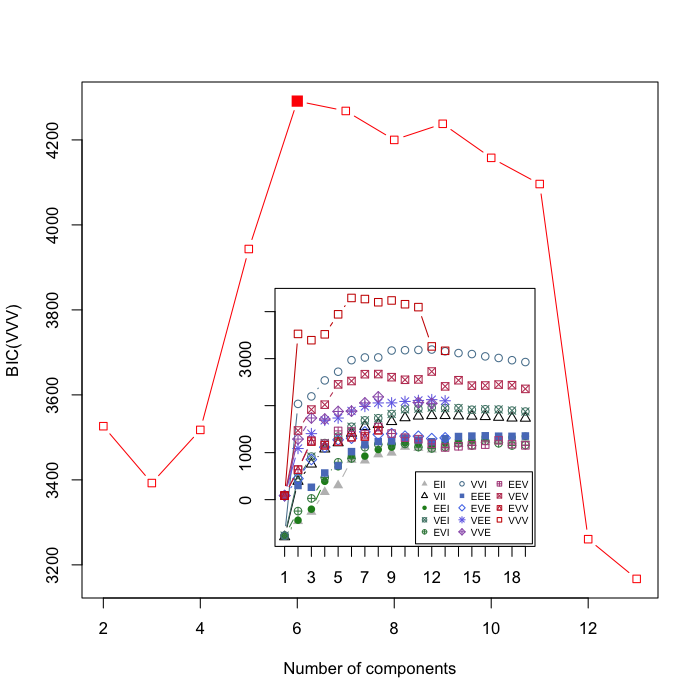}
	\caption{\label{fig:MBbic}
		Model Selection: mixture complexity $\Khat=6$ is chosen by BIC.
		(The inset shows that the main curve -- 
		BIC for dimensions 2 through 13 for MCLUST's most general covariance model, in red --
		dominates all other dimensions and all other models.) Figure duplicated from \cite{MBStructure}.
	}
\end{figure}

\begin{table}[htbp]
	\centering
	\begin{tabular}{lrrrrrrr}
		& & 1 & 2 & 3 & 4 & 5 & 6 \\
		\text{KC}   & & {\bf 25} & {\bf 57} &  0 & {\bf 16} &  2 &  0 \\
		\text{MBIN} & &  0 &  1 & {\bf 19} &  1 &  0 &  0 \\
		\text{MBON} & &  0 &  0 &  0 &  1 &  0 & {\bf 28} \\
		\text{PN}   & &  0 &  0 &  0 &  2 & {\bf 61} &  0 \\
	\end{tabular}
	\caption{\label{tab:mclust6}
		$\mclustase$ for our MB connectome yields $\Khat=6$ clusters.
		The clusters are clearly coherent with but not perfectly aligned with
		the four true neuron types, as presented in this confusion matrix.
	}
\end{table}

While BIC chooses $\Khat=6$ clusters,
it is natural to ask whether 
the distribution of KC across multiple clusters
is an artifact of insufficiently parsimonious model selection.
However, choosing four or five clusters not only (substantially) decreases BIC,
but in fact leaves KC distributed across multiple clusters while splitting and/or merging other neuron types.
In the direction of less parsimony,
Figure \ref{fig:MBbic} suggests that any choice from 7 to 11 clusters is competitive, in terms of BIC, with the maximizer $\Khat=6$.
Moreover, any of these choices only slightly decreases BIC,
while leaving PN, MBIN, and MBON clustered (mostly) singularly and (mostly) purely
and distributing KC across more clusters.
Tables
\ref{tab:mclust4},
\ref{tab:mclust5}, and
\ref{tab:mclust7}
show cluster confusion matrices for other choices of $K$.

\begin{table}[htbp]
	\centering
	\begin{tabular}{lrrrrr}
		\hline
		& & 1 & 2 & 3 & 4 \\
		\hline
		\text{KC}    & & 26 & 	56 & 	16 & 	2 \\
		\text{MBIN}  & &  0 & 	20 & 	1 & 	0 \\
		\text{MBON}  & &  0 & 	28 & 	1 & 	0 \\
		\text{PN}    & &  0 & 	0 & 	16 & 	47 \\
	\end{tabular}
	\caption{\label{tab:mclust4}
		Cluster confusion matrix for $\mclustase$ with 4 clusters.
		Choosing four or five clusters not only (substantially) decreases BIC (compared to $\Khat=6$),
		but in fact leaves KC distributed across multiple clusters
		while splitting and/or merging other neuron types.}
\end{table}

\begin{table}[htbp]
	\centering
	\begin{tabular}{lrrrrrr}
		\hline
		& & 1 & 2 & 3 & 4 & 5\\
		\hline
		\text{KC}    & & 26 &	56 &	16 &	2 &	0 \\
		\text{MBIN}  & & 0 &	20 &	1 &	0 &	0 \\
		\text{MBON}  & & 0 &	0 &	1 &	0 &	28 \\
		\text{PN}    & & 0 &	0 &	16 &	47 &	0 \\
	\end{tabular}
	\caption{\label{tab:mclust5}
		Cluster confusion matrix for $\mclustase$ with 5 clusters.
		Choosing four or five clusters not only (substantially) decreases BIC (compared to $\Khat=6$),
		but in fact leaves KC distributed across multiple clusters
		while splitting and/or merging other neuron types.}
\end{table}

\begin{table}[htbp]
	\centering
	\begin{tabular}{lrrrrrrrr}
		\hline
		& & 1 & 2 & 3 & 4 & 5 & 6 & 7\\
		\hline
		\text{KC}    & & 25 &	42 &	15 &	0 &	16 &	2 &	0 \\
		\text{MBIN}  & & 0 &	0 &	1 &	19 &	1 &	0 &	0 \\
		\text{MBON}  & & 0 &	0 &	0 &	0 &	1 &	0 &	28 \\
		\text{PN}    & & 0 &	0 &	0 &	0 &	2 &	61 &	0 \\
	\end{tabular}
	\caption{\label{tab:mclust7}
		Cluster confusion matrix for $\mclustase$ with 7 clusters.
		Any choice from 7 to 11 clusters only slightly decreases BIC (compared to $\Khat=6$),
		while leaving PN, MBIN, and MBON clustered (mostly) singularly and (mostly) purely
		and distributing KC across more clusters.}
\end{table}

\begin{figure}[htbp]
	\centering
	\includegraphics[width=0.65\textwidth]{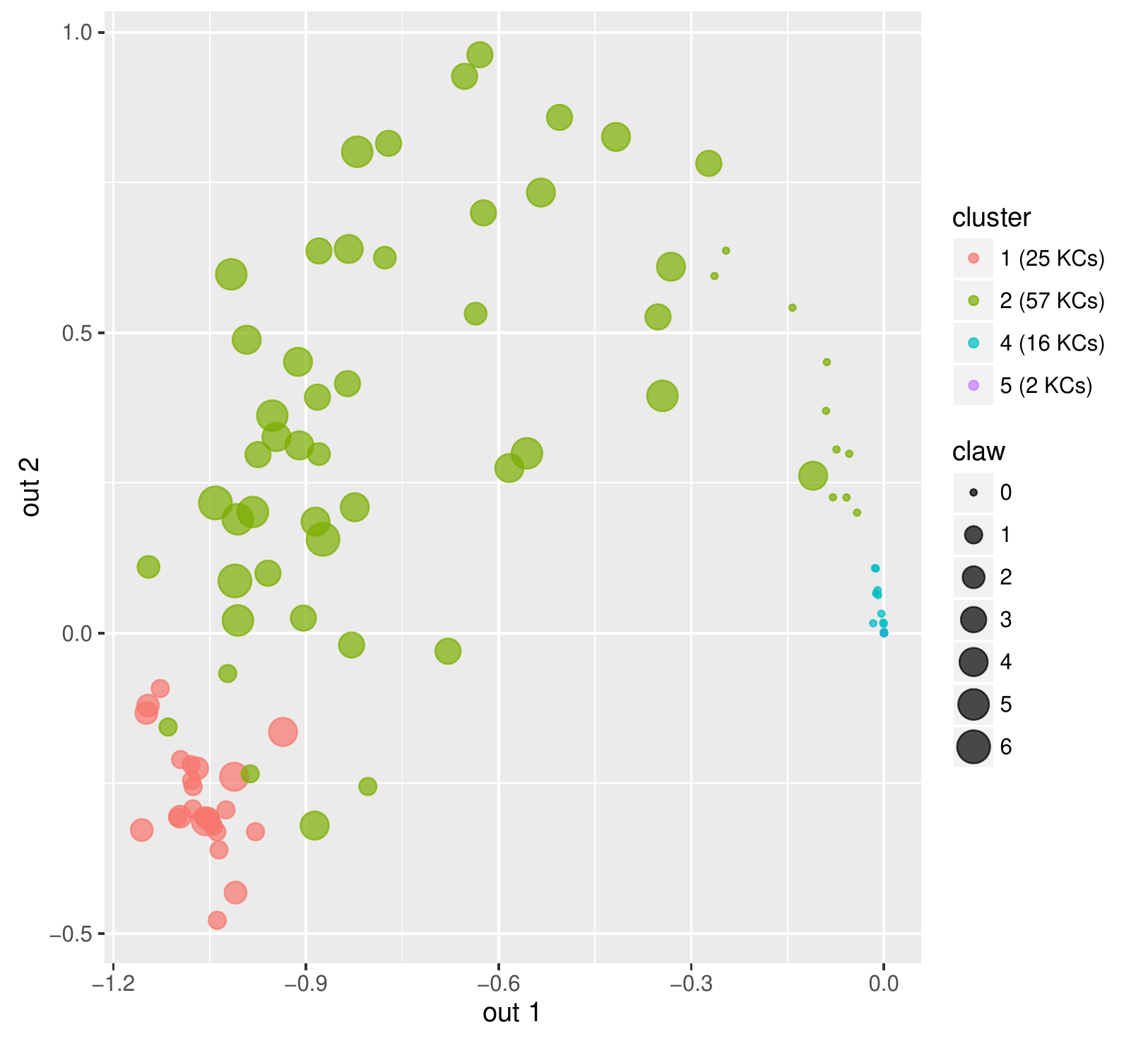}
	\caption{\label{fig:MBKCageclaw}
		The multiple clusters for the KC neurons are capturing neuron age.
		Depicted are the first two dimensions for the KC neuron out-vectors,
		with color representing $\Khat=6$ cluster membership --
		recall from Table \ref{tab:mclust6} that the $n_{KC}=100$ KCs are distributed across multiple clusters,
		with 25 neurons in cluster \#1, 
		57 in \#2,
		0 in \#3,
		16 in \#4,
		2 in \#5, and
		0 in \#6.
		The size of the dots represent the number of claws associated with the neurons.
		We see from the scatter plot that the embedded KC neurons arc
		from oldest (one-claw, lower left, cluster 1, in red),
		up and younger (more claws) through cluster 2 in green, and
		back down to youngest (zero-claw, clusters 4 and 5).
		See also Table \ref{tab:MBKCageclaw}.
		Recall that this is a two-dimensional visualization of six-dimensional structure. Figure duplicated from \cite{MBStructure}.
	}
\end{figure}

\begin{table}[htbp]
	\centering
	\begin{tabular}{lrrrrrrr}
		\hline
		cluster                             & & 1 & 2 & 3 & 4 & 5 & 6 \\
		\hline
		\text{\#KCs}                           & & 25 & 57 &  0 & 16 &  2 &  0 \\
		\hline
		\text{claw: 1 (oldest)}            & &  15 &   4 &  --- &  0 &  0 &  --- \\
		\text{claw: 2}                     & &   7 &   4 &  --- &  0 &  0 &  --- \\
		\text{claw: 3}                     & &   0 &  15 &  --- &  0 &  0 &  --- \\
		\text{claw: 4}                     & &   3 &  13 &  --- &  0 &  0 &  --- \\
		\text{claw: 5}                     & &   0 &   8 &  --- &  0 &  0 &  --- \\
		\text{claw: 6}                     & &   0 &   3 &  --- &  0 &  0 &  --- \\
		\text{claw: 0 (youngest)}          & &   0 &  10 &  --- & 16 &  2 &  --- \\
	\end{tabular}
	\caption{\label{tab:MBKCageclaw}
		The multiple clusters for the KC neurons are capturing neuron age via the number of claws associated with the neuron.
		We see from the $\Khat=6$ clustering table, for the $n_{KC}=100$ KC neurons, that 
		cluster 1 captures predominantly older neurons,
		cluster 2 captures both old and young neurons, and
		clusters 4 and 5 capture only the youngest neurons.
		See also Figure \ref{fig:MBKCageclaw}.
	}
\end{table}
We see that our spectral clustering of the MB connectome via $\mclustase$,
with principled model selection for choosing embedding dimension and mixture complexity,
yields meaningful results:
a single Gaussian cluster for each of MBIN, MBON, and PN,
and multiple clusters for KC.
That is, we have one substantial revision to
Figure \ref{fig:Fig1}'s illustration of the larval {\it Drosophila} mushroom body connectome as a directed graph on four neuron types:
significant {\em substructure} associated with the KC neurons. Indeed, this 
hints at the possibility of a continuous, rather than discrete, structure for the KC. The paper \cite{EichlerSubmitted}
describes so-called ``claws'' associated with each KC neuron,
and posits that 
KCs with only 1 claw are the oldest, followed in decreasing age by multi-claw KCs (from 2 to 6 claws), with finally the youngest KCs being those with 0 claws. 

Figure \ref{fig:MBKCageclaw} and Table \ref{tab:MBKCageclaw}
use this additional neuronal information to show that the multiple clusters for the KC neurons are capturing neuron age
-- and in a seemingly coherent geometry.
Indeed, precisely because the clusters for the KC neurons
are capturing neuron age 
-- a continuous vertex attribute --
again, in a seemingly coherent geometry, 
we define a ``latent structure model'' (LSM), a generalization of the SBM,
together with a principled semiparametric spectral clustering methodology $\smclustase$ associated thereto.
Specifically,
we fit a continuous curve to (the KC subset of) the data in latent space
and show that traversal of this curve corresponds monotonically to neuron age.
To make this precise, we begin with a directed stochastic block model:
\begin{definition}[Directed Stochastic Blockmodel (SBM)]
	Let $d_{\mathrm{out}} = d_{\mathrm{in}}$, with $d=d_{\mathrm{out}} + d_{\mathrm{in}}$.
	We say that an $n$ vertex graph $(\bA,\bX)\sim\mathrm{RDPG}(F)$
	is a directed stochastic blockmodel (SBM) with $K$ blocks if
	the distribution $F$ is a mixture of $K$ point masses,
	$$dF=\sum_{k=1}^K \rho_k \delta_{x_k},$$
	with block membership probability vector $\vec{\rho}$ in the unit $(K-1)$-simplex 
	and distinct latent positions given by $\bm{\nu}=[\nu_1,\nu_2,\ldots,\nu_K]^\top\in\mathbb{R}^{K\times d}$.
	The first $d_{\mathrm{out}}$ entries of each latent position $\nu_k$ are the out-vectors, denoted $\xi_k \in \mathbb{R}^{d_{\mathrm{out}}}$,
	and the remaining $d_{\mathrm{in}}$ elements are the in-vectors $\zeta_k$.
	We write $G\sim SBM(n,\vec{\rho},\bm{\xi} \bm{\zeta}^\top),$
	and  we refer to $\bm{\xi} \bm{\zeta}^\top\in\mathbb{R}^{K \times K}$ as the block connectivity probability matrix for the model.
\end{definition}
We model the MB connectome as a four-component latent structure model (LSM),
where LSM denotes the ``generalized SBM'' where each ``block''
may be generalized from point mass latent position distribution
to latent position distribution with support on some curve
(with the "block" curves disconnected, as (of course) are SBM's point masses).
So LSM does have block structure, albeit not as simple as an SBM;
and LSM will exhibit clustering, albeit just as transparently as an SBMs. As such, it is similar to other generalizations of SBMs, including the degree-corrected and hierarchical variants.

\begin{definition}[Directed Latent Structure Model (LSM)] 
	Let $d_{\mathrm{out}} = d_{\mathrm{in}}$, and 
	let $F$ be a distribution on a set $\mathcal{X} = \mathcal{Y} \times \mathcal{Z} \subset \mathbb{R}^{d_{\mathrm{out}}} \times \mathbb{R}^{d_{\mathrm{in}}}$
	such that $\langle y,z \rangle\in[0,1]$ for all $y\in\mathcal{Y}$ and $z\in\mathcal{Z}$.
	We say that an $n$ vertex graph $(\mathbf{A},\mathbf{X})\sim\mathrm{RDPG}(F)$
	is a directed latent structure model (LSM) with $K$ ``structure components'' if
	the support of distribution $F$ is a mixture of $K$ (disjoint) curves,
	$$dF=\sum_{k=1}^K \rho_k dF_k(x),$$
	with block membership probability vector $\vec{\rho}$ in the unit $(K-1)$-simplex
	and $F_k$ supported on $\mC_k$ and $\mC_1,\cdots,\mC_K$ disjoint.
	%
	We write $G\sim LSM(n,\vec{\rho},(F_1,\cdots,F_K))$.
\end{definition}
We now investigate our MB connectome as an LSM 
with latent positions $X_i \overset{\mathrm{i.i.d.}}{\sim} F$
where $F$ is no longer a mixture of four point masses with one point mass per neuron type
but instead $\mathrm{supp}(F)$ is three points and a continuous curve $\CKC$.

Motivated by approximate normality of the adjacency spectral embedding of an RDPG, we consider estimating $F$ via a semiparametric Gaussian mixture model for the $\Xhat_i$'s.
Let $H$ be a probability measure on a parameter space $\Theta \subset \mathbb{R}^d \times S_{d \times d}$, 
where $S_{d \times d}$ is the space of $d$-dimensional covariance matrices,
and let $\{\varphi(\cdot;\theta) : \theta \in \Theta \}$ be a family of normal densities.
Then the function given by
$$\alpha(\cdot;H) = \int_{\Theta} \varphi(\cdot;\theta) dH(\theta)$$
is a semiparametric GMM.
$H \in \mathcal{M}$ is referred to as the mixing distribution of the mixture,
where $\mathcal{M}$ is the class of all probability measures on $\Theta$.
If $H$ consists of a finite number of atoms,
then $\alpha(\cdot;H)$ is a finite normal mixture model with means, variances and proportions determined by the locations and weights of the point masses, and \cite{lindsay1983} provides theory for maximum likelihood estimation (MLE) in this context.

Thus
(ignoring covariances for presentation simplicity, so that $\theta \in \Re^d$ is the component mean vector)
we see that the central limit theorem suggests that we estimate the probability density function of the embedded MB connectome
$\Xhat_1,\cdots,\Xhat_{n=213}$,
under the LSM assumption,
as the semiparametric GMM $\alpha(\cdot;H)$
with $\Theta=\mathbb{R}^{6}$
and
where $H=F$ 
is supported by three points and a continuous curve $\CKC$.
Note that in the general case, where $\Theta$ includes both means and covariance matrices, we have $H = H_{F,n}$.
The central limit theorem for the adjacency spectral embedding provides a large-sample approximation for $H_{F,n}$,
and provides a mean-covariance constraint so that
if we knew the latent position distribution $F$ we would have no extra degrees of freedom
(though perhaps a more challenging MLE optimization problem).
As it is, we do our fitting in the general case, with simplifying constraints on the covariance structure associated with $\CKC$.

Our MLE 
(continuing to ignore covariances for presentation simplicity)
is given by
$$d\hat{H}(\theta) = \sum_{k=1}^3 \rhohat_k I\{\theta = \thetahat_k\} + \left(1 - \sum_{k=1}^3 \rhohat_k\right) \rhohat_{KC}(\theta) I\{\theta \in \Chat\}$$
where $\thetahat_1, \thetahat_2, \thetahat_3$ are given by the means of the $\mclustase$ Gaussian mixture components for MBIN, MBON, and PN,
and $\Chat \subset \mathbb{R}^d$ is a one-dimensional curve.
Figure \ref{fig:YPYQ19} displays the MLE results from an EM optimization
for the curve $\Chat$ constrained to be quadratic, as detailed in the Appendix.
(Model testing for $\CKC$ in $\mathbb{R}^6$ does yield quadratic:
testing the null hypothesis of linear against the alternative of quadratic yields clear rejection ($p < 0.001$),
while there is insufficient evidence to favor $H_A$ cubic over $H_0$ quadratic ($p \approx 0.1$).)
\begin{figure}[htbp]
	\centering
	\includegraphics[width=0.65\textwidth]{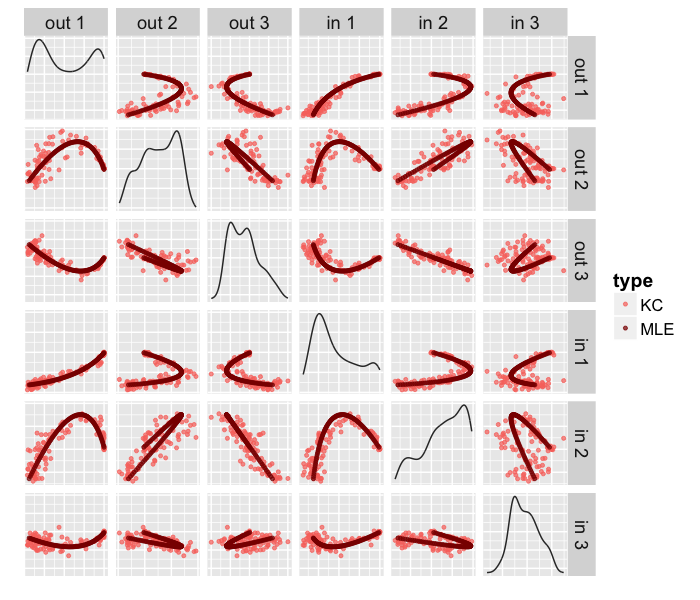}
	\caption{\label{fig:YPYQ19}
		Semiparametric MLE $\Chat$ for the KC latent-space curve in $\mathbb{R}^6$. Figure duplicated from \cite{MBStructure}.
	}
\end{figure}
That is, (continuing to ignore covariances for presentation simplicity)
our structure discovery 
via $\smclustase$ 
yields an $\mathbb{R}^6$ latent position estimate for the MB connectome
-- a {\it connectome code} for the larval {\it Drosophila} mushroom body --
as a semiparametric Gaussian mixture of three point masses
and a continuous parameterized curve $\Chat$; 
the three Gaussians correspond to three of the four neuron types,
and the curve corresponds to the fourth neuron type (KC) with the parameterization capturing neuron age (
see Figure \ref{fig:MBsemiparfig}). We note that \cite{EichlerSubmitted} suggests distance-to-neruopile $\delta_i$
-- the distance to the MB neuropile from the bundle entry point of each KC neuron $i$ --
as a proxy for neuron age, and analyzes this distance in terms of number of claws for neuron $i$ (see Figure \ref{fig:KE1}).
We now demonstrate that
the correlation of this distance with the KC neurons' projection onto the parameterized curve $\Chat$
is highly significant -- this semiparametric spectral model captures neuroscientifically important structure in the connectome.
To wit,
we project each KC neuron's embedding onto our parameterized $\Chat$
and study the relationship between the projection's position on the curve, $t_i$, and the neuron's age through the distance proxy $\delta_i$ (see Figures \ref{fig:YPYQ21} and \ref{fig:YPYQ22}).
We find significant correlation of $\delta_i$ with $t_i$ -- 
Spearman's $s = -0.271$, 
Kendall's $\tau = -0.205$, 
Pearson's $\rho = -0.304$,
with $p < 0.01$ in each case
-- demonstrating that our semiparametric spectral modeling captures biologically relevant neuronal properties.

\begin{figure}[htbp]
	\centering
	\includegraphics[width=0.55\textwidth]{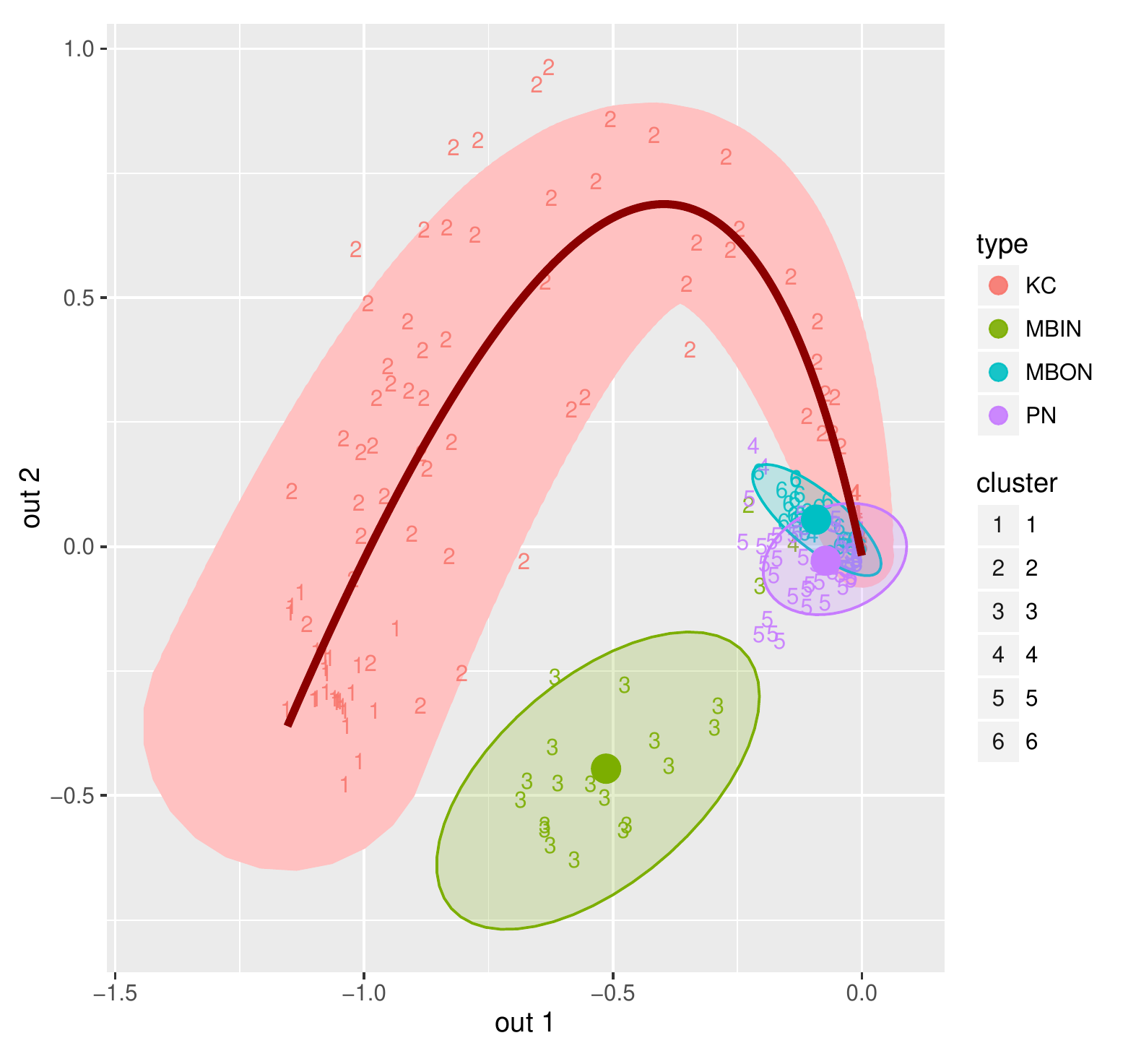}
	\caption{\label{fig:MBsemiparfig}
		Semiparametric spectral latent space estimate of our MB connectome as three Gaussians and a KC curve:
		colors distinguish the four neuron types and
		numbers distinguish the original $\Khat=6$ clusters.
		Recall that this is a two-dimensional visualization of six-dimensional structure. Figure duplicated from \cite{MBStructure}.
	}
\end{figure}

\begin{figure}[htbp]
	\centering
	\includegraphics[width=0.60\textwidth]{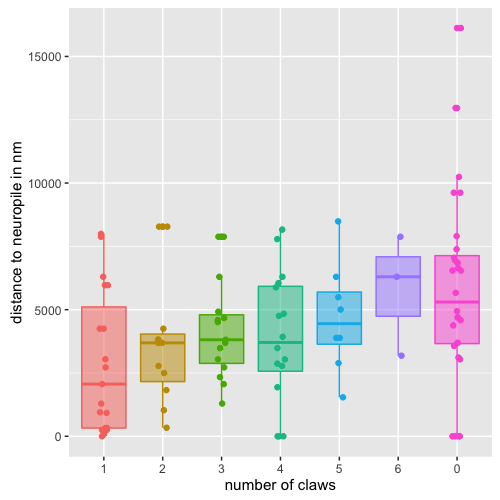}
	\caption{\label{fig:KE1}
		Relationship between number of claws and distance $\delta_i$ (a proxy for age) for the KC neurons,
		from \cite{EichlerSubmitted}. Figure duplicated from \cite{MBStructure}.
	}
\end{figure}

\begin{figure}[htbp]
	\centering
	\includegraphics[width=0.63\textwidth]{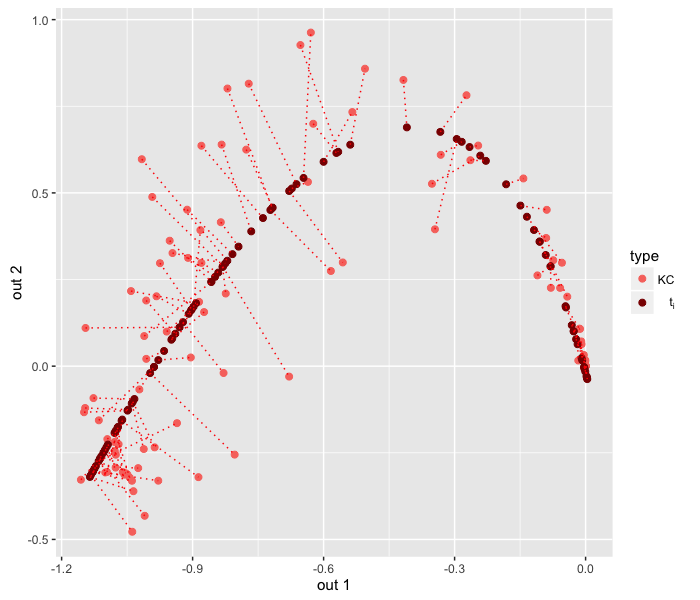}
	\caption{\label{fig:YPYQ21}
		Projection of KC neurons onto the quadratic curve $\Chat$, yielding projection point $t_i$ for each neuron.
		Recall that this is a two-dimensional visualization of six-dimensional structure. Figure duplicated from \cite{MBStructure}.
	}
\end{figure}

\begin{figure}[htbp]
	\centering
	\includegraphics[width=0.63\textwidth]{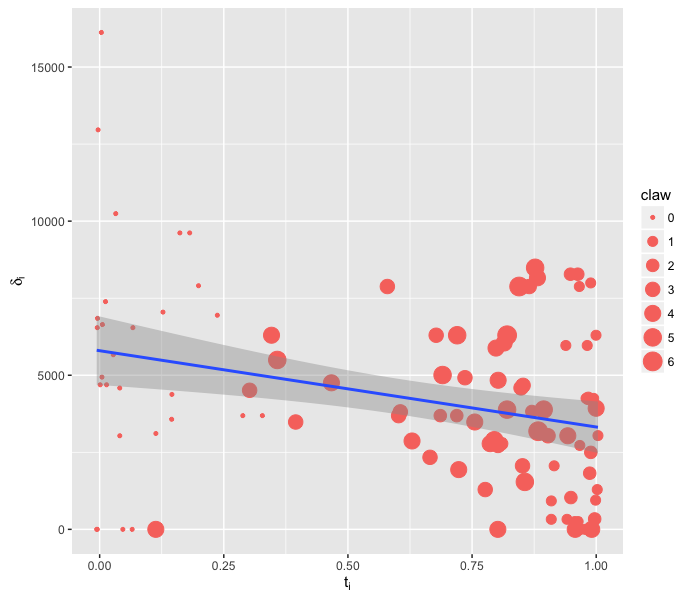}
	\caption{\label{fig:YPYQ22}
		The correlation between the projection points $t_i$ on the quadratic curve $\Chat$ and distance $\delta_i$ (a proxy for age) for the KC neurons
		is highly significant, demonstrating that our semiparametric spectral modeling captures biologically relevant neuronal properties. Figure duplicated from \cite{MBStructure}.
	}
\end{figure}

In summary, motivated by the results of a spectral clustering investigation of
the recently-reconstructed synapse-level larval {\it Drosophila} mushroom body structural connectome,
which demonstrate conclusively that modeling the Kenyon Cells (KC) demands additional latent space structure,
we have developed semiparametric spectral modeling.
Exploratory data analysis suggests that the MB connectome can be productively approximated by a 
four-component latent structure model (LSM),
and the resulting 
MB connectome code 
derived via $\smclustase$
captures biologically relevant neuronal properties.  Data and code for all our analyses are available at
\url{http://www.cis.jhu.edu/~parky/MBstructure.html}.

Of course, the true connectome code is more elaborate,
and cannot be completely encompassed by any simple latent position model --
such a model precludes the propensity for transitivity, e.g.\ --
but our semiparametric spectral modeling provides another step along the path.
In terms of a (partial) ladder of biological scales
-- e.g., {\it C.\ elegans}, {\it Drosophila}, zebrafish, mouse, primate, and humans --
this works moves us off the first rung for analysis of a complete
neurons-as-vertices and synapses-as-edges connectome.


\section{Conclusion: complexities, open questions, and future work}\label{sec:Complexities} 

Our paradigm for statistical inference on random graphs is anchored by the familiar pillars of classical Euclidean inference. We exhibit estimates for graph parameters that satisfy (uniform) consistency and asymptotic normality, and we demonstrate how these estimates can be exploited in a bevy of subsequent inference tasks: community detection in heterogeneous networks, multi-sample graph hypothesis testing, and exploratory data analysis in connectomics. The random dot product graph model in which we ground our techniques has both linear algebraic transparency and wide applicability, since a hefty class of independent-edge graphs is well-approximated by RDPGs. The lynchpins of our approach are spectral decompositions of adjacency and Laplacian matrices, but many of our results and proof techniques can be applied to more general random matrices. In recent work, for example, we examine eigenvalue concentration for certain classes of random matrices \cite{cape_16_conc}, and accurate estimation of covariance matrices \cite{cape_covar}. As such, our methodology is a robust edifice from which to explore questions in graph inference, data analysis, and random matrix theory. 

The results we summarize here are but the tip of the iceberg, though, and there persist myriad open problems in spectral graph inference---for random dot product graphs in particular, of course, but also for random graphs writ large. In this section, we outline some current questions of interest and describe future directions for research. 

Our consistency results for the adjacency spectral embedding depend on knowing the correct embedding dimension.  In real data, this optimal embedding dimension is typically not only not known, but, since the RDPG model is only an approximation to any true model, may well depend on the desired subsequent inference task.
As we have pointed out in Section \ref{subsec:MBStructure}, multiple methods exist for estimating embedding dimension, and the universal singular value thresholding of \cite{chatterjee2015} and other thresholding methods \cite{fishkind2013consistent} are theoretically justified in the large-$n$ limit.
For finite $n$, however, model selection is considerably trickier. 
Underestimation of the embedding dimension can markedly---and provably---bias subsequent inference. While we do not address it here, we remark that asymptotic results can be shown for the adjacency spectral embedding of a $d$-dimensional RDPG when the chosen embedding dimension is $d'<d$.  On the other hand, if the embedding dimension is overestimated, no real signal is lost; therefore, most embedding methods continue to perform well, albeit with some loss of efficiency due to increased variance. Precisely quantifying this loss of efficiency is very much an open question and is related to analogous classical questions in random matrix theory \cite{tao2012random}.

In our RDPG model, an important assumption is that the $\bP$ matrix be positive semidefinite. While this limits the model, a slight generalization of the RDPG shares many of its important properties.
Considering a matrix of latent positions $\mathbf{X} \in \mathbb{R}^{p+q}$, one can set $\mathbf{P} = \mathbf{X} \mathbf{I}_{p,q} \mathbf{X}^{\top}$ where $\mathbf{I}_{p,q}$ is the diagonal matrix of size $(p+q) \times (p+q)$ with $p$ diagonal entries being $1$ and $q$ diagonal entries being $-1$. Under this generalization, any $\bP$ can be obtained, provided that $p+q$ is appropriately chosen. This then implies that any latent position graph model, even a non-positive-semidefinite one, can be approximated arbitrarily closely by this generalized RDPG.

We also wish to adapt our procedures to weighted graphs.  For simple weighting, such as Poisson-distributed weights, our existing methodology applies quite naturally to the weighted adjacency matrix. More intricate questions arise when the weights are contaminated or their distribution is heavily skewed. In such cases, one can ``pass to ranks"; that, is replace the nonzero weight by its normalized rank among all the edge weights.  This mitigates skew and works well in many examples, but a deeper theoretical analysis of this procedure, as well as other approaches to weighted graph inference, remain open.

To address graphs with self-edges, note that the random dot product graph model does not preclude such edges.
Indeed, since $\bP$ is not constrained to be hollow, the adjacency matrix $\bA$ thus generated need not be hollow, either.  
However, the generative mechanism for self-edges in a graph may be different from the mechanism for edges between two different vertices.  One approach to addressing this is to set the diagonal entries of $\bA$ to zero and then {\em augment} the diagonal artificially with imputed values. 
In fact, even when there are no self loops, such procedures can improve finite-sample inferential performance.
In \cite{Marchette2011VN}, it is demonstrated that augmenting the diagonal via $\bA_{ii}=d_i/(n+1)$, where $d_i$ is the degree of vertex $i$, can improve inference in certain cases. 
Similarly, \cite{Scheinerman2010} describes an iterative procedure to find a diagonal augmentation consistent with the low-rank structure of $\bP$.
It is still unclear exactly what augmentation of the diagonal, if any, might be optimal for each particular inference task.

In the case when the vertices or edges are corrupted by occlusion or noise, \cite{priebes.:_statis} and \cite{levin_lyzin_laplacian} consider approaches to vertex classification and graph recovery, demonstrating that spectral embeddings are robust to certain sources of graph error. Violations of the independent edge assumption, though, can lead to more significant hurdles, both theoretical practical, since it is a requirement for a number of the concentration inequalities on which we depend.  

For joint graph inference and testing, open problems abound. We mention, for instance, the analysis of the omnibus embedding when the $m$ graphs are correlated, or when some are corrupted; a closer examination of the impact of the Procrustes alignment on power; the development of an analogue to a Tukey test for determining which graphs differ when we test for the equality of distribution for more than two graphs; the comparative efficiency of the omnibus embedding relative to other spectral decompositions; and a quantification of the trade-off for subsequent inference between a large number of independent graphs and large graph size (\cite{runze_law_large_graphs}).

In sum, the random dot product graph is a compact, manageable, and applicable model. The Euclidean nature of the adjacency and Laplacian spectral embedding for a random dot product graph allows us to approach statistical inference in this setting from a familiar Euclidean perspective.  Both the adjacency spectral embedding and the Laplacian spectral embedding can be profitably leveraged for latent position estimation and single- and multi-sample graph hypothesis testing. Moreover, our distributional results for these spectral embeddings provide reassuring classical analogues of asymptotic normality for estimators, and in current ongoing work, we consider how to compare asymptotic relative efficiency of different estimators for graph parameters.
While spectral methods may not always be optimal for a given task, they are often feasible and can provide a way to accurately initialize more complex procedures. Moreover, these Euclidean representations of graph data render possible the application of techniques for analysis of Euclidean data---clustering, classification, and density estimation, for instance---to graphs.
As we have outlined above, while many important theoretical and practical challenges remain, spectral embeddings for random dot product graphs constitute an important piece of the greater puzzle of random graph inference.

\section{Appendix}\label{sec:Appendix}
In the appendix, we provide details for the proofs of our results on consistency and asymptotic normality of the adjacency spectral embedding, as well as an outline of the proof of the central limit theorem for the Laplacian spectral embedding.  Our notation remains as stated in Section \ref{sec:ASE_Inference_RDPG}. We begin with a detailed proof of our result on the consistency, in the $2 \to \infty$ norm, of the ASE for latent position recovery in RDPGs.
\subsection*{Proof of Theorem~\ref{thm:minh_sparsity}}
\label{sec:minh}
Let us recall {\bf Theorem \ref{thm:minh_sparsity}}:
Let $\bA_n \sim \mathrm{RDPG}(\bX_n)$ for $n \geq 1$ be a sequence of random dot product graphs where the $\bX_n$ is assumed to be of rank $d$ for all $n$ sufficiently large. Denote by $\hat{\bX}_n$ the adjacency spectral embedding of $\bA_n$ and let $(\hat{\bX}_{n})_{i}$ and $(\bX_n)_{i}$ be the $i$-th row of $\hat{\bX}_n$ and $\bX_n$, respectively. Let $E_n$ be the event that there
exists an orthogonal transformation $\bW_n \in \mathbb{R}^{d \times d}$ such that
\begin{equation*}
\max_{i} \| (\hat{\bX})_{i} - \bW_n (\bX_n)_{i} \| \leq 
\frac{C d^{1/2} \log^2{n}}{\delta^{1/2}(\mathbf{P}_n)}
\end{equation*}
where $C > 0$ is some fixed constant and $\mathbf{P}_n = \mathbf{X}_n \mathbf{X}_n^{\top}$. Then $E_n$ occurs asymptotically almost surely; that is, $\Pr(E_n) \rightarrow 1$ as $n \rightarrow \infty$.

The proof of Theorem~\ref{thm:minh_sparsity} will follow from a succession of supporting results.  
We note that Theorem~\ref{thm:minh_frob}, which deals with the accuracy of spectral embedding estimates in Frobenius norm, may be of independent interest.  
We begin with the following simple but essential proposition, in which we show that $\UP^{\top} \UA$ is close to an orthogonal transformation. For ease of exposition, in the remainder of this subsection we shall suppress the subscript index $n$ from our matrices $\mathbf{X}_n$, $\mathbf{A}_n$ and $\hat{\mathbf{X}}_n$.
\begin{proposition}
	\label{prop:uptua_close_rotation}
	Let $\bA \sim \mathrm{RDPG}(\bX)$ and let
	$\bW_1 \bm{\Sigma} \bW_2^{\top}$ be the singular value
	decomposition of $\UP^{\top}
	\UA$. 
	Then with high probability,
	\begin{equation*}
	\|\UP^{\top} \UA -
	\bW_1 \bW_2^{\top} \| = O(\delta^{-1}(\mathbf{P})).
	\end{equation*}
\end{proposition}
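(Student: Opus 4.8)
The plan is to exploit the fact that the singular values of $\UP^{\top}\UA$ are exactly the cosines of the principal angles between the column spaces $\mathrm{col}(\UP)$ and $\mathrm{col}(\UA)$, so that $\bW_1 \bW_2^{\top}$ is precisely the orthogonal factor in the polar decomposition of $\UP^{\top}\UA$, and the quantity we must control is governed by how far these cosines lie from $1$. First I would reduce the claim to a statement about the smallest singular value. Writing $\UP^{\top}\UA = \bW_1 \bm{\Sigma}\, \bW_2^{\top}$ with $\bm{\Sigma} = \mathrm{diag}(\sigma_1,\dots,\sigma_d)$ and $\bW_1,\bW_2$ orthogonal, we have $\UP^{\top}\UA - \bW_1\bW_2^{\top} = \bW_1(\bm{\Sigma}-\bI)\bW_2^{\top}$, so that
\[
\|\UP^{\top}\UA - \bW_1\bW_2^{\top}\| = \|\bm{\Sigma}-\bI\| = \max_i (1-\sigma_i) = 1 - \sigma_d,
\]
where the last equalities use $0 \le \sigma_i \le 1$ (valid since $\UP,\UA$ have orthonormal columns, whence $\|\UP^{\top}\UA\|\le 1$). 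Thus the entire task reduces to showing $1 - \sigma_d = O(\delta^{-1}(\bP))$.

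Next I would pass to the largest principal angle $\theta_d$ via $\sigma_d = \cos\theta_d$ and invoke the elementary inequality $1-\cos\theta \le \sin^2\theta$ valid for $\theta \in [0,\pi/2]$, giving $1 - \sigma_d \le \sin^2\theta_d$. This squaring is exactly the source of the improvement from the naive rate $\delta^{-1/2}$ to the claimed rate $\delta^{-1}$, and is the one genuinely important structural observation in the argument.

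It then remains to bound $\sin\theta_d$, and here I would use the standard identity $\sin\theta_d = \|\UP\UP^{\top} - \UA\UA^{\top}\|$ (the largest sine of a principal angle equals the spectral norm of the difference of the two rank-$d$ orthogonal projections). The Davis--Kahan consequence recorded immediately after Theorem~\ref{thm:davis-kahan} yields
\[
\|\UP\UP^{\top} - \UA\UA^{\top}\| \le \frac{C\,\|\bA-\bP\|}{\lambda_d(\bP)}.
\]
Under Assumption~\ref{ass:max_degree_assump}, the concentration bound of Theorem~\ref{thm:lu_peng} gives $\|\bA-\bP\| = O(\sqrt{\delta(\bP)})$ with high probability, while Remark~\ref{rem:nonneg_evals_A} together with the eigengap hypothesis $\gamma(\bP)\ge c_0$ gives $\lambda_d(\bP)$ of order $\delta(\bP)$. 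Substituting, $\sin\theta_d = O(\delta^{-1/2}(\bP))$ with high probability, and feeding this into the previous paragraph gives $1-\sigma_d \le \sin^2\theta_d = O(\delta^{-1}(\bP))$, as required.

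The main obstacle here is conceptual rather than computational: one must resist applying Davis--Kahan directly to $\|\UP^{\top}\UA - \bW_1\bW_2^{\top}\|$, which would only deliver the weaker $O(\delta^{-1/2})$ bound. The extra factor of $\delta^{-1/2}$ is recovered precisely because $\bW_1\bW_2^{\top}$ is the orthogonal polar factor of $\UP^{\top}\UA$ and the deviation from orthogonality, $1-\cos\theta_d$, is \emph{quadratic} in the small angle $\theta_d$. Once this is recognized, every remaining ingredient is already established earlier in the paper and the argument is pure bookkeeping.
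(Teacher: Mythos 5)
Your proposal is correct and follows essentially the same route as the paper's proof: reduce to $\|\bm{\Sigma}-\bI\|$, use that $1-\sigma_i \le 1-\sigma_i^2 = \sin^2\theta_i$ (your $1-\cos\theta \le \sin^2\theta$ is the same inequality), and then bound $\max_i\sin\theta_i = \|\UA\UA^{\top}-\UP\UP^{\top}\|$ by $O(\delta^{-1/2}(\bP))$ via Davis--Kahan, the spectral norm concentration of $\bA-\bP$, and the eigengap assumption. The quadratic gain you highlight is exactly the mechanism the paper uses.
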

\begin{proof}
	Let $\sigma_1, \sigma_2, \dots, \sigma_d$ denote the singular values of
	$\UP^{\top} \UA$ (the diagonal
	entries of $\bm{\Sigma}$). Then $\sigma_i = \cos(\theta_i)$ where
	the $\theta_i$ are the principal angles between the subspaces
	spanned by $\UA$ and
	$\UP$. Furthermore, by the Davis-Kahan Theorem,
	\begin{align*}
	\| \UA \UA^{\top} -
	\UP \UP^{\top} \| &=
	\max_{i} | \sin(\theta_i) |\leq
	\frac{C \sqrt{d} \|\bA - \bP\|}{\lambda_{d}(\bP)}\\
	\end{align*}
	for sufficiently large $n$. Recall here $\lambda_{d}(\bP)$ denotes 
	the $d$-th largest eigenvalue of $\bP$. The spectral norm bound for $\bA
	- \bP$ from Theorem~\ref{thm:lu_peng} along with the assumption that $\lambda_d(\mathbf{P})/\delta(\mathbf{P}) \geq c_0$ for some constant $c_0$ in Assumption~\ref{ass:max_degree_assump}
	yield
	$$\| \UA \UA^{\top} -
	\UP \UP^{\top} \|\leq\frac{C \sqrt{d}}{\delta^{1/2}(\mathbf{P})}.$$
	We thus have
	\begin{equation*}
	\begin{split}
	\|\UP^{\top} \UA -
	\bW_1 \bW_2^{\top} \| &= \|\bm{\Sigma} -
	I \| = \max_i |1 - \sigma_i| \leq
	\max_i (1 - \sigma_i^{2})  \\ & = \max_i \sin^{2}(\theta_i) = 
	\|\UA \UA^{T} -
	\UP \UP^{\top} \|^{2}
	= O(\delta^{-1}(\mathbf{P}))
	\end{split}
	\end{equation*}
	as desired. 
\end{proof}
Denote by $\bW^{*}$ the orthogonal matrix
$\bW_1 \bW_2^{\top}$ as defined in the above
proposition. We now establish the following key lemma. The lemma
allows us to exchange the order of the orthogonal transformation
$\bW^{*}$ and the diagonal scaling transformation $\SA$ or $\SP$. 
\begin{lemma}
	\label{lem:order_bounds_on_minh_differences}
	Let $(\bA, \bX) \sim \mathrm{RDPG}(F)$ with sparsity factor $\rho_n$. Then asymptotically almost surely,
	\begin{equation}\label{eq:approx_commute1_nosqrt}
	\|\bW^{*} \SA - \SP \bW^{*} \|_{F} = O(\log n)
	\end{equation}
	and
	\begin{equation}\label{eq:approx_commute2_sqrt}
	\|\bW^{*} \SA^{1/2}  - \SP^{1/2}
	\bW^{*} \|_{F} = O((\log n) \,\delta^{-1/2}(\mathbf{P}))
	\end{equation} 
\end{lemma}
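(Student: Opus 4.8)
The plan is to prove \eqref{eq:approx_commute1_nosqrt} first and then deduce the square-root estimate \eqref{eq:approx_commute2_sqrt} from it by an elementary entrywise argument, exploiting the fact that $\SA$ and $\SP$ are \emph{diagonal}. Writing $\bM = \bW^{*}\SA^{1/2} - \SP^{1/2}\bW^{*}$, diagonality gives, for every $i,j$,
$$(\bM)_{ij} = \bW^{*}_{ij}\bigl((\SA^{1/2})_{jj} - (\SP^{1/2})_{ii}\bigr), \qquad (\bW^{*}\SA - \SP\bW^{*})_{ij} = \bW^{*}_{ij}\bigl((\SA^{1/2})_{jj} - (\SP^{1/2})_{ii}\bigr)\bigl((\SA^{1/2})_{jj} + (\SP^{1/2})_{ii}\bigr),$$
so that $(\bM)_{ij} = (\bW^{*}\SA - \SP\bW^{*})_{ij}\big/\bigl((\SA^{1/2})_{jj} + (\SP^{1/2})_{ii}\bigr)$. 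Under Assumption~\ref{ass:max_degree_assump} the eigengap bound $\gamma(\bP)\ge c_0$ forces $\lambda_d(\bP)\ge c_0\,\delta(\bP)$, whence $(\SP^{1/2})_{ii}\ge \sqrt{c_0\,\delta(\bP)}$; combined with the nonnegativity of the top $d$ eigenvalues of $\bA$ (Remark~\ref{rem:nonneg_evals_A}), this bounds every denominator below by $\sqrt{c_0\,\delta(\bP)}$. Summing squares then yields $\|\bM\|_F \le \|\bW^{*}\SA - \SP\bW^{*}\|_F\big/\sqrt{c_0\,\delta(\bP)}$, which is exactly \eqref{eq:approx_commute2_sqrt} once \eqref{eq:approx_commute1_nosqrt} is in hand. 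Hence all of the substance lies in \eqref{eq:approx_commute1_nosqrt}.

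For \eqref{eq:approx_commute1_nosqrt} I would begin from the eigenvector identities $\bA\UA = \UA\SA$ and $\UP^{\top}\bP = \SP\UP^{\top}$. Left-multiplying the first by $\UP^{\top}$ and subtracting produces the exact identity
$$\UP^{\top}\UA\,\SA - \SP\,\UP^{\top}\UA = \UP^{\top}(\bA - \bP)\UA.$$
I would then swap $\UP^{\top}\UA$ for the orthogonal matrix $\bW^{*}$, paying for the swap with Proposition~\ref{prop:uptua_close_rotation}: since $\|\bW^{*} - \UP^{\top}\UA\| = O(\delta^{-1}(\bP))$ and $\|\SA\| = O(\delta(\bP))$ (the latter by Weyl's inequality together with the spectral-norm control of $\bA-\bP$, as in Remark~\ref{rem:nonneg_evals_A}), the two correction terms $(\bW^{*} - \UP^{\top}\UA)\SA$ and $\SP(\bW^{*} - \UP^{\top}\UA)$ are each $O(1)$ in Frobenius norm. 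This reduces \eqref{eq:approx_commute1_nosqrt} to showing $\|\UP^{\top}(\bA - \bP)\UA\|_F = O(\log n)$.

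The heart of the matter, and the step I expect to be the main obstacle precisely because $\UA$ depends on $\bA$, is this last bilinear bound. The plan is to split $\UA = \UP(\UP^{\top}\UA) + (\bI - \UP\UP^{\top})\UA$. The second piece is controlled by Davis-Kahan (Theorem~\ref{thm:davis-kahan}): as already noted in the proof of Proposition~\ref{prop:uptua_close_rotation}, $\|(\bI - \UP\UP^{\top})\UA\| = \|\UA\UA^{\top} - \UP\UP^{\top}\| = O(\delta^{-1/2}(\bP))$, while the Lu-Peng bound (Theorem~\ref{thm:lu_peng}) gives $\|\bA - \bP\| = O(\delta^{1/2}(\bP))$, so this piece contributes $O(\delta^{1/2}(\bP))\cdot O(\delta^{-1/2}(\bP)) = O(1)$. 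The genuinely delicate piece is the $d\times d$ matrix $\UP^{\top}(\bA - \bP)\UP$ (the trailing factor $\UP^{\top}\UA$ is a contraction, and so harmless). Here I would bound each of its finitely many entries $\sum_{k,l}(\UP)_{ki}(\UP)_{lj}(\bA - \bP)_{kl}$ directly: this is a weighted sum of the independent, mean-zero, bounded variables $\{(\bA - \bP)_{kl}\}_{k<l}$, with the hollow diagonal contributing only the deterministic term $-\sum_k (\UP)_{ki}(\UP)_{kj}\bP_{kk}$ of size at most $\max_k \bP_{kk}\le 1$. The crucial point is that the squared coefficients sum to $O(1)$, using only the orthonormality $\sum_k (\UP)_{ki}^2 = 1$ (this is what keeps the data-dependence of $\UA$ out of the hard estimate), so Hoeffding's inequality (Theorem~\ref{thm:Hoeffding}) makes each entry $O(\log n)$ with high probability.

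Assembling these estimates gives $\|\UP^{\top}(\bA - \bP)\UA\|_F = O(\log n)$, hence \eqref{eq:approx_commute1_nosqrt}, and then \eqref{eq:approx_commute2_sqrt} by the first paragraph. Each ingredient—the Lu-Peng spectral-norm bound, the Davis-Kahan angle bound, and the $O(d^2)$ entrywise Hoeffding estimates—holds with high probability, and a union bound preserves this; since convergence with high probability implies convergence asymptotically almost surely, both conclusions hold a.a.s.\ as claimed.
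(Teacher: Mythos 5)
Your proposal is correct and follows essentially the same route as the paper: the same reduction of $\bW^{*}\SA - \SP\bW^{*}$ to $\UP^{\top}(\bA-\bP)\UA$ via the eigenvector identities and the $O(\delta^{-1}(\bP))$ swap of $\UP^{\top}\UA$ for $\bW^{*}$, the same split of $\UA$ into $\UP\UP^{\top}\UA$ plus a Davis--Kahan-controlled residual, the same entrywise Hoeffding bound on the $d\times d$ matrix $\UP^{\top}(\bA-\bP)\UP$, and the same diagonal/difference-of-square-roots trick for \eqref{eq:approx_commute2_sqrt}. No substantive differences to report.
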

\begin{proof}
	Let $\bR = \UA - \UP\UP^{\top} \UA$. We note
	that $\bR$ is the residual after projecting
	$\UA$ orthogonally onto the column space of
	$\UP$, and thus
	\begin{align*} 
	\|\UA - \UP
	\UP^{\top} \UA \|_{F} \leq \min_{\mathbf{W}} \|\mathbf{U}_{\mathbf{A}} - \mathbf{U}_{\mathbf{P}} \mathbf{W} \|_{F}
	\end{align*}  
	where the minimization is over all orthogonal matrices $\mathbf{W}$. 
	The variant of the Davis-Kahan Theorem given in Eq.~\eqref{eq:Davis_Kahan_variant1} then implies
	$$ \|\UA - \UP
	\UP^{\top} \UA \|_{F} \leq O(\delta^{-1/2}(\mathbf{P})).$$ 
	We next derive that
	\begin{align*}
	\bW^{*}& \SA =  (\bW^{*} -
	\UP^{\top} \UA)
	\SA  +
	\UP^{\top} \UA
	\SA
	= (\bW^{*} -
	\UP^{\top} \UA) \SA +
	\UP^{\top} \bA \UA
	\\ &=  (\bW^{*} -
	\UP^{\top} \UA) \SA  +\UP^{\top}(\bA - \bP)
	\UA +
	\UP^{\top}\bP \UA \\
	&=  (\bW^{*} -
	\UP^{\top} \UA) \SA +
	\UP^{\top}(\bA - \bP) \bR+ \UP^{\top}(\bA - \bP)
	\UP \UP^{\top}
	\UA + \UP^{\top}
	\bP \UA \\
	&= (\bW^{*} -
	\UP^{\top} \UA) \SA +
	\UP^{\top}(\bA - \bP) \bR+ \UP^{\top}(\bA - \bP)
	\UP \UP^{\top}
	\UA + \SP \UP^{\top}
	\UA 
	\end{align*}
	Writing $\SP \UP^{\top}
	\UA = \SP
	(\UP^{\top} \UA -
	\bW^{*}) + \SP \bW^{*}$ and
	rearranging terms, we obtain
	\begin{equation*}
	\begin{split}
	\|\bW^{*} \SA -
	\SP \bW^{*}\|_{F} & \leq  \|\bW^{*}
	- \UP^{\top} \UA \|_{F}
	(\|\SA\| + \|\SP\|) + 
	\|\UP^{\top}(\bA - \bP)
	\bR\|_{F}\\
	&+  \|\UP^{\top}(\bA -
	\bP) \UP\UP^{\top}\UA\|_{F} \\
	& \leq  O(1) + O(1) + \|\UP^{\top}(\bA -
	\bP) \UP\|_{F}\|\UP^{\top}\UA\|
	\end{split}
	\end{equation*}
	asymptotically almost surely. Now, $\|\UP^{\top}\UA\| \leq 1$. Hence we can focus on the term  $\UP^{\top}(\bA -
	\bP) \UP$, which is a $d \times d$ matrix
	whose $ij$-th entry is of the form
	\begin{align*}
	(\UP^{\top} (\bA - \bP) \UP)_{ij} &=
	\sum_{k=1}^{n} \sum_{l=1}^{n} (\bA_{kl} -
	\bP_{kl}) \bU_{ki} \bU_{lj} \\
	&= 2 \sum_{k,l : k < l}
	(\bA_{kl} - \bP_{kl})\bU_{ki}\bU_{lj} - \sum_{k}
	\bP_{kk} \bU_{ki} \bU_{kj}
	\end{align*}
	where $\bU_{\cdot i}$ and $\bU_{\cdot j}$ are the $i$-th and $j$-th
	columns of $\UP$. Thus, conditioned on
	$\bP$, $(\UP^{\top} (\bA - \bP)
	\UP)_{ij}$ is a sum of independent mean $0$ random
	variables and a term of order $O(1)$. 
	Now, by Hoeffding's inequality, 
	\begin{align*}
	\Pr&\left[ \bigg|\sum_{k,l : k < l} 2 (\bA_{kl}- \bP_{kl}) \bU_{ki} u_{lj}\bigg|\geq t \right] \leq 2\exp
	\Bigl( \frac{-2t^2}{\sum_{k,l : k < l} (2\bU_{ki} \bU_{lj})^2}\Bigr) 
	\leq 2\exp(-t^2).
	\end{align*}
	Therefore, each entry of
	$\UP^{\top}(\bA - \bP)
	\UP$ is of order
	$O(\log n)$ with high probability, and as a consequence, since $\UP^{\top} (\bA-\bP)\UP$ is a $d \times d$ matrix,
	\begin{equation}\label{eq:hoeffding_up_upt}
	\|\UP^{\top}(\bA - \bP)
	\UP\|_{F} = O(\log n)
	\end{equation} 
	with high probability.
	We establish that
	$$\|\bW^{*} \SA^{1/2} -
	\SP^{1/2} \bW^{*} \|_{F} = O((\log n) \lambda_d^{-1/2}(\mathbf{P}))$$
	by
	noting that the $ij$-th entry of $\bW^{*} \SA^{1/2} -
	\SP^{1/2} \bW^{*}$ can be written as
	$$\bW^{*}_{ij} (\lambda_i^{1/2}(\bA) -
	\lambda_{j}^{1/2}(\bP)) = \bW^{*}_{ij} \frac{\lambda_{i}(\bA) -
		\lambda_{j}(\bP)}{\lambda_{i}^{1/2}(\bA) + \lambda_{j}^{1/2}(\bP)},  $$ 
	and the desired bound follows from the above after bounding $\lambda_i(\bA)$, either by Weyl's inequality and Theorem~\ref{thm:lu_peng}, or, alternatively, by a Kato-Temple inequality from \cite{cape_16_conc}.
\end{proof}
We next present Theorem~\ref{thm:minh_frob}, which allows us to write the Frobenius norm difference of the adjacency spectral embedding $\hat{\mathbf{X}}$ and the true latent position $\bX$ in terms of the Frobenius norm difference of $\mathbf{A} - \mathbf{P}$ and smaller order terms. 
\begin{theorem}
	\label{thm:minh_frob}
	Let $\bA \sim \mathrm{RDPG}(\bX)$. Then there exists an orthogonal matrix $\mathbf{W}$ such that, with high probability,
	\begin{align*}
	\|\Xhat - \bX \bW
	\|_{F} = \|(\bA - \bP) \UP
	\SP^{-1/2} \|_{F} + O((\log n)  \delta^{-1/2}(\mathbf{P}))
	\end{align*}
\end{theorem}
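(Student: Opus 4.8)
The plan is to establish the \emph{exact} decomposition
\begin{equation*}
\Xhat - \bX\bW = (\bA-\bP)\UP\SP^{-1/2}\bW^{*} + \bR,
\end{equation*}
where $\bW^{*}$ is the orthogonal factor of $\UP^{\top}\UA$ from Proposition~\ref{prop:uptua_close_rotation}, where $\bX\bW := \UP\SP^{1/2}\bW^{*}$ is a legitimate canonical choice of latent positions (a product of orthogonal matrices being orthogonal), and where the residual satisfies $\|\bR\|_{F} = O((\log n)\delta^{-1/2}(\bP))$ with high probability. Granting this, the reverse triangle inequality yields $\bigl|\,\|\Xhat-\bX\bW\|_{F}-\|(\bA-\bP)\UP\SP^{-1/2}\bW^{*}\|_{F}\,\bigr|\le\|\bR\|_{F}$, and since right multiplication by the orthogonal $\bW^{*}$ preserves the Frobenius norm, $\|(\bA-\bP)\UP\SP^{-1/2}\bW^{*}\|_{F}=\|(\bA-\bP)\UP\SP^{-1/2}\|_{F}$, which is precisely the claimed identity.

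First I would record two preliminary facts, each holding with high probability. By Remark~\ref{rem:nonneg_evals_A} the top $d$ eigenvalues of $\bA$ are positive, so they coincide with the top $d$ eigenvalues of $|\bA|$, the columns of $\UA$ are genuine eigenvectors of $\bA$, and hence $\bA\UA=\UA\SA$. By Weyl's inequality together with Theorem~\ref{thm:lu_peng} and the eigengap condition of Assumption~\ref{ass:max_degree_assump}, $\lambda_i(\bA),\lambda_i(\bP)=\Theta(\delta(\bP))$ for $i\le d$, so $\|\SP\|=O(\delta(\bP))$ and $\|\SA^{-1/2}\|,\|\SP^{-1/2}\|=O(\delta^{-1/2}(\bP))$. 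With these in hand I would split
\begin{equation*}
\Xhat-\bX\bW=(\UA-\UP\bW^{*})\SA^{1/2}+\UP(\bW^{*}\SA^{1/2}-\SP^{1/2}\bW^{*}),
\end{equation*}
sending the second summand into $\bR$ at once via Eq.~\eqref{eq:approx_commute2_sqrt}. For the first summand I would substitute $\UA\SA^{1/2}=\bA\UA\SA^{-1/2}$, add and subtract $\bP\UA\SA^{-1/2}=\UP\SP\UP^{\top}\UA\SA^{-1/2}$, and collect
\begin{equation*}
(\UA-\UP\bW^{*})\SA^{1/2}=(\bA-\bP)\UA\SA^{-1/2}+\UP\bigl(\SP\UP^{\top}\UA\SA^{-1/2}-\bW^{*}\SA^{1/2}\bigr).
\end{equation*}

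To dispatch the mixed term I would write $\UP^{\top}\UA=\bW^{*}+\bE$ with $\|\bE\|=O(\delta^{-1}(\bP))$ from Proposition~\ref{prop:uptua_close_rotation}, giving $\UP(\SP\bW^{*}-\bW^{*}\SA)\SA^{-1/2}+\UP\SP\bE\SA^{-1/2}$; the first piece is $O((\log n)\delta^{-1/2}(\bP))$ by Eq.~\eqref{eq:approx_commute1_nosqrt} and $\|\SA^{-1/2}\|=O(\delta^{-1/2}(\bP))$, while the second is $O(\delta(\bP)\cdot\delta^{-1}(\bP)\cdot\delta^{-1/2}(\bP))=O(\delta^{-1/2}(\bP))$, so both enter $\bR$. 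It then remains to replace $(\bA-\bP)\UA\SA^{-1/2}$ by $(\bA-\bP)\UP\SP^{-1/2}\bW^{*}$. I would bound the whitened-eigenvector difference by splitting it as $(\UA-\UP\bW^{*})\SA^{-1/2}+\UP(\bW^{*}\SA^{-1/2}-\SP^{-1/2}\bW^{*})$: the Davis--Kahan variant~\eqref{eq:Davis_Kahan_variant1} with Proposition~\ref{prop:uptua_close_rotation} gives $\|\UA-\UP\bW^{*}\|_{F}=O(\delta^{-1/2}(\bP))$ so the first piece is $O(\delta^{-1}(\bP))$, and an entrywise estimate of $\bW^{*}_{ij}(\lambda_i^{-1/2}(\bA)-\lambda_j^{-1/2}(\bP))$ analogous to Lemma~\ref{lem:order_bounds_on_minh_differences} gives the second piece $O((\log n)\delta^{-3/2}(\bP))$. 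Thus $\|\UA\SA^{-1/2}-\UP\SP^{-1/2}\bW^{*}\|_{F}=O(\delta^{-1}(\bP))$, and the submultiplicative bound $\|(\bA-\bP)(\,\cdot\,)\|_{F}\le\|\bA-\bP\|\,\|\cdot\|_{F}$ with the Lu--Peng estimate $\|\bA-\bP\|=O(\delta^{1/2}(\bP))$ makes this last difference $O(\delta^{-1/2}(\bP))$, completing the bound on $\bR$.

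The main obstacle is the careful accounting that every term collected into $\bR$ is genuinely of order $O((\log n)\delta^{-1/2}(\bP))$ and no larger; the single most delicate estimate is the mixed term $\SP\UP^{\top}\UA\SA^{-1/2}-\bW^{*}\SA^{1/2}$, which is not small termwise and becomes small only after combining the eigenvalue-commutator bound of Lemma~\ref{lem:order_bounds_on_minh_differences} with the near-orthogonality bound of Proposition~\ref{prop:uptua_close_rotation} and exploiting the cancellation between the diagonal scalings $\SP$ and $\SA$. A secondary subtlety is that replacing $\UA\SA^{-1/2}$ by $\UP\SP^{-1/2}\bW^{*}$ uses only the crude spectral-norm control of $\bA-\bP$; this is lossless enough here because the Frobenius norm is subordinate to the operator norm on the left factor, but it is exactly the step that must be replaced by a far more refined row-wise argument when upgrading this bound to the $2\to\infty$ estimate of Theorem~\ref{thm:minh_sparsity}.
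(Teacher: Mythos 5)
Your proof is correct, and every load-bearing ingredient coincides with the paper's: the canonical choice $\bX\bW=\UP\SP^{1/2}\bW^{*}$, the power-method substitution $\UA\SA^{1/2}=\bA\UA\SA^{-1/2}$, the commutator bounds of Lemma~\ref{lem:order_bounds_on_minh_differences}, the near-orthogonality of $\UP^{\top}\UA$ from Proposition~\ref{prop:uptua_close_rotation}, the Davis--Kahan variant, and the spectral-norm concentration of $\bA-\bP$, with the reverse triangle inequality and orthogonal invariance of the Frobenius norm closing the argument identically. Where you genuinely differ is in the grouping of residuals: the paper routes through the projector $\UP\UP^{\top}$, producing the named terms $\UP\UP^{\top}(\bA-\bP)\UP\bW^{*}\SA^{-1/2}$ (killed by a direct Hoeffding bound on $\UP^{\top}(\bA-\bP)\UP$) and $(\bI-\UP\UP^{\top})(\bA-\bP)\bR_3\SA^{-1/2}$, whereas you keep $\bP\UA\SA^{-1/2}=\UP\SP\UP^{\top}\UA\SA^{-1/2}$ explicit, absorb it via $\UP^{\top}\UA=\bW^{*}+\bE$ together with the identity $(\SP\bW^{*}-\bW^{*}\SA)\SA^{-1/2}$, and then swap the leading term $(\bA-\bP)\UA\SA^{-1/2}$ for $(\bA-\bP)\UP\SP^{-1/2}\bW^{*}$ in a single operator-norm step at the end. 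Your bookkeeping is sound throughout: $O(\delta(\bP))\cdot O(\delta^{-1}(\bP))\cdot O(\delta^{-1/2}(\bP))$ for $\SP\bE\SA^{-1/2}$, $O(\log n)\cdot O(\delta^{-1/2}(\bP))$ for the commutator piece, and $O(\delta^{1/2}(\bP))\cdot O(\delta^{-1}(\bP))$ for the final swap, all of which fit inside the $O((\log n)\delta^{-1/2}(\bP))$ budget. What your version buys is a leaner derivation with fewer auxiliary matrices; what the paper's version buys is that its particular residual terms are exactly the ones that Lemma~\ref{lem:stringent_control_residuals} must later control row-by-row for the central limit theorem, so the same decomposition is reused verbatim there --- a trade-off you correctly anticipate in your closing remark about the $2\to\infty$ upgrade.
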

\begin{proof}
	Let 
	\begin{align*}
	\bR_1 &= \UP\UP^{\top} \UA -\UP \bW^{*}\\
	\bR_2 &= (\bW^{*}
	\SA^{1/2} - \SP^{1/2} \mathbf{W}^{*}).
	\end{align*}
	We deduce that
	\begin{equation*}
	\begin{split}
	\Xhat - \UP
	\SP^{1/2} \bW^{*} &=
	\UA \SA^{1/2}
	-\UP W^{*}
	\SA^{1/2} + \UP
	(\bW^{*}
	\SA^{1/2} - \SP^{1/2} W^{*}) \\
	&= (\UA - \UP \UP^{\top}
	\UA) \SA^{1/2} +
	\bR_1 \SA^{1/2} + \UP
	\bR_2 
	\\ &= \UA \SA^{1/2} -
	\UP \UP^{\top}
	\UA \SA^{1/2}  + \bR_1 \SA^{1/2} + \UP
	\bR_2 
	\end{split}
	\end{equation*}
	Observe that $\UP \UP^{\top}\bP = \bP$ and $\UA\SA^{1/2} = \bA \UA
	\SA^{-1/2}$. Hence
	\begin{align*}
	\Xhat - \UP\SP^{1/2} \bW^{*} = &(\bA -
	\bP) \UA \SA^{-1/2} - \UP
	\UP^{\top} (\bA - \bP)
	\UA \SA^{-1/2} + \bR_1 \SA^{1/2} + \UP\bR_2 
	\end{align*}
	Writing 
	\begin{equation*}
	\bR_3 = \UA - \UP
	\bW^{*} = \UA - \UP
	\UP^{\top} \UA + \bR_1,
	\end{equation*}
	we derive that
	\begin{equation*}
	\begin{split}
	\Xhat - \UP\SP^{1/2} \bW^{*} =& (\bA -
	\bP) \UP \bW^{*}
	\SA^{-1/2}- \UP\UP^{\top}(\bA - \bP)
	\UP \bW^{*}
	\SA^{-1/2} \\
	&+ (\bI -
	\UP \UP^{\top}) (\bA - \bP)
	\bR_{3} \SA^{-1/2}
	+ \bR_1 \SA^{1/2} + \UP
	\bR_2 
	\end{split}
	\end{equation*}
	Recalling that $\|\UA - \UP
	\UP^{\top} \UA \|_{F}  = O(\delta^{-1/2}(\mathbf{P}))$ with high probability, we have
	\begin{align*}
	\|\bR_{1}\|_{F} = O(\delta^{-1}(\mathbf{P})), \quad \|\bR_2\|_{F} = O((\log n) \, \delta^{-1/2}(\mathbf{P})), \quad \text{and} \,\, 
	\|\bR_3\|_{F} = O(\delta^{-1/2}(\mathbf{P}))
	\end{align*}
	with high probability.
	Furthermore, a similar application of Hoeffding's inequality to that in the proof of Lemma~\ref{lem:order_bounds_on_minh_differences}, along with an application of Weyl's inequality and Theorem~\ref{thm:lu_peng} to bound $\lambda_i(\mathbf{A})$, ensures that
	\begin{equation*}
	\|\UP
	\UP^{\top}(\bA - \bP)
	\UP  \bW^{*}
	\SA^{-1/2} \|_{F}  \leq \|\UP^{\top}(\bA - \bP)
	\UP\|_{F} \|\SA^{-1/2}
	\|_{F}= O(\log n  \delta^{-1/2}(\mathbf{P}))/
	\end{equation*}
	As a consequence, with high probability
	\begin{align*}
	\|\Xhat - \UP\SP^{1/2} \bW^{*}\|_{F}
	&=  \|(\bA -
	\bP) \UP \bW^{*}
	\SA^{-1/2}\|_{F} + O((\log n) \, \delta^{-1/2}(\mathbf{P})) \\
	&= 
	\|(\bA - \bP) \UP
	\SP^{-1/2} \bW^{*}-  (\bA - \bP) \UP
	(\SP^{-1/2} \bW^{*} - \bW^{*}
	\SA^{-1/2}) \|_{F}\\
	&\hspace{5mm}+ O((\log n) \, \delta^{-1/2}(\mathbf{P})) 
	\end{align*}
	A very similar argument to that employed in the proof of  Lemma~\ref{lem:order_bounds_on_minh_differences} implies that
	\begin{equation*}
	\|\SP^{-1/2} \bW^{*} - \bW^{*}
	\SA^{-1/2} \|_{F} = O((\log n) \, \delta^{-3/2}(\mathbf{P}))
	\end{equation*}
	with high probability.
	We thus obtain 
	\begin{align}
	\label{eq:2}
	\|\Xhat - \UP
	\SP^{1/2} \bW^{*}\|_{F}
	&= \|(\bA - \bP) \UP
	\SP^{-1/2} \bW^{*} \|_{F}+ O((\log n) \, \delta^{-1/2} (\mathbf{P})) 
	\notag\\ &= \|(\bA - \bP) \UP
	\SP^{-1/2} \|_{F} + O((\log n) \, \delta^{-1/2}(\mathbf{P}))
	\end{align}
	with high probability. 
	Finally, to complete the proof, we note that 
	$\bX =
	\UP \SP^{1/2} \tilde{\bW}$
	for some orthogonal matrix $\bW$. Since $\bW^{*}$ is also
	orthogonal, we conclude that there exists some orthogonal $\bW$
	for which 
	$\bX \bW = \UP
	\SP^{1/2} \bW^{*},$
	as desired. 
\end{proof}
We are now ready to prove the $2 \rightarrow \infty$ consistency we assert in Theorem~\ref{thm:minh_sparsity}.
\begin{proof}
	
	To establish Theorem~\ref{thm:minh_sparsity}, we note that from Theorem~\ref{thm:minh_frob}
	\begin{align*}
	\|\Xhat - \bX \bW
	\|_{F} &= \|(\bA - \bP) \UP
	\SP^{-1/2} \|_{F}
	+ O((\log n) \, \delta^{-1/2}(\mathbf{P}))
	\end{align*}
	and hence
	\begin{align*}
	\max_{i} \| \Xhat_i - \rho_n^{1/2} \bW \bX_i \| 
	&\leq 
	\frac{1}{\lambda_{d}^{1/2}(\bP)} 
	\max_{i} \|((\bA - \bP) \UP)_{i} \|+ O((\log n) \, \delta^{-1/2}(\mathbf{P}))
	\\
	& \leq \frac{d^{1/2}}{\lambda_{d}^{1/2}(\bP)} 
	\max_{j} \|(\bA - \bP) \bU_{\cdot j} \|_{\infty}+ O((\log n) \, \delta^{-1/2}(\mathbf{P}))
	\end{align*}
	where $\bU_{\cdot j}$ denotes the $j$-th column of
	$\UP$. 
	Now, for a given $j$ and a given index $i$, the $i$-th element of
	the vector $(\bA - \bP) \bU_{\cdot j}$ is of the form
	\begin{equation*}
	\sum_{k} (\bA_{ik} - \bP_{ik}) \bU_{kj}
	\end{equation*}
	and once again, by Hoeffding's inequality, the above term is $O(\log n)$ asymptotically almost surely. Taking the union bound
	over all indices $i$ and all columns $j$ of $\UP$, we conclude that with high probability,
	\begin{align*}
	\max_{i} \| \Xhat_i - \rho_n^{1/2} \bW \bX_i \|& \leq
	\frac{C d^{1/2}}{\lambda_{d}^{1/2}(P)} \log^2 n
	+ O((\log n) \, \delta^{-1/2}(\mathbf{P}))\\
	& \leq \frac{C d^{1/2} \log^2{n}}{\delta^{1/2}(\mathbf{P})}
	\end{align*}
	as desired.
\end{proof}

Now, we move to our distributional results.
\subsection{Proof of the central limit theorem for the adjacency spectral embedding}\label{subsec:CLT_proofdetails}.

Recall {\bf Theorem \ref{thm:clt_orig_but_better}}: Let $(\bA_n, \bX_n) \sim \mathrm{RDPG}(F)$ be a sequence of adjacency matrices and associated latent positions of a $d$-dimensional random dot product graph according to an inner product distribution $F$. Let $\Phi(\bx,\bSigma)$ denote the cdf of a (multivariate)
Gaussian with mean zero and covariance matrix $\bSigma$,
evaluated at $\bx \in \R^d$. Then
there exists a sequence of orthogonal $d$-by-$d$ matrices
$( \Wn )_{n=1}^\infty$ such that for all $\bm{z} \in \R^d$ and for any fixed index $i$,
$$ \lim_{n \rightarrow \infty}
\Pr\left[ n^{1/2} \left( \Xhat_n \Wn - \bX_n \right)_i
\le \bm{z} \right]
= \int_{\supp F} \Phi\left(\bm{z}, \bSigma(\bx) \right) dF(\bx), $$
where
\begin{equation}
\label{def:sigma}\bSigma(\bx) 
= \Delta^{-1} \E\left[ (\bx^{\top} \bX_1 - ( \bx^{\top} \bX_1)^2 ) \bX_1 \bX_1^{\top} \right] \Delta^{-1}; \quad \text{and} \,\, \Delta = \mathbb{E}[\bX_1 \bX_1^{\top}].
\end{equation}

To prove this, we begin with the following simple lemma which indicates that when the rows of $\mathbf{X}$ are sampled i.i.d. from some distribution $F$, that the eigenvalues of $\mathbf{X}$ grows proportionally with $n$. 
\begin{lemma}\label{lem:eigs_lower_bound}
	With notation as above, let $F$ be an inner product distribution and suppose $\bX_1, \cdots, \bX_n, \mathbf{Y}$ be i.i.d $F$. Suppose also that $\Delta = \mathbb{E}[\bX_1 \bX_1^{\top}]$ is of rank $d$. 
	Then for $1 \leq i \leq d$, $\lambda_i(\bP) = \Omega( n \lambda_i(\Delta) )$ almost surely.
\end{lemma}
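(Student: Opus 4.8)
The plan is to reduce the $n \times n$ matrix $\bP = \bX\bX^{\top}$ to the $d \times d$ Gram matrix $\bX^{\top}\bX$ and then invoke the strong law of large numbers. First I would recall the standard linear-algebraic fact that for any matrix $\bX \in \R^{n \times d}$, the matrices $\bX\bX^{\top}$ and $\bX^{\top}\bX$ share the same nonzero eigenvalues (with the same multiplicities). Since $\bX^{\top}\bX$ is $d \times d$, the matrix $\bP = \bX\bX^{\top}$ has at most $d$ nonzero eigenvalues, and, ordered nonincreasingly, its top $d$ eigenvalues coincide with $\lambda_1(\bX^{\top}\bX) \ge \cdots \ge \lambda_d(\bX^{\top}\bX)$ as soon as the latter are positive. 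Thus it suffices to control the eigenvalues of $\bX^{\top}\bX$.

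Next I would write $\bX^{\top}\bX = \sum_{k=1}^{n} \bX_k \bX_k^{\top}$, a sum of i.i.d.\ $d \times d$ random matrices with common mean $\Delta = \mathbb{E}[\bX_1 \bX_1^{\top}]$. Because $F$ is an inner product distribution, its support lies in the unit ball of $\R^d$: taking $\bx = \by$ in the defining constraint $\bx^{\top}\by \in [0,1]$ forces $\|\bx\|^2 \le 1$. Hence each entry $(\bX_k \bX_k^{\top})_{ab} = (\bX_k)_a (\bX_k)_b$ is bounded in absolute value by $1$, and applying the scalar strong law of large numbers entrywise yields $\tfrac{1}{n}\bX^{\top}\bX \to \Delta$ almost surely. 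Since we work in a fixed finite dimension $d$, entrywise convergence is equivalent to convergence in spectral norm.

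Finally I would transfer this matrix convergence to the eigenvalues via Weyl's inequality, which gives $|\lambda_i(\tfrac{1}{n}\bX^{\top}\bX) - \lambda_i(\Delta)| \le \|\tfrac{1}{n}\bX^{\top}\bX - \Delta\| \to 0$ almost surely, so that $\lambda_i(\bX^{\top}\bX) = n(\lambda_i(\Delta) + o(1))$ almost surely for each $1 \le i \le d$. The rank-$d$ hypothesis on $\Delta$ ensures $\lambda_i(\Delta) > 0$ for $i \le d$, so for all sufficiently large $n$ we have $\lambda_i(\bX^{\top}\bX) \ge \tfrac{1}{2} n \lambda_i(\Delta) > 0$; in particular these top-$d$ eigenvalues are positive, which retroactively justifies their identification with the top-$d$ eigenvalues of $\bP$ from the first step and yields $\lambda_i(\bP) = \Omega(n\,\lambda_i(\Delta))$ almost surely.

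The argument is essentially routine; the only point requiring minor care is the eigenvalue correspondence between $\bP$ and $\bX^{\top}\bX$ together with the ordering convention. This is clean precisely because the limiting eigenvalues $\lambda_i(\Delta)$ are strictly positive for $i \le d$, so the top $d$ eigenvalues of $\bP$ are eventually nonzero and match those of the Gram matrix with no ambiguity.
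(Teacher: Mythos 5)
Your proof is correct and follows the same overall architecture as the paper's: reduce to the $d \times d$ Gram matrix $\bX^{\top}\bX$ via the shared nonzero spectrum of $\bX\bX^{\top}$ and $\bX^{\top}\bX$, show that $\bX^{\top}\bX$ concentrates around $n\Delta$, and transfer to eigenvalues by Weyl's inequality. The one genuine difference is the concentration step. You invoke the strong law of large numbers entrywise, which yields $\tfrac{1}{n}\bX^{\top}\bX \to \Delta$ almost surely but with no rate; the paper instead applies Hoeffding's inequality to each of the $d^2$ entries (each being a sum of $n$ independent variables bounded in $[-1,1]$, exactly as you observe) and takes a union bound, obtaining $\| \bX^{\top}\bX - n\Delta \| \le 2d\sqrt{n \log n}$ with probability at least $1 - 2d^2/n^2$. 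Both arguments suffice for the lemma as literally stated, since ``$\Omega(n\lambda_i(\Delta))$ almost surely'' only requires an eventual lower bound along almost every sample path, and your use of the rank-$d$ hypothesis to get $\lambda_i(\Delta) > 0$ and hence $\lambda_i(\bX^{\top}\bX) \ge \tfrac{1}{2} n \lambda_i(\Delta)$ eventually is exactly right. What the Hoeffding route buys, and what your SLLN route gives up, is a quantitative tail bound with polynomial decay in $n$: elsewhere in the paper (e.g., in the proof of Lemma~\ref{lem:stringent_control_residuals}) this lemma is combined with other events that hold ``with high probability'' in the sense of Definition~\ref{def:whp}, and the union-bound bookkeeping there needs the explicit $1 - 2d^2/n^2$ failure probability rather than a purely asymptotic almost-sure statement. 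So your proof establishes the lemma, but the paper's version is deliberately stronger than its own statement in a way that is used downstream.
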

\begin{proof}
	Recall that for any matrix $\bH$, the nonzero eigenvalues of $\bH^{\top} \bH$ are the same as those of $\bH \bH^{\top}$, so
	$\lambda_i(\bX \bX^{\top}) = \lambda_i( \bX^{\top} \bX).$
	In what follows, we remind the reader that $\bX$ is a matrix whose rows are the tranposes of the column vectors $\bX_i$, and $\bY$ is a $d$-dimensional vector that is independent from and has the same distribution as that of the $\bX_i$. We observe that
	$ (\bX^{\top} \bX - n\E \mathbf{Y} \mathbf{Y}^{\top})_{ij} = \sum_{k=1}^n (\bX_{ki} \bX_{kj} - \E \bY_i \bY_j) $
	is a sum of $n$ independent
	zero-mean random variables, each contained in $[-1,1]$.
	Thus, Hoeffding's inequality yields, for all $i,j \in [d]$,
	$$ \Pr\left[ |(\bX^{\top} \bX) - n\E \bY \bY^{\top}|_{ij} \ge 2\sqrt{ n \log n } \right]
	\le \frac{2}{n^2}. $$
	A union bound over all $i,j \in [d]$ implies that
	$ \|\bX^{\top} \bX - n\E \bY \bY^{\top} \|_F^2 \le 4d^2 n \log n $
	with probability at least $1 - 2d^2/n^2$.
	Taking square roots and noting that the Frobenius norm is an upper bound
	on the spectral norm, we have that
	$ \| \bX^{\top} \bX - n\E (\bY \bY^{\top}) \| \le 2d \sqrt{ n \log n } $
	with probability at least $1 - 2d^2/n^2$,
	and Weyl's inequality~\cite{horn85:_matrix_analy} yields that
	for all $1 \le i \le d$,
	$ | \lambda_i(\bX \bX^{\top} ) - n\lambda_i( \E (\bY \bY^{\top} ) | \le 2d \sqrt{ n \log n } $
	with probability at least $1 - 2d^2/n^2$.
	Of course, since the vector $\bY$ has the same distribution as any of the latent positions $\bX_i$, we see that $\E(\bY \bY^{\top})=\Delta$.
	By the reverse triangle inequality, for any $1 \leq i \leq d$, we have
	$$ \lambda_i(\bX \bX^{\top})\ge\lambda_d(\bX \bX^{\top})
	\ge | n\lambda_d( \Delta) - 2d \sqrt{n \log n} |
	= \Omega( n ). $$ 
	Multiplying through by $\rho_n$, we find that there exists some constant $C$ so that for all $n$ sufficiently large, $\lambda_d(\rho_n\bX \bX^{\top}) \geq n\rho_n \lambda_d(\Delta)$ with probability at least $1 - 2d^2/n^2$.
\end{proof}

As described previously, to prove our central limit theorem, we require somewhat more
precise control on certain residual terms,
which we establish in the following key lemma.  In the lemmas and proofs that follow, we frequently suppress the dependence of the sequence of graphs and embeddings on $n$.
\begin{lemma}
	\label{lem:stringent_control_residuals}
	Let $\bR_1, \bR_2, \bR_3$ be defined, as above, by 
	\begin{equation*} \begin{aligned}
	\bR_1 &= \UP \UP^{\top} \UA - \UP \bW^* \\
	\bR_2 &= \bW^* \SA^{1/2} - \SP^{1/2} \bW^*\\
	\bR_3 &= \UA - \UP \UP^{\top} \UA + \bR_1 = \UA - \UP \bW^*.
	\end{aligned} \end{equation*}
	where, as before, $\bW^*$ is the orthogonal transformation $\bW^*=\bW_1 \bW_2^{\top}$ with $\bW_1 \Sigma \bW_2$ being the singular value decomposition of $\UP^{\top} \UA$.  
	Then the following convergences in probability hold:
	\begin{equation} \label{eq:tozero1}
	\sqrt{n}\left[ (\bA-\bP)\UP(\bW^* \SA^{-1/2} - \SP^{-1/2} \bW^*) \right]_h
	\inprob \zeromx,
	\end{equation}
	\begin{equation} \label{eq:tozero2}
	\sqrt{n} \left[ \UP \UP^{\top} (\bA-\bP) \UP \bW^* \SA^{-1/2} \right]_h
	\inprob \zeromx,
	\end{equation}
	\begin{equation} \label{eq:tozero3}
	\sqrt{n} \left[ (\bI - \UP \UP^{\top})(\bA-\bP) \bR_3 \SA^{-1/2} \right]_h
	\inprob \zeromx,
	\end{equation}
	and with high probability,
	$$\|\bR_1\SA^{1/2}+\UP \bR_2\|_F \le \frac{ C\log n}{n^{1/2}}. $$
\end{lemma}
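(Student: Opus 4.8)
The plan is to treat the four assertions as a single package: each follows by substituting, into the five-term decomposition already established in the proof of Theorem~\ref{thm:minh_frob}, the quantitative bounds available in the i.i.d.\ setting, and then reading off orders of magnitude. First I would record the scalings that hold throughout this subsection. Since the rows of $\bX$ are drawn i.i.d.\ from a fixed inner product distribution on a bounded set, $\delta(\bP) = \Theta(n)$, and Lemma~\ref{lem:eigs_lower_bound} gives $\lambda_d(\bP) = \Omega(n)$; hence Theorem~\ref{thm:lu_peng} yields $\|\bA-\bP\| = O(\sqrt{n})$ w.h.p., and by Weyl's inequality $\|\SA^{1/2}\| = O(\sqrt{n})$ and $\|\SA^{-1/2}\| = O(n^{-1/2})$. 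From Proposition~\ref{prop:uptua_close_rotation} and the proof of Theorem~\ref{thm:minh_frob} I have $\|\bR_1\|_F = O(n^{-1})$, $\|\bR_3\|_F = O(n^{-1/2})$, $\|\bR_2\|_F = O((\log n)n^{-1/2})$, $\|\bW^{*}\SA^{-1/2}-\SP^{-1/2}\bW^{*}\|_F = O((\log n)n^{-3/2})$, and $\|\UP^{\top}(\bA-\bP)\UP\|_F = O(\log n)$ from Eq.~\eqref{eq:hoeffding_up_upt}. Finally, the crucial row-level input: because $\UP = \bX\bW\SP^{-1/2}$ for some orthogonal $\bW$ and the rows of $\bX$ are bounded, each fixed row satisfies $\|(\UP)_{h\cdot}\| \le \|\bX_h\|\,\|\SP^{-1/2}\| = O(n^{-1/2})$.

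With these in hand the bound on $\|\bR_1\SA^{1/2} + \UP\bR_2\|_F$ is immediate: $\|\bR_1\SA^{1/2}\|_F \le \|\bR_1\|_F\,\|\SA^{1/2}\| = O(n^{-1})O(\sqrt{n}) = O(n^{-1/2})$, while $\|\UP\bR_2\|_F \le \|\bR_2\|_F = O((\log n)n^{-1/2})$, so the triangle inequality gives the claimed $C(\log n)/n^{1/2}$. Eq.~\eqref{eq:tozero1} then follows from one product bound: the fixed row $[(\bA-\bP)\UP]_h$ has entries $\sum_k(\bA_{hk}-\bP_{hk})(\UP)_{kj}$ of variance at most one, hence is $O_P(1)$, and multiplying by $\sqrt{n}\,\|\bW^{*}\SA^{-1/2}-\SP^{-1/2}\bW^{*}\| = \sqrt{n}\,O((\log n)n^{-3/2}) = O((\log n)n^{-1})$ drives it to zero. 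For Eq.~\eqref{eq:tozero2} I would use the row structure of $\UP$: the $h$-th row of $\UP\UP^{\top}(\bA-\bP)\UP\bW^{*}\SA^{-1/2}$ equals $(\UP)_{h\cdot}\,\UP^{\top}(\bA-\bP)\UP\,\bW^{*}\SA^{-1/2}$, whose norm is at most $\|(\UP)_{h\cdot}\|\,\|\UP^{\top}(\bA-\bP)\UP\|_F\,\|\SA^{-1/2}\| = O(n^{-1/2})O(\log n)O(n^{-1/2}) = O((\log n)n^{-1})$, so scaling by $\sqrt{n}$ still tends to zero.

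The genuinely delicate claim is Eq.~\eqref{eq:tozero3}. Here I would first expand $\be_h^{\top}(\bI-\UP\UP^{\top}) = \be_h^{\top} - (\UP)_{h\cdot}\UP^{\top}$, which splits the $h$-th row of $(\bI-\UP\UP^{\top})(\bA-\bP)\bR_3$ into a projection piece $(\UP)_{h\cdot}\UP^{\top}(\bA-\bP)\bR_3$ and a bare piece $(\bA-\bP)_{h\cdot}\bR_3$. The projection piece is harmless: its norm is at most $\|(\UP)_{h\cdot}\|\,\|\bA-\bP\|\,\|\bR_3\| = O(n^{-1/2})O(\sqrt{n})O(n^{-1/2}) = O(n^{-1/2})$, and after multiplication by $\sqrt{n}\,\|\SA^{-1/2}\| = O(1)$ it vanishes. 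The obstacle is the bare piece $(\bA-\bP)_{h\cdot}\bR_3 = \sum_k(\bA_{hk}-\bP_{hk})(\bR_3)_{k\cdot}$: the naive Cauchy--Schwarz bound $\|(\bA-\bP)_{h\cdot}\|\,\|\bR_3\| = O(1)$ is exactly a factor too weak, precisely because it ignores that the increments $\bA_{hk}-\bP_{hk}$ and the rows $(\bR_3)_{k\cdot}$ are only weakly dependent.

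To extract the missing cancellation I would decouple via a leave-one-out construction: let $\bA^{(h)}$ agree with $\bA$ off the $h$-th row and column and equal $\bP$ there, and let $\bR_3^{(h)}$ be the corresponding residual, which is independent of $\{\bA_{hk}\}_k$. Conditioning on $\bR_3^{(h)}$, the sum $\sum_k(\bA_{hk}-\bP_{hk})(\bR_3^{(h)})_{k\cdot}$ is a sum of independent, mean-zero vectors whose conditional second moment is controlled by $\|\bR_3^{(h)}\|_F^2 = O(n^{-1})$, so this decoupled sum is $O_P(n^{-1/2})$. The crux---and the step I expect to be the main difficulty---is bounding the coupling error $(\bA-\bP)_{h\cdot}(\bR_3 - \bR_3^{(h)})$: a Davis--Kahan estimate only gives $\|\bR_3 - \bR_3^{(h)}\| = O(n^{-1/2})$, which is again a factor short under Cauchy--Schwarz, so the argument must instead exploit the leading linear-in-$\{\bA_{hk}\}$ structure of $\bR_3 - \bR_3^{(h)}$ and recognize the resulting expression as a quadratic form in the independent centered entries of row $h$ that concentrates at rate $o_P(1)$. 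I would carry this out through a first-order perturbation expansion of the eigenvectors (writing $\UA - \bU_{\bA^{(h)}}$ to leading order as $(\bA - \bA^{(h)})\UP\SP^{-1}$ plus a provably smaller remainder) followed by a Hoeffding/Bernstein bound on the scalar quadratic form, which yields the required $o_P(1)$ and closes Eq.~\eqref{eq:tozero3}.
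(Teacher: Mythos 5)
Your treatment of the Frobenius-norm bound on $\bR_1\SA^{1/2}+\UP\bR_2$ and of the convergences \eqref{eq:tozero1} and \eqref{eq:tozero2} is essentially the paper's own argument: the same norm bounds from Proposition~\ref{prop:uptua_close_rotation}, Lemma~\ref{lem:order_bounds_on_minh_differences}, Eq.~\eqref{eq:hoeffding_up_upt}, and the row bound $\|\UP\|_{\tti}=O(n^{-1/2})$, assembled by submultiplicativity. Where you genuinely diverge is \eqref{eq:tozero3}. The paper splits $\bR_3=(\UA-\UP\UP^{\top}\UA)+\bR_1$, so that the troublesome term becomes $\bB_1=(\bI-\UP\UP^{\top})(\bA-\bP)(\bI-\UP\UP^{\top})\UA\UA^{\top}\,\UA\SA^{-1/2}$, and then exploits the permutation-exchangeability of the model: the rows of $\bE_1=(\bI-\UP\UP^{\top})(\bA-\bP)(\bI-\UP\UP^{\top})\UA\UA^{\top}$ are equidistributed, so $n\,\E\|(\bE_1)_h\|^2=\E\|\bE_1\|_F^2$, and a global Frobenius bound $\|\bE_1\|_F\le C\log n$ plus Markov's inequality converts into the needed row-level $o_P(1)$ statement without ever decoupling row $h$ from the eigenvectors. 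You instead keep $\bR_3$ intact, isolate the bare row $(\bA-\bP)_{h\cdot}\bR_3$, and propose a leave-one-out construction $\bA^{(h)}$, $\bR_3^{(h)}$ with a conditional second-moment bound on the decoupled sum. Each approach buys something: the exchangeability trick is short and self-contained but delivers only convergence in probability and leans on the i.i.d.\ (or at least row-exchangeable) structure of the model; the leave-one-out route is more robust to fixed latent positions and would yield a high-probability (indeed entrywise) bound, which is why it underlies the sharper $\tti$ analyses in the literature.

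That said, your sketch does not close the one step where the whole difficulty of \eqref{eq:tozero3} actually lives. You correctly observe that Cauchy--Schwarz on the coupling error $(\bA-\bP)_{h\cdot}(\bR_3-\bR_3^{(h)})$ gives only $O(1)$, and you propose a first-order expansion $\UA-\bU_{\bA^{(h)}}\approx(\bA-\bA^{(h)})\UP\SP^{-1}$ ``plus a provably smaller remainder'' — but that remainder bound is precisely the nontrivial content, and it is asserted rather than argued. There is also a rotational-alignment issue you pass over silently: $\bR_3^{(h)}$ is defined relative to its own $\bW^{*,(h)}$, and one must control $\|\bW^{*}-\bW^{*,(h)}\|$ before the difference $\bR_3-\bR_3^{(h)}$ even makes sense at the claimed order. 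Neither obstacle is fatal — this program is known to be implementable — but as written your proof of \eqref{eq:tozero3} is a plan, whereas the paper's exchangeability-plus-Markov argument is complete in three lines once $\|\bE_1\|_F\le C\log n$ is established. If you want to keep your route, the remainder control and the alignment of $\bW^{*,(h)}$ with $\bW^{*}$ are the two estimates you still owe.
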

\begin{proof}
	We begin by observing that
	$$ \| \bR_1 \SA^{1/2} + \UP \bR_2 \|_F
	\le \| \bR_1 \|_F \| \SA^{1/2} \| + \| \bR_2 \|_F. $$
	Proposition \ref{prop:uptua_close_rotation} and the trivial upper bound on the eigenvalues
	of $\bA$ ensures that
	$$ \| \bR_1 \|_F \| \SA^{1/2} \|
	\le \frac{ C\log n }{ n^{1/2} }
	\text{ w.h.p. }, $$
	Combining this with Eq. \eqref{eq:approx_commute2_sqrt} in Lemma \ref{lem:order_bounds_on_minh_differences}, we conclude that
	$$ \| \bR_1 \SA^{1/2} + \UP \bR_2 \|_F
	\le \frac{ C \log n }{ n^{1/2} } \text{ w.h.p. } $$
	
	We will establish \eqref{eq:tozero1}, \eqref{eq:tozero2}
	and \eqref{eq:tozero3} order.
	To see \eqref{eq:tozero1}, observe that
	\begin{equation*}
	\sqrt{n} \| (\bA-\bP)\UP(\bW^* \SA^{-1/2} - \SP^{-1/2} \bW^{*}) \|_F
	\le \sqrt{n} \| (\bA-\bP) \UP \| \| \bW^* \SA^{-1/2} - \SP^{-1/2} \bW^* \|_F,
	\end{equation*}
	and Lemma~\ref{lem:order_bounds_on_minh_differences} imply that with high probability
	$$ \sqrt{n} \| (\bA-\bP)\UP(\bW^* \SA^{-1/2} - \SP^{-1/2} \bW^*) \|_F
	\le \frac{ C\log n }{ \sqrt{n} }, $$
	which goes to $0$ as $n \rightarrow \infty$.
	
	To show the convergence in \eqref{eq:tozero2}, we recall that 
	$\UP \SP^{1/2} \bW=\bX$ for some orthogonal matrix $\bW$ and observe that
	since the rows of the latent position matrix $\bX$ are necessarily
	bounded in Euclidean norm by $1$,
	and since the top $d$ eigenvalues of $\bP$ are of order $n$ (recall Lemma \ref{lem:eigs_lower_bound}),
	it follows that
	\begin{equation}\label{eq:UP2toinfty}
	\|\UP\|_{\tti} \le C n^{-1/2} \text{ w.h.p. }
	\end{equation}
	Next, \eqref{eq:hoeffding_up_upt} and Lemma \ref{lem:eigs_lower_bound} imply that
	\begin{equation*} \begin{aligned}
	\| (\UP \UP^{\top} (\bA-\bP) \UP \bW^{*} \SA^{-1/2})_h \|
	&\le \|\UP\|_{\tti}\| \UP^{\top} (\bA-\bP) \UP \| \| \SA^{-1/2} \| \\
	&\le \frac{ C \log n }{ n } \text{ w.h.p.},
	\end{aligned} \end{equation*}
	which implies~\eqref{eq:tozero2}.
	Finally, to establish~\eqref{eq:tozero3},
	we must bound the Euclidean norm of the vector
	\begin{equation} \label{eq:toughguy}
	\left[ (\bI - \UP \UP^{\top})(\bA-\bP) \bR_3 \SA^{-1/2} \right]_h,
	\end{equation}
	where, as defined above, $\bR_3 = \UA - \UP \bW^{*}$.
	Let $\bB_1$ and $\bB_2$ be defined as follows:
	\begin{equation}\label{eq:def_B1_B2}
	\begin{aligned}
	\bB_1&=
	(\bI - \UP \UP^{\top})(\bA-\bP)(\bI-\UP\UP^{\top}) \UA \SA^{-1/2} \\
	\bB_2&=(\bI - \UP \UP^{\top})(\bA-\bP)\UP(\UP^{\top} \UA - \bW^{*}) \SA^{-1/2}
	\end{aligned}
	\end{equation}
	Recalling that $\bR_3=\UA-\UP \bW^*$, we have
	\begin{equation*} \begin{aligned}
	(\bI - \UP \UP^{\top})(\bA-\bP) \bR_3 \SA^{-1/2}
	&= (\bI - \UP \UP^{\top})(\bA-\bP)(\UA-\UP\UP^{\top} \UA) \SA^{-1/2} \\
	&~~~~~~+ (\bI - \UP \UP^{\top})(\bA-\bP)(\UP\UP^{\top} \UA -\UP \bW^*) \SA^{-1/2} \\
	&= \bB_1 + \bB_2.
	\end{aligned} \end{equation*}
	We will bound the Euclidean norm of the
	$h$-th row of each of these two matrices on the right-hand side,
	from which a triangle inequality will yield our desired bound on the quantity in Equation~\eqref{eq:toughguy}.
	Recall that we use $C$ to denote a positive constant,
	independent of $n$ and $m$, which may change from line to line.
	
	Let us first consider
	$\bB_2 = (\bI - \UP \UP^{\top})(\bA-\bP)\UP(\UP^{\top} \UA - \bW^{*}) \SA^{-1/2}$.
	We have
	\begin{equation*}
	\| \bB_2 \|_F
	\le \| (\bI - \UP \UP^{\top})(\bA-\bP)\UP \|
	\| \UP^{\top} \UA - \bW^{*} \|_F \| \SA^{-1/2} \|.
	\end{equation*}
	By submultiplicativity of the spectral norm and Theorem\ref{thm:oliveira},
	$\| (\bI - \UP \UP^{\top})(\bA-\bP)\UP \| \le C n^{1/2} \log^{1/2} n$
	with high probability (and indeed, under our degree assumptions and Theorem \ref{thm:lu_peng}, the $\log n$ factor can be dropped).
	From Prop. \ref{prop:uptua_close_rotation} and Lemma \ref{lem:eigs_lower_bound},
	respectively, we have with high probability
	\begin{equation*}
	\| \UP^{\top} \UA - \bW^{*} \|_F \le C n^{-1} \log n
	\enspace \text{  and  } \enspace
	\| \SA^{-1/2} \| \le Cn^{-1/2} 
	\end{equation*}
	Thus, we deduce that with high probability,
	\begin{equation} \label{eq:B2bound}
	\| \bB_2 \|_F \le \frac{ C \log^{3/2} n }{ n }
	\end{equation}
	from which it follows that $\| \sqrt{n} \bB_2 \|_F \inprob 0$,
	and hence $\| \sqrt{n} (\bB_2)_h \| \inprob 0$.
	
	Turning our attention to $\bB_1$, and recalling that $\UA^{\top} \UA= \mathbf{I}$, we note that
	\begin{equation*} \begin{aligned}
	\| (\bB_1)_h \| &=
	\left\| \left[ (\bI - \UP \UP^{\top})(\bA-\bP)(\bI-\UP\UP^{\top})
	\UA \SA^{-1/2} \right]_h \right\| \\
	&= \left\| \left[ (\mathbf{I} - \UP \UP^{\top})(\bA-\bP)(\bI-\UP\UP^{\top})
	\UA \UA^{\top} \UA \SA^{-1/2} \right]_h \right\| \\
	&\le  \left\| \UA \SA^{-1/2} \right\|
	\left\| \left[ (\bI - \UP \UP^{\top})(\bA-\bP)(\bI-\UP\UP^{\top})\UA \UA^{\top}
	\right]_h \right\|.
	\end{aligned} \end{equation*}
	Let $\epsilon > 0$ be a constant. We will show that
	\begin{equation} \label{eq:inprobwewant}
	\lim_{n \rightarrow \infty}
	\Pr\left[ \| \sqrt{n}(\bB_1)_h \| > \epsilon \right]
	= 0.
	\end{equation}
	For ease of notation, define
	$$ \bE_1 = (\bI - \UP \UP^{\top})(\bA-\bP)(\bI-\UP\UP^{\top})\UA \UA^{\top}. $$
	We will show that
	\begin{equation} \label{eq:E1inprob}
	\lim_{n \rightarrow \infty}
	\Pr\left[ \sqrt{n} \left\| \left[ \bE_1 \right]_h \right\|
	> n^{1/4} \right]= 0,
	\end{equation}
	which will imply \eqref{eq:inprobwewant}
	since, by Lemma \ref{lem:eigs_lower_bound},  
	$\| \UA \SA^{-1/2} \| \le Cn^{-1/2}$ with high probability.
	Let $\bQ \in \R^{n \times n}$ be any permutation matrix.
	We observe that
	$$ \bQ \UP \UP^{\top}\bQ^{\top} \bQ\bP \bQ^{\top}=\bQ\bP \bQ^{\top}, $$
	and thus $\bQ\UP \UP^{\top}\bQ^{\top}$ is a projection matrix for $\bQ\bP \bQ^{\top}$
	if and only if $\UP \UP^{\top}$ is a projection matrix for $\bP$.
	A similar argument applies to the matrix $\UA \UA^{\top}$.
	Combining this with the exchangeability structure of the matrix
	$\bA - \bP$, it follows that the Frobenius norms
	of the rows of $\bE_1$ are equidistributed.
	This row-exchangeability for $\bE_1$ implies that
	$ n \E \| (\bE_1)_h \|^2 = \| \bE_1 \|_F^2 $.
	Applying Markov's inequality,
	\begin{equation} \label{eq:markov} \begin{aligned}
	\Pr\left[ \left\|\sqrt{n}\left[ \bE_1 \right]_h \right\| > t \right]
	&\le \frac{ n \E \left\| \left[
		(\bI - \UP \UP^{\top})(\bA-\bP)(\bI-\UP\UP^{\top})\UA \UA^{\top}
		\right]_h \right\|^2 }{ t^2 } \\
	&= \frac{ \E \left\| (\bI - \UP \UP^{\top})(\bA-\bP)(\bI-\UP\UP^{\top})\UA \UA^{\top}
		\right\|_F^2 }{ t^2 }.
	\end{aligned} \end{equation}
	We will proceed by showing that with high probability,
	\begin{equation} \label{eq:intermed}
	\left\| (\bI - \UP \UP^{\top})(\bA-\bP)(\bI-\UP\UP^{\top})\UA \UA^{\top}
	\right\|_F \le C  \log n,
	\end{equation}
	whence choosing $t = n^{1/4}$ in \eqref{eq:markov} yields that
	$$ \lim_{n\rightarrow \infty}
	\Pr\left[ \left\|\sqrt{n}\left[ (\bI - \UP \UP^{\top})(\bA-\bP)
	(\bI-\UP\UP^{\top})\UA \UA^{\top} \right]_h \right\| > n^{1/4} \right]
	= 0, $$
	and \eqref{eq:inprobwewant} will follow.
	We have
	\begin{equation*}
	\left\| (\bI - \UP \UP^{\top})(\bA-\bP)(\bI-\UP\UP^{\top})\UA \UA^{\top}
	\right\|_F
	\le \| \bA-\bP \| \| \UA - \UP\UP^{\top} \UA \|_F \| \UA \|
	\end{equation*}
	Theorem \ref{thm:lu_peng} and Lemma \ref{lem:eigs_lower_bound} implies that the first term in this product
	is at most $C n^{1/2}$  high probability,
	and the final term in this product is, trivially, at most $1$.
	To bound the second term, we will follow reasoning similar to that
	in Lemma \ref{lem:order_bounds_on_minh_differences}, combined with the Davis-Kahan theorem.
	The Davis-Kahan Theorem \cite{DavKah1970,Bhatia1997}
	implies that for a suitable constant $C > 0$,
	\begin{equation*}
	\| \UA \UA^{\top} - \UP\UP^{\top} \|
	\le \frac{ C \| \bA - \bP \| }{ \lambda_d(\bP) }.
	\end{equation*}
	By Theorem 2 in \cite{DK_usefulvariant},
	there exists orthonormal $\bW \in \R^{d \times d}$ such that
	$$ \| \UA - \UP \bW \|_F \le C \| \UA \UA^{\top} - \UP \UP^{\top} \|_F. $$
	We observe further that the multivariate linear least squares problem
	\begin{equation*}
	\min_{\bT \in \R^{d \times d}} \| \UA - \UP \bT \|_F^2
	\end{equation*}
	is solved by $\bT = \UP^{\top} \UA$.
	Combining all of the above, we find that 
	\begin{equation*} \begin{aligned}
	\| \UA - \UP\UP^{\top} \UA \|_F^2
	&\le \| \UA - \UP \bW \|_F^2
	\le C \| \UA \UA^{\top} - \UP \UP^{\top} \|_F^2 \\
	&\le C \| \UA \UA^{\top} - \UP \UP^{\top} \|^2 
	\le \left(\frac{ C \| \bA - \bP \| }{ \lambda_d(\bP) }\right)^2
	\le \frac{ C}{ n } \enspace \text{ w.h.p. }
	\end{aligned} \end{equation*}
	We conclude that
	\begin{equation*} \begin{aligned}
	\left\| (\bI - \UP \UP^{\top})(\bA-\bP)(\bI-\UP\UP^{\top})\UA \UA^{\top}
	\right\|_F
	&\le C\| \bA-\bP \| \| \UA - \UP\UP^{\top} \UA \|_F \| \UA \|_S \\
	&\le C  \log n \text{ w.h.p. },
	\end{aligned} \end{equation*}
	which implies \eqref{eq:intermed}, as required,
	and thus the convergence in \eqref{eq:tozero3} is established,
	completing the proof.
\end{proof}
Next, recall the inherent nonidentifiability in the RDPG model: suppose the ``true" latent positions are some matrix $\bX$. Then
$\bX = \bU_{\mathbf{P}} \bS_{\mathbf{P}}^{1/2} \bW$ for some suitably-chosen orthogonal matrix $\bW$.  We now consider the asymptotic distribution of 
$$ n^{1/2} \Wn^{\top} \left[ (\bA - \bP) \UP \SP^{-1/2} \right]_h$$
conditional on $\bX_i={\bf x}_i$.  Because we can write suitable terms of $\bA-\bP$ as the sum of independent random variables, we can invoke the Lindeberg-Feller Central Limit Theorem to establish the asymptotic normality of 
$$ n^{1/2} \Wn^{\top} \left[ (\bA - \bP) \UP \SP^{-1/2} \right]_h$$
as follows.
\begin{lemma} \label{lem:inlaw}
	Fix some $i \in [n]$.
	Conditional on $\bX_i = \bx_i \in \R^d$,
	there exists a sequence of $d$-by-$d$ orthogonal matrices $\{ \Wn \}$ such that
	$$ n^{1/2} \Wn^{\top} \left[ (\bA - \bP) \UP \SP^{-1/2} \right]_h
	\inlaw \calN( 0, \bSigma(\bx_i) ), $$
	where $\bSigma(\bx_i) \in \R^{d \times d}$ is a covariance matrix that
	depends on $\bx_i$.
\end{lemma}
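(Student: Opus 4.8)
The plan is to reduce the row in question (here the index $h$ equals $i$, since we are in the single-graph setting) to a conditionally independent, uniformly bounded sum and then invoke the Lindeberg--Feller central limit theorem. First I would make the rotation explicit. By the nonidentifiability in Remark~\ref{rem:nonid}, the matrix $\bW_n = \SP^{-1/2}\UP^\top\bX$ is orthogonal, since $\bW_n\bW_n^\top = \SP^{-1/2}\UP^\top\bP\UP\SP^{-1/2} = \bI$, and it satisfies $\bX = \UP\SP^{1/2}\bW_n$. Writing the transposed $i$-th row as $[(\bA-\bP)\UP\SP^{-1/2}]_i = \SP^{-1/2}\sum_{k=1}^n(\bA_{ik}-\bP_{ik})(\UP)_k$ and using $\UP\SP^{-1}\UP^\top = \bP^{+}$ together with the identity $\bX^\top\bP^{+} = (\bX^\top\bX)^{-1}\bX^\top$, valid since $\bX$ has full column rank $d$, I obtain the clean form
\begin{equation*}
n^{1/2}\bW_n^\top\bigl[(\bA-\bP)\UP\SP^{-1/2}\bigr]_i
= \Bigl(\tfrac1n\bX^\top\bX\Bigr)^{-1}\cdot\frac1{\sqrt n}\sum_{k=1}^n(\bA_{ik}-\bP_{ik})\bX_k .
\end{equation*}
This choice of $\bW_n$ works regardless of multiplicities among the eigenvalues of $\Delta$, which is precisely the place where the distinct-eigenvalue hypothesis of \cite{athreya2013limit} is avoided.

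Next I would peel off the prefactor. Since $\bX_1,\dots,\bX_n\iid F$ with $\E[\bX_1\bX_1^\top]=\Delta$ of rank $d$, the strong law gives $\tfrac1n\bX^\top\bX\to\Delta$ almost surely, so $(\tfrac1n\bX^\top\bX)^{-1}\to\Delta^{-1}$ almost surely. By Slutsky's theorem it then suffices to show
\begin{equation*}
\frac1{\sqrt n}\sum_{k=1}^n(\bA_{ik}-\bP_{ik})\bX_k \inlaw \calN\bigl(0,\Sigmatilde(\bx_i)\bigr),
\qquad
\Sigmatilde(\bx_i)=\E\bigl[(\bx_i^\top\bX_1-(\bx_i^\top\bX_1)^2)\bX_1\bX_1^\top\bigr],
\end{equation*}
whereupon $\bSigma(\bx_i)=\Delta^{-1}\Sigmatilde(\bx_i)\Delta^{-1}$, matching Eq.~\eqref{def:sigma}. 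I would condition on the full latent position matrix $\bX$ with $\bX_i=\bx_i$. Given $\bX$, the summands $\zeta_k := n^{-1/2}(\bA_{ik}-\bP_{ik})\bX_k$ are independent in $k$ and mean zero for $k\neq i$, while the $k=i$ term equals $-n^{-1/2}\bP_{ii}\bx_i$, a deterministic quantity of order $n^{-1/2}$ and hence negligible. The conditional covariance is $\tfrac1n\sum_{k\neq i}\bP_{ik}(1-\bP_{ik})\bX_k\bX_k^\top$, and because $\bP_{ik}=\bx_i^\top\bX_k$ this is a sample average converging almost surely, by the strong law, to $\Sigmatilde(\bx_i)$. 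The Lindeberg condition is immediate: $\|\bX_k\|\le1$ since $\bX_k^\top\bX_k\in[0,1]$, and $|\bA_{ik}-\bP_{ik}|\le1$, so $\|\zeta_k\|\le n^{-1/2}$ uniformly and the truncated second moments vanish once $n>\epsilon^{-2}$.

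The main obstacle is that the limit $\Sigmatilde(\bx_i)$ emerges as an almost-sure limit of a \emph{random} conditional covariance, so Lindeberg--Feller cannot be applied with a fixed normalization in a single pass. I would resolve this with a characteristic-function argument: for almost every realization of the infinite sequence $\{\bX_k\}_{k\ge1}$ with $\bX_i=\bx_i$, the conditional covariance converges to the deterministic $\Sigmatilde(\bx_i)$ and the Lindeberg condition holds, so the classical theorem, applied conditionally on $\bX$ over the randomness of $\bA$, yields $\E[\exp(\mathrm{i}\,\bv^\top\sum_k\zeta_k)\mid\bX]\to\exp(-\tfrac12\bv^\top\Sigmatilde(\bx_i)\bv)$ for each $\bv\in\R^d$. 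Since this Gaussian limit is the same non-random quantity for almost every such realization, the bounded convergence theorem passes the limit through the outer expectation over $\bX$ (conditioned on $\bX_i=\bx_i$), giving convergence of the unconditional characteristic function and hence convergence in law by Lévy's continuity theorem. Reassembling through Slutsky completes the proof; the only remaining bookkeeping is to confirm that the $k=i$ term and the error in replacing $(\tfrac1n\bX^\top\bX)^{-1}$ by $\Delta^{-1}$ are $o_P(1)$, both of which follow from the bounds already recorded.
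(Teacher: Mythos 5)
Your proposal is correct and follows essentially the same route as the paper: the same choice of orthogonal matrix $\Wn$ with $\bX = \UP\SP^{1/2}\Wn$, the same algebraic reduction of the row to $(\tfrac1n\bX^{\top}\bX)^{-1}\cdot n^{-1/2}\sum_k(\bA_{ik}-\bP_{ik})\bX_k$, the strong law for the prefactor, a CLT for the centered sum with limit covariance $\Sigmatilde(\bx_i)$, and Slutsky to assemble $\bSigma(\bx_i)=\Delta^{-1}\Sigmatilde(\bx_i)\Delta^{-1}$. Your extra care in de-conditioning the Lindeberg--Feller limit via characteristic functions and bounded convergence is a welcome tightening of a step the paper handles by treating the summands $(\bA_{ij}-\bX_j^{\top}\bx_i)\bX_j$ directly as i.i.d.\ over the joint randomness of $\bA_{ij}$ and $\bX_j$, but it is not a different argument.
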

\begin{proof}
	For each $n$, choose orthogonal $\Wn \in \R^{d \times d}$
	so that $\bX = \mathbf{U}_{\mathbf{P}} \mathbf{S}_{\mathbf{P}}^{1/2} \Wn$.
	At least one such $\Wn$ exists for each value of $n$, since, as discussed
	previously, the true latent positions $\bX$ are specified
	only up to some orthogonal transformation. 
	We then have
	\begin{equation*} \begin{aligned}
	n^{1/2} \Wn^{\top} \left[ (\bA - \bP) \UP \SP^{-1/2} \right]_i 
	&= n^{1/2} \Wn^{\top} \SP^{-1} \Wn \left[ \bA \bX - \bP \bX \right]_i \\
	&= n^{1/2} \Wn^{\top} \SP^{-1} \Wn \left(
	\sum_{j=1}^n
	\left(\bA_{ij} - \bP_{ij} \right)
	\bX_j \right) \\
	&= n^{1/2} \Wn^{\top} \SP^{-1} \Wn \left(
	\sum_{j \neq i} \left(\bA_{ij} - \bP_{ij}\right) \bX_j
	\right) - n^{1/2} \Wn^{\top} \SP^{-1} \Wn \bP_{ii} \bX_i \\
	&= \left( n \Wn^{\top} \SP^{-1} \Wn \right)
	\left[ n^{-1/2}
	\sum_{j \neq i} \left(\bA_{ij} - \bP_{ij}\right) \bX_j \right] - n \Wn^{\top} \SP^{-1} \Wn \frac{ \bP_{ii} \bX_i }{ n^{1/2} }.
	\end{aligned} \end{equation*}
	
	Conditioning on $\bX_i = \bx_i \in \R^d$, we first observe that
	$
	\tfrac{ \bP_{ii} }{ n^{1/2} } \bX_i
	= \frac{ \bx_i^{\top} \bx_i }{ n^{1/2} } x_i \rightarrow 0$ almost surely.
	Furthermore, 
	\begin{equation*} \begin{aligned}
	n^{-1/2} \sum_{j \neq i}  \left(\bA_{ij} - \bP_{ij}\right) \bX_j 
	= n^{-1/2} \sum_{j \neq i} \left(
	\bA_{ij} - \bX_j^{\top} \bx_i\right) \bX_j
	\end{aligned} \end{equation*}
	is a scaled sum of $n-1$ independent $0$-mean random variables,
	each with covariance matrix given by
	$$  \Sigmatilde(\bx_i) =
	\E \left[  \left( \bx_i^{\top} \bX_j - (\bx_i^{\top} \bX_j)^2 \right) \bX_j \bX_j^{\top} \right].
	$$
	The multivariate central limit theorem thus implies that
	\begin{equation} \label{eq:inlaw1}
	n^{-1/2} \sum_{j \neq i} \left(
	\bA_{ij} - \bX_j \bx_i^{\top})\bX_j\right)
	\inlaw \calN( \zeromx, \Sigmatilde(\bx_i) ) .
	\end{equation}
	Finally, by the strong law of large numbers,
	$$ n^{-1} \mathbf{W}_n^{\top} \mathbf{S}_{\mathbf{P}_n} \mathbf{W}_n = \frac{1}{n} \bX^{\top} \bX \rightarrow \bDelta \text{ a.s. } $$
	and thus $(n \mathbf{W}_n^{\top} \mathbf{S}^{-1}_{\mathbf{P}_n} \mathbf{W}_n) \rightarrow \bDelta^{-1}$ almost surely. 
	Combining this fact with~\eqref{eq:inlaw1},
	the multivariate version of Slutsky's theorem yields
	$$ n^{1/2} \Wn^{\top} \left[ (\bA - \bP) \UP \SP^{-1/2} \right]_h
	\inlaw \calN\left(\zeromx, \bSigma(\bx_i) \right)
	$$
	where $\bSigma(\bx_i) = \bDelta^{-1} \Sigmatilde(\bx_i) \bDelta^{-1}$.
	Integrating over the possible values of $\bx_i$ with respect to distribution
	$F$ completes the proof.
\end{proof}
Lemmas \ref{lem:stringent_control_residuals} and \ref{lem:inlaw} are the main ingredients in the proof of Theorem \ref{thm:clt_orig_but_better}, whose proof now follows easily:
\begin{proof}
	[Proof of Theorem \ref{thm:clt_orig_but_better}].
	We start with the following decomposition that was originally used in the proof of Theorem~\ref{thm:minh_frob}.
	\begin{equation} \label{eq:expansion} \begin{aligned}
	\sqrt{n}\left(\UA \SA^{1/2} - \UP \SP^{1/2} \bW^{*}\right)
	&= \sqrt{n}(\bA-\bP)\UP \SP^{-1/2} \bW^{*}
	+ \sqrt{n}(\bA-\bP)\UP(\bW^{*} \SA^{-1/2} - \SP^{-1/2} \bW^{*}) \\
	&~~~~~~-\sqrt{n} \UP \UP^{\top} (\bA-\bP) \UP \bW^{*}\SA^{-1/2} \\
	&~~~~~~+ \sqrt{n}(\bI - \UP \UP^{\top})(\bA-\bP) \bR_3 \SA^{-1/2}
	+ \sqrt{n}\bR_1 \SA^{1/2} + \sqrt{n}\UP \bR_2.
	\end{aligned} \end{equation}
	Now given any index $i$, Lemma~\ref{lem:stringent_control_residuals} can be used to bound the $\ell_2$ norm of the $i$-th row $(\bA-\bP)\UP(\bW^{*} \SA^{-1/2} - \SP^{-1/2} \bW^{*})$, $\UP \UP^{\top} (\bA-\bP) \UP \bW^{*}\SA^{-1/2}$ and $(\bI - \UP \UP^{\top})(\bA-\bP) \bR_3 \SA^{-1/2}$. The $\ell_2$ norm of the $i$-th row of $\bR_1 \SA^{1/2}$ and $\UP \bR_2$ can be bounded from above by bound $\|\bR_1 \SA^{1/2}\|_{\tti}$ and $\|\UP \bR_2\|_{\tti}$. Now since $\|\mathbf{U}_{\mathbf{P}}\|_{\tti} = O(n^{-1/2})$ by Eq.~\eqref{eq:UP2toinfty}, we conclude that $\sqrt{n} \|\bR_1 \SA^{1/2}\|_{\tti} = O(n^{-1/2} \log^{1/2}{n})$ and $\sqrt{n} \|\UP \bR_2\|_{\tti} = O(n^{-1/2} \log^{1/2}{n})$. Therefore, for any fixed index $i$, we have
	$$ \sqrt{n}\left(\UA \SA^{1/2} - \UP \SP^{1/2} \bW^{*}\right)_i = \sqrt{n}((\bA-\bP)\UP \SP^{-1/2} \bW^{*})_{i} + O(n^{-1/2} \log^{1/2} n)$$
	with high probability. Since $\mathbf{X} = \mathbf{U}_{\mathbf{P}} \mathbf{S}_{\mathbf{P}} \bW$, we can rewrite the above expression as
	$$ \sqrt{n}\left(\UA \SA^{1/2} (\bW^{*})^{\top} \mathbf{W} - \UP \SP^{1/2} \mathbf{W} \right)_i = \sqrt{n}((\bA-\bP)\UP \SP^{-1/2} \mathbf{W})_{i} + O(n^{-1/2} \log^{1/2} n).$$ Lemma~\ref{lem:inlaw} then establishes the asymptotic normality of $\sqrt{n}((\bA-\bP)\UP \SP^{-1/2} \mathbf{W})_{i}$ as desired.
\end{proof}

We now turn out attention a brief sketch of the proof of the central limit theorem for the Laplacian spectral embedding.

\subsection*{Sketch of proof of Theorem~\ref{THM:LSE}}
We present in this subsection a sketch of the main ideas in the proof of Theorem~\ref{THM:LSE};  detailed proofs are given in \cite{tang_lse}. 
We first introduce some additional notation. For $(\mathbf{X}_n, \mathbf{A}_n) \sim \mathrm{RDPG}(F)$, let $\mathbf{T}_n = \mathrm{diag}(\mathbf{P}_n \bm{1})$ be the $n \times n$ diagonal matrices whose diagonal entries are the {\em expected} vertex degrees. Then defining $\tilde{\bf X}_n = \mathbf{T}_n^{-1/2} \mathbf{X}_n$, and noting that $\tilde{\mathbf{X}}_n \tilde{\mathbf{X}}_n^{\top} = \mathcal{L}(\mathbf{P}_n) = \mathbf{T}_n^{-1/2} \mathbf{P}_n \mathbf{T}_n^{-1/2}$, Theorem~\ref{THM:LSE} depends on showing that there exists an orthogonal matrix $\mathbf{W}_n$ such that 
\begin{equation}
\label{eq:LSE-main1}
\breve{\bX}_n {\bf W}_n - \tilde{\bf X}_n = \mathbf{T}_n^{-1/2} (\mathbf{A}_n - \mathbf{P}_n) \mathbf{T}_n^{-1/2}
\tilde{\mathbf{X}}_n (\tilde{\mathbf{X}}_n^{\top} \tilde{\mathbf{X}}_n)^{-1} + \tfrac{1}{2}(\mathbf{I} - \mathbf{D}_n \mathbf{T}_n^{-1}) \tilde{\mathbf{X}}_n + \mathbf{R}_n 
\end{equation}  
where $\|\mathbf{R}_n\|_{F} = O(n^{-1})$ with high probability. The motivation behind Eq.~\eqref{eq:LSE-main1} is as follows. 
Given $\tilde{\mathbf{X}}_n$, the entries of the right hand side of Eq.~\eqref{eq:LSE-main1}, except for the term $\mathbf{R}_n$, can be expressed explicitly in terms of linear combinations of the entries $a_{ij} - p_{ij}$ of $\mathbf{A}_n - \mathbf{P}_n$. This is in contrast with the left hand side of Eq.~\eqref{eq:LSE-main1}, which depends on the quantities $\mathbf{U}_{\mathcal{L}(\mathbf{A}_n)}$ and $\mathbf{S}_{\mathcal{L}(\mathbf{A}_n)}$ (recall Definition~\ref{def:LSE}). Since the quantities $\mathbf{U}_{\mathcal{L}(\mathbf{A}_n)}$ and $\mathbf{S}_{\mathcal{L}(\mathbf{A}_n)}$ can not be expressed explicitly in terms of the entries of $\mathbf{A}_n$ and $\mathbf{P}_n$, we conclude that the right hand side of Eq.~\eqref{eq:LSE-main1} is simpler to analyze. 

Once Eq.~\eqref{eq:LSE-main1} is established, we can derive Theorem~\ref{THM:LSE} as follows. 
Let $\xi_i$ denote the $i$-th row of $n ( \breve{\mathbf{X}}_n \bW_n - \tilde{\mathbf{X}}_n)$ and let $r_i$ denote the $i$-th row of $\mathbf{R}_n$. Eq.~\eqref{eq:LSE-main1} then implies 
\begin{equation*}
\begin{split} 
\xi_i &= (\tilde{\mathbf{X}}_n^{\top} \tilde{\mathbf{X}}_n)^{-1} \frac{n}{\sqrt{t_i}} \Bigl( \sum_{j} \frac{a_{ij} -
	p_{ij}}{\sqrt{t_j}} (\tilde{\bX}_n)_j \Bigr) + \frac{n (t_i - d_i)}{2t_i} (\tilde{\bX}_n)_i + n r_i \\
&= (\tilde{\mathbf{X}}_n^{\top} \tilde{\mathbf{X}}_n)^{-1}
\frac{\sqrt{n}}{\sqrt{t_i}} \Bigl( \sum_{j
} \frac{\sqrt{n \rho_n} (a_{ij} - p_{ij}) (\bX_n)_j}{ t_j} \Bigr) - \frac{n (\bX_n)_i}{2 t_i^{3/2}} \sum_{j 
} (a_{ij} - p_{ij}) + n r_i \\
&= 
\frac{\sqrt{n}}{\sqrt{t_i}} \sum_{j
} \frac{(a_{ij} - p_{ij})}{\sqrt{n}} \Bigl(\frac{(\tilde{\mathbf{X}}_n^{\top} \tilde{\mathbf{X}}_n)^{-1} (\bX_n)_j}{ t_j/n} - \frac{(\bX_n)_{i}}{2 t_i/n} \Bigr) + n r_i
\end{split}
\end{equation*}
where $a_{ij}$ and $p_{ij}$ are the $ij$-th entries of $\mathbf{A}$ and $\mathbf{P}$, respectively, and $t_i$ is the $i$-th diagonal entry of $\mathbf{T}_n$. 
We can then show that $n r_i \overset{\mathrm{d}}{\longrightarrow} 0$. Indeed, there are $n$ rows in $\mathbf{R}_n$ and $\|\mathbf{R}_n\|_{F} = O(n^{-1})$; hence, on average, for each index $i$, $\|r_i\|^{2} = O(n^{-3})$ with high probability (a more precise argument similar to that used in proving Lemma~\ref{lem:stringent_control_residuals} is needed to establish this rigorously). 
Furthermore, $$t_i/n = \sum_{j} (\bX_n)_i^{\top} (\bX_n)_j/n \overset{\mathrm{a.s.}}{\longrightarrow} (\bX_n)_{i}^{\top} \bm{\mu}$$ as $n \rightarrow \infty$. Finally, $$\tilde{\mathbf{X}}_n^{\top} \tilde{\mathbf{X}}_n = \sum_{i} \bigl((\bX_n)_{i} (\bX_n)_{i}^{\top}/(\sum_{j} (\bX_n)_i^{\top} (\bX_n)_j)\bigr),$$ and this can be shown to converge to $\tilde{\Delta} = \mathbb{E}\bigl[\tfrac{\bX_1 \bX_1^{\top}}{\bX_1^{\top} \bm{\mu}}\bigr]$ as $n \rightarrow \infty$. We therefore have, after additional algebraic manipulations, that
\begin{equation*}
\begin{split}
\xi_i &=   \frac{\sqrt{n}}{\sqrt{t_i}} \sum_{j
} \frac{(a_{ij} - p_{ij})}{\sqrt{n} } \Bigl(\frac{\tilde{\Delta}^{-1} (\bX_n)_j}{ (\bX_n)_j^{\top} \bm{\mu}} - \frac{(\bX_n)_i}{2 (\bX_n)_{i}^{\top} \bm{\mu}} \Bigr) + o(1) \\
&= \frac{\sqrt{n}}{\sqrt{t_i}} \sum_{j
} \frac{(a_{ij} - (\bX_n)_i^{\top} (\bX_n)_j)}{\sqrt{n} } \Bigl(\frac{\tilde{\Delta}^{-1} (\bX_n)_j}{ (\bX_n)_j^{\top} \bm{\mu}} - \frac{(\bX_n)_{i}}{2 (\bX_n)_i^{\top} \bm{\mu}} \Bigr) + o(1)
\end{split}
\end{equation*}
with high probability.
Conditioning on $(\bX_n)_{i} = \bm{x}$, the above expression for $\xi_i$ is roughly a sum of independent and identically distributed mean $0$ random variables. The multivariate central limit theorem can then be applied to the above expression for $\xi_i$, thereby yielding Theorem~\ref{THM:LSE}. 

We now sketch the derivation of Eq.~\eqref{eq:LSE-main1}. For simplicity of notation, we shall ignore the subscript $n$ in the matrices $\mathbf{A}_n$, $\mathbf{X}_n$, $\mathbf{P}_n$ and related matrices. First, consider the following expression.
\begin{equation}
\label{eq:sketch1}
\begin{split}
\mathbf{U}_{\mathcal{L}(\mathbf{A})} \mathbf{S}_{\mathcal{L}(\mathbf{A})}^{1/2} - \mathbf{U}_{\mathcal{L}(\mathbf{P})} &\mathbf{S}_{\mathcal{L}(\mathbf{P})}^{1/2} \mathbf{U}_{\mathcal{L}(\mathbf{P})}^{\top} \mathbf{U}_{\mathcal{L}(\mathbf{A})}  = \mathcal{L}(\mathbf{A}) \tilde{\mathbf{U}}_{\mathcal{L}(\mathbf{A})} \tilde{\mathbf{S}}_{\mathcal{L}(\mathbf{A})}^{-1/2} - \mathcal{L}(\mathbf{P}) \tilde{\mathbf{U}}_{\mathcal{L}(\mathbf{P})} \tilde{\mathbf{S}}_{\mathcal{L}(\mathbf{P})}^{-1/2} \tilde{\mathbf{U}}_{\mathcal{L}(\mathbf{P})}^{\top} \tilde{\mathbf{U}}_{\mathcal{L}(\mathbf{A})} \\
&= \mathcal{L}(\mathbf{A}) \ULA \ULA^{\top} \ULA \SLA^{-1/2} - \mathcal{L}(\mathbf{P}) \ULP \SLP^{-1/2} \ULP^{\top} \ULA
\end{split}
\end{equation}
Now $\mathcal{L}(\mathbf{A})$ is concentrated around  $\mathcal{L}(\mathbf{P})$: namely, in the current setting, 
$$\|\mathcal{L}(\mathbf{A}) - \mathcal{L}(\mathbf{P})\| = O(n^{-1/2})$$ with high probability (see Theorem~2 in \cite{lu13:_spect}). Since $\|\mathcal{L}(\mathbf{P})\| = \Theta(1)$ and the non-zero eigenvalues of $\mathcal{L}(\mathbf{P})$ are all of order $\Theta(1)$, this again implies, by the Davis-Kahan theorem, that the eigenspace spanned by the $d$ largest eigenvalues of $\mathcal{L}(\mathbf{A})$ is ``close'' to that spanned by the $d$ largest eigenvalues of $\mathcal{L}(\mathbf{P})$. More precisely, $\|\ULA \ULA - \ULP \ULP \| = O(n^{-1/2})$ with high probability, and 
\begin{equation*}
\begin{split}
\ULA \SLA^{1/2} - \ULP \SLP^{1/2} \ULP^{\top} \ULA &= \mathcal{L}(\mathbf{A}) \ULP \ULP^{\top} \ULA \SLA^{-1/2}  \\ & - \mathcal{L}(\mathbf{P}) \ULP \SLP^{-1/2} \ULP^{\top} \ULA + \mathbf{R}_n
\end{split}
\end{equation*}
where $\|\mathbf{R}_n\| = O(n^{-1})$ with high probability. In addition, $\|\ULA \ULA - \ULP \ULP \| = O(n^{-1/2})$ also implies that 
there exists an orthogonal matrix $\mathbf{W}^{*}$ such that $\|\ULP^{\top} \ULA - \mathbf{W}^{*} \| = O(n^{-1})$ with high probability.

We next consider the terms $\SLP^{-1/2} \ULP^{\top} \ULA$ and $\ULP^{\top} \ULA \SLA^{-1/2}$. Note that the both are $d \times d$ matrices; 
furthermore, since $\SLA$ and $\SLP$ are diagonal matrices, the $ij$-th entry of $\SLP^{-1/2} \ULP^{\top} \ULA - \ULP^{\top} \ULA \SLA^{-1/2}$ can be written as the $\zeta_{ij} \times h_{ij}$ where $\zeta_{ij}$ is the $ij$-th entry of $\SLP \ULP^{\top} \ULA - \ULP^{\top} \ULA \SLA$ and the $h_{ij}$ are functions of $\lambda_{i}(\mathcal{L}(\bA))$ and $\lambda_{j}(\mathcal{L}(\bP)$. In particular, $|h_{ij}| < C$ for some positive constant $C$ for all $n$. We then have that
\begin{equation*}
\begin{split}\SLP \ULP^{\top} \ULA - \ULP^{\top} \ULA \SLA &= \ULP^{\top} (\mathcal{L}(\bP) - \mathcal{L}(\bA)) \ULA \\ &= \ULP^{\top} (\mathcal{L}(\bP) - \mathcal{L}(\bA)) \ULP \ULP^{\top} \ULA \\ &+ \ULP^{\top}((\mathcal{L}(\bP) - \mathcal{L}(\bA)) (\mathbf{I} - \ULP \ULP^{\top}) \ULA
\end{split}
\end{equation*}
Now, conditioning on $\bP$, the $ij$-th entry of $\ULP^{\top} (\mathcal{L}(\bP) - \mathcal{L}(\bA)) \ULP$ can be written as a linear combination of the entries of $\mathbf{A} - \mathbf{P}$ (which are independent) and the rows of $\mathbf{X}$; hence, it can be bounded using Hoeffding's inequality.
Meanwhile, the term
$$\ULP^{\top}((\mathcal{L}(\bP) - \mathcal{L}(\bA)) (\mathbf{I} - \ULP \ULP^{\top}) \ULA$$ can be bounded by the Davis-Kahan Theorem and the spectral norm difference of $\|\mathcal{L}(\bA) - \mathcal{L}(\bP)\|$. We therefore arrive at the important fact that 
$$\|\SLP \ULP^{\top} \ULA - \ULP^{\top} \ULA \SLA\|_{F} = O(n^{-1})$$
with high probability, and hence 
$$\|\SLP^{-1/2} \ULP^{\top} \ULA - \ULP^{\top} \ULA \SLA^{-1/2}\| = O(n^{-1})$$ with high probability. 

We can juxtapose $\ULP^{\top} \ULA$ and $\SLA^{-1/2}$ in the expression for Eq.~\eqref{eq:sketch1} and then replace $\ULP^{\top} \ULA$ by the orthogonal matrix $\mathbf{W}^{*}$, thereby obtaining
$$ \ULA \SLA^{1/2} - \ULP \SLP^{1/2} \mathbf{W}^{*} = (\mathcal{L}(\mathbf{A}) - \mathcal{L}(\mathbf{P})) \ULP \SLP^{-1/2} \mathbf{W}^{*} + \tilde{\mathbf{R}}_n $$
where $\|\tilde{\mathbf{R}}_n\| = O(n^{-1})$ with high probability. 
Since $$\tilde{\mathbf{X}} \tilde{\mathbf{X}}^{\top} = \mathcal{L}(\mathbf{P}) = \ULP \SLP \ULP^{\top},$$ 
we have $\tilde{\mathbf{X}} = \ULP \SLP^{1/2} \tilde{\bW}$ for some orthogonal matrix $\tilde{\mathbf{W}}$.
Therefore, 
\begin{equation*}
\begin{split}
\ULA \SLA^{1/2} - \tilde{\mathbf{X}} \tilde{\mathbf{W}}^{\top} \mathbf{W}^{*} & = 
(\mathcal{L}(\mathbf{A}) - \mathcal{L}(\mathbf{P})) \ULP \SLP^{-1/2} \mathbf{W}^{*} + \tilde{\bR}_n\\
& = (\mathcal{L}(\mathbf{A}) - \mathcal{L}(\mathbf{P})) \ULP \SLP^{1/2} \tilde{\mathbf{W}} \tilde{\mathbf{W}}^{\top} \SLP^{-1} \tilde{\mathbf{W}} \tilde{\mathbf{W}}^{\top} \mathbf{W}^{*} + \tilde{\bR}_n\\
& = (\mathcal{L}(\mathbf{A}) - \mathcal{L}(\mathbf{P})) \tilde{\mathbf{X}} (\tilde{\mathbf{X}}^{\top} \tilde{\mathbf{X}})^{-1} \tilde{\mathbf{W}}^{\top} \mathbf{W}^{*} + \tilde{\bR}_n.
\end{split}
\end{equation*}
Equivalently, 
\begin{equation} 
\label{eq:sketch-1}
\ULA \SLA^{1/2} (\mathbf{W}^{*})^{\top} \tilde{\mathbf{W}} - \tilde{\mathbf{X}} = (\mathcal{L}(\mathbf{A}) - \mathcal{L}(\mathbf{P})) \tilde{\mathbf{X}} (\tilde{\mathbf{X}}^{\top} \tilde{\mathbf{X}})^{-1} + \tilde{\bR}_n (\mathbf{W}^{*})^{\top} \tilde{\mathbf{W}}.
\end{equation}
The right hand side of Eq.~\eqref{eq:sketch-1} --- except for the residual term $\tilde{\bR}_n (\mathbf{W}^{*})^{\top} \tilde{\mathbf{W}}$ which has norm of order $O(n^{-1})$ with high probability --- can now be written explicitly in terms of the entries of $\mathbf{A}$ and $\mathbf{P}$. However, since $$\mathcal{L}(\mathbf{A}) = \mathbf{D}^{-1/2} \mathbf{A} \mathbf{D}^{-1/2},$$ the entries of the right hand side of Eq.~\eqref{eq:sketch-1} are neither linear nor affine combinations of the entries of $\mathbf{A} - \mathbf{P}$. Nevertheless, a Taylor-series expansion of the entries of $\mathbf{D}^{-1/2}$ allows us to conclude that
$$\|\mathbf{D}^{-1/2} - \mathbf{T}^{-1/2} - \tfrac{1}{2} \mathbf{T}^{-3/2} (\mathbf{T} - \mathbf{D})\| = O(n^{-3/2})$$ with high probability. 
Substituting this into Eq.~\eqref{eq:sketch-1} yields, after a few further algebraic simplifications, Eq.~\eqref{eq:LSE-main1}.

	\bibliographystyle{plain}
	\bibliography{biblio_summary}
\end{document}